\newcommand{\Fun}[1]{{\bf #1}}
\newcommand{\Cat}[1]{{\bf #1}}
\newcommand{\Sig}[1]{#1}
\newcommand{\Alg}[1]{\mathbb{#1}}
\newcommand{\alge}[1]{\Cat{Alg_{#1}}}
\newcommand{\Pow}{\Fun{P}}
\newcommand{\Pc}{\Pow_c}
\newcommand{\set}{\Cat{Set}}
\newcommand{\coalg}[1]{\Cat{Coalg_{#1}}}
\newcommand{\final}[1]{F_{#1}}
\newcommand{\subs}[2]{\{^{#1}/_{#2}\}}
\newcommand{\rulelabel}[1]{(\textsc{#1})}
\newcommand{\nil}{\mathbf{0}}
\newcommand{\nat}{{\omega}}
\newcommand{\deduz}[2]{\frac{\displaystyle #1}{\displaystyle #2}}
 \newcommand{\names}{\mbox{$\mathcal{N}$}}
 \newcommand{\inp}[2]{#1(#2)}
 \newcommand{\outp}[2]{\overline{#1}#2}
 \newcommand{\boutb}[2]{\overline{{#1}}({#2})}
 \newcommand{\fn}[1]{\mathrm{fn}(#1)}
 \newcommand{\bn}[1]{\mathrm{bn}(#1)}
 \newcommand{\nm}[1]{\mathrm{nm}(#1)}
 \newcommand{\res}[2]{\nu #1. #2 }
\newcommand{\Ttr}[2]{\xymatrix{ \ar[r]^{#1}_{#2} & }}
\newcommand{\UTtr}[2]{\Ttr{#1}{#2}}
\newcommand{\atr}[3]{\stackrel{#1, #2}{\rightarrowfill}_{#3}}
\newcommand{\satr}[2]{\stackrel{#1, #2}{\rightarrowfill}_{S}}
\newcommand{\satbis}{\sim^{S}}
\newcommand{\cts}{\textsc{satts}}
\newcommand{\sts}{\textsc{scts}}
\newcommand{\<}{\langle}
\renewcommand{\>}{\rangle}
\newcommand{\sort}{|\Cat{C}|}
\newcommand{\mssign}{\Cat{C}}
\newcommand{\sys}[1]{\mathcal{#1}}
\newcommand{\isys}{ \sys{I}=\<\Cat{C}, \Alg{X}, O, tr\>}
\newcommand{\sym}{\beta}
\newcommand{\symbis}{\sim^{SYM}}
\newcommand{\lts}{\textsc{lts}}
\newcommand{\infsys}{\ded{T}}
\newcommand{\modelAsy}{\mathcal{A}}
\newcommand{\amssign}{\Cat{Out}}
\newcommand{\aspec}{\Sig{\Gamma(\amssign)}}
\newcommand{\APA}{\Alg{A}}
\newcommand{\actAsy}{O_{\modelAsy}}
\newcommand{\trAsy}{tr_{\modelAsy}}
\newcommand{\modelAsyAll}{\< \amssign, \APA, \actAsy, \trAsy \>}
\newcommand{\asytr}[1]{\tr{#1}_{\modelAsy}}
\newcommand{\infAsy}{\ded{T}_{\modelAsy}}
\newcommand{\approxAsy}{\alpha}
\newcommand{\asyproc}{A}
\newcommand{\modelswc}{\mathcal{W}}
\newcommand{\wscmssign}{\Cat{Wor}}
\newcommand{\WSCA}{\Alg{W}}
\newcommand{\actwsc}{O_{\modelswc}}
\newcommand{\trwsc}{tr_{\modelswc}}
\newcommand{\modelwscAll}{\< \wscmssign, \WSCA, \actwsc, \trwsc \>}
\newcommand{\infwsc}{\ded{T}_{\modelswc}}
\newcommand{\approxwsc}{\omega}
\newcommand{\modelON}{\mathcal{N}}
\newcommand{\onsort}{|\onmssign|}
\newcommand{\onmssign}{\Cat{Tok}}
\newcommand{\ONPA}{\Alg{N}}
\newcommand{\OINPA}{\Alg{N}}
\newcommand{\ONSET}{N}
\newcommand{\actON}{\Lambda}
\newcommand{\trON}{tr_{\modelON}}
\newcommand{\ONtr}{tr_{\modelON}}
\newcommand{\modelONAll}{\< \onmssign, \ONPA, \actON, \trON \>}
\newcommand{\ontr}[1]{\tr{#1}_{\modelON}}
\newcommand{\infON}{\ded{T}_{\modelON}}
\newcommand{\approxON}{\eta}
\newcommand{\pr}{pre}
\newcommand{\pre}[1]{\ensuremath{\!~^\bullet{#1}}}
\newcommand{\po}{post}
\newcommand{\post}[1]{\ensuremath{{#1} {^\bullet}}}
\newcommand{\onet}[1]{#1}
\def\tr#1{\stackrel{#1}{\rightarrowfill}}
\def\tl#1{\stackrel{#1}{\leftarrow}}
\def \rightarrowfill{\m@th\mathord{\smash-}\mkern-6mu%
  \cleaders\hbox{$\mkern-2mu\mathord{\smash-}\mkern-2mu$}\hfill
  \mkern-6mu\mathord\rightarrow}
\def \Rightarrowfill{\m@th\mathord{\smash-}\mkern-6mu%
  \cleaders\hbox{$\mkern-2mu\mathord{\smash-}\mkern-2mu$}\hfill
  \mkern-6mu\mathord\Rightarrow}
\def \mapstofill{\m@th\mathord{\smash-}\mkern-6mu%
  \cleaders\hbox{$\mkern-2mu\mathord{\smash-}\mkern-2mu$}\hfill
  \mkern-6mu\mathord\longmapsto}
\newcommand{\ded}[1]{#1}
\newcommand{\clos}[1]{\Phi(#1)}
\newcommand{\DMIN}{\Fun{N_{\ded{T}}}}%Sua versione Normalizzata
\newcommand{\DMIS}{\Fun{{S_\ded{T}}}}%Sua versione Saturata
\newcommand{\PAS}{\DMIS}
\newcommand{\PD}{\Fun{G}} %Funtore per Dynamic su Set
\newcommand{\PS}{\Fun{S^{T}_{\Alg{X}}}} %Insiemi Saturati Ripsetto ad I e X
\newcommand{\PN}[1]{\Fun{N^{T}_{#1}}} %Insiemi Normalizzati Rispetto ad I e X
\newcommand{\PDA}{\Fun{H}} %Funtore per Dynamic su Alg
\newcommand{\TPDA}{\PDA}
\newcommand{\unique}[2]{!^{#1}_{#2}}
\newcommand{\norm}[1]{norm_{\ded{T},\Alg{#1}}}
\newcommand{\sat}[1]{sat_{\ded{T},\Alg{#1}}}
\newcommand{\normR}[1]{\norm{#1}}
\newcommand{\satR}[1]{\sat{#1}}
\newcommand{\NormR}[1]{\Fun{NORM_\ded{T}}}
\newcommand{\SatR}[1]{\Fun{SAT_\ded{T}}}
\newcommand{\normaR}{norm_\ded{T}}
\newcommand{\satuR}{sat_\ded{T}}
\newenvironment{oldthm}[1]{%For the theorem in the Appendix that occurs also in the main text
\let\oldthethm\thethm%
\renewcommand{\thethm}{\ref{#1}}%
\begin{thm}%
}
{%
\end{thm}
\addtocounter{thm}{-1}%
\let\thethm\oldthethm%
}
\newenvironment{oldprop}[1]{%For the propositions in the Appendix that occurs also in the main text
\let\oldthethm\thethm%
\renewcommand{\thethm}{\ref{#1}}%
\begin{prop}%
}
{%
 \end{prop}
\addtocounter{thm}{-1}%
\let\thethm\oldthethm%
}
\newenvironment{oldlem}[1]{%For the lemmas in the Appendix that occurs also in the main text
\let\oldthethm\thethm%
\renewcommand{\thethm}{\ref{#1}}%
\begin{lem}%
}
{%
\end{lem}
\addtocounter{thm}{-1}%
\let\thethm\oldthethm%
}
\def\doi{7 (2:7) 2011}
\begin{document}

\title[Symbolic and Asynchronous Semantics via Normalized
Coalgebras]{Symbolic and Asynchronous Semantics\\ via Normalized
Coalgebras\rsuper*}

\author[F.~Bonchi]{Filippo Bonchi\rsuper a}   %required
\address{{\lsuper a}ENS Lyon, Universit\'e de Lyon, LIP (UMR 5668 CNRS ENS Lyon UCBL INRIA)} %required
\email{filippo.bonchi@ens-lyon.fr}  %optional
\thanks{{\lsuper a}This work was carried out during the tenure of an ERCIM ``Alain
Bensoussan'' Fellowship Programme}   %optional

\author[U.~Montanari]{Ugo Montanari\rsuper b} %optional
\address{{\lsuper b}Department of Informatics, University of Pisa}  %optional
\email{ugo@di.unipi.it}  %optional
\thanks{{\lsuper b}Research supported in part by IST-FP7-FET open-IP project \textsc{ASCENS} }    %optional

\keywords{Symbolic Semantics, Coalgebras, Process Calculi, Petri
nets} \subjclass{F.3.2}
\titlecomment{{\lsuper*}This paper is an extended version of \cite{CALCO09}.}

\begin{abstract}
The operational semantics of interactive systems is usually
described by labeled transition systems. Abstract semantics (that is
defined in terms of bisimilarity) is characterized by the final
morphism in some category of coalgebras.
Since the behaviour of interactive systems is for many reasons
infinite, symbolic semantics were introduced as a mean to define
smaller, possibly finite, transition systems, by employing symbolic
actions and avoiding some sources of infiniteness.
Unfortunately, symbolic bisimilarity has a different shape with
respect to ordinary bisimilarity, and thus the standard coalgebraic
characterization does not work.
In this paper, we introduce its coalgebraic models.

We will use as motivating examples two asynchronous formalisms: open
Petri nets and asynchronous pi-calculus. Indeed, as we have shown in
a previous paper, asynchronous bisimilarity can be seen as an
instance of symbolic bisimilarity.
\end{abstract}

\maketitle

\section*{Introduction}

\noindent A compositional interactive system is usually defined as a labelled
transition system (\lts) where states are equipped with an algebraic
structure. Abstract semantics is often defined as bisimilarity. Then
a key property is that ``bisimilarity is a congruence'', i.e., that
abstract semantics respects the algebraic operations.

Universal Coalgebra \cite{Rut96} provides a categorical framework
where the behaviour of dynamical systems can be characterized as
\emph{final semantics}. More precisely, if $\coalg{B}$ (i.e., the
category of $\Fun{B}$-coalgebras and $\Fun{B}$-cohomomorphisms for a
certain endofunctor $\Fun{B}$) has a final object, then the behavior
of a $\Fun{B}$-coalgebra is defined as a final morphism.
Intuitively, a final object is a universe of abstract behaviors and
a final morphism is a function mapping each system in its abstract
behavior.
Ordinary \lts s can be represented as coalgebras for a suitable
functor. Then, two states are bisimilar if and only if they are
identified by a final morphism. The image of a certain \lts\ through
a final morphism is its minimal representative (with respect to
bisimilarity), which in the finite case can be computed via
the partition refinement algorithm \cite{KannelakisSmolka}. Existence and construction of the minimal
transition system is a key property of the coalgebraic
approach. It allows to model check efficiently for several
properties by eliminating redundant states once and for all. In fact
most model checking logics are {\em adequate}, namely either a
formula holds in both the given system and in its minimal
representative or it does not hold in both of them.

%VECCHIAVERSIONE
%However, this representation of \lts s forgets about the algebraic
%structure. In particular, the property that bisimilarity respects
%the operations, i.e., that it is a congruence, which is essential
%for making abstract semantics compositional, is not reflected in the
%structure of this model.
%%
%In~\cite{TP97}, \emph{bialgebras} are introduced as a model with
%both algebraic and coalgebraic structure, while an alternative
%approach based on {\em structured coalgebras} is presented
%in~\cite{CGH98a}. In the latter, the endofunctor determining the
%coalgebraic structure is lifted from $\Cat{Set}$ to the category of
%$\Sigma$-algebras, for some algebraic signature $\Sigma$. Morphisms
%between coalgebras in this category are both $\Sigma$-homomorphisms
%and coalgebra morphisms: as a consequence the final morphism induces
%a bisimilarity that is a congruence.

When bisimilarity is not a congruence, the abstract semantics is
defined either as the \emph{largest congruence contained in
bisimilarity} \cite{RobinBook} or as the \emph{largest bisimulation
that is also a congruence} \cite{ms92}.
%Both was introduced for weak bisimilarity of CCS, the former by
%Milner in , the latter by the second author and Sassone in .
%
In this paper we focus on the latter and we call it \emph{saturated
bisimilarity} ($\satbis$). Indeed it coincides with ordinary
bisimilarity on the \emph{saturated transition system} that is
obtained from the original $\lts$ by adding the transition
$p\atr{c}{a}{}q$, for every context $c$, whenever $c(p)\tr{a}q$.

Many interesting abstract semantics are defined in this way. For
example, since late and early bisimilarity of the $\pi$-calculus
\cite{Pi} are not preserved under substitution (and thus under input
prefixes), in \cite{Sang96} Sangiorgi introduces \emph{open
bisimilarity} as the largest bisimulation on $\pi$-calculus agents
which is closed under substitutions. Other noteworthy examples are
asynchronous $\pi$-calculus \cite{AmadioCONCUR96,Honda91}, mobile
ambients calculus \cite{Ambient,MerroNardelli} and (explicit
\cite{WischikG04}) fusion calculus \cite{Fusion}.
The definition of saturated bisimilarity as ordinary bisimilarity on
the saturated \lts\ often makes infinite the portion of \lts\
reachable by any nontrivial agent and, in any case, %(e.g. for the open $\pi$-calculus)
is very inefficient, since it introduces a large number of
additional states and transitions. Inspired by Hennessy and Lin
\cite{HennessyLin}, who introduced a {\em symbolic} semantics of
value passing calculi, Sangiorgi defines in \cite{Sang96} a symbolic
transition system and symbolic bisimilarity that efficiently
characterizes open bisimilarity. After this, many formalisms have
been equipped with a symbolic semantics.

In \cite{FOSSACS08}, we have introduced a general model that
describes at an abstract level both saturated and symbolic
semantics. In this abstract setting, a symbolic transition
$p\atr{c}{\alpha}{\beta}p'$ means that $c(p)\tr{\alpha}p'$ and $c$
is a smallest context that allows $p$ to performs such a transition.
Moreover, a certain \emph{derivation relation $\vdash$} amongst the
transitions of a system is defined: $p\atr{c_1}{\alpha_1}{}p_1
\vdash p\atr{c_2}{\alpha_2}{}p_2$ means that the latter transition
is a logical consequence of the former. In this way, if all and only
the saturated transitions are logical consequences of symbolic
transitions, then saturated bisimilarity can be retrieved via the
symbolic \lts.

Unfortunately, the ordinary bisimilarity over the symbolic
transition system differs from saturated bisimilarity. Symbolic
bisimilarity is thus defined with an asymmetric shape: in the
bisimulation game, when a player proposes a transition, the opponent
can answer with a move with a different label. For example in the
open $\pi$-calculus, a transition $p\atr{[a=b]}{\tau}{}p'$ can be
matched by $q \tr{\tau}q'$. Moreover, the bisimulation game does not
restart from $p'$ and $q'$, but from $p'$ and $q'\{b / a\}$.

\bigskip

For this reason, ordinary coalgebras fail to characterize symbolic
bisimilarity. Here, we provide coalgebraic models for it by relying
on the framework of \cite{FOSSACS08}.

\bigskip

Consider the example of open bisimilarity discussed above. The fact
that open bisimulation does not relate the arriving states $p'$ and
$q'$, but $p'$ and $q'\{b / a\}$, forces us to look for models
equipped with an algebraic structure.
In~\cite{TP97}, \emph{bialgebras} are introduced as a both algebraic
and coalgebraic model, while an alternative approach based on {\em
structured coalgebras}, i.e., on coalgebras in categories of
algebras, is presented in~\cite{CGH98a}. In this paper we adopt the
latter and we introduce $\coalg{\PDA}$ (Section
\ref{sec:CIScoalgebra}), a category of structured coalgebras where
the saturated transition system can be naively modeled in such a way
that $\satbis$ coincides with the kernel of a final morphism.
Then, we focus only on those $\PDA$-coalgebras whose sets of
transitions are closed w.r.t.\ the derivation relation $\vdash$.
These form the category of \emph{saturated coalgebras}
$\coalg{\DMIS}$ (Section \ref{sec:saturatedcoalgebra}) that is
(isomorphic to) a covariety of $\coalg{\PDA}$. Thus, it has a final
object and bisimilarity coincides with the one in $\coalg{\PDA}$.

In order to characterize symbolic bisimilarity, we introduce the
notions of \emph{redundant transition} and \emph{semantically
redundant transition}. Intuitively, a transition
$p\atr{c_2}{\alpha_2}{}q$ is redundant if there exists another
transition $p\atr{c_1}{\alpha_1}{}p_1$ that logically implies it,
that is $p\atr{c_1}{\alpha_1}{}p_1 \vdash p\atr{c_2}{\alpha_2}{}q$;
it is semantically redundant, if it is ``redundant up to
bisimilarity'', i.e.,  $p\atr{c_1}{\alpha_1}{}p_1 \vdash
p\atr{c_2}{\alpha_2}{}p_2$ and $q$ is bisimilar to $p_2$. Now, in
order to retrieve saturated bisimilarity by disregarding redundant
transitions, we have to remove from the saturated transition system
not only all the redundant transitions, but also the semantically
redundant ones. This is done in the category of \emph{normalized
coalgebras} $\coalg{\DMIN}$ (Section \ref{sec:normalizedcoalgebra}).
These are defined as coalgebras without redundant transitions. Thus,
by definition, a final coalgebra in $\coalg{\DMIN}$ has no
\emph{semantically} redundant transitions. %The main peculiarity of
%$\coalg{\DMIN}$ relies in its morphisms. Indeed, ordinary
%(cohomo)morphisms between \lts s must preserve and reflect all the
%transitions (``zig-zag'' morphisms), while morphisms between
%normalized coalgebras must preserve only those transitions that are
%not semantically redundant.

We prove that $\coalg{\DMIS}$ and $\coalg{\DMIN}$ are isomorphic
(Section \ref{sec:ISO}). This means that a final morphism in the
latter category still characterizes $\satbis$, but with two
important differences w.r.t.\ $\coalg{\DMIS}$. First of all, in a
final $\DMIN$-coalgebra, there are no semantically redundant
transitions. Intuitively, a final $\DMIN$-coalgebra is a universe of
\emph{abstract symbolic behaviours} and a final morphism maps each
system in its abstract symbolic behaviour. Secondly, minimization in
$\coalg{\DMIN}$ is feasible, while in $\coalg{\DMIS}$ is not,
because saturated coalgebras have all the redundant transitions.
Minimizing in $\coalg{\DMIN}$ coincides with a \emph{symbolic
minimization algorithm} that we have introduced in \cite{ESOP09}
(Section \ref{sec:Algo}).
The algorithm shows another peculiarity of normalized coalgebras:
minimization relies on the algebraic structure. Since in bialgebras
bisimilarity abstracts away from this, we can conclude that our
normalized coalgebras are not bialgebras. This is the reason why we
work with structured coalgebras.

\bigskip

As motivating examples we will show open Petri nets
\cite{Kindler,OpenPN} (Section \ref{openpetrinets}) and asynchronous
$\pi$-calculus \cite{Honda91,AmadioCONCUR96} (Section
\ref{sec:Asyn}). In \cite{FOSSACS08}, we have shown that
asynchronous bisimilarity \cite{AmadioCONCUR96} is an instance of
symbolic bisimilarity. Indeed, in the definition of asynchronous
bisimulation, the input transition $p\tr{a(b)}p'$ can be matched
either by $q\tr{a(b)}q'$ or by $q\tr{\tau}q'$. In the latter case,
the bisimulation game does not restart from $p'$ and $q'$ but from
$p'$ and $q'|\outp{a}{b}$. Thus our framework will provide, as
lateral result, also a coalgebraic model for asynchronous
bisimilarity that, as far as we know, has never been proposed so
far.

In Section \ref{sec:CIS} and \ref{secCoa} we report the framework of
\cite{FOSSACS08} and we recall the basic notions on (structured)
coalgebras. In Section \ref{sec:swc} we introduce a further example
aimed at clarifying the whole framework (by avoiding all the
technical details of open Petri nets and asynchronous $\pi$). All
proofs are in Appendix.

\paragraph{\bf{Previous works.}}
Our work relies on the framework introduced in \cite{FOSSACS08} and
on the minimization algorithm in \cite{ESOP09}. In this work we
focus on the coalgebraic characterization of them that appeared in
\cite{CALCO09}. The present paper extends \cite{CALCO09} by (1)
introducing the example of asynchronous $\pi$-calculus, (2) by
adding all the proofs, (3) by explaining in full details the
relationship with the minimization algorithm in \cite{ESOP09}.
Normalized coalgebras have been previously introduced in \cite{BM07}
for giving a coalgebraic characterization of the theory of reactive
systems by Leifer and Milner \cite{RobinCONCUR00}.

\section{Asynchronous $\pi$-calculus}\label{sec:Asyn}

\noindent Asynchronous $\pi$-calculus has been introduced in
\cite{Honda91}
%[Aggiungere Boudol91]
for modeling distributed systems interacting via \emph{asynchronous}
message passing. Differently from the synchronous case, where
messages are sent and received at the same time, in the asynchronous
communication, messages are sent and travel through some media until
they reach the destination. Therefore sending messages is non
blocking (i.e., a process can send messages even if the receiver is
not ready to receive), while receiving is blocking (processes must
wait until the message has arrived). This asymmetry is reflected on
the observations: since sending is non blocking, receiving is
unobservable.

In this section, we introduce asynchronous $\pi$-calculus and two
definitions of bisimilarity ($\sim^1$ and $\sim^a$) that, as proved
in \cite{AmadioCONCUR96}, coincide. In Section \ref{sec:CIS}, we
will show that the first is an instance of our general definition of
saturated bisimilarity (Definition \ref{def:saturated}) while the
second of symbolic bisimilarity (Definition \ref{def:Symbolic}).

\bigskip

Let $\names$ be a set of \emph{names} (ranged over by $a,b,c \dots$)
with $\tau \notin \names$. The set of $\pi$-processes is defined by
the following grammar:
$$p::= \outp{a}{b}, \;\;  p_1| p_2 ,\;\; \res{a}{p}, \;\; !g,
\;\; m \;\;\;\;\;\;\;m::= \nil,\;\; \alpha .p,\;\;
m_1+m_2\;\;\;\;\;\;\; \alpha ::=\inp{a}{b},\;\;\tau$$
The main difference with the ordinary $\pi$-calculus \cite{Pi} is
that here output prefixes are missing. The occurrence of an
unguarded $\bar{a}b$ can be thought of as message $b$ that is
available on some communication media named $a$. This message is
received whenever it disappears, i.e., it is consumed by some
process performing an input. Thus the action of sending happens when
$\bar{a}b$ becomes unguarded.

 Considering $\inp{a}{b}.p$ and $\res{b}{p}$, the occurrences of $b$ in $p$ are
 bound. An occurrence of a name in a process is \emph{free}, if it is not
 bound. The set of \emph{free names} of $p$ (denoted by $\fn{p}$) is the set of names that have a
 free occurrence in the process $p$. The process $p$ is
 $\alpha$-equivalent to $q$ (written $p\equiv_{\alpha}q$), if they
 are equivalent up to $\alpha$-renaming of bound occurrences of
 names.
 The operational semantics of $\pi$-calculus is a transition system labeled
 on actions $Act= \{\inp{a}{b}, \outp{a}{b}, \boutb{a}{b}, \tau
 \mid a,b \in \names \}$ (ranged over by $\mu$) where $b$ is a
 \emph{bound name} (written $b \in \bn{\mu}$) in $\inp{a}{b}$ and
 $\boutb{a}{b}$. In all the other cases $a$ and $b$ are free in
 $\mu$ ($a,b \in \fn{\mu}$). By $\nm{\mu}$ we denote the set of both
 free and bound names of $\mu$.

\begin{table}[t]
 \centering
 \begin{tabular}{lll}
 %{c@{\hspace{0.5cm}}c@{\hspace{0.5cm}}c}
\rulelabel{tau} $\tau.p \tr{\tau} p$ &
\rulelabel{in} $\inp{a}{b}.p\tr{\inp{a}{c}} p\subs{c}{b}$ &
\rulelabel{out} $\outp{a}{b}\tr{\outp{a}{b}} \nil$
\\[1em]
\rulelabel{com} $\deduz{p \tr{\outp{a}{b}} p' \quad q
\tr{\inp{a}{b}} q'}{p | q \tr{\tau} p'| q'}$ &
\rulelabel{sum} $\deduz{p \tr{\mu} p'}
    {p +q \tr{\mu} p'}$&
\rulelabel{par} $\deduz{p \tr{\mu} p'}
    {p | q \tr{\mu} p' | q}$
    {\scriptsize$\bn{\mu} \cap \fn{q}= \emptyset$}
\\[1em]
    \rulelabel{opn} $\deduz{p \tr{\outp{a}{b}}p'}{\res{b}{p} \tr{\boutb{a}{b}} p'}$
    {\scriptsize$b \neq a$} &
    \rulelabel{rep} $\deduz{m | !m \tr{\mu} q}
    {!m \tr{\mu} q}$&
    \rulelabel{cls} $\deduz{p \tr{\boutb{a}{b}} p' \quad q \tr{\inp{a}{b}} q'}
    {p| q \tr{\tau} \res{b}{p' | q'}}${\scriptsize$b\notin \fn{q}$}
\\[1em]
    \rulelabel{res} $\deduz{p\tr{\mu}p'}{\res{b}{p} \tr{\mu}\res{b}{p'}}$
    {\scriptsize$b\notin \nm{\mu}$}&
    %
%    \rulelabel{mat} $\deduz{p\tr{\mu}p'}{\match{a}{a}p \tr{\mu}p'}$
%\\[1em]
%
\end{tabular}
 \caption{Operational semantics of asynchronous $\pi$-calculus. \label{tableAsynchronousPiTransitions}}
 \end{table}

The labeled transition system (\lts) is inductively defined by the
rules in Table \ref{tableAsynchronousPiTransitions}, where we have
omitted the symmetric version of the rules $\textsc{sum}$,
$\textsc{par}$, $\textsc{com}$ and \textsc{cls} and where we
consider processes up to $\alpha$-equivalence, i.e., we have
implicitly assumed the rule
$$\inferrule{p\tr{\mu}q\;\;\; p\equiv_{\alpha}p'}{p'\tr{\mu}q}\text{.}$$

%The operational semantics of processes $\tau.\res{y}{\outp{y}{a}}+
%\inp{a}{b}.\outp{a}{b}_1$ and $\tau.\nil_1$ is shown in
%Fig.\cite{fig:processes}(A).

%The above \lts\ was introduced in \cite{AmadioCONCUR96}. In
%\cite{Honda91}, Honda and Tokoro originally presented it with the
%additional rule
%$$\rulelabel{nil} \quad \nil \tr{\inp{a}{b}} \outp{a}{b}$$
%that substantially means that any process $p$ (since we consider
%$p\equiv p\mid \nil$) can perform an input actions $\inp{a}{b}$
%going in the state $p\mid \outp{a}{b}$. In this section we will rely
%on the transition system and the abstract semantics of
%\cite{AmadioCONCUR96}.

The main difference with the synchronous case is in the notion of
\emph{observation}. Since sending messages is non-blocking, then an
external observer can just send messages to a system without knowing
if they will be received or not. For this reason the receiving
action is not observable and the abstract semantics is defined
disregarding input transitions.

%In the following definitions, a name declared \emph{fresh} is
%supposed to be different from all the others appearing in the
%processes of the definition. Analogously, a set of names is fresh if
%all its elements are fresh.

As in the case of the standard $\pi$-calculus, in the bisimulation
game we have to take care of the bound names in output actions.
Indeed, when a process $p\tr{\boutb{a}{b}}p'$, the name $b$ is
initially bound in $p$ and becomes free in $p'$. Thus, in order to
avoid name-clashes, in the bisimulation game when comparing $p$ and
$q$, we require $b$ to be \emph{fresh}, namely, different from all
the free names of $p$ and $q$. In the following definitions, by
``$\bn{\mu}$ is fresh'' we mean that if $\mu$ has a bound name, then
it is fresh.

\begin{defi}[$o\tau$-Bisimilarity]\label{defotau}
A symmetric relation $R$ is an \emph{$o\tau$-bisimulation} iff,
whenever $pRq$:
\begin{enumerate}[$\bullet$]
\item if $p\tr{\mu}p'$ where $\mu$ is not an input action and
$\bn{\mu}$ is fresh, then $\exists q'$ such that $q\tr{\mu}q'$ and
$p'Rq'$.
\end{enumerate}
We say that $p$ and $q$ are $o\tau$-bisimilar (written $p
\sim^{o\tau}q$) if and only if there exists an $o\tau$-bisimulation
relating them.
\end{defi}

Note that $\inp{a}{x}.\outp{y}{x}\sim^{o\tau}
\inp{a}{x}.\outp{d}{x}$, even if the two processes are really
different when they are put in parallel with a process
$\outp{a}{b}$.
In order to obtain an abstract semantics preserved under parallel
composition, we proceed analogously to saturated bisimilarity (that
we will show in Definition \ref{def:saturated}), i.e., at any step
of the bisimulation we put the process in parallel with all possible
outputs.
\begin{defi}[1-Bisimilarity]\label{def:1bis}
A symmetric relation $R$ is an \emph{1-bisimulation} iff, $\forall
\outp{a}{b}$, whenever $pRq$,
\begin{enumerate}[$\bullet$]
\item if $\outp{a}{b} | p\tr{\mu}p'$ where $\mu$ is not an input
action and $\bn{\mu}$ is fresh, then $\exists q'$ such that
$\outp{a}{b}| q\tr{\mu}q'$ and $p'Rq'$.
\end{enumerate}
We say that $p$ and $q$ are $1$-bisimilar (written $p \sim^{1}q$) if
and only if there exists an $1$-bisimulation relating them.
\end{defi}

The above definition is not very efficient since it considers a
quantification over all possible output in parallel.
Instead of considering all possible output contexts, we could also
consider the input actions. This leads to the following notion of
syntactic bisimulation.

\begin{defi}[Syntactic Bisimilarity]\label{def:syntacticAsy}
A symmetric relation $R$ is a \emph{syntactic bisimulation} iff,
whenever $pRq$:
\begin{enumerate}[$\bullet$]
\item if $p\tr{\mu}p'$ where
$\bn{\mu}$ is fresh, then $\exists q'$ such that $q\tr{\mu}q'$ and
$p'Rq'$.
\end{enumerate}
We say that $p$ and $q$ are syntactic bisimilar (written $p
\sim^{SYN}q$) if and only if there exists a syntactic bisimulation
relating them.
\end{defi}
Note that syntactic bisimilarity is strictly included into
$1$-bisimilarity. Indeed,
$$\tau + \inp{a}{b}.\outp{a}{b} \sim^1 \tau \text{, but } \tau + \inp{a}{b}.\outp{a}{b} \not \sim^{SYN} \tau \text{.}$$
The former equivalence can be understood by observing that both
processes can perform a $\tau$ transition in any possible context
and, when inserted into the context $ - | \outp{a}{x}$, both can
perform a $\tau$ transition going into $\outp{a}{x}$. More
generally, it holds that for all processes $p \sim^1 q\sim^1 r$:
$$\tau.p + \inp{a}{b}.(\outp{a}{b}| q) \sim^1 \tau.r$$
For instance, by taking $q=r=\nil$ and $p=\res{y}{\outp{y}{a}}$
(that is $1$-bisimilar to $\nil$, since both cannot move), we have
that $\tau.\res{y}{\outp{y}{a}} + \inp{a}{b}.\outp{a}{b} \sim^1
\tau.\nil$. Their \lts s are shown in Figure \ref{fig:processes}(A).

In order to efficiently characterize $\sim^1$, without considering
all possible contexts, we have to properly tackle the input
transitions.

\begin{defi}[Asynchronous
Bisimilarity]\label{def:AsynchronousBisimulation} A symmetric
relation $R$ is an \emph{asynchronous bisimulation} iff whenever
$pRq$,
\begin{enumerate}[$\bullet$]
\item if $ p\tr{\mu}p'$ where $\mu$ is not an input
action and $\bn{\mu}$ is fresh, then $\exists q'$ such that $
q\tr{\mu}q'$ and $p'Rq'$,
\item if
$p\tr{\inp{a}{b}}p'$, then $\exists q'$ such that either
$q\tr{\inp{a}{b}}q'$ and $p'Rq'$, or $q\tr{\tau}q'$ and $p'R(q'|
\outp{a}{b})$.
\end{enumerate}
We say that $p$ and $q$ are asynchronous bisimilar (written $p
\sim^{a} q$) if and only if there is an asynchronous bisimulation
relating them.
\end{defi}

%Another efficient characterization is proposed in
%\cite{AmadioCONCUR96}.
%
%\begin{defi}[4-bisimilarity]\label{def:4bis}
%A symmetric relation $R$ is a \emph{$4$-bisimulation} iff whenever
%$pRq$:
%\begin{enumerate}[$\bullet$]
%\item if $\outp{a}{b} | p\tr{\mu}p'$ where $\mu$ is not an input
%action and $\bn{\mu}$ is fresh, then $\outp{a}{b}| q\tr{\mu}q'$
%and $p'Rq'$,
%\item if $p\tr{\inp{a}{b}}p'$ then $q| \outp{a}{b} \tr{\tau}q'$ and
%$p'Rq'$.
%\end{enumerate}
%We say that $p$ and $q$ are asynchronous bisimilar (written $p
%\sim^{4} q$) if and only if there is a $4$-bisimulation relating
%them.
%\end{defi}

\begin{figure}[t]
\begin{tabular}{c|c}
\begin{tabular}{c}
$\xymatrix@R=6pt@C=5.5pt{& & \dots \\& &  \outp{a}{b} \ar@(l,u)[rrd]^(.7){\outp{a}{b}} \\
\tau.\res{y}{\outp{y}{a}}+ \inp{a}{b}.\outp{a}{b}
\ar[rr]|(.7){\inp{a}{a}} \ar[rrd]_{\tau} \ar[rru]|(.7){\inp{a}{b}}
\ar[rruu]|(.7){\inp{a}{c}} \ar[rrd]_{\tau}& & \outp{a}{a}
\ar[rr]^{\outp{a}{a}} && \nil && \tau.\nil \ar[ll]_{\tau} \\
& & \res{y}{\outp{y}{a}} && \res{y}{\outp{y}{a}}|\outp{a}{a}
\ar[ll]_{\outp{a}{a}}}$
\\
(A)
\\
\hline
\\
$\xymatrix@R=6pt@C=5.5pt{ & & \dots \\& & \outp{a}{b}_2 \ar[rr]^{-,
\outp{a}{b}} & & \nil_2 &&\\ \tau.\res{y}{\outp{y}{a}}+
\inp{a}{b}.\outp{a}{b}_1 \ar@(ur,l)[rru]^{-|\outp{a}{b},\tau}
\ar@(u,l)[rruu]^{\dots} \ar[rr]^(.7){-|\outp{a}{a},\tau}
\ar[rrd]_{-, \tau} & & \outp{a}{a}_1
\ar[rr]^{-, \outp{a}{a}} & & \nil_1 &&  \tau.\nil_1 \ar[ll]_{-, \tau} \\
& & \res{y}{\outp{y}{a}}_1 && \res{y}{\outp{y}{a}}|\outp{a}{a}_1 \ar[ll]_{-, \outp{a}{a}} & &  \\
& & \res{y}{\outp{y}{a}}_2 && \res{y}{\outp{y}{a}}|\outp{a}{b}_2
\ar[ll]_{-,\outp{a}{b}} }$\\
(B)
\end{tabular}
&
\begin{tabular}{c}
$\xymatrix@R=6pt@C=5.5pt{
\dots & & \dots && \dots \\
\\
\outp{a}{c}_3 \ar[uu]|{\dots} \ar[uurr]|(0.4){\dots} && \outp{a}{d}_4 \ar[uu]|{\dots} \ar[uurr]|(0.4){\dots}\ar[uull]|(0.4){\dots} && \dots \\
\\
\tau.\nil_1 \ar[dd]_(0.4){-, \tau} \ar[ddrr]^{-|\outp{a}{b},\tau} \ar@(r,u)[ddrrrr]|(0.8){-|\outp{a}{b}|\outp{a}{c},\tau} \ar[rrrr]^{\dots} \ar[uu]^(0.4){-|\outp{a}{c},\tau} \ar[uurr]|{-|\outp{a}{d},\tau} \ar[uurrrr]|{\dots}&& &&\dots\\
\\
\nil_1 \ar[dd]|{\dots} \ar[ddrr]|(0.4){\dots} && \outp{a}{b}_2 \ar[ddll]|(0.4){\dots} \ar[dd]|{\dots} \ar[ddrr]|(0.4){\dots} && \outp{a}{b}|\outp{a}{c}_3\ar[dd]|{\dots} \ar[ddll]|(0.4){\dots}\\
\\
\dots && \dots && \dots }$
\\
(C)
\end{tabular}
\end{tabular}\caption{(A) Part of the infinite \lts\ of
$\tau.\res{y}{\outp{y}{a}}+ \inp{a}{b}.\outp{a}{b}$ and the \lts\ of
$\tau.\nil$. (B) The symbolic transition system $\approxAsy$ of
$\tau.\res{y}{\outp{y}{a}}+ \inp{a}{b}.\outp{a}{b}_1$ and
$\tau.\nil_1$. (C) Part of the infinite saturated transition system
of $\tau.\nil_1$.}\label{fig:processes}
\end{figure}
%
%$\begin{array}{ll}
%\\
%P_0= &\{\tau.\nil_1, \tau.\res{y}{\outp{y}{a}}+
%\inp{a}{b}.\outp{a}{b}_1,\nil_1,\res{y}{\outp{y}{a}}_1,\res{y}{\outp{y}{a}}|\outp{a}{a}_1,\outp{a}{a}_1\}\\
%& \{\nil_2,\res{y}{\outp{y}{a}}_2,\res{y}{\outp{y}{a}}|\outp{a}{b}_2,\outp{a}{b}_2\} \{\nil_3,\res{y}{\outp{y}{a}}_3,\res{y}{\outp{y}{a}}|\outp{a}{c}_3,\outp{a}{c}_3\} \dots\\
%P_1= &\{\tau.\nil_1, \tau.\res{y}{\outp{y}{a}}+
%\inp{a}{b}.\outp{a}{b}_1\}\{\nil_1,\res{y}{\outp{y}{a}}_1\}\{\res{y}{\outp{y}{a}}|\outp{a}{a}_1,\outp{a}{a}_1\}\\
%& \{\nil_2,\res{y}{\outp{y}{a}}_2\}\{\res{y}{\outp{y}{a}}|\outp{a}{b}_2,\outp{a}{b}_2\}\{\nil_3,\res{y}{\outp{y}{a}}_3\}\{\res{y}{\outp{y}{a}}|\outp{a}{c}_3,\outp{a}{c}_3\} \dots\\
%P_2= &\{\tau.\nil_1, \tau.\res{y}{\outp{y}{a}}+
%\inp{a}{b}.\outp{a}{b}_1\}\{\nil_1,\res{y}{\outp{y}{a}}_1\}\{\res{y}{\outp{y}{a}}|\outp{a}{a}_1,\outp{a}{a}_1\}\\
%& \{\nil_2,\res{y}{\outp{y}{a}}_2\}\{\res{y}{\outp{y}{a}}|\outp{a}{b}_2,\outp{a}{b}_2\} \{\nil_3,\res{y}{\outp{y}{a}}_3\}\{\res{y}{\outp{y}{a}}|\outp{a}{c}_3,\outp{a}{c}_3\} \dots\\
%\\
%\end{array}$

For instance, the symmetric closure of the following relation is an
asynchronous bisimulation.
$$R=\{(\tau.\res{y}{\outp{y}{a}}+ \inp{a}{b}.\outp{a}{b}, \tau.\nil),
\; (\res{y}{\outp{y}{a}}, \nil)
\}\cup\{(\outp{a}{x},\res{y}{\outp{y}{a}} | \outp{a}{x}) \mid x\in
\names\}$$ In \cite{AmadioCONCUR96}, it is proved that
$\sim^1=\sim^a$. In Section \ref{sec:CIS} we will show that this
result is an instance of a more general theorem (Theorem
\ref{theo:main}), since $\sim^1$ is an instance of saturated
bisimilarity and $\sim^a$ is an instance of symbolic bisimilarity.
The main contribute of this paper is to give coalgebraic
characterization to saturated and symbolic semantics and thus we
will characterize both $\sim^1$ and $\sim^a$ via coalgebras.
\section{Open Petri nets}\label{openpetrinets}
\noindent Differently from process calculi, Petri nets do not have a widely
known interactive behavior. Indeed they model concurrent systems
that are closed, in the sense that they do not interact with the
environment. \emph{Open nets} \cite{Kindler,OpenPN} are P/T Petri
nets \cite{PetriNets} that can interact by exchanging tokens on
\emph{input} and \emph{output places}.

%     E' CORRETTO MA e' GIA' RIPORTATO NELL'ESEMPIO DELLE RETI DI PETRI
%
%Given a set $X$, we write $X^{\oplus}$ for the free commutative
%monoid over $X$. A multiset $m\in X^{\oplus}$ is a function from $X$
%to $\nat$ (the set of natural number) that associates a multiplicity
%to every element of $X$. Given two multisets $m_1$ and $m_2$, $m_1
%\oplus m_2$ is defined as $\forall x \in X$, $m_1\oplus
%m_2(x)=m_1(x)+m_2(x)$. We write $m_1 \subseteq m_2$ if $\forall x
%\in X$, $m_1(x)\leq m_2(x)$. If $m_1 \subseteq m_2$, the multiset
%$m_2 \ominus m_1$ is defined as $\forall x\in X$ $m_2\ominus
%m_1(x)=m_2(x)-m_1(x)$. Given a set $Y \subseteq X$, and $m \in
%X^{\oplus}$, the multiset $m \upharpoonleft Y$ is defined as $m
%\upharpoonleft Y(x)=m(x)$ if $x\in Y$, $0$ otherwise.

Given a set $X$, we write $X^{\oplus}$ for the free commutative
monoid over $X$. A multiset $m\in X^{\oplus}$ is a finite function
from $X$ to $\nat$ (the set of natural numbers) that associates a
multiplicity to every element of $X$. Given two multisets $m_1$ and
$m_2$, $m_1 \oplus m_2$ is defined as $\forall x \in X$, $m_1\oplus
m_2(x)=m_1(x)+m_2(x)$. We write $\emptyset$ to denote respectively
both the empty set and the empty multiset. In order to make lighter
the notation we will use $aab$ to denote the multiset $\{a,a,b\}$.
Sometimes we will use $a^nb^m$ to denote the multisets containing
$n$ copies of $a$ and $m$ copies of $b$.

%
%Recall the operations about multisets that we have introduced in
%Example \ref{ex:Petrinet}. %In the following, we write $\varepsilon$ to
%%denote both the empty set and the empty multiset.

\begin{defi}[Open Net]
An \emph{open net} is a tuple $N=(S,T, \pr, \po, l, I,O)$ where $S$
is the set of places, $T$ is the set of transitions (with $S\cap T=
\emptyset$), $\pr,\po: T\to S^{\oplus}$ are functions mapping each
transition to its pre- and post-set, $l: T \to \Lambda$ is a
labeling function ($\Lambda$ is a set of labels) and $I, O\subseteq
S$ are the sets of input and output places. A \emph{marked open net}
(shortly, \emph{marked net}) is pair $\<\onet{N},m\>$ where
$\onet{N}$ is an open net and $m \in S^{\oplus}$ is a marking.
\end{defi}
%
%file=images/OpenPetriNets/N1
\begin{figure}[t]
\centering
\begin{tabular}{c}
\epsfig{file=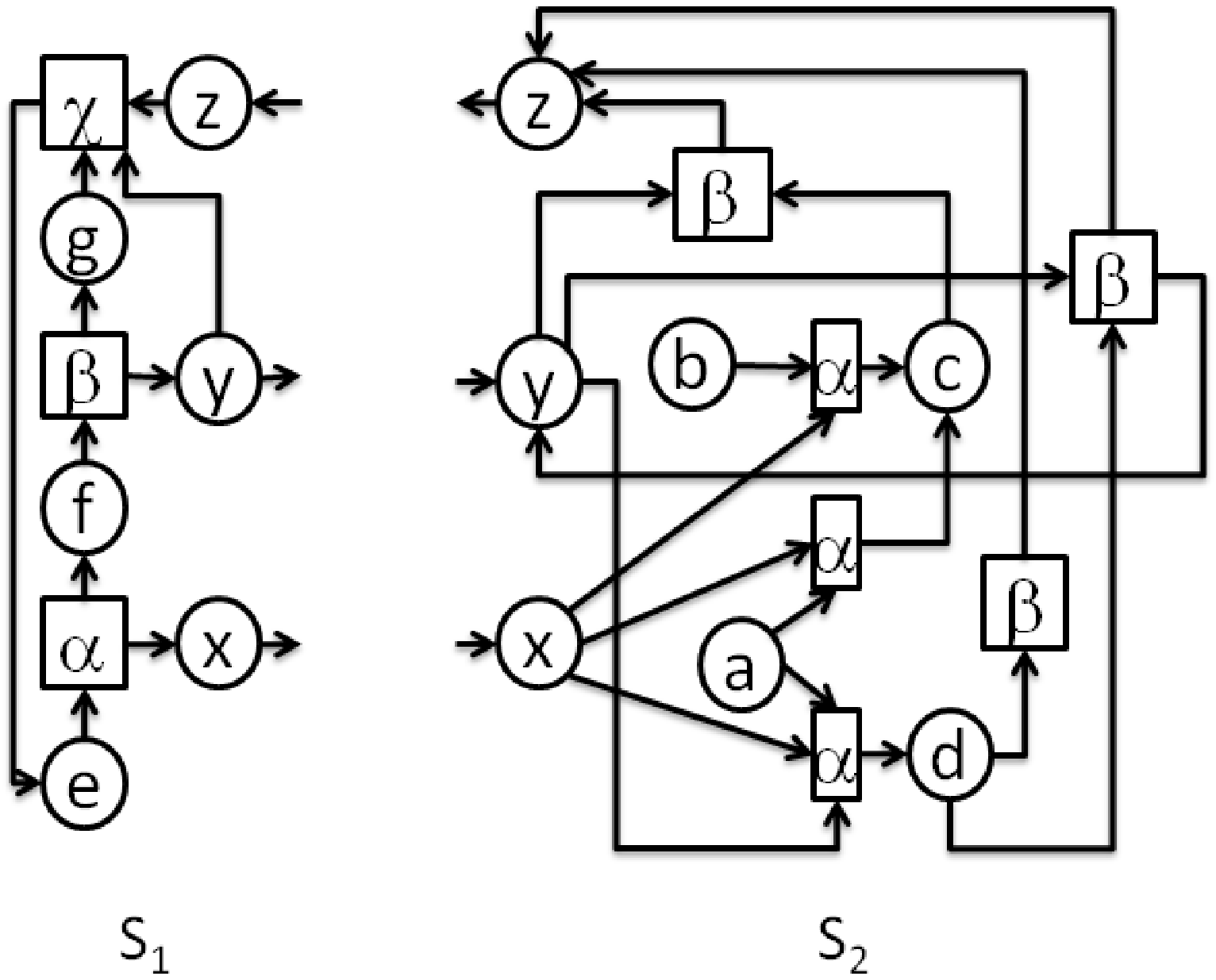, width=0.4\textwidth}
\end{tabular}
\begin{tabular}{c}
\hspace{20pt}(A)$\xymatrix@R=5pt@C=7pt{
a \ar[rr]^{+x} \ar[dd]_{+y}  & & ax \ar[rr]^{+x} \ar[dd]_{+y} \ar[rd]^{\alpha} & & axx \ar[rr]^{+x} \ar[dd]_>{+y} \ar[rd]^{\alpha} & & \dots\\
& & & c \ar[rr]^(.2){+x} \ar[dd]_{+y}& & \dots & \\
\dots & & \dots & & \dots  & & \\
&&& \dots}$\\
\hline (B)$\xymatrix@R=7pt@C=14pt{ b \ar[rr]^{x, \alpha} & & c
\ar[rr]^{y, \beta} & & z & cy \ar[l]_{\emptyset, \beta}
\\
\\
a \ar[rr]_(.6){xy, \alpha} \ar[rruu]|{x, \alpha} & & d
\ar[rruu]|{\emptyset, \beta} \ar[rr]_{y, \beta} & & zy } $
\end{tabular}
\caption{$S_1$ and $S_2$ are two open Petri nets. (A) Part of the
infinite transition system of $\<S_2,a\>$. (B) The symbolic
transition system of $\<S_2,a\>$, $\<S_2,b\>$ and
$\<S_2,cy\>$.}\label{fig:OpenNets1s2}
\end{figure}
It is worth noting that standard P/T Petri nets can be thought of as
open nets whose sets $I$ and $O$ are empty. Figure
\ref{fig:OpenNets1s2} shows two open nets where, as usual, circles
represents places and rectangles transitions (labeled with $\alpha,
\beta, \chi$). Arrows from places to transitions represent $\pr$,
while arrows from transitions to places represent $\po$. Input
places are denoted by ingoing edges, while output places are denoted
by outgoing edges. Thus in $S_1$, $x$ and $y$ are output places,
while $z$ is the only input place. In $S_2$, it is the converse. The
\emph{parallel composition} of two nets is defined by attaching them
on their input and output places. As an example, we can compose
$S_1$ and $S_2$ by attaching them through $x, y$ and $z$.

\begin{table}
 \begin{tabular}{ccc}
    \rulelabel{tr} $\deduz{t \in T \quad \lambda(t)=l\quad m =\pre{t}\oplus c}{\onet{N},m \tr{l} \onet{N},\post{t}\oplus c }$
    &
    \rulelabel{in} $\deduz{i \in I_{\onet{N}}}{\onet{N},m\tr{+i}\onet{N}, m\oplus i}$
    &
    \rulelabel{out} $\deduz{o \in O_{\onet{N}} \quad o \in m}{\onet{N},m\tr{-o}\onet{N}, m\ominus o}$
 \end{tabular}
 \caption{Operational Semantics of marked open nets. \label{tableOpenPetrinet}}
 \end{table}

%Note that open input Petri nets (Example \ref{ex:OpenPN}) are a
%special case of open nets having $O=\emptyset$ and all the
%transitions are labeled with the same label $\tau$.
%
%\medskip

The operational semantics of marked open nets is expressed by the
rules on Table \ref{tableOpenPetrinet}, where we use $\pre{t}$ and
$\post{t}$ to denote $\pr(t)$ and $\po(t)$ and we avoid putting
bracket around the marked net $\<N,m\>$, in order to make lighter
the notation. The rule \rulelabel{tr} is the standard rule of P/T
nets (seen as multisets rewriting), while the other two are specific
of open nets. The rule \rulelabel{in} states that in any moment a
token can be inserted inside an input place and, for this reason,
the \lts\ has always an infinite number of states. The rule
\rulelabel{out} states that when a token is in an output place, it
can be removed. Figure \ref{fig:OpenNets1s2}(A) shows part of the
infinite transition system of $\<S_2,a\>$.

The abstract semantics is defined in \cite{BaldanCEHK07} as the
standard bisimilarity (denoted by $\sim^N$) and it is a congruence
under the parallel composition outlined above. This is due to the
rules \rulelabel{in} and \rulelabel{out}, since they put a marked
net in all the possible contexts. If we consider just the rule
\rulelabel{tr}, then bisimilarity fails to be a congruence. Thus
also for open nets, the canonical definition of bisimulation
consists in inserting the system in all the possible contexts and
observing what happens.
\begin{figure}[t]
\begin{tabular}{c|c}
\begin{tabular}{c}
\epsfig{file=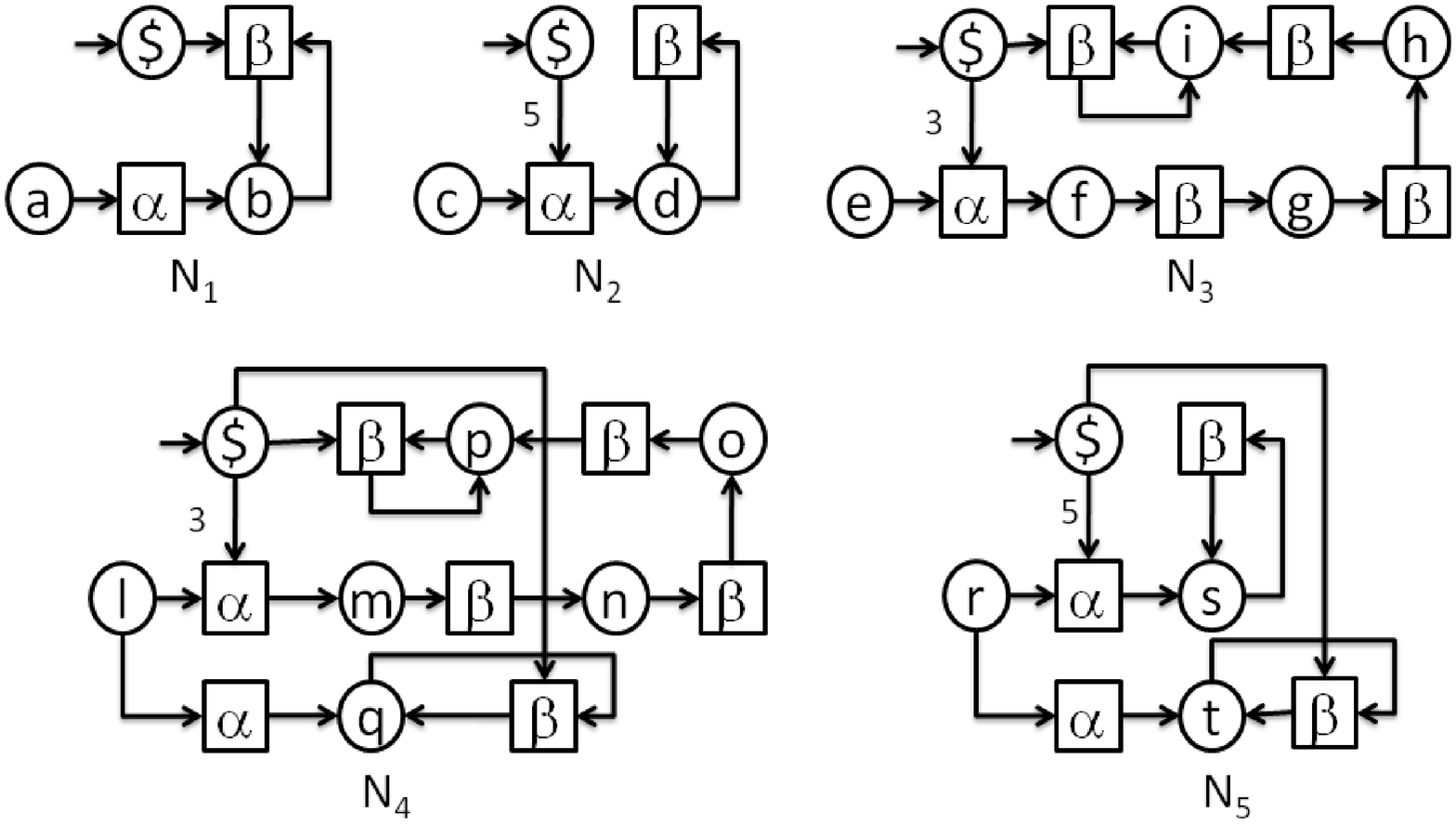, width=0.6\textwidth}
\end{tabular}&
\begin{tabular}{c}
$\xymatrix@R=5pt@C=7pt{ a \ar[rr]^{+ \$ } \ar[dd]_{\alpha}  & & a\$
\ar[rr]^{+ \$ } \ar[dd]_{\alpha}  & & a\$^2 \ar[rr]^{+\$}
\ar[dd]_{\alpha} & &
\dots\\\\
b \ar[rr]^{+ \$ } & & b\$ \ar[rr]^{+\$} \ar@/^1pc/[ll]|{\beta} & &
\dots }$\\
(A)\\
$\xymatrix@R=7pt@C=8pt{ a \ar[dd]|{\emptyset, \alpha } \ar[rrdd]|{ \$, \alpha } \ar[rrrrdd]|{\$^2, \alpha } \ar[rrrrrdd]^{\$^3, \alpha }\\\\
b \ar@(d,l)|{\$,\beta}  \ar@/_1pc/[rr]_{\$^2,\beta}
\ar@/_1pc/[rrrr]_{\$^3,\beta} \ar@/_/[rrrrr]_>>{\$^4,\beta}& & b\$
\ar[ll]|{\emptyset, \beta} && b\$^2 \ar[ll]|{\emptyset, \beta}
& \dots&}$\\
(B)
\end{tabular}
\\
\hline
\end{tabular}
\begin{center}
%S1
$\xymatrix@R=5pt@C=7pt{ a \ar[rr]^{\emptyset, \alpha} & & b
\ar@(r,u)_{\$,\beta}}$
%S2
$\xymatrix@R=5pt@C=7pt{ c \ar[rr]^{\$^5, \alpha} & & d
\ar@(r,u)_{\emptyset,\beta}}$
%S3
$\xymatrix@R=5pt@C=7pt{ e \ar[rr]^{\$^3, \alpha} & & f
\ar[rr]^{\emptyset, \beta} && g\ar[rr]^{\emptyset, \beta} && h
\ar[rr]^{\emptyset, \beta} && i \ar@(r,u)_{\$,\beta}}$
%S4
$\xymatrix@R=7pt@C=7pt{ l \ar[rr]^{\$^3, \alpha}
\ar[rrd]_{\emptyset, \alpha}& & m \ar[rr]^{\emptyset, \beta}
&&n\ar[rr]^{\emptyset, \beta} && o \ar[rr]^{\emptyset, \beta} &&p
\ar@(d,r)_{\$,\beta}\\ & & q \ar@(r,u)|{\$,\beta}}$
%S5
$\xymatrix@R=7pt@C=7pt{ r \ar@/^/[rrrr]^{\$^5, \alpha}
\ar[rrd]_{\emptyset, \alpha}& & && s \ar@(d,r)_{\emptyset,\beta}\\&
& t\ar@(r,u)|{\$,\beta}}$ (C)
\end{center} \caption{The
open nets $N_1$, $N_2$, $N_3$, $N_4$ and $N_5$. (A) Part of the
infinite transition system of $\<N_1,a\>$. (B) Part of the infinite
saturated transition system of $\<N_1,a\>$. (C) The symbolic
transition systems of
$\<N_1,a\>$,$\<N_2,c\>$,$\<N_3,e\>$,$\<N_4,l\>$ and
$\<N_5,r\>$.}\label{fig:OpenNet}
\end{figure}

\bigskip

In the remainder of the paper we will use as running example the
open nets in Figure \ref{fig:OpenNet}. Since all the places have
different names (with the exception of $\$ $), in order to make
lighter the notation, we write only the marking to mean the
corresponding marked net, e.g. $b^2\$ $ means the marked net
$\<N_1,b^2\$ \>$.

The marked net $a$ (i.e., $\<N_1,a\>$) represents a system that
provides a service $\beta$. After the activation $\alpha$, it
provides $\beta$ whenever the client pay one $\$ $ (i.e., the
environment insert a token into $\$ $). The marked net $c$ instead
requires five $\$ $ during the activation, but then provides the
service $\beta$ for free. The marked net $e$, requires three $\$ $
during the activation. For three times, the service $\beta$ is
performed for free and then it costs one $\$ $. It is easy to see
that all these marked nets are not bisimilar. Indeed, a client that
has only one $\$ $ can have the service $\beta$ only with $a$, while
a client with five $\$ $ can have the service $\beta$ for six times
only with $c$.
The marked net $r$ represents a system that offers the behaviour of
both $a$ and $c$, i.e., either the activation $\alpha$ is for free
and then the service $\beta$ costs one, or the activation costs five
and then the service is for free. Also this marked net is different
from all the others.

Now consider the marked net $l$. It offers the behaviour of both $a$
and $e$, but it is equivalent to $a$, i.e., $l \sim^N a$. Roughly,
the behaviour of $e$ is absorbed by the behaviour of $a$. This is
analogous to what happens in the asynchronous $\pi$-calculus where
it holds that $\inp{a}{x}.(\outp{a}{x}
\mid p) + \tau.p \sim^1 \tau.p$. %Appendix \ref{appendixExample} proves
%that $l \sim^N a$.
%
%
%
\bigskip

The definition of $\sim^N$ involves an infinite transition system
and thus it is often hard to check. As in the case of $\sim^1$ for
the asynchronous $\pi$-calculus, we would like to efficiently
characterize it. In the following we show an efficient
characterization of $\sim^N$, that we have introduced in
\cite{FOSSACS08}. Here and in the rest of the paper, to make simpler
the presentation we restrict to open nets with only input places.
The general case, is completely analogous and can be found in
\cite{FOSSACS08,BonchiThesis}.

First of all, we have to define a \emph{symbolic transition system}
that, analogously to the operational semantics of the asynchronous
$\pi$, performs input-transitions only when needed. We call it
$\approxON$.

Intuitively, the symbolic transition $\onet{N},m
\atr{i}{\lambda}{\approxON}\onet{N},m'$ is possible if and only if
$\onet{N},m\oplus i \tr{\lambda}\onet{N},m'$ and $i$ is the smallest
multiset (on input places) allowing such transition. This transition
system is formally defined by the following rule.
$$\deduz{t \in T \quad l(t)=\lambda\quad m=(m \cap \pre{t})\oplus n
\quad i \subseteq I^{\oplus} \quad \pre{t}=(m \cap \pre{t})\oplus
i}{\onet{N},m \atr{i}{\lambda}{\approxON} \onet{N},\post{t}\oplus
n}$$ The marking $m \cap \pre{t}$ contains all the tokens of $m$
that are needed to perform the transition $t$. The marking $n$
contains all the tokens of $m$ that are not useful for performing
$t$, while the marking $i$ contains all the tokens that $m$ needs to
reach $\pre{t}$. Note that $i$ is exactly the \emph{smallest}
multiset that is needed to perform the transition $t$. Indeed if we
take $i_1$ strictly included into $i$, $m \oplus i_1$ cannot match
$\pre{t}$.
As an example consider the net $N_2$ in Figure \ref{fig:OpenNet}
with marking $cd\$^2$ and let $t$ be the only transition labeled
with $\alpha$. We have that $cd\$^2 \cap \pre{t}=c\$^2$, $n=d$ and
$i=\$^3$. Thus $N_2,cd\$^2 \atr{\$^3}{\alpha}{\approxON}N_2, dd$,
meaning that $cd\$^2$ needs $\$^3$ to perform $\alpha$ and going
into $dd$. Figure \ref{fig:OpenNet}(C) shows some symbolic
transition systems.

Note that analogously to $\sim^{SYN}$ for the asynchronous
$\pi$-calculus, the ordinary definition of bisimilarity on the
symbolic transition systems for nets, does not coincide with
$\sim^N$. Indeed the symbolic transition systems of $a$ and $l$ in
Figure \ref{fig:OpenNet}(C) are not bisimilar, but as discussed
above, $a \sim^N l$. In order to efficiently characterize $\sim^N$,
we have to introduce the following definition.

\begin{defi}[Net-symbolic Bisimilarity]\label{def:symbnet}
A symmetric relation $R$ is a \emph{net-symbolic bisimulation} iff,
whenever $\<N_1,m_1\>\;R\;\<N_2,m_2\>$:
\begin{enumerate}[$\bullet$]
\item if $\<\onet{N_1},m_1\> \atr{i}{\lambda}{\approxON}
\<\onet{N_1},m_1'\>$, then exists a marking $m_2'$ and $\exists
j,k\in I^{\oplus}$ such that:
\begin{enumerate}
\item $i=j\oplus k$,
\item$\<\onet{N_2},m_2\> \atr{j}{\lambda}{\approxON}
\<\onet{N_2},m_2'\>$ and
\item $\<\onet{N_1},m_1'\> \;R\;\<\onet{N_2},m_2' \oplus k \>$.
%\end{enumerate}
\end{enumerate}
\end{enumerate}
Two marked nets are net-symbolic bisimilar (written $\sim^{NS}$)
whenever there is a symbolic bisimulation relating them.
\end{defi}
For instance, the symmetric closure of the following relation is a
net-symbolic bisimulation.
$$R=\{(l,a),\; (q,b),\; (m,b\$^3),\;(n,b\$^2),\;(o,b\$),\;(p,b)\}$$
In \cite{FOSSACS08}, we have shown that $\sim^N=\sim^{NS}$. In Section \ref{sec:CIS}, we will show that the former is an instance of
saturated bisimilarity, while the latter is an instance of symbolic
bisimilarity. In Section \ref{sec:saturatedcoalgebra} and
\ref{sec:normalizedcoalgebra}, we will give a coalgebraic
characterization of both $\sim^N$ and $\sim^{NS}$ by mean of
saturated and normalized coalgebras.

\section{A Simple Words Calculus}\label{sec:swc}

\noindent In the next section we will show a theoretical framework
encompassing both asynchronous $\pi$-calculus and open Petri nets.
In this section, we introduce a \emph{simple words calculus} (swc)
as a further instance of the framework presented in the next
section. The aim of this ``toy calculus'' is to provide a more
gentle example of the concepts that will be introduced afterward, by
avoiding all the technicalities that arise with ``real formalisms''
like asynchronous $\pi$-calculus and open Petri nets.

Let $A$ be an alphabet of symbols (ranged over by $a,b,c \dots$) and
$A^*$ be the set of finite words over $A$ (ranged over by $u,v,w
\dots$). We use $\varepsilon$ to denote the empty word and $uv$ to
denote the concatenation of the words $u$ and $v$. The set of
processes is defined by the following grammar (where $u\in A^*$).
$$p::= \nil, \;\;  u.p ,\;\; p_1+p_2$$
A configuration is a pair $u\rhd p$ where $u$ is a word (in $A^*$)
representing some \emph{resources} and $p$ a process (generated by
the above grammar). The set of all configurations (ranged over by
$\gamma_1,\gamma_2, \dots$) is denoted by $Conf$. The algebra
$\WSCA$ has as carrier-set $Conf$ and as operators the words $v\in
A^*$. The function $v_{\WSCA}:Conf \to Conf$ maps each configuration
$u\rhd p$ into $uv\rhd p$. Intuitively, $v_{\WSCA}$ represents a
context where configurations can be inserted: the effect of this
insertion is that of adding $v$ (via word-concatenation) to the
resources of the configuration. This is analogous to asynchronous
$\pi$-calculus and open nets. There, resources are respectively
outputs (in parallel) and tokens (in input places). Moreover, in
those formalisms the environment can arbitrarily add new resources
(via context composition).

Differently from asynchronous $\pi$ and open nets, in swc all the
transitions are labeled with the same observation $\bullet$.
Therefore, we fix the set of \emph{observations} of swc to be
$\actwsc= \{\bullet\}$ (the subscript $\modelswc$ will be useful later to distinguish the observations of swc from those of asynchronous $\pi$ and open nets). 
The operational semantics of swc is given by
the \emph{transition relation} $\trwsc\subseteq Conf \times \actwsc
\times Conf$ defined by the following rules (together with the
symmetric one for $+$).
%
%\wsctr{\bullet}
%
$$ uv \rhd u.p \tr{\bullet} uv\rhd p \qquad \deduz{u \rhd p \tr{\bullet} u\rhd p'}
    {u \rhd p + q \tr{\bullet} u \rhd p'}$$
Intuitively, the process $u.p$ needs the resources $u$ in order to
evolve. If $u$ is present in the configuration (as a suffix) then,
$u.p$ becomes $p$. Note that, differently from asynchronous $\pi$
and open nets, the resources $u$ are not consumed, but only ``read''
(we have chosen to give this read-behavior to swc, just for
simplifying the following examples).

\begin{defi}[Saturated Bisimilarity for swc]\label{def:saturatedswc}
Let $R\subseteq Conf \times Conf$ be a symmetric relation. $R$ is a
\emph{saturated bisimulation} iff, $\forall v \in A^*$, whenever
$\gamma_1 \; R\; \gamma_2$:
\begin{enumerate}[$\bullet$]
\item $v_{\WSCA}(\gamma_1)\, R \, v_{\WSCA}(\gamma_2)$,
\item if $\gamma_1 \tr{\bullet} \gamma_1'$, then $\exists \gamma_2'$ such that $\gamma_2 \tr{\bullet} \gamma_2'$ and $\gamma_1' \; R\; \gamma_2'$.
\end{enumerate}
We write $\gamma_1 \satbis \gamma_2$ iff there is a saturated
bisimulation $R$ such that $\gamma_1 \; R\; \gamma_2$.
\end{defi}

For instance, the configurations $ab \rhd ab.\nil$ and $ab \rhd
\varepsilon.\nil$ are saturated bisimilar, because for any word $v$
both $abv \rhd ab.\nil$ and $abv \rhd \varepsilon.\nil$ can only
perform one transition and then stop.
A more interesting example is the following. For all words $u,v\in
A^*$ such that $v=uw$ (i.e., $u$ is a prefix of $v$), it holds that
$$\varepsilon \rhd u.p + v.p \satbis \varepsilon \rhd u.p$$
because for any word $v'\in A^*$, $v'\rhd u.p + v.p$ and $v' \rhd
u.p$ have the same behaviour. For those $v'$ having $u$ as prefix
(i.e., $v'=uw'$), both the configurations can only perform
transitions going into $v' \rhd p$; for those $v'$ where $u$ is not
a prefix, both the configurations stop. As it happens for the
asynchronous $\pi$-calculus and open nets, the behaviour of $v.p$ is
somehow ``absorbed'' by the behaviour of $u.p$.
By joining the two previous examples, we have that:
$$\varepsilon \rhd a.ab.\nil + ab.\varepsilon.\nil \satbis \varepsilon \rhd a.ab.\nil$$
Indeed, for all the words $v'\in A^*$ having $ab$ as a prefix (i.e.,
$v'=abw'$) the configuration $abw' \rhd a.ab.\nil +
ab.\varepsilon.\nil$ can go either in $abw' \rhd ab.\nil$ or in
$abw' \rhd \varepsilon.\nil$, while the configuration $abw' \rhd
a.ab.\nil$ can only go in $abw' \rhd ab.\nil$ that, as shown in our
first example, is bisimilar to $abw' \rhd \varepsilon.\nil$. For all
the other words, the two configuration behave exactly in the same
way.

\bigskip

For simplifying the explanation, it is useful to introduce the
\emph{saturated transition system}: $u \rhd p \atr{v}{\bullet }{S}
u' \rhd p'$ iff $uv \rhd p \tr{\bullet } u' \rhd p'$. It is easy to
see that the standard notion of bisimilarity on this transition
system coincides with $\satbis$. The saturated transition systems of
$\varepsilon \rhd u.p+v.p$ and $\varepsilon \rhd u.p$ are shown in
Figure \ref{fig:wordsaturatedSymbolic}(A). For making lighter the
notation, in that figure and in the following ones we have omitted
the observation $\bullet$. Note that $\varepsilon \rhd u.p+v.p$ and
$\varepsilon \rhd u.p$ perform the same saturated transitions (and
thus they are saturated bisimilar, as discussed above).

\begin{figure}[t]
\begin{tabular}{l|r}
\begin{tabular}{c}
$\xymatrix@R=10pt@C=10pt{
\text{(A)}&& \dots\\
&& uaa \rhd p\\
&& ua \rhd p\\
\varepsilon \rhd u.p+v.p \ar[rr]^(0.6){u} \ar[rru]|{ua}
\ar[rruu]|{uaa} \ar[rruuu]^{\dots} && u \rhd p
&&  \varepsilon \rhd u.p \ar[ll]_{u} \ar[llu]|{ua} \ar[lluu]|{uaa} \ar[lluuu]_{\dots}  \\
 }$
\\
\hline \\
$\xymatrix@R=10pt@C=10pt{\text{(B)} && v \rhd p\\
\varepsilon \rhd u.p+v.p \ar[rr]^(0.6){u} \ar[rru]^{v}  && u \rhd p &&  \varepsilon \rhd u.p \ar[ll]_{u} \\
 }$
\end{tabular}
&
\begin{tabular}{c}
$\xymatrix@R=10pt@C=6pt{ \text{(C)} && \gamma_1 \ar[dd]^(.4){a}
\ar[ddll]_(.4){ab} && \gamma_2 \ar[ddll]_{a}
 \\\\
ab\rhd \varepsilon.\nil \ar[ddrr]^{\varepsilon} & & a\rhd ab.\nil
\ar[dd]^{b}  \\\\
ab \rhd ab.\nil \ar[rr]^(0.55){\varepsilon} && ab \rhd \nil \\\\ %&&  ab \rhd ab.\nil \ar[uu]_{\varepsilon}
}$
\end{tabular}
\end{tabular}
\caption{(A) The saturated transition systems of $\varepsilon \rhd
u.p+v.p$ and $\varepsilon \rhd u.p$. (B) The symbolic transition
systems of $\varepsilon \rhd u.p+v.p$ and $\varepsilon \rhd u.p$.
(C) The symbolic transition systems of $\gamma_1=\varepsilon \rhd
a.ab.\nil + ab.\varepsilon. \nil$ and $\gamma_2=\varepsilon \rhd
a.ab.\nil$.}\label{fig:wordsaturatedSymbolic}
\end{figure}

In order to give a more efficient characterization of $\satbis$
(that avoids the quantification over all words $v\in A^*$), we
define a \emph{symbolic transition system} that, like the saturated
transition system, is labeled with pairs $v,\bullet$ (for $v\in
A^*$). The main difference is that a symbolic transition $u\rhd p
\atr{v}{\bullet}{\approxwsc} u'\rhd p'$ is performed only when $v$
is the ``minimal word'' such that $uv\rhd p \tr{\bullet} u'\rhd p'$.
The symbolic transition system $\approxwsc \subseteq Conf \times A^*
\times \actwsc \times Conf$ is defined by the following rules
(together with the symmetric rule for $+$).
$$ uv \rhd u.p \atr{\varepsilon}{\bullet}{\approxwsc} uv\rhd p \qquad u \rhd uv.p \atr{v}{\bullet}{\approxwsc} uv\rhd p \qquad \deduz{u \rhd p \atr{v}{\bullet}{\approxwsc} u'\rhd p'}
    {u \rhd p + q \atr{v}{\bullet}{\approxwsc} u' \rhd p'}$$
In the central rule, the process $uv.p$ needs the resources $uv$ to
evolve. In the configuration, there are only $u$ resources and thus
the process ``takes from the environment'' the word $v$. In the
leftmost rule, all the needed resources ($u$) are already present in
the configuration (as a prefix) and thus the process can evolve
without taking resources from the environment (i.e., by taking
$\varepsilon$). The symbolic transition systems of $\varepsilon \rhd
u.p+v.p$ and $\varepsilon \rhd u.p$ are depicted in Figure
\ref{fig:wordsaturatedSymbolic}(B). Note that the former process can
perform one symbolic transition more than the latter, even if they
perform the same saturated transitions. The symbolic transition
systems of $\gamma_1 = \varepsilon \rhd a.ab.\nil + ab.\varepsilon. \nil$ and
$\gamma_2 = \varepsilon \rhd a.ab.\nil$ are shown in Figure
\ref{fig:wordsaturatedSymbolic}(C).

Note that the standard notion of bisimilarity defined over
$\atr{v}{\bullet}{\approxwsc}$ (hereafter called \emph{syntactic
bisimilarity} and denoted by $\sim^{W}$) is strictly included into
$\satbis$. For example, $\varepsilon \rhd u.p$ and $\varepsilon \rhd
u.p +v.p$ (with $u$ prefix of $v$) are in $\satbis$ but not in
$\sim^{W}$ because $\varepsilon \rhd u.p +v.p
\atr{v}{\bullet}{\approxwsc} v \rhd p$, while $\varepsilon \rhd u.p$
only performs a symbolic transition labeled with $u$. The same holds
for $\varepsilon \rhd a.ab.\nil + ab.\varepsilon.\nil$ and
$\varepsilon \rhd a.ab.\nil$.

In order to capture $\satbis$ by exploiting the symbolic transition
system we need a more elaborated notion of bisimulation that relies
on an \emph{inference system}. For better explaining it, observe
that the following ``monotonicity property'' holds: 
\begin{center}
$\forall v \in
A^*$ and $\forall u\rhd p, u' \rhd p' \in Conf$, if $u \rhd p \tr{\bullet} u'\rhd p'$, then $uv \rhd p \tr{\bullet}
u'v \rhd p'$.
\end{center}
%This property is the key for defining our symbolic bisimilarity
This property states that when adding the resources $v$ to the
original configuration (or, equivalently, when inserting the
configuration into the context $v_{\WSCA}(-)$), all the transitions
of the original configuration are preserved. This is analogous to
what happens in the asynchronous $\pi$-calculus (where putting
outputs in parallel does not inhibit any transition) and in open
Petri nets (where inserting tokens in input places does not inhibit
any transition).

An inference system is a set of rules stating properties like those
just described. For the case of swc, the inference system $\infwsc$
is defined by the following rule (parametric w.r.t.\ $v\in A^*$).
$$\deduz{\gamma  \tr{\bullet} \gamma'}{v_{\WSCA}(\gamma) \tr{\bullet} v_{\WSCA}(\gamma')}$$
This rule just states the above monotonicity property. Moreover, it
induces a \emph{derivation relation} $\vdash_{\infwsc} \subseteq
(Conf \times A^* \times \actwsc \times Conf) \times (Conf \times A^*
\times \actwsc \times Conf)$ as follows: $$\gamma \atr{v}{\bullet}{}
\gamma' \vdash_{\infwsc} \gamma \atr{vw}{\bullet}{}w_{\WSCA}
(\gamma')$$

Consider the saturated transitions of $\varepsilon \rhd u.p +v.p$ in
Figure \ref{fig:wordsaturatedSymbolic}(A) and fix
$\gamma=\varepsilon \rhd u.p +v.p$. We have that $(\gamma
\atr{u}{\bullet}{S} u \rhd p) \vdash_{\infwsc} (\gamma
\atr{ua}{\bullet}{S} ua \rhd p) \vdash_{\infwsc} (\gamma
\atr{uaa}{\bullet}{S} uaa \rhd p) \vdash_{\infwsc} \dots$ More
generally, $\forall w \in A^*$,
$$\gamma \atr{u}{\bullet}{S} u \rhd p
\vdash_{\infwsc} \gamma \atr{uw}{\bullet}{S} uw \rhd p$$ and in the
case of $\gamma = \varepsilon
\rhd u.p +v.p$ in Figure
\ref{fig:wordsaturatedSymbolic}(B), this means that
$$\gamma \atr{u}{\bullet}{\approxwsc} u \rhd p
\vdash_{\infwsc} \gamma \atr{v}{\bullet}{\approxwsc} v \rhd
p\text{.}$$

This is somehow useful to understand the causes of the mismatch
between $\satbis$ and $\sim^W$ (syntactic bisimilarity). First,
observe that symbolic transitions can derive through $\infwsc$ all
and only the saturated transitions (this will be formally shown in
the next section). Then, recall that the configurations $\varepsilon
\rhd u.p +v.p$ and $\varepsilon \rhd u.p$ are in $\satbis$ because
can perform the same saturated transitions, but they are not in
$\sim^W$ because the former can perform the symbolic transition
$\atr{v}{\bullet}{\approxwsc}$. This symbolic transition is
\emph{redundant} since it can be derived from
$\atr{u}{\bullet}{\approxwsc}$ through the inference system
$\infwsc$. More explicitly, all the saturated transitions that can
be derived from $\atr{v}{\bullet}{\approxwsc}$ can also be derived
from $\atr{u}{\bullet}{\approxwsc}$ and thus
$\atr{v}{\bullet}{\approxwsc}$ does not add any meaningful
information about the saturated behaviour of the configuration. We
can avoid this problem by employing the following notion of
bisimulation.

\begin{defi}[Symbolic Bisimilarity for swc]\label{def:symbolicswc}
Let $R\subseteq Conf \times Conf$ be a symmetric relation. $R$ is a
\emph{symbolic bisimulation} iff  whenever $\gamma_1 \; R\;
\gamma_2$:
\begin{enumerate}[$\bullet$]
\item if $\gamma_1 \atr{v}{\bullet}{\approxwsc}\gamma_1'$, then $\exists \gamma_2',\gamma_2''\in Conf, u\in A^*$ s.t. $\gamma_2 \atr{u}{\bullet}{\approxwsc} \gamma_2'$, $\gamma_2
\atr{u}{\bullet}{\approxwsc} \gamma_2' \vdash_{\infwsc} \gamma_2
\atr{v}{\bullet}{} \gamma_2''$ and $\gamma_1' \; R\; \gamma_2''$.
\end{enumerate}
We write $\gamma_1 \symbis \gamma_2$ iff there is a
symbolic bisimulation $R$ such that $\gamma_1 \; R\; \gamma_2$.
\end{defi}
%
%The above condition can be rephrased as follows (by exploiting the
%definition of $\vdash_{\infwsc}$).
%\begin{enumerate}[$\bullet$]
%\item if $\gamma_1 \atr{v}{\bullet}{\approxwsc}\gamma_1'$, then $\exists \gamma_2'\in Conf,\; u,w\in A^*$
%such that:
%\begin{enumerate}
%\item $v=uw$,
%\item $\gamma_2 \atr{u}{\bullet}{\approxwsc} \gamma_2'$ and
%\item $\gamma_1' \; R\; w_{\WSCA}(\gamma_2')$.
%\end{enumerate}
%\end{enumerate}
%This characterization makes more explicit the relationship with
%symbolic bisimilarity for open nets (Definition \ref{def:symbnet}).
%In the next section we will show that also asynchronous bisimilarity
%can be seen as some kind of symbolic bisimilarity.

For example $\varepsilon \rhd u.p +v.p \symbis \varepsilon \rhd u.p$
(when $v=uw$), because if $\varepsilon \rhd u.p +v.p
\atr{v}{\bullet}{\approxwsc}v\rhd p$, then $\varepsilon \rhd u.p
\atr{u}{\bullet}{\approxwsc}u\rhd p$ and this transition derives
$\varepsilon \rhd u.p \atr{uw}{\bullet}{\approxwsc}w_{\WSCA}(u\rhd
p)$ that is $\varepsilon \rhd u.p \atr{v}{\bullet}{\approxwsc}v\rhd
p$.

For an example of symbolic bisimulation, take $\gamma_1=\varepsilon
\rhd a.ab.\nil + ab.\varepsilon. \nil$ and $\gamma_2=\varepsilon
\rhd a.ab.\nil$ in Figure \ref{fig:wordsaturatedSymbolic}(C) and consider the symmetric closure of the following
relation.
$$R=\{(\gamma_1,\gamma_2),\; (a \rhd ab.\nil,a \rhd ab.\nil ),\; (ab \rhd \varepsilon.\nil, ab \rhd ab.\nil),\; (ab \rhd\nil, ab\rhd\nil)  \}$$
For the last three pairs, it is easy to check that the
configurations satisfy the above requirements. For $(\gamma_1,
\gamma_2)$, this is more interesting: the transition $\gamma_1
\atr{ab}{\bullet}{\approxwsc} ab \rhd \varepsilon. \nil$ can be
matched by $\gamma_2 \atr{a}{\bullet}{\approxwsc} a \rhd ab. \nil$
because, by definition of $\vdash_{\infwsc}$, $\gamma_2
\atr{a}{\bullet}{\approxwsc} a \rhd ab. \nil \vdash_{\infwsc}
\gamma_2 \atr{ab}{\bullet}{} ab \rhd ab. \nil$ and $(ab \rhd
\varepsilon. \nil, ab \rhd ab. \nil)\in R$.

\bigskip

In the next section we will show that $\satbis=\symbis$. Before
concluding this section, it is worth to make a final remark. The
reader would have thought that in order to retrieve $\satbis$ from
the symbolic transition system, one could just remove all the
``redundant transitions'', i.e., all those symbolic transitions
$\gamma\atr{v}{\bullet}{\approxwsc}\gamma''$ such that there exists
another symbolic transition
$\gamma\atr{u}{\bullet}{\approxwsc}\gamma'$ deriving it (in Section
\ref{sec:CSS} this removal will be called \emph{normalization}). It
is important to show that this is not enough to retrieve $\satbis$:
consider the symbolic transition systems of $\gamma_1=\varepsilon
\rhd a.ab.\nil + ab.\varepsilon. \nil$ and $\gamma_2=\varepsilon
\rhd a.ab.\nil$ shown in Figure \ref{fig:wordsaturatedSymbolic}(C).
They have no redundant transitions, but still $\gamma_1 \satbis
\gamma_2$ and $\gamma_1 \not \sim^W \gamma_2$. The transition
$\gamma_1 \atr{ab}{\bullet}{\approxwsc} ab\rhd \varepsilon. \nil$ is
not redundant, because $\gamma_1 \atr{a}{\bullet}{\approxwsc} a\rhd
ab. \nil \not \vdash_{\infwsc} \gamma_1
\atr{ab}{\bullet}{\approxwsc} ab\rhd \varepsilon. \nil$, since
$b_{\WSCA}(a\rhd ab. \nil) = ab \rhd ab.\nil \neq ab\rhd
\varepsilon. \nil$. However, it is \emph{semantically redundant},
because $\gamma_1 \atr{a}{\bullet}{\approxwsc} a\rhd ab. \nil
\vdash_{\infwsc} \gamma_1 \atr{ab}{\bullet}{\approxwsc} ab\rhd ab.
\nil$ and the states $ab \rhd ab.\nil$ and $ab\rhd \varepsilon.
\nil$ are semantically equivalent (i.e., $ab \rhd ab.\nil \satbis
ab\rhd \varepsilon. \nil$).

In order to characterize $\satbis$ through $\approxwsc$, we should
eliminate all the semantically redundant transitions, but this is
impossible without knowing a priori $\satbis$. This is the main
motivation for the introduction of \emph{normalized coalgebras} in
Section \ref{sec:CSS}.

\section{Saturated and Symbolic Semantics}\label{sec:CIS}

\noindent In Section \ref{sec:Asyn} and Section \ref{openpetrinets}, we have
introduced asynchronous $\pi$-calculus and open Petri nets. In both
cases, their abstract semantics is defined in two different ways:
either by inserting the systems into all possible contexts (like
$\sim^1$ and $\sim^N$) or by inserting the system only in those
contexts that are really needed (like $\sim^a$ and $\sim^{NS}$).
Moreover, the latter coincides with the former and thus can be
thought as an efficient characterization of the former.

This sort of ``double definition'' of the abstract semantics recurs
in many formalisms modeling interactive systems, such as mobile
ambients \cite{Ambient}, open $\pi$-calculus \cite{Sang96} and
explicit fusion calculus \cite{WischikG04}.
In \cite{FOSSACS08}, we have introduced a theoretical framework that
generalizes this ``double definition'' and encompasses all the above
mentioned formalisms. In this section we recall this framework by
employing as running examples the simple words calculus, the
asynchronous $\pi$-calculus and open
Petri nets. %Proofs can be found in Appendix.

\subsection{Saturated Semantics}

Given a small category $\Cat{C}$, a $\Sig{\Gamma(\Cat{C})}$-algebra
is an algebra for the algebraic specification in Figure
\ref{fig:Sign} where $|\Cat{C}|$ denotes the set of objects of
$\Cat{C}$, $||\Cat{C}||$ the set of arrows of $\Cat{C}$ and, for all
$i,j \in |\Cat{C}|$, $\Cat{C}[i,j]$ denotes the set of arrows from
$i$ to $j$.
Thus, a $\Sig{\Gamma(\Cat{C})}$-algebra $\Alg{X}$ consists of a
$|\Cat{C}|$-sorted family $X=\{X_i \mid i \in |\Cat{C}| \}$ of sets
and a function $c_{\Alg{X}}: X_i \to X_j$ for all $c\in
\Cat{C}[i,j]$. Moreover, these functions must satisfy the equations
in Figure \ref{fig:Sign}: $id_{i_\Alg{X}}$ is the identity function
on $X_i$ and if $d;e=c$ in $\Cat{C}$, $(d;e)_{\Alg{X}}$ is equal to
$c_{\Alg{X}}$.\footnote{Note that $\Sig{\Gamma(\Cat{C})}$-algebras
coincide with functors from $\Cat{C}$ to $\set$ and
$\Sig{\Gamma(\Cat{C})}$-homomorphisms coincide with natural
transformations amongst functors. Thus,
$\Cat{Alg_{\Sig{\Gamma(\Cat{C})}}}$ is isomorphic to
$\set^{\Cat{C}}$(the category of covariant presheaves over
$\Cat{C}$).} Hereafter, we will use $\int X$ to denote the set of the
\emph{elements} of a $\Sig{\Gamma(\Cat{C})}$-algebra $\Alg{X}$,
namely, the disjoint union $\sum_{i\in |\Cat{C}|}X_i$.

The main definition of the framework presented in \cite{FOSSACS08}
is that of \emph{context interactive systems}. In our theory, an
interactive system is a state-machine that can interact with the
environment (contexts) through an evolving interface.
\begin{defi}[Context Interactive System]\label{def:cis}
A \emph{context interactive system} $\sys{I}$ is a quadruple
$\<\mssign,\Alg{X},O,tr\>$ where:
\begin{enumerate}[$\bullet$]
\item $\Cat{C}$ is a small category,
\item $\Alg{X}$ is a $\Sig{\Gamma(\Cat{C})}$-algebra,
\item $O$ is a set of observations,
\item $tr \subseteq \int X\times O \times \int X$ is a labeled transition relation ($p\tr{o}p'$ means $(p,o,p')\in tr$).
\end{enumerate}
\end{defi}
Intuitively, objects of $\Cat{C}$ are interfaces of the system,
while arrows are contexts. Every element $p$ of $X_i$ represents a
state with interface $i$ and it can be inserted into the context
$c\in \Cat{C}[i,j]$, obtaining a new state $c_{\Alg{X}}(p)$ that has
interface $j$. Every state can evolve into a new state (possibly
with different interface) producing an observation $o \in O$.

\bigskip

The abstract semantics of interactive systems is usually defined
through behavioural equivalences. In \cite{FOSSACS08} we proposed a
general notion of bisimilarity that generalizes the abstract
semantics of a large variety of formalisms
\cite{Ambient,AmadioCONCUR96,Sang96,Fusion,ExplicitFusion,CCPI}. The
idea is that two states of a system are equivalent if they are
indistinguishable from an external observer that, in any moment of
their execution, can insert them into some environment and then
observe some transitions.

\begin{defi}[Saturated Bisimilarity]\label{def:saturated}
Let $\isys$ be a context interactive system. Let $R=\{R_i \subseteq
X_i \times X_i \mid i \in \sort \}$ be a $\sort$-sorted family of
symmetric relations. $R$ is a \emph{saturated bisimulation} iff,
$\forall i,j \in \sort$, $\forall c \in \Cat{C}[i,j]$, whenever
$pR_iq$:
\begin{enumerate}[$\bullet$]
\item $c_{\Alg{X}}(p)\, R_j\, c_{\Alg{X}}(q)$,
\item if $p \tr{o} p'$ with $p'\in X_k$ for some $k\in |\Cat{C}|$, then $\exists q' \in X_k$ such that $q \tr{o}q'$ and $p' R_k q'$.
\end{enumerate}
We write $p \satbis_i q$ iff there is a saturated bisimulation $R$
such that $pR_i q$.
\end{defi}
An alternative but equivalent definition can be given by defining
the \emph{saturated transition system} $(\cts)$ as follows: $p
\atr{c}{o}{S}q$ if and only if $c_{\Alg{X}}(p)\tr{o}q$. Trivially
the ordinary bisimilarity over \cts\ coincides with $\satbis$.
\begin{prop}\label{prop:coarsest}
$\satbis$ is the coarsest bisimulation congruence.
\end{prop}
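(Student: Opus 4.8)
The plan is to unpack the phrase ``bisimulation congruence'' into the two conditions already bundled in Definition~\ref{def:saturated}. A \emph{congruence} is an equivalence $R=\{R_i\}$ closed under contexts, which is precisely the first condition of that definition, and a \emph{bisimulation} is a symmetric relation satisfying the transfer condition on $tr$, which is the second. Hence a bisimulation congruence is exactly an equivalence satisfying both conditions, whereas a saturated bisimulation is any symmetric relation satisfying both. The entire content of the proposition is therefore to reconcile ``symmetric'' with ``equivalence'': I must show that the largest saturated bisimulation $\satbis$ is automatically reflexive and transitive (so that it genuinely is a congruence), and conversely that every bisimulation congruence is contained in it.

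First I would show that $\satbis$, defined as the union of all saturated bisimulations, is itself a saturated bisimulation. This is the standard union argument: if $p\,\satbis_i\,q$ then $p\,R_i\,q$ for some saturated bisimulation $R$, and both conditions are inherited from $R$, since the witness produced by context-closure (first condition) and by transition matching (second condition) already lies in $R\subseteq\satbis$. Consequently $\satbis$ satisfies the congruence and the transfer conditions.

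Next I would upgrade $\satbis$ to an equivalence. For reflexivity, the sorted diagonal $\Delta=\{\Delta_i\mid i\in\sort\}$ is trivially a saturated bisimulation, since $c_{\Alg{X}}$ preserves equality and each transition matches itself, so $\Delta\subseteq\satbis$. For transitivity, I would verify that the sort-wise relational composition $\satbis\circ\satbis$ is again a saturated bisimulation: it is symmetric because $\satbis$ is; the first condition holds because $c_{\Alg{X}}(-)$ is a function and both factors are context-closed; and the second condition holds by the usual two-step matching through the intermediate state, keeping track of the arrival sort $k$. Maximality of $\satbis$ then gives $\satbis\circ\satbis\subseteq\satbis$, which together with reflexivity yields equality, hence transitivity. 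At this point $\satbis$ is an equivalence closed under contexts that is also a bisimulation, i.e.\ a bisimulation congruence.

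Finally, for coarsest I would take any bisimulation congruence $\equiv=\{\equiv_i\}$: being an equivalence it is symmetric, being a congruence it satisfies the first condition, and being a bisimulation it satisfies the second, so $\equiv$ is a saturated bisimulation and therefore $\equiv\;\subseteq\;\satbis$. This exhibits $\satbis$ as the largest, hence coarsest, such relation. I expect the transitivity step to be the only real obstacle: unlike plain \lts\ bisimilarity, the composed relation must be checked closed under \emph{both} the algebraic operations and the transitions at once, and one must handle the sort-indexing carefully so that the intermediate witness and the arrival sort line up; once the composition is seen to satisfy both conditions, everything else is routine.
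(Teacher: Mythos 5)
Your proposal is correct and follows essentially the same route as the paper: show that $\satbis$ is itself a bisimulation congruence, and that any bisimulation congruence satisfies both clauses of Definition~\ref{def:saturated} (congruence giving the context-closure clause, the bisimulation property giving the transfer clause) and is therefore a saturated bisimulation contained in $\satbis$. The only difference is that you explicitly verify reflexivity and transitivity of $\satbis$ via the diagonal and relational composition, a standard step the paper leaves implicit.
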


\begin{figure}[t]
%\begin{minipage}[l]{.5\linewidth}
%\begin{tabular}{c}
\begin{tabbing}
\hspace*{10em} \= \hspace*{1em} \= \hspace*{1em} \= \hspace*{7.5em}
\= \kill
\>\textbf{specification} $\Sig{\Gamma(\Cat{C})}$ = \\
      \>\> \textbf{sorts}\\
      \>\> \> $i$ \> $\forall i \in |\Cat{C}|$\\
      \>\> \textbf{operations}\\
      \>\> \> $c: i \to j$ \> $\forall c \in \Cat{C}[i,j]$\\
      \>\> \textbf{equations}\\
      \>\> \> $id_i(x)=x$\\
      \>\> \> $e(d(x))=c(x)$ \>$\forall d;e=c $
\end{tabbing}
%\end{tabular}
%\begin{center}(A)\end{center}
%\end{minipage}
%\vline
%\begin{minipage}[r]{.5\linewidth}
%\hspace{0.5cm}
%\begin{tabular}{c}
%\\$\rulelabel{in} \deduz{i \in
%I}{\onet{N},m\tr{+i}\onet{N},m\oplus i}$\\\\\\
%$\rulelabel{tr} \deduz{t \in T \quad l(t)=\lambda \quad m
%=\pre{t}\oplus c}{\onet{N},m \tr{\lambda} \onet{N},\post{t}\oplus
%c}$
%\\\\
%\end{tabular}
%\begin{center}(B)\end{center}
%\end{minipage}
\caption{Algebraic specification
$\Sig{\Gamma(\Cat{C})}$.}\label{fig:Sign}
\end{figure}

%

%\begin{example}

\paragraph{\bf{A Context Interactive Systems for swc.}}
In Section \ref{sec:swc}, we have introduced a simple words
calculus. Here we show its context interactive system $\modelswc =
\modelwscAll$. Recall that $\varepsilon$ is the empty word and that
$uv$ denote the concatenation of the words $u$ and $v$. The category
$\wscmssign$ is defined as follows:
\begin{enumerate}[$\bullet$]
\item $|\wscmssign| =\{\circ\}$;
\item $\wscmssign[\circ,\circ]=A^*$;
\item $id_{\circ} = \varepsilon$;
\item $\forall u,v\in A^*$, $u;
v= uv$.
\end{enumerate}
The algebra $\WSCA$, the set of observations $\actwsc$ and the
transition relation $\trwsc$ have been already introduced in Section
\ref{sec:swc}. In swc, all the configurations have the same
interface (sort) and thus, in the category $\wscmssign$ there is
only one object. It is easy to see that saturated bisimilarity for
swc (Definition \ref{def:saturatedswc}) is an instance of Definition
\ref{def:saturated}.

\paragraph{\bf{A Context Interactive Systems for open Petri nets.}}
In the following we formally define $\modelON = \modelONAll$ that is
the context interactive system of all open nets (labeled over the
set of labels $\Lambda$). Let $Pl$ be an infinite set. We assume
that the input places of all open nets are taken from $Pl$.
Formally, we assume that if $I$ is the set of input places of an
open net $N$, then $I\in \Pow(Pl)$ (where $\Pow(Pl)$ denotes the
powerset of $Pl$).

The category $\onmssign$ is formally defined as follows:
\begin{enumerate}[$\bullet$]
\item $\onsort =\{I\,|\, I \in \Pow(Pl)\}$;
\item $\forall I, J \in \onsort$, if $I=J$ then $\onmssign[I,J]=
I^{\oplus}$ while, if $I\neq J$ then $\onmssign[I,J]= \emptyset$;
\item $\forall I \in \onsort$, $id_{I} = \emptyset$;
\item $\forall i_1,i_2\in I^{\oplus}$, $i_1;
i_2= i_1 \oplus i_2$.
\end{enumerate}

Intuitively objects are sets of places $I$. Arrows $i:I \to I$ are
multisets of tokens on $I$, while there exists no arrow $i:I \to J$
for $I\neq J$. Composition of arrows is just the sum of multisets
and, obviously, the identity arrow is the empty multiset.

We say that a marked open net $\<\onet{N},m\>$ has interface $I$ if
the set of input places of $\onet{N}$ is $I$. For example the marked
open net $\<N_1,a\>$ has interface $\{\$\}$. Let us define the
$\Sig{\Gamma(\onmssign)}$-algebra $\ONPA$. For any sort $I$, the
carrier set $\ONSET_{I}$ contains all the marked open nets with
interface $I$. For any operator $i\in \onmssign[I,I]$, the function
$i_{\Alg{N}}$ maps $\<\onet{N}, m\>$ into $\<\onet{N}, m\oplus i
\>$.

The transition structure $\ONtr$  (denoted by $\ontr{}$) associates
to a state $\<\onet{N},m\>$ the transitions obtained by using the
rule \rulelabel{tr} of Table \ref{tableOpenPetrinet}.
The saturated transition system of $\<\onet{N_1},a\>$ is shown in
Figure \ref{fig:OpenNet}(B).
\begin{prop}\label{prop:netsaturated}
Let $\<\onet{N_1},m_1\>$ and $\<\onet{N_2},m_2\>$ be two marked nets
both with interface $I$. Thus $\<\onet{N_1},m_1\> \sim^{N}
\<\onet{N_2},m_2\>$ iff $\<\onet{N_1},m_1\>
\sim^{S}_I\<\onet{N_2},m_2\>$.
\end{prop}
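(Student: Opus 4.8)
The plan is to prove the two inclusions separately, in each case exhibiting one bisimilarity as a bisimulation of the other kind, so that the statement reduces to routine bisimulation-chasing. First I would unfold both sides. By Definition~\ref{def:saturated} specialised to $\modelON$, $\satbis_I$ is ordinary bisimilarity on the saturated transition system whose contexts are the multisets $i\in I^{\oplus}$ (acting by $\<\onet{N},m\>\mapsto\<\onet{N},m\oplus i\>$) and whose observable moves are exactly the \rulelabel{tr}-transitions $\<\onet{N},m\>\tr{\lambda}\<\onet{N},m'\>$ with $\lambda\in\Lambda$. On the other side, $\sim^{N}$ is ordinary bisimilarity on the full operational \lts; since we have restricted to nets with only input places, the rule \rulelabel{out} never fires, so the full \lts\ is generated only by \rulelabel{tr} (labels in $\Lambda$) and by the token-insertion rule \rulelabel{in}, whose moves $\<\onet{N},m\>\tr{+i}\<\onet{N},m\oplus i\>$ are \emph{deterministic}: for a fixed input place $i$ the target is forced. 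The whole proof turns on the observation that this determinism makes the context-closure clause of saturated bisimilarity interchangeable with the \rulelabel{in}-clause of an ordinary bisimulation. Throughout I note that the label alphabets are disjoint (so $+i\neq\lambda$), whence a $\lambda$-move can only be answered by a \rulelabel{tr}-move and a $+i$-move only by an \rulelabel{in}-move, and that since every rule leaves the underlying net $\onet{N}$ unchanged all reachable states keep interface $I$; hence I may work entirely within sort $I$.

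For $\sim^{N}\subseteq\satbis_I$, I would check that $\sim^{N}$, restricted to pairs of interface-$I$ nets, is a saturated bisimulation. The \rulelabel{tr}-clause of Definition~\ref{def:saturated} is immediate, because the $\Lambda$-labelled moves of the \lts\ are precisely the \rulelabel{tr}-moves, and these are matched by $\sim^{N}$ with related targets. For the context-closure clause, given $\<\onet{N_1},m_1\>\sim^{N}\<\onet{N_2},m_2\>$ and a single input place $i$, the move $\<\onet{N_1},m_1\>\tr{+i}\<\onet{N_1},m_1\oplus i\>$ must be answered, by determinism of \rulelabel{in}, with $\<\onet{N_2},m_2\>\tr{+i}\<\onet{N_2},m_2\oplus i\>$, yielding $\<\onet{N_1},m_1\oplus i\>\sim^{N}\<\onet{N_2},m_2\oplus i\>$; iterating over the tokens of an arbitrary $i\in I^{\oplus}$ (the empty multiset giving the identity context) delivers closure under all contexts.

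For $\satbis_I\subseteq\sim^{N}$, I would symmetrically check that $\satbis_I$ is an ordinary bisimulation on the full \lts. A \rulelabel{tr}-move is answered using the observation clause of Definition~\ref{def:saturated} instantiated at the identity (empty) context. An \rulelabel{in}-move $\<\onet{N_1},m_1\>\tr{+i}\<\onet{N_1},m_1\oplus i\>$ is answered by the always-enabled move $\<\onet{N_2},m_2\>\tr{+i}\<\onet{N_2},m_2\oplus i\>$, the two targets being related by the context-closure clause at the context $i$. Combining the two inclusions gives $\sim^{N}=\satbis_I$.

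The argument is essentially mechanical; the points that require care are bookkeeping rather than genuine obstacles: (i) passing from the single-token \rulelabel{in}-rule to multiset contexts $i\in I^{\oplus}$, handled by the iteration above; (ii) the disjointness of the label alphabets, which guarantees that moves of different kinds are never cross-matched; and (iii) the restriction to input-only nets, which is exactly what makes the correspondence clean—with output places present, \rulelabel{out} would \emph{remove} tokens, and token removal is not a context operation, so the equivalence between the \rulelabel{in}-clause and context closure would break down. The one conceptual step to state carefully is the use of the determinism of \rulelabel{in} in the first inclusion, since without it the matching \rulelabel{in}-transition on the $\<\onet{N_2},m_2\>$ side would not be forced to add exactly the tokens $i$, and the context-closure clause would not follow.
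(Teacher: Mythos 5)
Your proposal is correct and follows essentially the same route as the paper's own (much terser) proof: both rest on identifying the \rulelabel{in}-clause of $\sim^{N}$ with the context-closure clause of $\satbis$, the only labelled moves of $\modelON$ being the \rulelabel{tr}-transitions. Your version merely makes explicit the determinism of \rulelabel{in} and the iteration from single tokens to multisets, which the paper leaves implicit.
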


\paragraph{\bf{A Context Interactive System for asynchronous
$\pi$.}}\label{subsec:AsySys} We now introduce the context
interactive system $\modelAsy= \modelAsyAll$ for the asynchronous
$\pi$-calculus. First, we assume the set of names $\names$ to be in
one to one correspondence with $\nat_0$ (the set of natural numbers
$\omega$ without the number $0$). In $\modelAsy$, we use numbers in
$\omega_0$ in place of names in $\names$, but for the sake of
readability, in all the concrete examples of processes we use names
$a,b,c, \dots \in \names$ thought of as the natural numbers $1,2,3,
\dots \in \omega_0$. We need such correspondence, because we use the
well order $1<2<3\dots$. Given an $n\in \omega$, it denotes both the
number and the set of numbers in $\omega_0$ smaller or equal than
$n$. For instance, $2$ denotes both the number $2$ and the set
$\{1,2\}$ that correspond, respectively, to the name $b$ and to the
set of names $\{a,b\}$; while $0$ denotes both the number $0$ and
the empty set: the former does not correspond to any name and the
latter corresponds to the empty set of names $\emptyset$. In the
following, we will use the name in $\names$ and numbers in
$\omega_0$ interchangeably. Also, when fixed some sets $n,m \dots$
we will use $i,j$ to range over the elements of these sets.

The category of interfaces and contexts is $\amssign$, formally
defined as follows:
\begin{enumerate}[$\bullet$]
\item $|\amssign|= \nat$;
\item if $m\geq n$, then $\amssign [n,m]$ is the set of contexts generated by $c::= -, \; c |
\outp{i}{j}$, with $i,j\in m$; if $m< n$, then $\amssign
[n,m]=\emptyset$;
\item $\forall
n \in \nat$, $id_n$ is $-\in \amssign[n,n]$;
\item arrows composition is the syntactic composition of contexts.
\end{enumerate}

Note that a context could correspond to several arrows with
different sources and targets. For instance, the context $-|
\outp{1}{2}$ (corresponding to $-| \outp{a}{b}$) is, e.g., both an
an arrow $0\to 2$ and an arrow $1 \to 6$. The composition of the
arrow $-| \outp{1}{2}:0 \to 2$ with $-| \outp{3}{4}:2 \to 5$ is $-|
\outp{1}{2}|\outp{3}{4}: 0 \to 5$.

Let us define the $\aspec$-algebra $\APA$. For every object $n$,
$\asyproc_{n}$ is the set of asynchronous $\pi$-processes $p$ such
that $n \geq max\;\fn{p}$.
Intuitively in asynchronous $\pi$, interfaces are sets of names. A
process with interface $n$ uses only names in $n$ (not all, just
some). Given a process $p$ and a natural number $n\geq max\;
\fn{p}$, we denote with $p_{n}$ the process $p$ with interface $n$.
For instance, there exists several processes corresponding to
$\tau.\nil$: $\tau.\nil_0$, $\tau.\nil_1$, $\dots$ Each of these is
considered different from the others because has a different
interface. This may seem a bit strange, but is quite standard in
categorical semantics of process calculi
\cite{FioreMS02,FioreTuri,Tile} as well as in their graphical
encodings \cite{RobinCONCUR01,Gad:pi,BGK,GM:CGS}.

Extensively, $0$ is the empty interface and $A_0$ is the set of all
$\pi$-processes without free names. The set $A_1$ contains all the
processes with free names in $\{1\}$ (corresponding to $\{a\}$) and
$A_2$ contains all the processes with free names in $\{1,2\}$
(corresponding to $\{a,b\}$) and so on \dots

In order to fully define $\APA$, we still have to specify its
operations $c_{\APA}$ for all $c \in \amssign[n,m]$. Given a process
$p\in \asyproc_{n}$, $c_{\APA}(p)$ is the process with interface $m$
obtained by syntactically inserting $p$ into $c$. For instance,
$\inp{a}{x}.\outp{x}{a}_1$ can be inserted into $-|\outp{b}{c}:1\to
3$ obtaining the process $\inp{a}{x}.\outp{x}{a}|\outp{b}{c}_3$.

Note that, differently from what happens in open nets, an
asynchronous $\pi$-process can dynamically enlarge its interface by
receiving names in input or \emph{extruding} some restricted name.
Name extrusion is an essential feature of the $\pi$-calculus that can be
easily explained by looking at the rule \rulelabel{opn} in Table
\ref{tableAsynchronousPiTransitions}: the name $b$ is local (i.e.,
bound) in $\res{b}{p}$, but it becomes global (i.e., free) whenever
$p$ send it to the environment. In $\modelAsy$, we are going to
assume that processes $p_n$ with interface $n$ always extrude the name
$n+1$: this ensures that the extruded name is fresh (i.e., $n+1\notin \fn{p_n}$).

\medskip

The set of observations is $\actAsy = \{ \outp{i}{j}, \boutb{i}{},
\tau | i,j \in \omega_0 \}$. Note that the input action is not an
observation, since in the asynchronous case it is not observable.
Moreover note that in the bound output, the sent name does not
appear. This is because, any process with sort $n$ will send as
bound output the name $n+1$.

%The following rules define the transition structure $\trAsy$
%(denoted by $\asytr{}$).

The transition structure $\trAsy$ (denoted by $\asytr{}$) is defined
by the following rules, where $i,j\in \omega_0$ represent in the
premises the corresponding names in $\names$, while in the
conclusion the numbers in $\omega_0$. Moreover the transition
relation in the premise is the one in Table
\ref{tableAsynchronousPiTransitions}.

\begin{center}
\begin{tabular}{cccc}
\inferrule{p\tr{\tau}p'}{p_{n}\asytr{\tau}p'_{n}}&
\inferrule{p\tr{\outp{i}{j}}p'}{p_{n}\asytr{\outp{i}{j}}p'_{n}} &
\inferrule{p\tr{\boutb{i}{n+1}}p'}{p_{n}\asytr{\boutb{i}{}}p'_{n+1}}
\end{tabular}
\end{center}
Note that for $\tau$ and not-bound output, $\fn{p'}\subseteq \fn{p} \subseteq n$, and thus $p'\in A_n$. 
For the case of bound ouput instead, the extruded name $n+1$ could occur free in $p'$. Thus $\fn{p'}\subseteq n+1$ and $p'\in A_{n+1}$.

In our context interactive system $\modelAsy$, processes only
perform $\tau$ and output transitions. The contexts are all the
possible outputs. Therefore is almost trivial to see that saturated
bisimilarity coincides with $\sim^1$. Figure \ref{fig:processes}(C)
shows the saturated transition system of $\tau.\nil_1$.

\begin{prop}\label{prop:asysaturated}
Let $p,q$ be asynchronous $\pi$-processes, and let $n\geq max
\;\fn{p \cup q}$. Then $p\sim^{1} q$ iff $p_{n}\satbis_{n} q_{n}$.
\end{prop}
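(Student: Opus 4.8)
The plan is to exploit the reformulation of saturated bisimilarity through the saturated transition system \cts\ and, for the harder inclusion, the characterisation of $\satbis$ as the coarsest bisimulation congruence (Proposition \ref{prop:coarsest}). Throughout I identify a process $p$ with its element $p_n$ of $\APA$, keeping $n\ge\max\,\fn{p\cup q}$ so that both processes share the interface $n$; the only interface subtlety is the bound output, where the convention that $p_n$ extrudes exactly $n+1$ makes the $\modelAsy$-observation $\boutb i{}$ correspond to the Table~\ref{tableAsynchronousPiTransitions} transition $\tr{\boutb i{n+1}}$ under the side condition ``$\bn\mu$ fresh'' of Definitions \ref{defotau} and \ref{def:1bis}. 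Since in $\modelAsy$ the observations are exactly the non-input actions and the contexts are exactly the parallel compositions of outputs, each clause of Definition \ref{def:saturated} unfolds, via $p\atr c o S q \Leftrightarrow c_\APA(p)\tr o q$, into a statement about the non-input transitions of $w\mid p$ for $w$ a parallel output.

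For the inclusion $\satbis\subseteq\sim^1$ I would argue directly. By Proposition \ref{prop:coarsest}, $\satbis$ is a bisimulation congruence; reading it as a relation on processes I check it is a $1$-bisimulation. Given $p\satbis q$ and any single output $\outp ab$, the congruence clause yields $\outp ab\mid p\satbis\outp ab\mid q$, and the transition clause of Definition \ref{def:saturated} then matches every non-input move of $\outp ab\mid p$ by one of $\outp ab\mid q$ into $\satbis$-related states; translating $\asytr{}$ back to the moves of Table \ref{tableAsynchronousPiTransitions} this is exactly the clause of Definition \ref{def:1bis}. Hence $\satbis$ is a $1$-bisimulation and $\satbis\subseteq\sim^1$.

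The substantial direction is $\sim^1\subseteq\satbis$, for which, again by Proposition \ref{prop:coarsest}, it suffices to prove that $\sim^1$ is simultaneously a congruence and a bisimulation for $\trAsy$. Both facts I would obtain from one lemma, proved by induction on the number $k$ of outputs in a parallel composition $w$: \emph{if $R$ is a $1$-bisimulation and $pRq$, then every non-input transition of $w\mid p$ with $\bn\mu$ fresh is matched by $w\mid q$ with arriving states related by the closure of $R$ under adding and removing fresh inert outputs}. The inductive step peels a single output off $w$ and invokes the clause of Definition \ref{def:1bis}; here one runs through the three ways that output can participate in the move of $w\mid p$ --- it fires with label $\outp ab$, it is consumed by an input of $p$ producing a $\tau$ (the genuinely asynchronous case), or it is merely carried along by rule \rulelabel{par} --- and checks that in each case the residual is again of the form ``parallel output $\mid$ process'', so that the closure relates the successors. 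Congruence of $\sim^1$ is the subcase where the added outputs are never consumed, and the bisimulation clause for $\trAsy$ is the base case $k=0$.

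The main obstacle is precisely this base case: recovering from the single-output clause of Definition \ref{def:1bis} the matching of a process's \emph{own} non-input transitions, i.e.\ $\sim^1\subseteq\sim^{o\tau}$. I would handle it by a freshness/inertness argument: to match $p\tr\mu p'$, prefix a fresh output $\outp ab$ with $a,b\notin\fn{p}\cup\fn{q}\cup\nm\mu$, so that $\outp ab\mid p\tr\mu\outp ab\mid p'$; applying the $1$-bisimulation clause and observing that a fresh-subject output can neither fire with label $\mu$ nor communicate with $q$ (which has no free $a$), the matching move must come from $q\tr\mu q'$, yielding $\outp ab\mid p'\;R\;\outp ab\mid q'$. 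The remaining point --- cancelling the inert fresh output to conclude $p'\sim^1 q'$ --- is the one genuinely non-routine step, requiring a short separate coinductive lemma that $\{(p',q')\mid \outp ab\mid p'\sim^1\outp ab\mid q'\}$, with $a$ fresh, is again a $1$-bisimulation; everything else is bookkeeping over interfaces and the freshness convention for bound output.
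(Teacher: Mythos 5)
Your proof is correct and follows the same overall skeleton as the paper's: both inclusions are obtained by unfolding Definition \ref{def:saturated} in terms of parallel output contexts and invoking the congruence characterization of $\satbis$ (Proposition \ref{prop:coarsest}). The difference is one of rigor rather than of route. For $\satbis\subseteq\sim^1$ the two arguments coincide. For the converse, the paper's proof simply asserts that ``$c(p)\sim^1 c(q)$ because contexts $c$ are just parallel output processes'' and that $c(q)$ then matches the moves of $c(p)$ ``by definition of 1-bisimulation''; under the literal reading of Definition \ref{def:1bis} (which quantifies over single outputs $\outp{a}{b}$ and contains no clause for the empty context), neither assertion is immediate. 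You supply exactly the two missing pieces: the induction on the number of outputs in $c$ that establishes closure of $\sim^1$ under arbitrary output contexts, and --- the genuinely delicate point --- the proof that $\sim^1\subseteq\sim^{o\tau}$ via a fresh inert output $\outp{a}{b}$ with $a\notin\fn{p}\cup\fn{q}$ together with the cancellation lemma showing that $\{(p',q')\mid\outp{a}{b}\,|\,p'\sim^1\outp{a}{b}\,|\,q'\}$ is again a 1-bisimulation. (If one instead reads Definition \ref{def:1bis} as implicitly containing the $o\tau$-clause, as in the original formulation of Amadio, Castellani and Sangiorgi, your extra lemma becomes unnecessary; the paper's proof tacitly takes that reading, as its phrase ``an $o\tau$-bisimulation closed under composition with output processes'' reveals.) What your version buys is an argument that works from the definition as actually printed; what the paper's buys is brevity, at the cost of leaning on the stronger, unstated form of the definition. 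Your handling of the interface and bound-output conventions matches the paper's.
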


\subsection{Symbolic Semantics}
Saturated bisimulation is a good notion of equivalence but it is
hard to check, since it involves a quantification over all contexts.
In \cite{FOSSACS08}, we have introduced a general notion of
\emph{symbolic bisimilarity} that coincides with saturated
bisimilarity, but it avoids to consider all contexts. The idea is to
define a symbolic transition system where transitions are labeled
both with the usual observation and also with the minimal context
that allows the transition. First we need to introduce \emph{context
transition systems}.
\begin{defi}[Context Transition System]
Given a category $\Cat{C}$, a $\Sig{\Gamma(\Cat{C})}$-algebra
$\Alg{X}$ and a set of observations $O$, a \emph{context transition
system} $\sym \subseteq \int X\times ||\Cat{C}|| \times O \times
\int X$ is a transition relation labeled with $||\Cat{C}|| \times O$
($p\atr{c}{o}{\sym}p'$ means that $(p,c,o,p')\in \sym$).
\end{defi}
An example of context transition system is $\eta$ defined in Section
\ref{openpetrinets}: each transition is labeled with both  a
multiset of tokens $i$ and an observation $\lambda$.
%
%NEW
Also the saturated transition system is a context transition
systems. Hereafter, given a context transition system $\beta$, we
will write $\atr{c}{o}{\beta}$ to denote the transitions of $\beta$,
$\atr{c}{o}{S}$ to denote the saturated transitions and
$\atr{c}{o}{}$ (without subscript) to denote the transitions of the
\emph{total context transition system} $t = \int X\times ||\Cat{C}||
\times O \times \int X$.

\begin{defi}[Inference System]
Given a category $\Cat{C}$, a $\Sig{\Gamma(\Cat{C})}$-algebra
$\Alg{X}$ and a set of observations $O$, an inference system
$\infsys$ is a set of rules of the following format, where $i,j\in
\sort$, $o,o'\in O$, $c\in \Cat{C}[i,i']$ and $d \in \Cat{C}[j,j']$.
$$\deduz{p_i\tr{o}q_j}{c(p_i)\tr{o'}d(q_j)}$$
\end{defi}\medskip

\noindent In this rule, $i$, $j$, $o$, $o'$, $c$ and $d$ are constants, while
$p_i$ and $q_j$ are variables ranging over $X_i$ and $X_j$,
respectively. Therefore, the above rule states that all processes
with interface $i$ that perform a transition with observation $o$
going into a state $q_j$ with interface $j$, when inserted into the
context $c$ can perform a transition with the observation $o'$ going
into $d(q_j)$. In other words, this rule is in a (multisorted) SOS
format, where the operators (here, contexts) are unary and there is
only one transition in the premise of the rules. Note that, however,
this kind of rules is not intended to be used for expressing the
operational semantics of a formalism (as in the case of SOS), but instead for describing
``useful properties'' about how contexts modify the behaviour of
systems.

In the following, we write $c\UTtr{o}{o'}d$ to mean a rule like the
above. The rules $c\UTtr{o}{o'}c'$ and $d\UTtr{o'}{o''}d'$
\emph{derive} the rule $ c;d\UTtr{o}{o''}c';d'$ if $ c;d$ and
$c';d'$ are defined. Given an inference system $\infsys$,
$\Phi(\infsys)$ is the set of all the rules derivable from $\infsys$
together with the identities rules ($\forall o \in O$ and $\forall
i,j \in |\Cat{C}|$, $id_i \UTtr{o}{o}id_j$).

\begin{defi}[Derivations]\label{def:der}
Let $\Cat{C}$ be a category, $\Alg{X}$ be a
$\Sig{\Gamma(\Cat{C})}$-algebra, $O$ be a set of observations. An
inference system $\infsys$ defines a \emph{derivation relation}
$\vdash_{\infsys}\subseteq t \times t$ amongst the transitions of
the total context transition system.

We say that $p\atr{c_1}{o_1}{}p_1$ \emph{derives}
$p\atr{c_2}{o_2}{}p_2$ (written $p\atr{c_1}{o_1}{}p_1
\vdash_{\infsys} p\atr{c_2}{o_2}{}p_2$) if there exist $d,e \in
||\Cat{C}||$ such that $d\UTtr{o_1}{o_2}e\in \Phi(\infsys)$,
$c_1;d=c_2$ and $e_{\Alg{X}}(p_1)=p_2$.
\end{defi}

Note that the above definition can be extended to the transitions of
any pairs of context transition systems $\beta_1, \beta_2$:
$p\atr{c_1}{o_1}{\beta_1}p_1 \vdash_{\infsys}
p\atr{c_2}{o_2}{\beta_2}p_2$ iff $p\atr{c_1}{o_1}{}p_1
\vdash_{\infsys} p\atr{c_2}{o_2}{}p_2$.

%We have now all the ingredients to introduces tw

Until now, context transition systems and inference systems are not
related with the transitions relations $tr$ of context interactive
systems. The following definition makes a link between them.
%%%%SOUNDNESS COMPLETENESS
\begin{defi}[Soundness and Completeness]
Let $\isys$ be a context interactive system, $\beta$ a context
transition system and $\infsys$ an inference system.

We say that $\beta$ and $\infsys$ are \emph{sound} w.r.t.\ $\sys{I}$
iff  \begin{center}if $p\atr{c'}{o'}{\beta}q'$ and
$p\atr{c'}{o'}{\beta}q' \vdash_{\infsys} p\atr{c}{o}{}q$, then
$p\atr{c}{o}{S}q$.\end{center}

We say that $\beta$ and $\infsys$ are \emph{complete} w.r.t.\
$\sys{I}$ iff \begin{center} if $p\atr{c}{o}{S}q$, then there exists
$p\atr{c'}{o'}{\beta}q'$ such that $p\atr{c'}{o'}{\beta}q'
\vdash_{\infsys} p\atr{c}{o}{}q$. \end{center}
\end{defi}

\begin{defi}
Let $\isys$ be a context interactive system, $\beta$ a context
transition system and $\infsys$ an inference system. If $\beta$ and
$\infsys$ are sound and complete w.r.t.\ $\sys{I}$ we say that
$\beta$ is a \emph{symbolic transition system} (\sts\ for short) for
$\sys{I}$.
\end{defi}
For instance, the saturated transition system $\approxON$ (defined
in Section \ref{openpetrinets} for open nets) is a symbolic
transition system (this will be formally stated in Proposition
\ref{prop:symboliOpenNet}). Also the saturated transition system is
a symbolic transition system (take as $T$ the empty inference
system), while the total context transition system is usually not
sound.

A symbolic transition system could be considerably smaller than the
saturated transition system, but still containing all the
information needed to recover $\satbis$. Note that the ordinary
bisimilarity over \sts\ (hereafter called \emph{syntactic
bisimilarity} and denoted by $\sim^{W}$) is usually strictly
included in $\satbis$. As an example consider the marked open nets
$a$ and $l$. These are not syntactically bisimilar, since
$l\atr{\$^3}{\alpha}{\approxON}m$ while $a$ cannot (Figure
\ref{fig:OpenNet}(C)). However, they are saturated bisimilar, since
$\sim^S=\sim^N$. Analogously, the ordinary bisimilarity over the
\lts\ of the asynchronous $\pi$ does not coincide with $\sim^1$:
$\inp{a}{b}.\outp{a}{b} + \tau$ and $\tau$ are $1$-bisimilar, but
not syntactically bisimilar (at the end of this section, we will
show that also the transition system of asynchronous $\pi$ in Table
\ref{tableAsynchronousPiTransitions} is somehow a \sts ).

In literature, several $\sts$ are defined in
\cite{Sang96,Fusion,ExplicitFusion}. In these works, transitions are
labeled with both ``fusions'' of names and the ordinary labels.
Other noteworthy examples are the IPOs and the borrowed contexts of
\cite{RobinCONCUR00} and \cite{BARBARAFOSSACS04}: here all the
transitions are labeled only with the minimal contexts and the
observations can be though as $\tau$s. Also in all these cases,
syntactic bisimilarity is too fine grained.
In order to recover $\sim^S$ through the symbolic transition system
we need a more elaborated definition of bisimulation.

\begin{defi}[Symbolic Bisimilarity]\label{def:Symbolic}
Let $\isys$ be an interactive system, $\infsys$ be a set of rules
and $\sym$ be a context transition system. Let $R=\{R_i \subseteq
X_i \times X_i \mid i \in \sort\}$ be a $\sort$-sorted family of
symmetric relations. $R$ is a \emph{symbolic bisimulation} iff
$\forall i \in \sort$, whenever $pR_iq$:
\begin{enumerate}[$\bullet$]
\item if $p \atr{c}{o}{\sym} p'$,
then $\exists c_1, o_1, q_1', q'$ such that
$q\atr{c_1}{o_1}{\sym}q_1'$ and
$q\atr{c_1}{o_1}{\sym}q_1'\vdash_{\infsys} q\atr{c}{o}{}q'$ and $p'
R_k q'$.
\end{enumerate}
We write $p \symbis_i q$ iff there exists a symbolic bisimulation
$R$ such that $p R_i q$.
\end{defi}

\begin{thm}\label{theo:main}
Let $\sys{I}$ be a context interactive system, $\beta$ a context
transition system and $\infsys$ an inference system. If $\beta$ and
$\infsys$ are sound and complete w.r.t.\ $\sys{I}$, then $\symbis =
\satbis$.
\end{thm}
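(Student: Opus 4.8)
The statement asserts the equality of two equivalences, so the plan is to prove the two inclusions $\satbis \subseteq \symbis$ and $\symbis \subseteq \satbis$ separately, in each case by exhibiting a witnessing bisimulation of the appropriate kind and invoking soundness and completeness to translate between the symbolic transition system $\sym$ and the saturated one.

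For $\satbis \subseteq \symbis$, I would show that the family $\satbis$ itself is a symbolic bisimulation (Definition \ref{def:Symbolic}). Suppose $p \satbis_i q$ and $p \atr{c}{o}{\sym} p'$. Applying the identity rule $id \UTtr{o}{o} id \in \clos{\infsys}$ gives $p \atr{c}{o}{\sym} p' \vdash_{\infsys} p \atr{c}{o}{} p'$, so soundness yields the saturated transition $p \atr{c}{o}{S} p'$, i.e. $c_{\Alg{X}}(p) \tr{o} p'$. Since $\satbis$ is a saturated bisimulation, its congruence clause gives $c_{\Alg{X}}(p) \satbis c_{\Alg{X}}(q)$, and its transition clause then produces $q'$ with $c_{\Alg{X}}(q) \tr{o} q'$ and $p' \satbis q'$, that is $q \atr{c}{o}{S} q'$. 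Finally completeness turns this saturated transition into a symbolic one $q \atr{c_1}{o_1}{\sym} q_1'$ with $q \atr{c_1}{o_1}{\sym} q_1' \vdash_{\infsys} q \atr{c}{o}{} q'$, which is exactly the answer required by Definition \ref{def:Symbolic}, with the arriving states $\satbis$-related.

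For the converse $\symbis \subseteq \satbis$, the relation $\symbis$ need not be closed under contexts, so I would instead take as candidate its context closure $R_j = \{(c_{\Alg{X}}(p_0), c_{\Alg{X}}(q_0)) \mid p_0 \symbis_i q_0,\ c \in \Cat{C}[i,j]\}$, which contains $\symbis$ (take $c = id$) and is symmetric. The congruence clause of Definition \ref{def:saturated} is immediate from the algebra equation $(c;d)_{\Alg{X}} = d_{\Alg{X}} \circ c_{\Alg{X}}$. For the transition clause, assume $p = c_{\Alg{X}}(p_0)\ R_j\ q = c_{\Alg{X}}(q_0)$ with $p_0 \symbis_i q_0$, and $p \tr{o} p'$, i.e. $p_0 \atr{c}{o}{S} p'$. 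Completeness rewrites this as $p_0 \atr{c'}{o'}{\sym} p_0' \vdash_{\infsys} p_0 \atr{c}{o}{} p'$, witnessed by a rule $d \UTtr{o'}{o} e \in \clos{\infsys}$ with $c';d = c$ and $e_{\Alg{X}}(p_0') = p'$. Feeding $p_0 \atr{c'}{o'}{\sym} p_0'$ into the symbolic bisimulation $\symbis$ yields $q_0 \atr{c_1}{o_1}{\sym} q_1'$ and a derivation $q_0 \atr{c_1}{o_1}{\sym} q_1' \vdash_{\infsys} q_0 \atr{c'}{o'}{} q_0'$ with $p_0' \symbis q_0'$, via a rule $d_1 \UTtr{o_1}{o'} e_1 \in \clos{\infsys}$.

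The crux is then to chain these two derivations. Since $\clos{\infsys}$ is closed under the composition of rules, $d_1 \UTtr{o_1}{o'} e_1$ and $d \UTtr{o'}{o} e$ derive $d_1;d \UTtr{o_1}{o} e_1;e \in \clos{\infsys}$; applying it to $q_0 \atr{c_1}{o_1}{\sym} q_1'$ gives $q_0 \atr{c_1}{o_1}{\sym} q_1' \vdash_{\infsys} q_0 \atr{c}{o}{} e_{\Alg{X}}(q_0')$, using $c_1;d_1;d = c';d = c$ and $(e_1;e)_{\Alg{X}}(q_1') = e_{\Alg{X}}(q_0')$. Soundness then delivers $q_0 \atr{c}{o}{S} e_{\Alg{X}}(q_0')$, i.e. $q = c_{\Alg{X}}(q_0) \tr{o} e_{\Alg{X}}(q_0')$, the matching transition. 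Its target $e_{\Alg{X}}(q_0')$ is related to $p' = e_{\Alg{X}}(p_0')$ in $R$ because $p_0' \symbis q_0'$ and $e$ is a context. Thus $R$ is a saturated bisimulation and $\symbis \subseteq R \subseteq \satbis$. The main obstacle I expect is precisely this chaining step: keeping the sources and targets of the contexts $c_1, d_1, d, e_1, e$ type-correct so that all the composites and algebra actions are defined, and verifying that rule composition in $\clos{\infsys}$ behaves as required. Everything else is bookkeeping around soundness, completeness, and the algebra equations.
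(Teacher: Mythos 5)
Your proposal is correct and follows essentially the same route as the paper: both directions use the same witnesses (showing $\satbis$ is itself a symbolic bisimulation via the identity rule, soundness, and completeness; and showing the context closure of $\symbis$ is a saturated bisimulation by composing the two derivations in $\clos{\infsys}$ and invoking soundness). The chaining step you flag as the crux is exactly the step the paper performs, with the same bookkeeping on composite contexts.
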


\paragraph{\bf{Symbolic Semantics for swc.}}
The symbolic transition system $\approxwsc$ and the inference system
$\infwsc$ for swc have already been defined in Section
\ref{sec:swc}. It is also easy to see that symbolic bisimilarity for
swc (Definition \ref{def:symbolicswc}) is an instance of Definition
\ref{def:Symbolic}. Therefore, in order to apply Theorem
\ref{theo:main}, we only need to prove that $\approxwsc$ and
$\infwsc$ are sound and complete.

\begin{prop}\label{prop:wordsoundcomplete}
$\approxwsc$ and $\infwsc$ are sound and complete w.r.t.\
$\modelswc$.
\end{prop}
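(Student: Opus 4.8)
The plan is to reduce both soundness and completeness to two elementary facts about the operational semantics of swc, and then to read off the statement by unfolding the definition of the derivation relation $\vdash_{\infwsc}$. First I would prove a \emph{monotonicity lemma}: for every $w\in A^*$, if $\gamma\tr{\bullet}\gamma'$ then $w_{\WSCA}(\gamma)\tr{\bullet}w_{\WSCA}(\gamma')$. This is a straightforward induction on the derivation of $\gamma\tr{\bullet}\gamma'$: in the base case $uv\rhd u.p\tr{\bullet}uv\rhd p$ one rewrites $w_{\WSCA}(uv\rhd u.p)=u(vw)\rhd u.p$ and reapplies the same axiom, and the sum rule is immediate. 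Rephrased on the saturated system, and using $(vw)_{\WSCA}(\gamma)=w_{\WSCA}(v_{\WSCA}(\gamma))$ (the algebra equation for $v;w=vw$), it says exactly that $\gamma\atr{v}{\bullet}{S}\gamma'$ implies $\gamma\atr{vw}{\bullet}{S}w_{\WSCA}(\gamma')$. Second I would prove that every symbolic transition is a saturated transition, i.e.\ $\gamma\atr{v}{\bullet}{\approxwsc}\gamma'$ implies $\gamma\atr{v}{\bullet}{S}\gamma'$, by matching each $\approxwsc$-rule against the definition of $\cts$: the leftmost rule gives $uv\rhd u.p\tr{\bullet}uv\rhd p$ with label $\varepsilon$, the central rule gives $uv\rhd uv.p\tr{\bullet}uv\rhd p$ with label $v$, and the sum rule is handled by induction.

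\emph{Soundness} is then immediate. Suppose $\gamma\atr{v'}{\bullet}{\approxwsc}\gamma'$ and $\gamma\atr{v'}{\bullet}{\approxwsc}\gamma'\vdash_{\infwsc}\gamma\atr{v}{\bullet}{}\gamma''$. By the definition of $\vdash_{\infwsc}$ there is a $w$ with $v=v'w$ and $\gamma''=w_{\WSCA}(\gamma')$. The second lemma gives $\gamma\atr{v'}{\bullet}{S}\gamma'$, and the first lemma then yields $\gamma\atr{v'w}{\bullet}{S}w_{\WSCA}(\gamma')$, that is $\gamma\atr{v}{\bullet}{S}\gamma''$, as required.

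\emph{Completeness} requires a case analysis. Let $\gamma=u\rhd p$ and assume $\gamma\atr{v}{\bullet}{S}\gamma''$, i.e.\ $uv\rhd p\tr{\bullet}\gamma''$. By induction on this operational derivation (descending through the sum rules) the firing summand of $p$ is some $s.p_0$ with $s$ a prefix of $uv$ and $\gamma''=uv\rhd p_0$. Now compare $|s|$ with $|u|$. If $|s|\le|u|$ then $s$ is a prefix of $u$, so the leftmost symbolic rule gives $\gamma\atr{\varepsilon}{\bullet}{\approxwsc}u\rhd p_0$; taking $w=v$ in $\vdash_{\infwsc}$ derives $\gamma\atr{v}{\bullet}{}v_{\WSCA}(u\rhd p_0)=uv\rhd p_0=\gamma''$. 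If $|s|>|u|$ then $u$ is a prefix of $s$, say $s=uv_1$ with $v_1$ a prefix of $v$, say $v=v_1w$; the central symbolic rule gives $\gamma\atr{v_1}{\bullet}{\approxwsc}uv_1\rhd p_0$, and $\vdash_{\infwsc}$ with this $w$ derives $\gamma\atr{v_1w}{\bullet}{}w_{\WSCA}(uv_1\rhd p_0)=uv\rhd p_0=\gamma''$. In both cases the chosen symbolic transition derives $\gamma\atr{v}{\bullet}{}\gamma''$.

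The only genuinely delicate point is the completeness case split: one must argue that, because $s$ and $u$ are both prefixes of the common word $uv$, the shorter is always a prefix of the longer, and then keep the word bookkeeping ($s=uv_1$, $v=v_1w$, and the action of $w_{\WSCA}$ on the resource) aligned so that the derived arriving state is exactly $\gamma''$. Everything else is rule-by-rule verification plus the two routine inductions.
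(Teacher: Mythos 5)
Your proposal is correct and follows essentially the same route as the paper: soundness by combining the monotonicity property with the observation that every $\approxwsc$-transition is a saturated transition, and completeness by locating the firing summand $s.p_0$ and splitting on whether $s$ is a prefix of the current resource $u$ (using the leftmost rule with label $\varepsilon$) or properly extends it (using the central rule with label $v_1$ where $s=uv_1$). The only cosmetic difference is that the paper phrases the reduction to the firing summand as an induction on the structure of the process rather than on the operational derivation, which amounts to the same thing.
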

\begin{cor}[From Theorem \ref{theo:main}]
In swc, $\satbis=\symbis$.
\end{cor}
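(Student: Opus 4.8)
The plan is to prove that the symbolic transition system $\approxwsc$ together with the inference system $\infwsc$ is sound and complete with respect to $\modelswc$, by directly unwinding the two definitions and relating them to the two defining rules of $\approxwsc$ and the single monotonicity rule of $\infwsc$. Recall that soundness asks that whenever $\gamma \atr{v}{\bullet}{\approxwsc} \gamma'$ and this transition derives $\gamma \atr{w}{\bullet}{} \gamma''$, then $\gamma \atr{w}{\bullet}{S} \gamma''$; completeness asks the converse, that every saturated transition $\gamma \atr{w}{\bullet}{S} \gamma''$ is derivable from some symbolic transition. By the definition of $\satbis$ via the saturated transition system, $\gamma \atr{w}{\bullet}{S} \gamma''$ means exactly $w_{\WSCA}(\gamma) \tr{\bullet} \gamma''$, and by the definition of $\vdash_{\infwsc}$, the derivation $\gamma \atr{v}{\bullet}{\approxwsc} \gamma' \vdash_{\infwsc} \gamma \atr{vw'}{\bullet}{} w'_{\WSCA}(\gamma')$ holds for every $w' \in A^*$. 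So both properties reduce to a concrete comparison between the concatenation-based firing condition of $\trwsc$ and the minimal-word condition of $\approxwsc$.

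\smallskip

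\noindent\textbf{Soundness.} First I would fix a symbolic transition $\gamma \atr{v}{\bullet}{\approxwsc} \gamma'$ and proceed by case analysis on which of the two main rules produced it (the $+$ rule only propagates, so it can be folded into the cases for its premise). In the left rule, $\gamma = uv'\rhd u.p$, $v = \varepsilon$ and $\gamma' = uv'\rhd p$; in the central rule, $\gamma = u \rhd uv.p$ and $\gamma' = uv \rhd p$. A derived transition has the form $\gamma \atr{vw'}{\bullet}{} w'_{\WSCA}(\gamma')$, so I must show $(vw')_{\WSCA}(\gamma) \tr{\bullet} w'_{\WSCA}(\gamma')$, i.e.\ the saturated transition really exists. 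In the central case this amounts to checking that $u w' \cdot (uv).p \tr{\bullet} u v w' \rhd p$, which holds by the operational rule since $uv$ is still a suffix of the resources $uvw'$; the left case is analogous and even simpler. The key observation making soundness work is the monotonicity property stated in the text: adding resources $w'$ never destroys a transition, so the derived saturated transition is always genuinely firable.

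\smallskip

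\noindent\textbf{Completeness.} Conversely, I would start from an arbitrary saturated transition $\gamma \atr{w}{\bullet}{S} \gamma''$, unfold it as $w_{\WSCA}(\gamma) \tr{\bullet} \gamma''$, and inspect the operational rule that fired. Writing $\gamma = u \rhd r$, firing requires the guarding word of some summand, say $s.p$, to be a suffix of $uw$; I would then identify the \emph{minimal} extension $v$ of the environment needed to fire this summand from $\gamma$ itself, so that $\gamma \atr{v}{\bullet}{\approxwsc} \gamma'$ is a legitimate symbolic transition, and check that $w = vw'$ for the appropriate remainder $w'$ and that $\gamma'' = w'_{\WSCA}(\gamma')$. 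This last bookkeeping — showing that the minimal word $v$ picked out by $\approxwsc$ always sits as a prefix of the actual word $w$, with the leftover $w'$ matching up so that $\vdash_{\infwsc}$ reconstructs exactly $\gamma''$ — is where I expect the real work to lie, since it requires reasoning carefully about prefixes, suffixes, and the read-only (non-consuming) nature of resources in swc. Once that alignment is established, the derivation $\gamma \atr{v}{\bullet}{\approxwsc} \gamma' \vdash_{\infwsc} \gamma \atr{vw'}{\bullet}{} w'_{\WSCA}(\gamma') = \gamma \atr{w}{\bullet}{} \gamma''$ is immediate from the definition of $\vdash_{\infwsc}$, completing the argument.
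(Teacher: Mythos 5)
Your proposal is correct and follows essentially the same route as the paper: it reduces the corollary to soundness and completeness of $\approxwsc$ and $\infwsc$ (the paper's Proposition \ref{prop:wordsoundcomplete}) and then invokes Theorem \ref{theo:main}, with soundness handled by the two rules of $\approxwsc$ plus monotonicity and completeness by the same case split on whether the guard already matches the available resources or requires a minimal extension $v$ with $w=vw'$. The only cosmetic discrepancy is the prefix/suffix wording (the formal rule makes the guard a \emph{prefix} of the resource word), but the paper's own prose wavers on this point and your argument's substance is unaffected.
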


%
%\item Take $\gamma= u_1 \rhd p_1+p_2$ (thus $w_{\WSCA}(\gamma) = u_1w
%\rhd p_1+p_2$). By the rules defining $\trwsc$, either $u_1w \rhd
%p_1\tr{\bullet}\gamma''$ or $\gamma= u_1w \rhd
%p_2\tr{\bullet}\gamma''$. We assume that $u_1w \rhd
%p_1\tr{\bullet}\gamma''$, while the case of $p_2$ is completely
%analogous. By induction hypothesis, we have that $=u_1 \rhd p_1
%\atr{v}{\bullet}{\approxwsc} \gamma'$ and $u_1 \rhd p_1
%\atr{v}{\bullet}{\approxwsc} \gamma'\vdash_{\infwsc} u_1 \rhd p_1
%\atr{w}{\bullet}{} \gamma''$.
%\end{enumerate}

\paragraph{\bf{Symbolic Semantics for open Petri nets.}}
The symbolic transition system for open Petri nets is $\eta$ defined
in Section \ref{openpetrinets}. The inference system $\infON$ is
defined by the following rule parametric w.r.t.\ $\lambda \in
\Lambda$, $I \in \Pow(Pl)$ and $i\in I^{\oplus}$.
$$\deduz{\onet{N},m \tr{\lambda} \onet{N},m'}{\onet{N},m\oplus i \tr{\lambda} \onet{N},m'\oplus i}$$
Its intuitive meaning is that for all possible observations
$\lambda$ and multiset $i$ on input places, if a marked net performs
a transition with observation $\lambda$, then the addition of $i$
preserves this transition.

Now, consider derivations between transitions of open nets. It is
easy to see that
$N,m\atr{i_1}{\lambda_1}{}N,m_1\vdash_{\infON}N,m\atr{i_2}{\lambda_2}{}N,m_2$
if and only if $\lambda_2=\lambda_1$ and there exists a multiset $x$
on the input places of $N$ such that $i_2=i_1\oplus x$ and $m_2=
m_1\oplus x$. For all the nets $N_k$ of Figure \ref{fig:OpenNet},
this just means that for all observations $\lambda$ and for all
multisets $m,n$, we have that
$\<N_k,m\>\atr{\$^i}{\lambda}{\approxON}\<N_k,n\>\vdash_{\infON}\<N_k,m\>\atr{\$^{i+j}}{\lambda}{}\<N_k,n\$^j\>$.
From this observation, it is easy to see that the definition of
net-symbolic bisimilarity is an instance of symbolic bisimilarity.
\begin{prop}\label{prop:netSymbolic}
Let $\<\onet{N_1},m_1\>$ and $\<\onet{N_2},m_2\>$ be two marked nets
both with interface $I$. Thus $\<\onet{N_1},m_1\> \sim^{NS}
\<\onet{N_2},m_2\>$ iff $\<\onet{N_1},m_1\>
\sim^{SYM}_I\<\onet{N_2},m_2\>$.
\end{prop}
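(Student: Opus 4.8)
The goal is to show that net-symbolic bisimilarity ($\sim^{NS}$, Definition~\ref{def:symbnet}) coincides with the instance $\sim^{SYM}_I$ of the general symbolic bisimilarity (Definition~\ref{def:Symbolic}) for the context interactive system $\modelON$, its symbolic transition system $\approxON$, and the inference system $\infON$. My plan is to reduce this to a purely syntactic matching between the two definitions, relying on the explicit characterization of the derivation relation $\vdash_{\infON}$ already established in the paragraph preceding the statement.

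\begin{proof}[Proof sketch]
The plan is to prove the two inclusions by showing that a relation is a net-symbolic bisimulation exactly when it is a symbolic bisimulation, so that the two largest such relations coincide. First I would unfold Definition~\ref{def:Symbolic} for the concrete data $\modelON=\modelONAll$, $\sym=\approxON$, and $\infsys=\infON$. The key ingredient is the computation already done in the excerpt: $\<N,m\>\atr{c_1}{o_1}{}\<N,m_1\>\vdash_{\infON}\<N,m\>\atr{c_2}{o_2}{}\<N,m_2\>$ holds if and only if $o_2=o_1$ (i.e.\ $\lambda_2=\lambda_1$) and there is a multiset $x\in I^{\oplus}$ with $c_2=c_1\oplus x$ and $m_2=m_1\oplus x$. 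This is precisely the shape of the matching condition in Definition~\ref{def:symbnet}: there, the challenged transition is labelled $i$, the answering symbolic transition is labelled $j$, and one requires $i=j\oplus k$ with the arriving state on the right augmented by $k$.

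Concretely, I would fix a symmetric family $R=\{R_I\}$ and argue the equivalence clause by clause. Suppose $\<N_1,m_1\>\,R_I\,\<N_2,m_2\>$ and $\<N_1,m_1\>\atr{i}{\lambda}{\approxON}\<N_1,m_1'\>$. Under Definition~\ref{def:Symbolic}, $R$ being a symbolic bisimulation means there exist $c_1,o_1,q_1',q'$ with $\<N_2,m_2\>\atr{c_1}{o_1}{\approxON}q_1'$, $\<N_2,m_2\>\atr{c_1}{o_1}{\approxON}q_1'\vdash_{\infON}\<N_2,m_2\>\atr{i}{\lambda}{}q'$, and $\<N_1,m_1'\>\,R_I\,q'$. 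By the characterization of $\vdash_{\infON}$, this forces $o_1=\lambda$, $c_1=j$ and $i=j\oplus k$ for some $k$, with $q_1'=\<N_2,m_2'\>$ and $q'=\<N_2,m_2'\oplus k\>$. Substituting, these are exactly conditions (1)--(3) of Definition~\ref{def:symbnet}. The converse substitution is identical: given $j,k$ with $i=j\oplus k$ as in the net-symbolic clause, set $c_1=j$, $o_1=\lambda$, and let $q'=\<N_2,m_2'\oplus k\>$, which the derivation characterization guarantees is reachable by $\vdash_{\infON}$. Thus $R$ is a net-symbolic bisimulation iff it is a symbolic bisimulation, and taking the union of all such relations yields $\sim^{NS}=\sim^{SYM}_I$.

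The only genuine obstacle is bookkeeping with interfaces and the restriction to nets with only input places: I must check that the multiset $k$ (and hence $m_2'\oplus k$) lives in the correct sort $I^{\oplus}$ so that the arriving configuration indeed has interface $I$ and lies in the carrier $N_I$, keeping everything inside a single sort $I$ as both definitions implicitly require. This follows because all arrows in $\onmssign$ are endo-arrows $I\to I$ (there are no arrows between distinct objects), so composition and the action of $\infON$ never change the interface; hence the interface $I$ is preserved throughout and the sorted relations $R_I$ suffice. With this verified, the clause-by-clause correspondence is routine, and the result follows.
\end{proof}
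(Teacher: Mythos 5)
Your proposal is correct and follows essentially the same route as the paper's own proof: instantiate Definition~\ref{def:Symbolic} with $\modelON$, $\approxON$ and $\infON$, invoke the explicit characterization of $\vdash_{\infON}$ (same label, $i=j\oplus k$, target augmented by $k$), and match the resulting clauses against Definition~\ref{def:symbnet}. The additional check that interfaces are preserved because $\onmssign$ has only endo-arrows is a harmless elaboration the paper leaves implicit.
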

Thus, in order to prove that $\sim^N=\sim^{NS}$, we have only to
prove that $\approxON$ and $\infON$ are sound and complete w.r.t.\
$\modelON$ and then apply the general Theorem \ref{theo:main}.

\begin{prop}\label{prop:symboliOpenNet}
$\approxON$ and $\infON$ are sound and complete w.r.t.\ $\modelON$.
\end{prop}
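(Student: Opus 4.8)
The plan is to unfold the two clauses of the soundness/completeness definition for the concrete choice $\beta=\approxON$ and $\infsys=\infON$, and then verify each by elementary multiset arithmetic, relying on the characterisation of $\vdash_{\infON}$ already established above: $\onet{N},m\atr{i_1}{\lambda_1}{}\onet{N},m_1\vdash_{\infON}\onet{N},m\atr{i_2}{\lambda_2}{}\onet{N},m_2$ holds iff $\lambda_2=\lambda_1$ and there is $x\in I^{\oplus}$ with $i_2=i_1\oplus x$ and $m_2=m_1\oplus x$. Throughout I use the standing restriction to nets all of whose places are input places, so that any ``residual'' submultiset of $\pre{t}$ automatically lies in $I^{\oplus}$.

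For soundness I would start from a symbolic transition $\onet{N},m\atr{i'}{\lambda}{\approxON}\onet{N},m'$ that derives $\onet{N},m\atr{i}{\lambda}{}\onet{N},m''$. The symbolic rule supplies $t\in T$ with $l(t)=\lambda$, $m=(m\cap\pre{t})\oplus n$, $\pre{t}=(m\cap\pre{t})\oplus i'$ and $m'=\post{t}\oplus n$; the derivation supplies $x$ with $i=i'\oplus x$ and $m''=m'\oplus x$. Then $m\oplus i=(m\cap\pre{t})\oplus n\oplus i'\oplus x=\pre{t}\oplus(n\oplus x)$, so rule \rulelabel{tr} fires on $\onet{N},m\oplus i$ with residual $n\oplus x$, producing $\post{t}\oplus n\oplus x=m'\oplus x=m''$. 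Hence $\onet{N},m\oplus i\tr{\lambda}\onet{N},m''$, i.e.\ $\onet{N},m\atr{i}{\lambda}{S}\onet{N},m''$, which is exactly the soundness conclusion.

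For completeness I would start from a saturated transition $\onet{N},m\atr{i}{\lambda}{S}\onet{N},m''$, i.e.\ $\onet{N},m\oplus i\tr{\lambda}\onet{N},m''$ by \rulelabel{tr}, witnessed by some $t\in T$ with $l(t)=\lambda$, $m\oplus i=\pre{t}\oplus c$ and $m''=\post{t}\oplus c$. I build the witnessing symbolic transition from the canonical decomposition: set $n=m\ominus(m\cap\pre{t})$ and $i'=\pre{t}\ominus(m\cap\pre{t})$, so $m=(m\cap\pre{t})\oplus n$ and $\pre{t}=(m\cap\pre{t})\oplus i'$, whence the symbolic rule yields $\onet{N},m\atr{i'}{\lambda}{\approxON}\onet{N},m'$ with $m'=\post{t}\oplus n$. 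It then remains to exhibit the $x$ required by $\vdash_{\infON}$.

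The crux is the pointwise inequality $i'\subseteq i$: for every place $s$ one has $i'(s)=\max(0,\pre{t}(s)-m(s))$, while $\pre{t}\subseteq m\oplus i$ forces $i(s)\geq\pre{t}(s)-m(s)$ and trivially $i(s)\geq 0$, so $i(s)\geq i'(s)$. Setting $x=i\ominus i'$ gives $i=i'\oplus x$; cancelling $(m\cap\pre{t})$ in $m\oplus i=\pre{t}\oplus c$ gives $n\oplus i=i'\oplus c$, and substituting $i=i'\oplus x$ then cancelling $i'$ yields $n\oplus x=c$. Therefore $m'\oplus x=\post{t}\oplus n\oplus x=\post{t}\oplus c=m''$, so indeed $\onet{N},m\atr{i'}{\lambda}{\approxON}\onet{N},m'\vdash_{\infON}\onet{N},m\atr{i}{\lambda}{}\onet{N},m''$, establishing completeness. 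I expect the completeness direction to be the only real obstacle — specifically the verification that the minimal environment $i'$ enabling the symbolic firing is always dominated by the arbitrary environment $i$ of the saturated firing, and that the surplus $x=i\ominus i'$ reconciles the two post-markings via $n\oplus x=c$; both are pointwise multiset facts, but they must be checked carefully and they tacitly depend on the restriction to nets with input places only.
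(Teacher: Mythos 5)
Your proof is correct and follows essentially the same route as the paper's: both directions unfold the symbolic rule and the characterisation of $\vdash_{\infON}$, and completeness uses the same canonical decomposition $i_1=\pre{t}\ominus(m\cap\pre{t})$, $n=m\ominus(m\cap\pre{t})$ with $x=i\ominus i_1$. The only difference is presentational — you verify $i_1\subseteq i$ and $n\oplus x=c$ by explicit pointwise/cancellation arguments where the paper asserts the former by minimality and computes the latter by a chain of multiset identities.
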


\begin{cor}[From Theorem \ref{theo:main}]
$\sim^N=\sim^{NS}$.
\end{cor}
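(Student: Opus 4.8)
The plan is to unfold both definitions and reduce each direction to elementary multiset bookkeeping, using the explicit shape of $\vdash_{\infON}$ recalled above: $\onet{N},m\atr{i_1}{\lambda_1}{}\onet{N},m_1 \vdash_{\infON} \onet{N},m\atr{i_2}{\lambda_2}{}\onet{N},m_2$ holds iff $\lambda_1=\lambda_2$ and there is a multiset $x\in I^{\oplus}$ with $i_2=i_1\oplus x$ and $m_2=m_1\oplus x$. I also keep in mind that $\onet{N},m\atr{i}{\lambda}{S}\onet{N},m'$ is by definition the same as $\onet{N},m\oplus i\tr{\lambda}\onet{N},m'$, i.e. a single firing of rule \rulelabel{tr}.

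For soundness, suppose $\onet{N},m\atr{i_1}{\lambda}{\approxON}\onet{N},m_1$ derives $\onet{N},m\atr{i_2}{\lambda}{}\onet{N},m_2$. The symbolic rule supplies a transition $t$ with $l(t)=\lambda$, $m=(m\cap\pre{t})\oplus n$, $\pre{t}=(m\cap\pre{t})\oplus i_1$ and $m_1=\post{t}\oplus n$, while the shape of $\vdash_{\infON}$ supplies $x$ with $i_2=i_1\oplus x$ and $m_2=m_1\oplus x$. Then $m\oplus i_2 = (m\cap\pre{t})\oplus i_1\oplus n\oplus x = \pre{t}\oplus(n\oplus x)$, so rule \rulelabel{tr} fires $t$ from $m\oplus i_2$ into $\post{t}\oplus(n\oplus x)=m_1\oplus x=m_2$; hence $\onet{N},m\oplus i_2\tr{\lambda}\onet{N},m_2$, that is $\onet{N},m\atr{i_2}{\lambda}{S}\onet{N},m_2$, as required.

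For completeness, suppose $\onet{N},m\oplus i\tr{\lambda}\onet{N},m'$ via some $t$, so $l(t)=\lambda$, $m\oplus i=\pre{t}\oplus c$ and $m'=\post{t}\oplus c$. I would set $n:=m\ominus(m\cap\pre{t})$ and $i':=\pre{t}\ominus(m\cap\pre{t})$, which give $m=(m\cap\pre{t})\oplus n$ and $\pre{t}=(m\cap\pre{t})\oplus i'$ for free. The symbolic transition $\onet{N},m\atr{i'}{\lambda}{\approxON}\onet{N},\post{t}\oplus n$ is then legal provided $i'\in I^{\oplus}$, and it derives $\onet{N},m\atr{i}{\lambda}{}\onet{N},m'$ provided $i=i'\oplus x$ and $m'=(\post{t}\oplus n)\oplus x$ for some $x\in I^{\oplus}$; taking $x:=i\ominus i'$, both equalities reduce, by a place-by-place split on whether $m(s)\ge\pre{t}(s)$, to the single identity $n\oplus x=(m\oplus i)\ominus\pre{t}=c$.

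The one genuinely non-clerical step, and the place I expect to spend the most care, is the side condition $i'\in I^{\oplus}$. Here $i'(s)=\max(0,\pre{t}(s)-m(s))$, and whenever this is positive we have $\pre{t}(s)>m(s)$; since $\pre{t}\subseteq m\oplus i$ this forces $i(s)>0$, and because $i$ is a context it lies in $I^{\oplus}$, so $s\in I$. Thus every place carrying a token of $i'$ is an input place, i.e. $i'\in I^{\oplus}$ (the same inequality also yields $i'\subseteq i$, so that $x=i\ominus i'$ is a well-defined element of $I^{\oplus}$). This is exactly the point at which the standing restriction to nets with only input places is used; all remaining manipulations are routine arithmetic with $\oplus$, $\ominus$ and $\cap$. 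Having checked soundness and completeness, the corollary $\sim^N=\sim^{NS}$ then follows by combining this with Proposition \ref{prop:netSymbolic} and Theorem \ref{theo:main}.
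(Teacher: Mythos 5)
Your proposal is correct and follows essentially the same route as the paper: the corollary is obtained by proving soundness and completeness of $\approxON$ and $\infON$ (the paper's Proposition \ref{prop:symboliOpenNet}, whose appendix proof uses exactly your choices $i'=\pre{t}\ominus(m\cap\pre{t})$, $n=m\ominus(m\cap\pre{t})$ and $x=i\ominus i'$) and then invoking Proposition \ref{prop:netSymbolic} and Theorem \ref{theo:main}. Your explicit check that $i'\in I^{\oplus}$ and $i'\subseteq i$ is a detail the paper leaves implicit, but it does not change the argument.
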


\paragraph{\bf{Symbolic Semantics for asynchronous
$\pi$.}}\label{sec:SymbolicAsynchronous}
In the case of asynchronous $\pi$-calculus, the ordinary \lts\
closely corresponds to the \sts\ that we are going to introduce. The
transitions labeled with an input $\inp{a}{b}$ are substantially
transitions saying that if the process is inserted into $- |
\outp{a}{b}$, then it can perform a $\tau$. The symbolic transition
system $\approxAsy$ for the asynchronous $\pi$-calculus is defined
by the following rules, where in the premises there are standard
transitions (from Table \ref{tableAsynchronousPiTransitions}),
$i,j\in \omega_0$ represent in the premises the corresponding names
in $\names$, while in the conclusion the numbers in $\omega_0$ and
$- \in \amssign[n,n]$ and $-| \outp{i}{m}\in \amssign[n,n']$.

\begin{center}
$\deduz{p\tr{\tau}p'}{p_n
\atr{-}{\tau}{\approxAsy}p'_n}$
$\deduz{p\tr{\outp{i}{j}}p'}{p_n
\atr{-}{\outp{i}{j}}{\approxAsy}p'_n}$
$\deduz{p\tr{\boutb{i}{n+1}}p'}{p_n
\atr{-}{\boutb{i}{}}{\approxAsy}p'_{n+1}}$  $\deduz{p\tr{\inp{i}{m}}p'\quad n'=
max \{m,n\}}{p_n \atr{-| \outp{i}{m}}{\tau}{\approxAsy}p'_{n'}}$
\end{center}\medskip

\noindent Note that the only non standard rule is the fourth. If, in the
standard transition system a process can perform an input, in the
\sts\ the same process can perform a $\tau$, provided that there is
an output process in parallel. Note that the interface of the
arriving state depends on the received name $m$: if it is smaller
than $n$, then the arriving interface is still $n$, otherwise it is
extended to $m$ (i.e., $max \;\fn{p'} \leq max\; \{m,n\}=n'$).

Part of the \sts\ of $\tau.\res{y}{\outp{y}{a}}+
\inp{a}{b}.\outp{a}{b}_1$ and $\tau.\nil_1$ are shown in Figure
\ref{fig:processes}(B). There and in the following we avoid to
specify the source and the target of the contexts labelling the
transitions, since these can be inferred by the sorts of starting
and arriving states. As well as the ordinary \lts, the symbolic
transition system is infinite, because the input can receive any
possible name in $\names$. It is well known that, instead of
considering all possible input names, it is enough to consider only
the free names and one fresh name (all the other fresh are useless).
By slightly modifying the general definition of the context
interactive system $\modelAsy$, we could have defined a symbolic
context transition system that only receive in input those names
that are strictly needed. We have made a different choice for the
following reasons: (a) the presentation of this modified context
interactive system is a bit more contrived; (b) the actual
presentation is mainly aimed at showing how an input transition
``can be matched'' by a $\tau$ transition (instead of focusing on
finite representation); (c) there exists several other sources of
infiniteness (discussed in Section \ref{sec:conclusion}) that cannot
be trivially tackled by our framework.

\bigskip

Let us define an inference system $\infAsy$ that describes how
contexts transform transitions. Since our contexts are just parallel
outputs, all the contexts preserve transitions. This is expressed by
the following rules parametric w.r.t.\ $n,m\in\omega$, $i,j\in n$,
$c\in \amssign[n,m]$.
\begin{center}
\begin{tabular}{ccc}
\rulelabel{tau$_c$} $\deduz{p_n\tr{\tau}q_n}{c(p_n)\tr{\tau}c(q_n)}$
&
\rulelabel{out$_c$}$\deduz{p_n\tr{\outp{i}{j}}q_n}{c(p_n)\tr{\outp{i}{j}}c(q_n)}$&
\rulelabel{bout$_c$}$\deduz{p_n\tr{\boutb{i}{}}q_{n+1}}{c(p_n)\tr{\boutb{i}{}}c^{+1}(q_{n+1})}$
\end{tabular}
%
%\begin{tabular}{ccc}
%\InfRule{n}{c}{m}{\tau}{\tau}{n}{c}{m}{tau_c} &
%%
%\InfRule{n}{c}{m}{\outp{a}{b}}{\outp{a}{b}}{n}{c}{m}{out(a,b)_{c}} &
%%
%\InfRule{n}{c}{m}{\boutb{a}{}}{\boutb{a}{}}{n+1}{c^{+1}}{m+1}{bout(a)_{c}}
%\end{tabular}
\end{center}
Here, $c^{+1}\in \amssign[n+1,m+1]$ is the same syntactic context as
$c$, but with different interfaces.

Derivations amongst transitions of asynchronous $\pi$-processes are
quite analogous to those amongst open Petri nets. Particularly
relevant is the following kind of derivation: for all processes
$p_n,q_n$, for all names $i\in n$ and $j\in m$,
$$p_n \atr{-}{\tau}{\approxAsy}q_n\vdash_{\infAsy} p_n \atr{-|
\outp{i}{j}}{\tau}{}q| \outp{i}{j}_m\text{.}$$
Intuitively, this means that in the original \lts, the $\tau$
transitions derive the input transitions.
Instantiating the general definition of symbolic bisimulation to
$\approxAsy$ and $\infAsy$, we retrieve the definition of
asynchronous bisimulation. Indeed transitions of the form
$p\atr{-}{\mu}{\approxAsy}p'$ (in the original \lts, these
correspond to $\tau$ and output), can be matched only by transitions
with the same label, since the context $-$ is not decomposable.

The transitions $p\atr{-| \outp{i}{j}}{\tau}{\approxAsy}p'$
(corresponding to the input in the original \lts) can be matched
either by $q\atr{-| \outp{i}{j}}{\tau}{\approxAsy}q'$, or by
$q\atr{-}{\tau}{\approxAsy}q'$. In other words, when $p \atr{-|
\outp{i}{j}}{\tau}{\approxAsy}p'$, then $q$ can answer with
$q\atr{- }{\tau}{\approxAsy}q'$, since $q\atr{-|
}{\tau}{\approxAsy}q' \vdash_{\infAsy} q \atr{-|
\outp{i}{j}}{\tau}{}q' | \outp{i}{j}$.

\begin{prop}\label{prop:AsynSymbolic}
Let $p,q$ be asynchronous $\pi$-processes, and let $n\geq max
\;\fn{p \cup q}$. Then $p\sim^{a} q$ iff $p_{n}\symbis_{n} q_{n}$.
\end{prop}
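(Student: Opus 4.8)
The plan is to prove the two inclusions separately by translating bisimulations back and forth between the two formats, using the correspondence that sends a relation $R$ on processes to the sorted family $R_n = \{(p_n, q_n) \mid p\,R\,q,\ n\geq \max\fn{p\cup q}\}$ and conversely. In both directions the argument is a case analysis driven by the shape of the symbolic transitions of $\approxAsy$, which split into the ``context $-$'' transitions (coming from $\tau$, output and bound output, by the first three rules of $\approxAsy$) and the ``context $-|\outp{i}{m}$'' transitions with observation $\tau$ (coming from inputs, by the fourth rule).

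For the direction $\sim^a \Rightarrow \symbis$, I would start from an asynchronous bisimulation $R$ and check that the induced family is a symbolic bisimulation in the sense of Definition~\ref{def:Symbolic}. Given a move $p_n \atr{c}{o}{\approxAsy} p'$, if $c = -$ then $o$ is a $\tau$, output or bound output and $p \tr{o} p'$ in the underlying \lts; the first clause of Definition~\ref{def:AsynchronousBisimulation} supplies $q\tr{o}q'$ with $p'\,R\,q'$, and $q_n \atr{-}{o}{\approxAsy} q'$ trivially derives itself via the identity rules of $\clos{\infAsy}$. If instead $c = -|\outp{i}{m}$ and $o=\tau$, the move comes from $p\tr{\inp{i}{m}}p'$, and the second clause of Definition~\ref{def:AsynchronousBisimulation} offers exactly the two answers needed: either $q\tr{\inp{i}{m}}q'$, matched by the identical symbolic move, or $q\tr{\tau}q'$, matched by $q_n\atr{-}{\tau}{\approxAsy}q'$ which derives $q_n\atr{-|\outp{i}{m}}{\tau}{}q'|\outp{i}{m}$ through rule \rulelabel{tau$_c$}, with $p'\,R\,(q'|\outp{i}{m})$.

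For the converse I would start from a symbolic bisimulation and recover the two clauses of asynchronous bisimilarity. The crucial preliminary observation is a structural analysis of $\clos{\infAsy}$: since each basic rule preserves the observation, every derived rule $d\UTtr{o_1}{o}e$ has $o_1=o$; and since contexts are parallel outputs, the identity $-$ and the one-output context $-|\outp{i}{m}$ admit only the evident factorisations ($-$ as $-;-$, and $-|\outp{i}{m}$ as either $(-|\outp{i}{m});-$ or $-;(-|\outp{i}{m})$). Feeding a non-input move $p_n\atr{-}{\mu}{\approxAsy}p'$ into the symbolic clause then forces $c_1=-$ and $o_1=\mu$, so the answer is a genuine $q\tr{\mu}q'$, recovering the first clause; feeding an input-induced move $p_n\atr{-|\outp{i}{m}}{\tau}{\approxAsy}p'$ forces $o_1=\tau$ and leaves exactly the two factorisations above, yielding either $q\tr{\inp{i}{m}}q'$ or $q\tr{\tau}q_1'$ with target $q_1'|\outp{i}{m}$ — precisely the second clause.

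The main obstacle is this decomposition analysis of $\clos{\infAsy}$ together with the interface and freshness bookkeeping: one must check that the sort $n'=\max\{m,n\}$ assigned to the target of an input move, and the convention that a process $p_n$ always extrudes the fresh name $n+1$, correctly realise the freshness side-condition ``$\bn{\mu}$ is fresh'' of Definition~\ref{def:AsynchronousBisimulation} with respect to \emph{both} $p$ and $q$ (which is guaranteed precisely by the hypothesis $n\geq\max\fn{p\cup q}$). Once the factorisation lemma and this bookkeeping are in place, the equivalence of the two bisimulation games is a direct clause-by-clause match, and the proposition follows.
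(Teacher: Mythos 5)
Your overall strategy coincides with the paper's: both directions are a clause-by-clause translation between the two games, driven by the observation that every rule in $\clos{\infAsy}$ preserves the observation and that the contexts $-$ and $-|\outp{i}{m}$ admit only the evident factorisations, so a non-input symbolic move can only be answered by an identically labelled one, while an input-induced move admits exactly the two answers of Definition~\ref{def:AsynchronousBisimulation}. The direction $\sim^a\Rightarrow\symbis$ as you describe it is sound, and your analysis of the input case in the converse direction matches the paper's.

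There is, however, one concrete step in the direction $\symbis\Rightarrow\sim^a$ that you flag as ``bookkeeping'' but do not resolve, and your proposed justification (that it is ``guaranteed precisely by the hypothesis $n\geq max\;\fn{p \cup q}$'') is not sufficient. The asynchronous bisimulation clause quantifies over transitions $p\tr{\boutb{i}{j}}p'$ for an \emph{arbitrary} fresh name $j$, whereas the symbolic system at sort $n$ only ever extrudes the specific name $n+1$; for $j\neq n+1$ such a move does not correspond to any symbolic transition of $p_n$ at all, so it cannot be fed into the symbolic game at sort $n$. The paper's proof resolves this by first observing that $p_n\symbis_n q_n$ implies $p_m\symbis_m q_m$ for every $m\geq n$, and then, since freshness of $j$ gives $j-1\geq n$, playing the symbolic game at sort $j-1$, where $p_{j-1}\atr{-}{\boutb{i}{}}{\approxAsy}p'_{j}$ does extrude exactly $j$. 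This sort-shifting step, together with the auxiliary fact that $\symbis$ is preserved when the interface is enlarged (which does not follow from Definition~\ref{def:Symbolic} alone, since the symbolic bisimulation clause, unlike the saturated one, contains no closure-under-contexts requirement), is the missing idea; without it the bound-output case of your argument does not go through. The remaining verifications you mention, e.g.\ that $\fn{q'|\outp{i}{j}}\subseteq n'=max\,\{j,n\}$ so the arriving states are well sorted, are indeed exactly the final checks of the paper's proof.
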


Therefore $\sim^1$ is the saturated bisimulation for $\modelAsy$,
while $\sim^a$ is its the symbolic version. We can employ our
general Theorem \ref{theo:main} to prove that $\sim^1=\sim^a$ by
showing that the \sts\ $\approxAsy$ and the inference system
$\infAsy$ are sound and complete w.r.t.\ $\modelAsy$.

%%%%%%%%%%%%%%%%%%%%%%%%%%%%%%%%%%%%%%%%%%%%%%%%%%%%%
\begin{prop}\label{prop:AsynchronousSoundandComplete}
$\approxAsy$ and $\infAsy$ are sound and complete w.r.t.\
$\modelAsy$.
\end{prop}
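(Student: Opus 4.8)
The plan is to reduce both halves to a single \emph{context-preservation lemma} for the inference rules together with a \emph{base-transition property} of the symbolic rules, and then to glue them using the functoriality of the algebra $\APA$. The key operational fact underlying everything is that our contexts are parallel outputs, so by rule \rulelabel{par} putting an output in parallel never inhibits a move (the side condition $\bn{\mu}\cap\fn{q}=\emptyset$ is met because bound names are always chosen fresh, namely $n+1$).

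First I would prove the context-preservation lemma: for every derived rule $d\UTtr{o}{o'}e\in\Phi(\infAsy)$ and every process $p$, if $p\asytr{o}q$ then $d_{\APA}(p)\asytr{o'}e_{\APA}(q)$. This goes by induction on the construction of $\Phi(\infAsy)$. The base cases are \rulelabel{tau$_c$}, \rulelabel{out$_c$} and \rulelabel{bout$_c$}, each an instance of \rulelabel{par}; the \rulelabel{bout$_c$} case additionally uses the interface shift $c\mapsto c^{+1}$ to account for the extruded name $n+1$ becoming free. The identity rules are immediate and closure under composition follows by applying the two rules in sequence.

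Soundness then follows. Given a symbolic transition $p\atr{c'}{o'}{\approxAsy}q'$ I first record the base-transition property $c'_{\APA}(p)\asytr{o'}q'$: for the first three symbolic rules $c'=-$ and this is immediate, while for the fourth (input) rule, where $p\tr{\inp{i}{m}}p'$ and $c'=-|\outp{i}{m}$, it follows from \rulelabel{com} applied to $\outp{i}{m}\tr{\outp{i}{m}}\nil$ and $p\tr{\inp{i}{m}}p'$ (using $p'|\nil\equiv p'$). Now if $p\atr{c'}{o'}{\approxAsy}q'\vdash_{\infAsy}p\atr{c}{o}{}q$, Definition \ref{def:der} gives $d\UTtr{o'}{o}e\in\Phi(\infAsy)$ with $c';d=c$ and $e_{\APA}(q')=q$. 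Applying the lemma to $c'_{\APA}(p)\asytr{o'}q'$ yields $d_{\APA}(c'_{\APA}(p))\asytr{o}e_{\APA}(q')=q$, and by the equations of Figure \ref{fig:Sign} we have $d_{\APA}(c'_{\APA}(p))=(c';d)_{\APA}(p)=c_{\APA}(p)$; hence $c_{\APA}(p)\asytr{o}q$, i.e.\ $p\atr{c}{o}{S}q$.

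For completeness I would take $p\atr{c}{o}{S}q$, i.e.\ $c_{\APA}(p)\asytr{o}q$ with $c=-|\outp{a_1}{b_1}|\dots|\outp{a_k}{b_k}$, and argue by induction on $k$ via case analysis on the last LTS rule producing the move of $p|\outp{a_1}{b_1}|\dots|\outp{a_k}{b_k}$. When the move is performed by $p$ alone it is a $\tau$, output or bound output, which is a symbolic transition with empty context (first three rules), and re-adding the outputs through the matching inference rule reconstructs $p\atr{c}{o}{}q$. When the move is a \rulelabel{com}-synchronisation between $p$ and a context output $\outp{a_l}{b_l}$, then necessarily $p\tr{\inp{a_l}{b_l}}p'$ (context outputs cannot input), and this is matched exactly by the fourth symbolic rule $p\atr{-|\outp{a_l}{b_l}}{\tau}{\approxAsy}p'$, from which \rulelabel{tau$_c$} with the remaining outputs derives the required $\tau$-transition. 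The delicate case, which I expect to be the main obstacle, is when the observed output is emitted by a context output on its own: here the residual is $p$ in the reduced context and one must verify that this transition is still accounted for by a symbolic move of $p$, tracking interfaces and the discarded message through the algebra; handling this clause cleanly, together with the fresh/extruded-name bookkeeping in the bound-output cases, is where the argument requires the most care, the remaining cases being routine once the context-preservation lemma is available.
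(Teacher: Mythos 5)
Your overall strategy matches the paper's. For soundness the paper simply observes that a symbolic transition $p\atr{c'}{o'}{\approxAsy}q'$ gives $c'_{\APA}(p)\asytr{o'}q'$ and that ``all the rules of $\infAsy$ are sound''; your context-preservation lemma (proved by induction on $\Phi(\infAsy)$ with \rulelabel{par} at the base cases, plus the interface shift for bound outputs) is exactly the content of that one-line remark, spelled out. For completeness the paper likewise decomposes the move of $c(p)$: for $\mu=\tau$ it distinguishes the case where $p$ moves alone (matched by the first symbolic rule plus \rulelabel{tau}$_c$) from a \rulelabel{com}-synchronisation between $p$ and a context output (matched by the fourth symbolic rule plus \rulelabel{tau}$_{c'}$); these are precisely your first two cases, and your induction on the number $k$ of context outputs is only a presentational variant.

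The case you flag as ``the main obstacle'' is, however, not merely delicate: as literally stated it cannot be closed. If a context output fires on its own, e.g.\ $(\nil|\outp{a}{b})_n\asytr{\outp{a}{b}}(\nil|\nil)_n$ by \rulelabel{out} and \rulelabel{par}, then $\nil_n\satr{-|\outp{a}{b}}{\outp{a}{b}}(\nil|\nil)_n$ is a saturated transition, yet $\nil_n$ has no symbolic transitions at all, so no symbolic move can derive it and the completeness clause fails on the nose. Your proof therefore cannot ``verify that this transition is still accounted for by a symbolic move of $p$''. Note that the paper's own proof does not treat this case either: it details only $\mu=\tau$ (where the problem cannot arise, since an unguarded output alone never produces a $\tau$) and dismisses the remaining observations as ``easier''. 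So the hole in your proposal coincides exactly with the case the paper leaves implicit; closing it would require either arguing that such context-emitted outputs should not count as saturated transitions of $p$, or showing that they are harmless for Theorem \ref{theo:main} because both players carry the same context --- neither of which is the literal completeness property you set out to prove.
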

%
%%%%%%%%%%%%%%%%%%%%%%%%%%%%%%%%%%%%%%%%%%%%%%%%%%%%%%%%%%%%%%%%%%

\begin{cor}[By Theorem. \ref{theo:main}]
$\sim^1=\sim^a$ as shown in \cite{AmadioCONCUR96}.
\end{cor}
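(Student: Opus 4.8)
The plan is to unfold soundness and completeness for the pair $(\approxAsy,\infAsy)$ relative to the saturated transition system of $\modelAsy$, recalling that $p_n\atr{c}{o}{S}q$ means by definition $c_{\APA}(p_n)\asytr{o}q$, and then to reduce both directions to closure properties of the operational rules of Table~\ref{tableAsynchronousPiTransitions}. The whole argument rests on two facts: every $\approxAsy$-transition is already a saturated one, and every generating rule of $\infAsy$ expresses a property of $\asytr{}$ that is preserved under the composition used to build $\clos{\infAsy}$.

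For soundness I would first check that every symbolic transition is saturated. The three rules producing the context $-$ are immediate, since $-$ denotes $id_{\APA}$ and their premise $p\tr{o}p'$ is, by definition of $\asytr{}$, a transition of $\modelAsy$; for the fourth rule I would apply \rulelabel{com} to $p\tr{\inp{i}{m}}p'$ and $\outp{i}{m}\tr{\outp{i}{m}}\nil$, obtaining $p\,|\,\outp{i}{m}\tr{\tau}p'$, that is $(-\,|\,\outp{i}{m})_{\APA}(p_n)\asytr{\tau}p'_{n'}$. Next I would verify that \rulelabel{tau$_c$}, \rulelabel{out$_c$} and \rulelabel{bout$_c$} are sound for $\asytr{}$: the first two follow from \rulelabel{par}, whose side condition $\bn{\mu}\cap\fn{q}=\emptyset$ is vacuous for $\tau$ and for free outputs, by induction on the number of parallel outputs in $c$; the third again follows from \rulelabel{par}, the index shift $c^{+1}$ being exactly what keeps the extruded name $n+1$ fresh with respect to $c$. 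Since these properties compose (by transitivity of the implications they express) and the identity rules are trivially sound, every rule of $\clos{\infAsy}$ is sound. A single derivation step then sends a saturated $p\atr{c'}{o'}{\approxAsy}q'$ to $p\atr{c}{o}{}q$ with $c=c';d$ and $q=e_{\APA}(q')$; applying the closure property attached to $d,e$ to $c'_{\APA}(p)\asytr{o'}q'$ yields $c_{\APA}(p)\asytr{o}q$, the required saturated transition.

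For completeness I would induct on the structure of $c\in\amssign[n,m]$. In the base case $c=-$ the saturated transition is $p\asytr{o}q$ with $o\in\actAsy$, hence $o$ is a $\tau$, an output or a bound output, so it is literally one of the first three symbolic rules and derives itself through an identity rule of $\clos{\infAsy}$. In the step $c=c'\,|\,\outp{a}{b}$ I would examine the last operational rule inferring $c'_{\APA}(p)\,|\,\outp{a}{b}\asytr{o}q$. If it is \rulelabel{par} with the observation produced by $c'_{\APA}(p)$, then $c'_{\APA}(p)\asytr{o}q'$ and $q=q'\,|\,\outp{a}{b}$; the induction hypothesis gives a symbolic transition of $p$ deriving $p\atr{c'}{o}{}q'$, which I extend by the extra output using \rulelabel{tau$_c$}/\rulelabel{out$_c$}/\rulelabel{bout$_c$}. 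The essential case is a communication (\rulelabel{com} or \rulelabel{cls}) in which the added output $\outp{a}{b}$ is consumed by an input of $c'_{\APA}(p)$: this forces $o=\tau$ and pins down an input $p\tr{\inp{a}{b}}p'$ performed by $p$ itself, which is precisely the premise of the fourth symbolic rule and yields $p\atr{-\,|\,\outp{a}{b}}{\tau}{\approxAsy}p'_{n'}$; the outputs of $c'$ surviving the communication are then reattached by a derivation built from the characteristic step $p_n\atr{-}{\tau}{\approxAsy}q_n\ \vdash_{\infAsy}\ p_n\atr{-\,|\,\outp{i}{j}}{\tau}{}q\,|\,\outp{i}{j}_m$ recalled before the statement.

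The hard part is this completeness step, and inside it the bookkeeping required when the context participates in the transition. I must trace a single $\tau$ emitted by $c_{\APA}(p)$ back to exactly one input of $p$ consuming one context output, reassemble the remaining outputs as a $\vdash_{\infAsy}$-derivation, and at the same time treat uniformly the cases where the context merely re-emits one of its own messages and where a bound output crosses the context — the latter needing the freshness side conditions of \rulelabel{par} and \rulelabel{cls} together with the interface shift $c^{+1}$ to line up. Getting this alignment between the operational rules and the inference system right, uniformly over all contexts, is the only genuinely delicate point; the remaining verifications are routine inductions on the structure of contexts and of $\clos{\infAsy}$.
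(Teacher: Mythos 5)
Your route is the paper's: the corollary is just an application of Theorem~\ref{theo:main}, so its real content is Proposition~\ref{prop:AsynchronousSoundandComplete} (soundness and completeness of $\approxAsy$ and $\infAsy$), which is what you set out to prove. Your soundness argument matches the paper's, and your completeness argument differs only cosmetically: you induct on the structure of the context, peeling off one output at a time, where the paper makes a single case split on whether the $\tau$ of $c(p)$ is produced by $p$ alone or by a communication between an output of $c$ and an input of $p$. One bookkeeping omission: Theorem~\ref{theo:main} only delivers $\satbis=\symbis$; to reach the stated conclusion $\sim^1=\sim^a$ you must still invoke Propositions~\ref{prop:asysaturated} and~\ref{prop:AsynSymbolic}, which identify $\sim^1$ with $\satbis_n$ and $\sim^a$ with $\symbis_n$, and your write-up never closes that loop.

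The substantive problem is the completeness case you defer as ``the context merely re-emits one of its own messages'': it is not routine, and as far as I can see it cannot be discharged in the stated framework. Every generating rule of $\infAsy$ has the form $c\UTtr{o}{o}c'$, and $\clos{\infAsy}$ is built from these by composition together with the identity rules $id_i\UTtr{o}{o}id_j$, so every rule of $\clos{\infAsy}$ preserves the observation. Hence a saturated transition $p\atr{c}{\outp{a}{b}}{S}q$ can only be derived from a symbolic transition of $p$ already carrying the observation $\outp{a}{b}$, i.e.\ one obtained from $p\tr{\outp{a}{b}}p'$. When the output is fired by the context alone --- e.g.\ $\nil\atr{-|\outp{a}{b}}{\outp{a}{b}}{S}\nil|\nil$, a legitimate saturated transition via \rulelabel{out} and \rulelabel{par}, while $\nil$ has no symbolic transitions at all --- no such derivation exists. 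To be fair, the paper's own proof dismisses the non-$\tau$ observations with ``the other cases are easier'' and so shares this lacuna; but since you explicitly isolate this case and then assert it goes through ``uniformly'', you owe either an argument or a repair (e.g.\ a justification that such context-only transitions are irrelevant to the bisimulation game, or an adjustment of the completeness statement), and the observation-preservation remark above shows that a direct derivation is not available.
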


%Before concluding the section we want to show that in the \sts\
%$\approxAsy$, there are redundant transitions (Definition
%\ref{def:redundant}). Consider indeed the process
%$\inp{a}{b}.\outp{a}{b} + \tau $ with interface $1$ (supposing that
%$a$ is the first name of $\names$). Then the transition
%$\inp{a}{b}.\outp{a}{b} + \tau \atr{-\mid
%\outp{a}{m}}{\tau}{\approxAsy} \outp{a}{m}$ (corresponding to
%$\inp{a}{b}.\outp{a}{b} + \tau \tr{\inp{a}{m}} \outp{a}{m}$) is
%dominated by $\inp{a}{b}.\outp{a}{b} + \tau \atr{-
%}{\tau}{\approxAsy} \nil$ (corresponding to $\inp{a}{b}.\outp{a}{b}
%+ \tau \tr{\tau} \nil$) as illustrated by the following diagram.

%
\section{(Structured) Coalgebras}\label{secCoa}
\noindent In this section we recall the basic notions of the theory of
coalgebras and the coalgebraic characterization of labeled
transition systems and bisimilarity.
\begin{defi}[Coalgebra]
\label{def:coalgebra} Let $\Fun{B}: \Cat{C} \to \Cat{C}$ be an
endofunctor on a category $\Cat{C}$. A {\em $\Fun{B}$-coalgebra\/}
is a pair  $\langle X, \alpha \>$ where $X$ is an object of
$\Cat{C}$ and $\alpha: X \to \Fun{B}(X)$ is an arrow. A {\em
$\Fun{B}$-morphism\/} $f: \< X, \alpha \> \to \<Y, \beta\>$ is an
arrow $f: X \to Y$ of $\Cat{C}$ such that the following diagram
commutes. $\Fun{B}$-coalgebras and $\Fun{B}$-morphisms form the
category $\coalg{B}$.
$$\xymatrix{ X \ar[d]_{\alpha} \ar[r]^f & Y \ar[d]^{\beta}\\
\Fun{B}(X) \ar[r]_{\Fun{B}(f)} & \Fun{B}(Y)}$$
\end{defi}
For instance, labeled transition systems with labels in
$L$ are coalgebras for the functor $\Pow(L\times Id):\set\to \set$,
where $\set$ denotes the category of sets and functions. This functor maps each set $X$ into the set $\Pow(L \times X)$ 
(i.e., the powerset of $L\times X$) and each function $f:X \to Y$ into  $\Pow(L\times f): \Pow(L\times X) \to  \Pow(L\times Y)$ 
that, for all $A \in \Pow(L\times X)$, is defined as $\Pow(L\times f)(A)=\{(l,f(x)) \text{ s.t. } (l,x)\in A\}$. Concretely, a \lts\ is
a set of states $X$ together with a transition function
$\alpha:X\to\Pow(L\times X)$ mapping each state into a set of pairs
$(l,x)$ representing transitions with labels $l\in L$ and next state
$x\in X$.
A $\Pow(L \times Id)$-morphism is a ``zig-zag'' morphism, i.e., a
function between the sets of states that both preserves and reflects
the transitions.

We can think of symbolic transition systems as ordinary $\Pow(L
\times Id)$-coalgebras where the labels in $L$ are pairs $(c,o)$
(for $c$ a contexts, and $o$ an observation), but this
representation is somehow inadequate. Figure \ref{fig:wordMorphism}
shows a function between the states space of two $\Pow(L\times
Id)$-coalgebras. This is not a $\Pow(L\times Id)$-morphism since the
transition $\gamma_1 \atr{ab}{\bullet}{} ab \rhd \varepsilon.\nil$
is not preserved. The same holds for the morphisms in Figure
\ref{fig:morphism}: these are not $\Pow(L\times Id)$-morphisms since
the transitions $l\atr{\$^3}{\alpha}{}m$ and
$\tau.\res{y}{\outp{y}{a}}+
\inp{a}{b}.\outp{a}{b}_1\atr{-|\outp{a}{a}}{\tau}{} \outp{a}{a}_1$
are not preserved. In Section \ref{sec:CSS}, we will show the
category of normalized coalgebras where these maps are morphisms.

\begin{figure}[t]
\begin{tabular}{c}
$\xymatrix@R=10pt@C=6pt{
\gamma_2 \ar[rrd]^(.4){a} \ar@(ld,lu)@{.>}[ddddd] \\
\gamma_1 \ar[rr]^(.4){a} \ar[rrd]_(.4){ab} \ar@(ld,lu)@{.>}[dddd]& &
a\rhd ab.\nil \ar[rr]^{b} \ar@(ld,lu)@{.>}[dddd] &&
ab \rhd \nil \ar@{.>}[dddd] &&  ab \rhd ab.\nil \ar[ll]_{\varepsilon} \ar@{.>}[dddd] \\
& & ab\rhd \varepsilon.\nil \ar[rru]_{\varepsilon} \ar@{.>}[dddrrrr] \\
\\
\\
\gamma \ar[rr]^(.4){a} & &  \gamma' \ar[rr]^{b} && \gamma'' &&
b_{\Alg{U}}(\gamma') \ar[ll]_{\varepsilon} }$
\end{tabular}\caption{The dotted arrows represent a map between the states space
of two transition systems. It is not a $\Pow(L\times Id)$-morphism
but it is a morphism in the category of normalized
coalgebra.}\label{fig:wordMorphism}
\end{figure}

\begin{figure}[t]
\begin{center}
\begin{tabular}{c}
$\xymatrix@R=6pt@C=3pt{ l \ar[rr]^{\$^3, \alpha}
\ar[rrd]_{\emptyset, \alpha} \ar@{.>}@(u,u)[rrrrrrrrrrrr]&& m
\ar[rr]^{\emptyset, \beta} \ar@{.>}@(u,u)[rrrrrrrrrrrrrrrrrr] &&
n\$^2\ar[rr]^{\emptyset, \beta} \ar@{.>}@(ur,ul)[rrrrrrrrrrrrrr] &&
o\$ \ar[rr]^{\emptyset, \beta} \ar@{.>}@(ur,ul)[rrrrrrrrrr] &&p
\ar@(r,u)|{\$,\beta} \ar@{.>}@(r,dl)[rrrrrr]&&  && u
\ar[rr]^{\emptyset, \alpha} & & v \ar@(r,u)_{\$,\beta} && w
\ar[ll]^{\emptyset, \beta} & & x \ar[ll]^{\emptyset, \beta} & &
\$^3_{\Alg{Y}}(v) \ar[ll]^{\emptyset, \beta}\\ && q
\ar@(r,d)|{\$,\beta} \ar@{.>}@(ur,dl)[rrrrrrrrrrrru]&& q\$^1
\ar[ll]_{\emptyset,\beta}\ar@{.>}@(dr,d)[rrrrrrrrrrrru] && q\$^2
\ar[ll]_{\emptyset,\beta} \ar@{.>}@(dr,d)[rrrrrrrrrrrru]&& q\$^3
\ar[ll]_{\emptyset,\beta} \ar@{.>}@(r,d)[rrrrrrrrrrrru]&& }$

\\
\hline
\\
$\xymatrix@R=6pt@C=5.5pt{
&& \dots && &&  \\
&& \outp{a}{b}_2 \ar[rr]^{-, \outp{a}{b}} \ar@{.>}@(ur,u)[rrrrrrrr] && \nil_2 \ar@{.>}[rrrr]&&    && q_2 && -|\outp{a}{b}_{\Alg{Z}}(q_1) \ar[ll]_{-, \outp{a}{b}} \\
\tau.\res{y}{\outp{y}{a}}+ \inp{a}{b}.\outp{a}{b}_1
\ar@(ur,l)[rru]^{-|\outp{a}{b},\tau} \ar@(u,l)[rruu]^{\dots}
\ar[rr]^(.7){-|\outp{a}{a},\tau} \ar[rrd]_{-, \tau}
\ar@(l,dr)@{.>}[rrrrrr] && \outp{a}{a}_1 \ar@{.>}@(ur,u)[rrrrrrrrd]
\ar[rr]^{-, \outp{a}{a}} && \nil_1 \ar@{.>}@(r,l)[rrrrd] &&  p_1 \ar[rrd]_{-, \tau} && \\
&& \res{y}{\outp{y}{a}}_1 \ar@(l,dr)@{.>}[rrrrrr] && \res{y}{\outp{y}{a}} | \outp{a}{a}_1 \ar[ll]_{-, \outp{a}{a}} \ar@(l,dr)@{.>}[rrrrrr] &&   && q_1 && -|\outp{a}{a}_{\Alg{Z}}(q_1) \ar[ll]_{-, \outp{a}{a}}  \\
&& \res{y}{\outp{y}{a}}_2 \ar@(rd,d)@{.>}[uuurrrrrr] &&
\res{y}{\outp{y}{a}}| \outp{a}{b}_2 \ar[ll]_{-,\outp{a}{b}}
\ar@(r,d)@{.>}[uuurrrrrr]&&   }$
\end{tabular}
\end{center}\caption{
The dotted arrows represent maps between the states space of
transition systems. Both are not $\Pow(L\times Id)$-morphisms, but
they are morphisms in the category of normalized
coalgebras.}\label{fig:morphism}
\end{figure}

\bigskip

Under certain conditions, $\coalg{B}$ has a \emph{final
coalgebra} (unique up to isomorphism) into which every
$\Fun{B}$-coalgebra can be mapped via a unique $\Fun{B}$-morphism.
The final coalgebra can be viewed as the universe of all possible
$\Fun{B}$-\emph{behaviours}: the unique morphism into the final
coalgebra maps every state of a coalgebra to a canonical
representative of its behaviour. This provides a general notion of
behavioural equivalence (hereafter referred to as bisimilarity): two
$\Fun{B}$-coalgebras are $\Fun{B}$-equivalent iff they are mapped to
the same element of the final coalgebra. Moreover, the image of a
coalgebra through the final morphism is its minimal realization
w.r.t.\ bisimilarity. In the finite case, this can be done via a
minimization algorithm, that for \lts s coincides with
\cite{KannelakisSmolka}.

Unfortunately, due to cardinality reasons, $\Pow(L \times Id)$ does
not have a final object \cite{Rut96}. One satisfactory solution
consists in replacing the powerset functor $\Pow$ by the {\em
countable} powerset functor $\Pc$, which maps a set to the family of
its countable subsets. Then, $\Pc(L \times Id)$-coalgebras are
one-to-one with transition systems with \emph{countable degree}.
Unlike the functor $\Pow(L \times Id)$, the functor $\Pc(L \times
Id)$ admits final coalgebras (Example 6.8 of \cite{Rut96}).

\bigskip

The coalgebraic representation using functor $\Pc(L \times Id)$ is
not completely satisfactory, because the intrinsic algebraic
structure of the states is lost. This calls for the introduction of {\em
structured coalgebras} \cite{CGH01}, i.e., coalgebras for an
endofuctor on a category $\alge{\Gamma}$ of algebras for a
specification $\Gamma$. Since morphisms in a category of structured
coalgebras are also $\Gamma$-homomorphisms, \emph{bisimilarity}
(i.e.\ the kernel of a final morphism) \emph{is a congruence} w.r.t.\
the operations in $\Gamma$.

Moreover, since we would like that the structured coalgebraic model
is compatible with the unstructured, set-based one, we are
interested in functors $\Fun{B^{\Gamma}}:\alge{\Gamma} \to
\alge{\Gamma}$ that are the \emph{lifting} of some functor $\Fun{B}:\set
\to \set$ along the \emph{forgetful} functor
$\Fun{U^{\Gamma}}:\Cat{Alg_{\Gamma}} \to \Cat{Set}$ (i.e., the
following diagram commutes).
$$\xymatrix{ \Cat{Alg_{\Gamma}} \ar[d]_{\Fun{U^{\Gamma}}} \ar[r]^{\Fun{B^{\Gamma}}} & \Cat{Alg_{\Gamma}} \ar[d]^{\Fun{U^{\Gamma}}}\\
\set \ar[r]_{\Fun{B}} & \set}$$

\begin{prop}[From \cite{CGH01}]\label{prop:lifting}
Let $\Gamma$ be an algebraic specification. Let
$\Fun{U^{\Gamma}}:\Cat{Alg_{\Gamma}} \to \Cat{Set}$ be the forgetful
functor. If $\Fun{B^{\Gamma}}:\Cat{Alg_{\Gamma}} \to
\Cat{Alg_{\Gamma}}$ is a lifting of $\Pc(L \times Id)$ along
$\Fun{U^{\Gamma}}$, then \textsc{(1)} $\coalg{\Fun{B_{\Gamma}}}$ has
a final object, \textsc{(2)} bisimilarity is uniquely induced by
$\Pc(L \times Id)$-bisimilarity and \textsc{(3)} bisimilarity is a
congruence.
\end{prop}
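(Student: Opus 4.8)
The plan is to transfer the known existence of a final $\Pc(L \times Id)$-coalgebra in $\set$ up to $\coalg{\Fun{B^{\Gamma}}}$ by exploiting the good behaviour of the forgetful functor, and then to read off claims (2) and (3) from that construction. I would use three standard facts about the variety $\alge{\Gamma}$: it is complete; the forgetful functor $\Fun{U^{\Gamma}}\colon\alge{\Gamma}\to\set$ creates all limits (hence preserves them and sends the one-element terminal algebra to a singleton); and $\Fun{U^{\Gamma}}$ reflects isomorphisms, since a bijective $\Gamma$-homomorphism is an isomorphism. Because $\Fun{B^{\Gamma}}$ is a lifting of $\Fun{B}=\Pc(L \times Id)$, the equation $\Fun{U^{\Gamma}}\circ\Fun{B^{\Gamma}}=\Fun{B}\circ\Fun{U^{\Gamma}}$ holds strictly, and $\Fun{U^{\Gamma}}$ extends to a forgetful functor $\bar{\Fun{U}}\colon\coalg{\Fun{B^{\Gamma}}}\to\coalg{\Fun{B}}$, sending $\langle\Alg{X},\alpha\rangle$ to $\langle\Fun{U^{\Gamma}}\Alg{X},\alpha\rangle$ and each coalgebra morphism to its underlying function.

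For part (1) I would construct the final object through the terminal sequence of $\Fun{B^{\Gamma}}$ in $\alge{\Gamma}$: start from the terminal algebra, iterate $\Fun{B^{\Gamma}}$, and take limits at limit ordinals. Since $\Fun{U^{\Gamma}}$ preserves the terminal object and commutes strictly with the two endofunctors, applying $\Fun{U^{\Gamma}}$ to this chain yields precisely the terminal sequence of $\Pc(L \times Id)$ in $\set$. Now $\Pc(L \times Id)$ is accessible (bounded), so its terminal sequence stabilises at some ordinal $\lambda$, the limit being the final $\Fun{B}$-coalgebra $\langle Z,\zeta\rangle$; this is exactly the content cited as Example~6.8 of \cite{Rut96}. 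Because $\Fun{U^{\Gamma}}$ creates the limit at stage $\lambda$, the $\Fun{B^{\Gamma}}$-chain has a limit $\Alg{Z}$ with $\Fun{U^{\Gamma}}\Alg{Z}=Z$; and because the connecting map $\Fun{B^{\Gamma}}(\Alg{Z})\to\Alg{Z}$ is sent by $\Fun{U^{\Gamma}}$ to the stabilising bijection $\Fun{B}(Z)\to Z$, reflection of isomorphisms makes it an isomorphism in $\alge{\Gamma}$. Hence the $\Fun{B^{\Gamma}}$-sequence converges at $\lambda$ too and $\Alg{Z}$ carries the final $\Fun{B^{\Gamma}}$-coalgebra structure.

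Claims (2) and (3) then follow with little extra work, and together they pin down bisimilarity. By the construction, the underlying $\Fun{B}$-coalgebra of the final $\Fun{B^{\Gamma}}$-coalgebra is the final $\Fun{B}$-coalgebra $\langle Z,\zeta\rangle$. Given any $\langle\Alg{X},\alpha\rangle$, the underlying map of its unique morphism $!_{\Alg{X}}$ into $\Alg{Z}$ is a $\Fun{B}$-coalgebra morphism $\Fun{U^{\Gamma}}\Alg{X}\to Z$, so by finality of $Z$ it must equal the unique final $\Pc(L \times Id)$-morphism. Thus the kernel of $!_{\Alg{X}}$ coincides with the kernel of the final $\Pc(L \times Id)$-morphism, which is statement (2): $\Fun{B^{\Gamma}}$-bisimilarity is the one induced by $\Pc(L \times Id)$-bisimilarity. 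Finally, $!_{\Alg{X}}$ is a morphism of $\coalg{\Fun{B^{\Gamma}}}$, hence in particular a $\Gamma$-homomorphism, and the kernel of any homomorphism is a congruence; so statement (3) is immediate from (2). The delicate point throughout is the convergence argument in part (1): one must verify that $\Pc(L \times Id)$ really is accessible so that its terminal sequence stabilises, and that creation of limits together with reflection of isomorphisms genuinely transports both the limit and the stabilising isomorphism from $\set$ back up to $\alge{\Gamma}$; everything else is bookkeeping with the strict commutation $\Fun{U^{\Gamma}}\Fun{B^{\Gamma}}=\Fun{B}\Fun{U^{\Gamma}}$ and the universal-algebra fact that homomorphism kernels are congruences.
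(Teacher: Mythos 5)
The paper does not actually prove this proposition: it is imported verbatim from \cite{CGH01} (with a footnote deferring the many-sorted case to \cite{FCT99}), so there is no in-paper argument to compare yours against. Judged on its own terms, your proof is a legitimate and essentially standard route: the forgetful functor of a variety is monadic, hence creates limits and reflects isomorphisms, and the strict commutation $\Fun{U^{\Gamma}}\circ\Fun{B^{\Gamma}}=\Pc(L\times Id)\circ\Fun{U^{\Gamma}}$ lets you run the terminal sequence of $\Fun{B^{\Gamma}}$ in $\alge{\Gamma}$ directly over the terminal sequence of $\Pc(L\times Id)$ in $\set$; once the latter stabilises, creation of limits plus reflection of isomorphisms lifts both the limit and the stabilising isomorphism, and claims (2) and (3) then fall out exactly as you say (the underlying map of the final $\Fun{B^{\Gamma}}$-morphism is the final $\Pc(L\times Id)$-morphism, and its kernel is a congruence because it is a $\Gamma$-homomorphism).

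The one point you should tighten is the convergence step you yourself flag as delicate. Example 6.8 of \cite{Rut96} gives the \emph{existence} of a final $\Pc(L\times Id)$-coalgebra; it does not assert that the terminal sequence of that functor converges, which is the strictly stronger fact your construction consumes. That fact is true --- $\Pc(L\times Id)$ is $\omega_1$-accessible and the terminal sequence of an accessible set functor stabilises after sufficiently many ordinal stages (Worrell's theorem) --- but it needs its own citation rather than being folded into Rutten's example. Alternatively, you could sidestep terminal sequences altogether with the argument that is closer in spirit to \cite{CGH01} and \cite{TP97}: a lifting of $\Pc(L\times Id)$ to the Eilenberg--Moore category of the monad induced by $\Fun{U^{\Gamma}}$ corresponds to a distributive law, and by finality the final $\Pc(L\times Id)$-coalgebra carries a unique compatible $\Gamma$-algebra structure, which exhibits it directly as the final object of $\coalg{\Fun{B^{\Gamma}}}$; this transfers the already-known final coalgebra rather than reconstructing it, and gives (1)--(3) in one stroke.
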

In \cite{TP97}, \emph{bialgebras} are used as structures
combining algebras and coalgebras. Bialgebras are richer than
structured coalgebras, in the sense that they can be seen both as
coalgebras on algebras and also as algebras on coalgebras. In
\cite{CGH01}, it is shown that whenever $\Fun{B^{\Gamma}}$ is a
lifting of some $\Fun{B}$, then $\Fun{B^{\Gamma}}$-coalgebras are
also bialgebras.
In Section \ref{sec:normalizedcoalgebra}, we will introduce
\emph{normalized coalgebras} that are structured coalgebras, 
but not bialgebras (i.e., their endofunctor is not the lifting 
of some endofunctor on $\set$). This is our motivation for using structured coalgebras.
\section{Coalgebraic Saturated Semantics}\label{sec:CIScoalgebra}

\noindent Recall the definition of context interactive system (Definition
\ref{def:cis}). Here, and in the rest of the paper we will always
assume to work with a context interactive system
$\sys{I}=\<\mssign,\Alg{X},O,tr\>$ where (a) $||\Cat{C}||$ (the set
of morphisms of the small category $\Cat{C}$) is a countable set and
(b) the transition relation $tr$ has countable degree, i.e., the set
of transitions outgoing from a state is countable. These two
assumptions also guarantee that the saturated transition system has
countable degree.

In this section we introduce the coalgebraic model for the saturated
transition system. First we model it as a coalgebra over
$\set^{|\Cat{C}|}$, i.e., the category of $|\Cat{C}|$-sorted
families of sets and functions. Therefore in this model, all the
algebraic structure is missing. Then we lift it to
$\Cat{Alg_{\Gamma(\Cat{C})}}$ that is the category of
$\Gamma(\Cat{C})$-algebras and $\Gamma(\Cat{C})$-homomorphisms.
%
%In the remainder of the paper, we assume that the \cts\ has
%countable degree.
%
Recall that when $X$ is a $|\Cat{C}|$-sorted family of sets, $\int X
= \sum_{i\in |\Cat{C}|}X_i$.
\begin{defi}
$\PD: \set^{|\Cat{C}|} \to \set^{|\Cat{C}|}$ is defined for each
$|\Cat{C}|$-sorted family of set $X$ and for each $i \in |\Cat{C}|$
as $\PD(X_i) = \Pc (\sum_{j\in |\Cat{C}|} ( \Cat{C}[i,j] \times O
\times \int X))$. Analogously for arrows.
\end{defi}
A $\PD$-coalgebra is a $\Cat{C}$-sorted family $\alpha =
\{\alpha_i:X_i \to \PD(X_i) \mid i \in |\Cat{C}| \}$ of functions
assigning to each $p \in X_i$ a set of transitions $(c,o,q)$ where
$c$ is an arrow of $\Cat{C}$ (context) with source $i$, $o$ is an
observation and $q$ is the arriving state. Note that $q$ can have
any possible sort ($q \in \int X$).

For each $\sys{I} =\<\Cat{C}, \Alg{X}, O, tr\>$, we define the
$\PD$-coalgebra $\<X, \alpha_{\sys{I}} \>$ corresponding to the
\cts, where $\forall i \in |\Cat{C}|, \; \forall p \in X_i$, $(c, o,
q) \in \alpha_{\sys{I}}(p)$ iff $(c_{\Alg{X}}(p), o,q) \in tr$.

\medskip

Now we want to define an endofunctor $\PDA$ on
$\Cat{Alg_{\Gamma(C)}}$ that is a lifting of $\PD$ and such that
$\<\Alg{X}, \alpha_{\sys{I}}\>$ is a $\PDA$-coalgebra. In order to
do that, we must define how $\PDA$ modifies the operations of
$\Gamma(\Cat{C})$-algebras.
%
%Usually, this is done by giving a set of GSOS rules.
%%
%In our case, the following (parametric) rule defines $\PDA$.
This is described by the following rule.
\[\deduz{p \atr{c_1}{o}{}q \quad c_1=d;c_2}{d(p) \atr{c_2}{o}{}q}\]
Intuitively, this rule states how to compute the saturated
transitions of $d(p)$ from the saturated transitions of $p$. Indeed,
if $p\satr{d;c_2}{o}q$, then $d;c_2(p)\tr{o}q$ and then
$d(p)\satr{c_2}{o}q$.

Hereafter, in order to make lighter the notation, we will avoid to
specify sorts. We will denote a $\Sig{\Gamma(\Cat{C})}$-algebra
$\Alg{X}$ as $\langle X, d^0_{\Alg{X}}, d^1_{\Alg{X}}, \dots
\rangle$ where $X$ is the $|\Cat{C}|$-sorted carrier set of
$\Alg{X}$ and $d^i_{\Alg{X}}$ is the function corresponding to the
operator $d^i \in ||\Cat{C}||$.

\begin{defi}\label{def:endofunctorPDA}$\PDA :
\Cat{Alg_{\Sig{\Gamma(\Cat{C})}}} \to
\Cat{Alg_{\Sig{\Gamma(\Cat{C})}}}$ maps each $\Alg{X}=\langle X,
d^0_{\Alg{X}}, d^1_{\Alg{X}}, \dots \rangle \in
\Cat{Alg_{\Sig{\Gamma(\Cat{C})}}}$ into $\langle \PD(X),
d^0_{\TPDA(\Alg{X})}, d^1_{\TPDA(\Alg{X})}, \dots  \rangle$ where
$\forall d \in ||\Cat{C}||, \; \forall A \in \PD(X),$
$d_{\TPDA(\Alg{X})} A= \{(c_2,l,x) | (c_1,l,x) \in A\ \text{ and }
c_1=d;c_2 \}$. For arrows, it is defined as $\PD$.
\end{defi}

Intuitively, $\PDA:\Cat{Alg_{\Sig{\Gamma(\Cat{C})}}} \to
\Cat{Alg_{\Sig{\Gamma(\Cat{C})}}}$ can be thought of as an extension
of the functor $\PD: \set^{|\Cat{C}|} \to \set^{|\Cat{C}|}$ to the
category $\Cat{Alg_{\Sig{\Gamma(\Cat{C})}}}$. Each algebra $\Alg{X}$
with ($|\Cat{C}|$-sorted) carrier set $X$ is mapped to an algebra
having as ($|\Cat{C}|$-sorted) carrier set $G(X)$. The elements of
$G(X)$ with sort $i$ are sets of triples $(c_1,o,x)$ (representing
sets of transitions) where $c_1:i \to j$ is an arrow in $\Cat{C}$.
For each arrow $d:i \to k$, there is an operator in $\PDA(\Alg{X})$
$d_{\TPDA(\Alg{X})}:\PD(X_i) \to \PD(X_k)$ that maps each set $A$ of
triples in $\PD(X_i)$ into the set of triples $\{(c_2,l,x) |
(c_1,l,x) \in A\ \text{ and } c_1=d;c_2 \}$ (note that the arrows
$c_2$ have source $k$).

It is worth to note that by definition, $\PDA$ is a lifting of
$\PD$. Thus, by Proposition \ref{prop:lifting}, follows that
$\coalg{\PDA}$ has final object and that bisimilarity is a
congruence.\footnote{Proposition \ref{prop:lifting} holds also for
many-sorted algebras and many sorted-sets \cite{FCT99}.}
%In the following, we proved that
%$\alpha_{\sys{I}}: \Alg{X}\to \PDA(\Alg{X})$ is a
%$\Gamma(\Cat{C})$-homomorphism.

In \cite{TP97}, it is shown that every process algebra whose
operational semantics is given by GSOS rules, defines a bialgebra.
In that approach the carrier of the bialgebra is an initial algebra
$T_{\Sigma}$ for a given algebraic signature $\Sigma$, and the GSOS
rules specify how an endofunctor $\Fun{B_{\Sigma}}$ behaves with
respect to the operations of the signature. Since there exists only
one arrow $?_{\Sigma}:T_{\Sigma} \to \Fun{B_{\Sigma}}(T_{\Sigma})$,
to give SOS rules is enough for defining the bialgebra (i.e.,
$\<T_{\Sigma}, ?_{\Sigma}\>$) and then for assuring compositionality
of bisimilarity. Our construction slightly differs from this.
Indeed, the carrier of our coalgebra is $\Alg{X}$, that is not the
initial algebra of $\Cat{Alg_{\Sig{\Sigma(\Cat{C})}}}$. Then there
might exist several or none structured coalgebras with carrier
$\Alg{X}$. In the following we prove that $\alpha_{\sys{I}}:
\Alg{X}\to \PDA(\Alg{X})$ is a $\Gamma(\Cat{C})$-homomorphism.

\begin{thm}\label{theoDMcoalgebra}
$\<\Alg{X}, \alpha_{\sys{I}}\>$ is a $\PDA$-coalgebra.
\end{thm}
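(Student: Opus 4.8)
The plan is to verify directly that the structure map $\alpha_{\sys{I}}$, which is already a $\PD$-coalgebra in $\set^{|\Cat{C}|}$ by its very definition, lifts to a morphism in the category of $\Sig{\Gamma(\Cat{C})}$-algebras. Since $\PDA$ is a lifting of $\PD$ along the forgetful functor, the underlying $|\Cat{C}|$-sorted functions of $\alpha_{\sys{I}}$ and of $\PDA$ acting on it are exactly those of the set-based coalgebra; hence the only thing left to prove is that $\alpha_{\sys{I}}\colon \Alg{X} \to \PDA(\Alg{X})$ respects the operations, i.e.\ that it is a $\Sig{\Gamma(\Cat{C})}$-homomorphism.

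First I would fix an arrow $d \in \Cat{C}[i,k]$ and a state $p \in X_i$, and reduce the claim to the single equation $\alpha_{\sys{I}}(d_{\Alg{X}}(p)) = d_{\PDA(\Alg{X})}(\alpha_{\sys{I}}(p))$, to be checked as sets in $\PD(X_k)$. I would then expand both sides by element chasing. On the left, by the definition of $\alpha_{\sys{I}}$, a triple $(c_2,o,q)$ with $c_2 \in \Cat{C}[k,j]$ belongs to $\alpha_{\sys{I}}(d_{\Alg{X}}(p))$ iff $((c_2)_{\Alg{X}}(d_{\Alg{X}}(p)), o, q) \in tr$. On the right, by Definition \ref{def:endofunctorPDA}, $(c_2,o,q) \in d_{\PDA(\Alg{X})}(\alpha_{\sys{I}}(p))$ iff there is $(c_1,o,q) \in \alpha_{\sys{I}}(p)$ with $c_1 = d;c_2$, which by definition of $\alpha_{\sys{I}}$ means $((d;c_2)_{\Alg{X}}(p), o, q) \in tr$.

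The crux is then the algebraic law of $\Sig{\Gamma(\Cat{C})}$-algebras (Figure \ref{fig:Sign}): from $d;c_2 = c_1$ in $\Cat{C}$ one gets $(c_2)_{\Alg{X}}(d_{\Alg{X}}(p)) = (d;c_2)_{\Alg{X}}(p)$, so the two membership conditions coincide and the two sets are equal. Equivalently, this is the observation recorded before Definition \ref{def:endofunctorPDA}: $p \atr{d;c_2}{o}{S} q$ iff $d_{\Alg{X}}(p) \atr{c_2}{o}{S} q$. I do not expect a genuine obstacle here, as the argument is a direct computation; the points requiring care are purely bookkeeping, namely keeping track of the sorts (the source of $c_2$ must be $k$ and that of $c_1$ must be $i$) so that the reindexing $c_1 = d;c_2$ ranges over exactly the right hom-sets, and noting that the countable-degree assumption on $tr$ made at the start of Section \ref{sec:CIScoalgebra} is what guarantees $\alpha_{\sys{I}}$ lands in $\Pc$, keeping everything inside $\PD$.
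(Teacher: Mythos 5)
Your proposal is correct and follows essentially the same route as the paper's proof: both reduce the claim to the homomorphism equation $\alpha_{\sys{I}}(d_{\Alg{X}}(p)) = d_{\PDA(\Alg{X})}(\alpha_{\sys{I}}(p))$ and verify it by element chasing, using the definition of $\alpha_{\sys{I}}$ together with the equation $(c_2)_{\Alg{X}}(d_{\Alg{X}}(p)) = (d;c_2)_{\Alg{X}}(p)$ of $\Sig{\Gamma(\Cat{C})}$-algebras. The remarks on sort bookkeeping and on the countable-degree assumption are sound additions but do not change the argument.
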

Now, since a final coalgebra $\final{\PDA}$ exists in $\coalg{\PDA}$
and since $\<\Alg{X}, \alpha_{\sys{I}}\>$ is a $\PDA$-coalgebra,
there exists a final morphism from $\<\Alg{X}, \alpha_{\sys{I}}\>$.
The kernel of this coincides with $\satbis$, because (a)
$\PDA$-bisimilarity coincides with $\PD$-bisimilarity (by
Proposition \ref{prop:lifting}(2)) and (b) bisimilarity of
$\PD$-coalgebras for the saturated transition system coincides with
saturated bisimilarity.

By \cite{CGH98a}, $\<\Alg{X}, \alpha_{\sys{I}}\>$ is also a
bialgebra (since $\PDA$ is a lifting). In the next section we will
introduce coalgebraic models for symbolic semantics that are
structured coalgebras but not bialgebras.

\section{Coalgebraic Symbolic Semantics}\label{sec:CSS}

\noindent In Section \ref{sec:CIScoalgebra} we have characterized saturated
bisimilarity as the equivalence induced by the final morphism from
$\<\Alg{X}, \alpha_{\sys{I}}\>$ (i.e., the $\PDA$-coalgebra
corresponding to \cts) to $\final{\PDA}$. This is theoretically
interesting, but pragmatically useless. Indeed \cts\ is usually
infinitely branching (or in any case very inefficient), and so is
the minimal model. In this section we use symbolic bisimilarity in
order to give an efficient and coalgebraic characterization of
$\satbis$. We provide a notion of \emph{redundant transitions} and
we introduce \emph{normalized coalgebras} as coalgebras without
redundant transitions. The category of normalized coalgebras
($\coalg{\DMIN}$) is isomorphic to the category of \emph{saturated
coalgebras} ($\coalg{\DMIS}$) that is (isomorphic to) a full
subcategory of $\coalg{\PDA}$ that contains only those coalgebras
``satisfying'' an inference system $\ded{T}$. From the isomorphism
follows that $\satbis$ coincides with the kernel of the final
morphism in $\coalg{\DMIN}$. This provides a characterization of
$\satbis$ really useful: every equivalence class has a canonical
model that is smaller than that in $\coalg{\PDA}$ because normalized
coalgebras have no redundant transitions. Moreover, minimizing in
$\coalg{\DMIN}$ is usually feasible since it abstracts away from
redundant transitions.
\subsection{Saturated Coalgebras}\label{sec:saturatedcoalgebra}
Hereafter we refer to a context interactive system $\isys$ and to an
inference system $\ded{T}$. First, we extend $\vdash_{\ded{T}}$
(Definition \ref{def:der}) with the operators of
$\Gamma(\Cat{C})$-algebras.
\begin{defi}[Extended Derivation]\label{def:exDer}
Let $\Alg{X}$ be a $\Gamma (\Cat{C})$-algebra. A transition $p
\tr{c_1,o_1}q_1$ \emph{derives} a transition $d_{\Alg{X}}(p)
\tr{c_2,o_2}q_2$ in $\Alg{X}$ through $\ded{T}$ (written
$(c_1,o_1,q_1)\vdash_{\ded{T}, \Alg{X}}^d (c_2,o_2,q_2)$) iff there
exist $e, e'\in ||\Cat{C}||$ such that $c_1;e=d;c_2$ and $ e
\Ttr{o_1}{o_2}e' \in \clos{\ded{T}}$ and $e'_{\Alg{X}}(q_1)=q_2$.
\end{defi}
Intuitively, $\vdash_{\ded{T}, \Alg{X}}^d$ allows to derive from the
set of transitions of a state $p$ some transitions of
$d_{\Alg{X}}(p)$. Consider the symbolic transition
$\gamma_1\atr{a}{\bullet}{\approxwsc}a\rhd ab.\nil$ in Figure
\ref{fig:wordsaturatedSymbolic} (C). The derivation $(a, \bullet,
a\rhd ab.\nil) \vdash_{\infwsc,\WSCA}^{a} (\varepsilon, \bullet, a
\rhd ab.\nil) \vdash_{\infwsc,\WSCA}^{b} (b, \bullet, abb \rhd
ab.\nil)$ means that $a_{\WSCA}(\gamma_1)= a\rhd a.ab.\nil
\atr{\varepsilon}{\bullet}{} a \rhd ab.\nil$ and
$ab_{\WSCA}(\gamma_1)= ab\rhd a.ab.\nil \atr{b}{\bullet}{} abb \rhd
ab.\nil$. Note that both the transitions are in the saturated
transition system (by soundness of $\approxwsc$ and $\infwsc$). The
former is also in the symbolic transition system $\approxwsc$, while
the latter is not.

For open nets, take the symbolic transition
$l\atr{\$^3}{\alpha}{\eta} m$ of $\<\onet{N_4},l\>$ in Figure
\ref{fig:OpenNet}. The derivation $(\$^3,\alpha,
m)\vdash_{\ded{T}_{\modelON},\Alg{N}}^{\$^2}(\$,\alpha,
m)\vdash_{\ded{T}_{\modelON},\Alg{N}}^{\$^2}(\$,\alpha, m\$^2)$
means that $l\$^2\atr{\$}{\alpha}{}m$ and
$l\$^4\atr{\$}{\alpha}{}m\$^2$. Note that both the transitions are
in the saturated transition system (by soundness of $\eta$ and
$\ded{T}_{\modelON}$). The former is also in the symbolic transition
system $\eta$, while the latter is not.

Analogously for $\tau.\res{y}{\outp{y}{a}}+ \inp{a}{b}.\outp{a}{b}_1
\atr{-|\outp{a}{a}}{\tau}{\approxAsy}\outp{a}{a}$. The derivation
$(-|\outp{a}{a},\tau,
\outp{a}{a}_1)\vdash_{\ded{T}_{\modelAsy},\Alg{A}}^{-|\outp{a}{a}}(-,\tau,
\outp{a}{a}_1)\vdash_{\ded{T}_{\modelAsy},\Alg{A}}^{-}(-|\outp{a}{b},\tau,
\outp{a}{a}|\outp{a}{b}_2)$ means that $\tau.\res{y}{\outp{y}{a}}+
\inp{a}{b}.\outp{a}{b} | \outp{a}{a}_1\atr{-}{\tau}{}\outp{a}{a}_1$
and $\tau.\res{y}{\outp{y}{a}}+ \inp{a}{b}.\outp{a}{b} |
\outp{a}{a}_1\atr{-|\outp{a}{b}}{\tau}{}\outp{a}{a}|\outp{a}{b}_2$.
Note that both the transitions are in the saturated transition
system (by soundness of $\approxAsy$ and $\ded{T}_{\modelAsy}$). The
former is also in the symbolic transition system $\approxAsy$, while
the latter is not.

\begin{defi}[Sound Inference System]
An inference system $\ded{T}$ is \emph{sound} w.r.t.\ a
$\PDA$-coalgebra $\<\Alg{X}, \alpha\>$ (or viceversa, $\<\Alg{X},
\alpha\>$ \emph{satisfies} $\ded{T}$) provided that whenever $(c,o,q)\in
\alpha(p)$ and $(c,o,q) \vdash_{\ded{T}, \Alg{X}}^d (c',o',q')$, then
$(c',o',q')\in \alpha(d_{\Alg{X}}(p))$.
\end{defi}
For example, $\<\WSCA,\alpha_{\modelswc}\>$ (i.e., the
$\PDA$-coalgebra corresponding to the \cts\ of swc) satisfies
$\infwsc$, while the coalgebra corresponding to the symbolic
transition system $\approxwsc$ does not.
%
%$\<\Alg{N},\alpha_{\modelON}\>$ (i.e., the $\PDA$-coalgebra
%corresponding to the \cts\ of open nets) satisfies
%$\ded{T_{\modelON}}$, while the coalgebra corresponding to the
%symbolic transition systems $\eta$ does not.
Analogously for the coalgebra $\<\Alg{N},\alpha_{\modelON}\>$ of
open nets and the coalgebra $\<\Alg{A},\alpha_{\modelAsy}\>$ of
asynchronous $\pi$-calculus.
Hereafter we use $\vdash_{\ded{T},\Alg{X}}$ to mean
$\vdash_{\ded{T},\Alg{X}}^{id}$.
\begin{defi}[Saturated Set]\label{def:saturatedset}
Let $\Alg{X}$ be a $\Gamma(\Cat{C})$-algebra. A set $A\in \PD(X)$ is
\emph{saturated} in $\ded{T}$ and $\Alg{X}$ if it is closed w.r.t.\
$\vdash_{\ded{T},\Alg{X}}$. The set $\PS(X)$ is the subset of
$\PD(X)$ containing all and only the saturated sets in $\ded{T}$ and
$\Alg{X}$.
\end{defi}

\begin{defi}\label{def:DMIS}
$\DMIS: \Cat{Alg_{\Sig{\Gamma(\Cat{C})}}} \to
\Cat{Alg_{\Sig{\Gamma(\Cat{C})}}}$ maps each $\Alg{X}=\langle X,
d^0_{\Alg{X}}, d^1_{\Alg{X}}, \dots \rangle \in
\Cat{Alg_{\Sig{\Gamma(\Cat{C})}}}$ into $\DMIS(\Alg{X}) = \langle
\PS(X), d^0_{\DMIS(\Alg{X})}, d^1_{\DMIS(\Alg{X})}, \dots  \rangle$
where $\forall d \in ||\Cat{C}||, \; \forall A \in \PD (X),$
$d_{\DMIS(\Alg{X})} A=  \{(c_2,o_2,x_2) \text{ s.t. }
(c_1,o_1,x_1)\in A \text{ and } (c_1,o_1,x_1) \vdash_{\ded{T},
\Alg{X}}^d (c_2,o_2,x_2) \}\text{.}$ For arrows, it is defined as
$\PD$.
\end{defi}
There are two differences w.r.t.\ $\PDA$. First, we require that all
the sets of transitions are saturated. Then the operators are
defined by using the relation $\vdash_{T,\Alg{X}}^d$.

Notice that $\DMIS$ cannot be regarded as a lifting of any
endofunctor over $\set^{|\Cat{C}|}$. Indeed the definition of
$\PS(X)$ depends on the algebraic structure $\Alg{X}$. For this
reason we cannot use Proposition \ref{prop:lifting}.

Now, let $\iota_{\Alg{X}}:\PS(X) \to \PD(X)$ be the inclusion
function. In Appendix \ref{proofSatu} it is proved that it also a
$\Gamma(\Cat{C})$-homomorphism $\iota_{\Alg{X}}:\DMIS(\Alg{X}) \to
\PDA(\Alg{X})$ and that it extends to a natural transformation.
\begin{lem}\label{lemma:inatural}
Let $\iota$ be the family of morphisms $\iota =
\{\iota_{\Alg{X}}:\DMIS(\Alg{X}) \to \PDA(\Alg{X}),\; \forall
\Alg{X}\in |\Cat{Alg_{\Sig{\Gamma(\Cat{C})}}}|\}$. Then $\iota:\DMIS
\Rightarrow \PDA$ is a natural transformation.
\end{lem}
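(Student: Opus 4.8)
The plan is to verify the naturality square directly, exploiting the fact that $\DMIS$ and $\PDA$ are defined to act \emph{identically} on arrows. Concretely, for every $\Gamma(\Cat{C})$-homomorphism $h:\Alg{X}\to\Alg{Y}$ I would show that the diagram
$$\xymatrix{ \DMIS(\Alg{X}) \ar[d]_{\iota_{\Alg{X}}} \ar[r]^{\DMIS(h)} & \DMIS(\Alg{Y}) \ar[d]^{\iota_{\Alg{Y}}}\\ \PDA(\Alg{X}) \ar[r]_{\PDA(h)} & \PDA(\Alg{Y})}$$
commutes. All four arrows are $\Gamma(\Cat{C})$-homomorphisms: the two vertical ones by the result recalled just before the statement, and the two horizontal ones as the images of $h$ under the functors $\DMIS$ and $\PDA$. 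Hence it suffices to check that the two composites agree as functions on carrier sets.

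The key observation is that, by Definitions \ref{def:DMIS} and \ref{def:endofunctorPDA}, both $\DMIS$ and $\PDA$ send an arrow to the map $\PD(h)$, which relabels the arriving state of each transition along $h$ while leaving the context and the observation untouched: $\PD(h)(A)=\{(c,o,h(q)) \mid (c,o,q)\in A\}$. Thus $\DMIS(h)$ is simply the restriction of $\PD(h)$ to saturated sets, while $\PDA(h)$ is $\PD(h)$ itself. Since $\iota_{\Alg{X}}$ and $\iota_{\Alg{Y}}$ are mere inclusions, evaluating the two composites on an arbitrary $A\in\PS(X)$ gives, along the left-bottom path, $\PDA(h)(\iota_{\Alg{X}}(A))=\PD(h)(A)$, and along the top-right path, $\iota_{\Alg{Y}}(\DMIS(h)(A))=\PD(h)(A)$. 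The two results coincide, so the square commutes.

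The only point genuinely using that $h$ is a homomorphism is the well-definedness of $\DMIS(h)$, namely that $\PD(h)$ carries $\PS(X)$ into the saturated subsets of $\PD(Y)$; this is already guaranteed by $\DMIS$ being a functor, and it can be checked directly. Indeed, if $A$ is closed under $\vdash_{\ded{T},\Alg{X}}$ and $(c,o,h(q))\vdash_{\ded{T},\Alg{Y}}(c',o',q')$ with $(c,o,q)\in A$, then there are $e,e'$ with $c;e=c'$, $e\Ttr{o}{o'}e'\in\clos{\ded{T}}$ and $e'_{\Alg{Y}}(h(q))=q'$; homomorphism of $h$ gives $q'=h(e'_{\Alg{X}}(q))$, and from $(c,o,q)\vdash_{\ded{T},\Alg{X}}(c',o',e'_{\Alg{X}}(q))$ together with saturation of $A$ we get $(c',o',e'_{\Alg{X}}(q))\in A$, whence $(c',o',q')\in\PD(h)(A)$. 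I do not expect a real obstacle: once one notices that the two endofunctors agree on morphisms, naturality is essentially automatic, the whole content being the bookkeeping that an inclusion commutes with a componentwise relabelling.
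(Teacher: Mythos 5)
Your proof is correct and follows essentially the same route as the paper's: the components are $\Gamma(\Cat{C})$-homomorphisms by the inclusion lemma, and the naturality square commutes because $\DMIS$ and $\PDA$ act identically (as $\PD$) on arrows, so the inclusions trivially commute with the relabelling. The extra check that $\PD(h)$ preserves saturated sets is sound (it is the paper's Lemma on preservation of $\vdash_{\ded{T},\Alg{X}}$ by homomorphisms) but, as you note, already subsumed by the functoriality of $\DMIS$.
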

It is well-known that every natural transformation between
endofunctors induces a functor between the corresponding categories
of coalgebras \cite{Rut96}. In our case, $\iota:\DMIS \Rightarrow
\PDA$ induces the functor $\Fun{I}: \coalg{\DMIS} \to \coalg{\PDA}$
that maps each $\DMIS$-coalgebra $\alpha: \Alg{X} \to
\DMIS(\Alg{X})$ into the $\PDA$-coalgebra $\alpha; \iota_{\Alg{X}}:
\Alg{X} \to \PDA(\Alg{X})$.

Let $\coalg{\PDA^I}$ be the full subcategory of $\coalg{\PDA}$
containing the $\PDA$-coalgebras $\alpha: \Alg{X} \to \PDA(\Alg{X})$
that factor through $\iota_{\Alg{X}}$, i.e., those
$\alpha=\alpha';\iota_{X}$ for some $\Gamma(\Cat{C})$-homomorphisms
$\alpha':\Alg{X} \to \DMIS(\Alg{X})$. It is trivial to see that this
category is isomorphic to $\coalg{\DMIS}$.

%Note that any algebra $\DMIS(\Alg{X})$ is a subalgebra of
%$\PDA(\Alg{X})$, in the sense that the the inclusion
%$\iota_{\Alg{X}}: \PS(X) \to \PD(X)$ is a
%$\Gamma(\Cat{C})$-homomorphism.
%%
%\begin{lem}\label{lem:inclusion}
%The inclusion $\iota_{\Alg{X}}: \PS(X) \to \PD(X)$ is a
%$\Gamma(\Cat{C})$-homomorphism from the algebra $\DMIS(\Alg{X})$ to
%$\PDA (\Alg{X})$.
%\end{lem}
%%
%%
%Thus, every $\DMIS$-coalgebra $\alpha: \Alg{X} \to \DMIS(\Alg{X})$
%can be transformed into the $\PDA$-coalgebra $\alpha; \iota: \Alg{X}
%\to \PDA(\Alg{X})$.
%%
%%It is also easy to see that the morphisms $\iota_{\Alg{X}}$ (for
%%every $\Alg{X}\in |\Cat{Alg_{\Sig{\Gamma(\Cat{C})}}}|$) forms a
%%natural transformation $\iota:\DMIS \Rightarrow \PDA$.
%%
%
%With a terminological abuse, we will say that a $\DMIS$-coalgebra
%$\<\Alg{X}, \alpha\>$ is a $\PDA$-coalgebras forgetting about the
%inclusion $\iota_\Alg{X}$. Analogously, we say that a
%$\PDA$-coalgebra $\<\Alg{X}, \alpha\>$ is a $\DMIS$-coalgebra to
%mean that $\alpha$ factors through $\iota_{\Alg{X}}$, i.e.,
%$\alpha=\alpha';\iota_{\Alg{X}}$. In this perspective, every
%$\DMIS$-cohomomorphim is also a $\PDA$-cohomomorphism and thus, we
%will improperly say that $\coalg{\DMIS}$ is a subcategory of
%$\coalg{\PDA}$.

In order to prove the existence of final object in $\coalg{\DMIS}$,
we show that $\coalg{\PDA^I}$ is the full subcategory of
$\coalg{\PDA}$ containing all and only the coalgebras satisfying
$\ded{T}$. More precisely, we show that $|\coalg{\PDA^I}|$ is a
\emph{covariety} of $\coalg{\PDA}$.
\begin{lem}\label{lemma:Icorrect}
Let $\<\Alg{X}, \alpha\>$ be a $\PDA$-coalgebra. Then it is in
$|\coalg{\PDA^{I}}|$ iff it satisfies $\ded{T}$.
\end{lem}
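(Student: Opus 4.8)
The plan is to reduce the categorical assertion ``$\langle\Alg{X},\alpha\rangle\in|\coalg{\PDA^I}|$'' to a purely pointwise condition on the sets $\alpha(p)$, and then to match that condition against the definition of satisfying $\ded{T}$ by means of a single syntactic identity on derivations.

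First I would analyse membership in $\coalg{\PDA^I}$. By definition, $\langle\Alg{X},\alpha\rangle$ lies in $\coalg{\PDA^I}$ iff $\alpha$ factors as $\alpha=\alpha';\iota_{\Alg{X}}$ for some $\Gamma(\Cat{C})$-homomorphism $\alpha'\colon\Alg{X}\to\DMIS(\Alg{X})$. Now $\iota_{\Alg{X}}$ is the inclusion $\PS(X)\hookrightarrow\PD(X)$, hence injective, and a $\Gamma(\Cat{C})$-homomorphism (Lemma \ref{lemma:inatural}). So a set-level factorization exists iff $\alpha(p)\in\PS(X)$ for every $p\in\int X$, and whenever it does, the corestriction $\alpha'$ is \emph{automatically} a homomorphism, because a function factoring a homomorphism through a monic homomorphism is itself a homomorphism: from $\iota_{\Alg{X}}\circ\alpha'\circ d_{\Alg{X}}=\alpha\circ d_{\Alg{X}}=d_{\PDA(\Alg{X})}\circ\alpha=\iota_{\Alg{X}}\circ d_{\DMIS(\Alg{X})}\circ\alpha'$ one cancels the mono $\iota_{\Alg{X}}$. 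Thus membership collapses to the single requirement that $\alpha(p)\in\PS(X)$ for all $p$, which by Definition \ref{def:saturatedset} means each $\alpha(p)$ is closed under $\vdash_{\ded{T},\Alg{X}}=\vdash^{id}_{\ded{T},\Alg{X}}$.

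Next I would bridge this pointwise closure with satisfaction of $\ded{T}$. The crucial observation, read straight off Definition \ref{def:exDer}, is that $(c,o,q)\vdash^d_{\ded{T},\Alg{X}}(c',o',q')$ iff $(c,o,q)\vdash^{id}_{\ded{T},\Alg{X}}(d;c',o',q')$, since both statements unfold to the very same data: contexts $e,e'$ with $c;e=d;c'$, a rule from $o$ to $o'$ in $\clos{\ded{T}}$, and $e'_{\Alg{X}}(q)=q'$ (here $d;c'=id;(d;c')$). Combined with $\alpha$ being a $\PDA$-morphism, so that by Definition \ref{def:endofunctorPDA} $\alpha(d_{\Alg{X}}(p))=d_{\PDA(\Alg{X})}(\alpha(p))=\{(c',o',q')\mid(d;c',o',q')\in\alpha(p)\}$, both directions follow. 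If each $\alpha(p)$ is saturated and $(c,o,q)\in\alpha(p)$ with $(c,o,q)\vdash^d_{\ded{T},\Alg{X}}(c',o',q')$, the bridge gives $(d;c',o',q')\in\alpha(p)$, hence $(c',o',q')\in\alpha(d_{\Alg{X}}(p))$, i.e.\ satisfaction; conversely, specialising satisfaction to $d=id$ yields exactly closure of every $\alpha(p)$ under $\vdash^{id}_{\ded{T},\Alg{X}}$.

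I expect the main obstacle to be conceptual rather than computational: keeping the superscripted derivation $\vdash^d_{\ded{T},\Alg{X}}$ apart from the plain one $\vdash^{id}_{\ded{T},\Alg{X}}$ and recognising that the prefix $d$ stripped off by $d_{\PDA(\Alg{X})}$ cancels precisely the prefix introduced by the superscript. The second delicate point is arguing that the corestriction $\alpha'$ inherits the homomorphism property for free from $\iota_{\Alg{X}}$ being a monic homomorphism; this is what lets the whole argument reduce to the one identity above, rather than forcing a separate comparison of $d_{\PDA(\Alg{X})}$ and $d_{\DMIS(\Alg{X})}$ on saturated sets.
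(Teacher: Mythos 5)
Your proof is correct and follows essentially the same route as the paper's: both reduce membership in $\coalg{\PDA^{I}}$ to saturation of each $\alpha(p)$ and then exploit the identity $(c,o,q)\vdash^{d}_{\ded{T},\Alg{X}}(c',o',q')$ iff $(c,o,q)\vdash_{\ded{T},\Alg{X}}(d;c',o',q')$ together with the definition of $d_{\PDA(\Alg{X})}$ (the paper merely phrases the ``factors implies satisfies'' direction contrapositively). Your mono-cancellation argument showing that the corestriction $\alpha'$ is automatically a $\Gamma(\Cat{C})$-homomorphism fills in a step the paper leaves implicit, but it is a refinement of the same argument rather than a different one.
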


\begin{prop}\label{prop:covariety}
$|\coalg{\PDA^I}|$ is a covariety of $\coalg{\PDA}$.%, i.e., is closed
%under subcoalgebras, homomorphic images, and sums.
\end{prop}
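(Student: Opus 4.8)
The plan is to reduce everything to Lemma~\ref{lemma:Icorrect}, which tells us that the objects of $\coalg{\PDA^I}$ are exactly the $\PDA$-coalgebras that satisfy $\ded{T}$. Recalling (e.g.\ from \cite{Rut96}) that a full subcategory of a category of coalgebras is a \emph{covariety} precisely when it is closed under subcoalgebras, homomorphic images, and coproducts, it therefore suffices to show that the property ``satisfies $\ded{T}$'' is preserved by each of these three constructions. The guiding observation is that satisfaction of $\ded{T}$ is a \emph{local} condition: for a state $p$ it speaks only about the transitions in $\alpha(p)$, about the elements $d_{\Alg{X}}(p)$, and about the targets $e'_{\Alg{X}}(q)$ produced by the derivation relation $\vdash_{\ded{T},\Alg{X}}^d$. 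Since the data $c;e=d;c'$ and $e \Ttr{o}{o'} e' \in \clos{\ded{T}}$ in Definition~\ref{def:exDer} do not mention the algebra at all, only the operation instances $d_{\Alg{X}}$ and $e'_{\Alg{X}}$ interact with the carrier, and these behave well under all three operations.

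First I would treat coproducts. In $\Cat{Alg_{\Gamma(\Cat{C})}}\cong\set^{\Cat{C}}$ coproducts are computed sortwise as disjoint unions with operations acting componentwise, and the coalgebra structure of $\coprod_k\<\Alg{X}_k,\alpha_k\>$ is the copairing. Hence for a state $p$ lying in the $k$-th summand, its outgoing transitions, the elements $d_{\Alg{X}}(p)$, and the whole derivation $\vdash_{\ded{T},\Alg{X}}^d$ remain inside that summand. Satisfaction of $\ded{T}$ in each $\Alg{X}_k$ then yields satisfaction in the coproduct by restriction to components. Next, for subcoalgebras, let $\<\Alg{Y},\beta\>$ be a subcoalgebra of a coalgebra satisfying $\ded{T}$; then $\Alg{Y}$ is a subalgebra, the inclusion is a $\PDA$-morphism, so for $p\in Y$ we have $\beta(p)=\alpha(p)$ with all targets already in $Y$. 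Given $(c,o,q)\in\beta(p)$ and $(c,o,q)\vdash_{\ded{T},\Alg{X}}^d(c',o',q')$, satisfaction upstairs gives $(c',o',q')\in\alpha(d_{\Alg{X}}(p))$; since $\Alg{Y}$ is closed under the operations, $d_{\Alg{X}}(p)=d_{\Alg{Y}}(p)\in Y$ and $q'=e'_{\Alg{X}}(q)=e'_{\Alg{Y}}(q)\in Y$, so this triple lies in $\beta(d_{\Alg{Y}}(p))$.

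The homomorphic-image case is the one I expect to be the main obstacle, since it is where surjectivity and the homomorphism equations must be combined. Let $f:\<\Alg{X},\alpha\>\to\<\Alg{Z},\gamma\>$ be a surjective $\PDA$-morphism with $\Alg{X}$ satisfying $\ded{T}$. Given $(c,o,z)\in\gamma(p_Z)$ and $(c,o,z)\vdash_{\ded{T},\Alg{Z}}^d(c',o',z')$ with witnesses $e,e'$, I would choose $p\in f^{-1}(p_Z)$; because $f$ is a coalgebra morphism, $\gamma(f(p))=\PDA(f)(\alpha(p))$, so there is $(c,o,q)\in\alpha(p)$ with $f(q)=z$. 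The same witnesses $e,e'$ then give $(c,o,q)\vdash_{\ded{T},\Alg{X}}^d(c',o',e'_{\Alg{X}}(q))$ upstairs (this is exactly the step that uses that $\clos{\ded{T}}$ and $c;e=d;c'$ are independent of the algebra), whence by satisfaction $(c',o',e'_{\Alg{X}}(q))\in\alpha(d_{\Alg{X}}(p))$. Applying $f$ and using that it commutes with the operations, $f(e'_{\Alg{X}}(q))=e'_{\Alg{Z}}(z)=z'$ and $f(d_{\Alg{X}}(p))=d_{\Alg{Z}}(p_Z)$, so $(c',o',z')\in\gamma(d_{\Alg{Z}}(p_Z))$, as required.

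Having verified closure under coproducts, subcoalgebras, and homomorphic images, I would conclude by Lemma~\ref{lemma:Icorrect} that $|\coalg{\PDA^I}|$ is a covariety of $\coalg{\PDA}$. The only genuinely delicate point, as noted, is transporting a derivation along the surjection $f$ in the homomorphic-image case; the coproduct and subcoalgebra cases are bookkeeping resting respectively on pointwise coproducts in the presheaf category and on subalgebras being closed under the operations.
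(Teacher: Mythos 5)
Your proposal is correct and follows essentially the same route as the paper: reduce via Lemma~\ref{lemma:Icorrect} to checking that satisfaction of $\ded{T}$ is closed under subcoalgebras, homomorphic images and sums, using that derivations are preserved and reflected along $\Gamma(\Cat{C})$-homomorphisms (your inline lifting of a derivation along the surjection is exactly the content of Lemma~\ref{lemma:existence}, and the subcoalgebra and sum cases match the paper's up to stating the arguments directly rather than by contradiction). The only cosmetic difference is that you take subcoalgebras to be literal inclusions while the paper works with a general mono $m$, which changes nothing up to isomorphism.
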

From this follows that we can construct a final object in
$\coalg{\PDA^I}$ as the biggest subobject of $\final{\PDA}$
satisfying $\ded{T}$. Thus the kernel of final morphisms in
$\coalg{\PDA^I}$ coincides with the kernel of final morphisms in
$\coalg{\PDA}$. This argument extends to $\coalg{\DMIS}$, since it
is isomorphic to $\coalg{\PDA^I}$.

If $\ded{T}$ is \emph{sound} w.r.t.\
$\<\Alg{X},\alpha_{{\sys{I}}}\>$, then the latter is in
$|\coalg{\PDA^I}|$, i.e., $\alpha_{{\sys{I}}}=\alpha'_{{\sys{I}}};
\iota_{\Alg{X}}$. Note that $\<\Alg{X},\alpha'_{{\sys{I}}}\>$
corresponds through the isomorphism to
$\<\Alg{X},\alpha_{{\sys{I}}}\>$ (namely,
$\Fun{I}(\<\Alg{X},\alpha'_{{\sys{I}}}\>)=\<\Alg{X},\alpha_{{\sys{I}}}\>$).
Thus, by assuming $\ded{T}$ to be sound w.r.t.\
$\<\Alg{X},\alpha_{{\sys{I}}}\>$, we have that the kernel of final
morphism from $\<\Alg{X},\alpha'_{{\sys{I}}}\>$ in $\coalg{\DMIS}$
coincides with $\satbis$.

It is worth to give an intuition about $\final{\DMIS}$, the final
coalgebra of $\coalg{\DMIS}$. One can roughly thinks of
$\final{\PDA}$ (the final coalgebra of $\coalg{\PDA}$) as the
standard final coalgebra of transition systems (with labels in
$||\Cat{C}||\times O$), i.e., the coalgebra of all
\emph{synchronization trees}. The final coalgebra of $\coalg{\DMIS}$
is the biggest subcoalgebra of $\final{\PDA}$ containing all and
only those synchronization trees that are sound w.r.t.\ $\ded{T}$.
Note that $\final{\DMIS}$ is not a ``convenient semantics domain''
since all the set of transitions of a given state are saturated. In
the next subsection, we are going to show the category of normalized
coalgebras, where the final coalgebra contains only few ``essential''
symbolic transitions.
\subsection{Normalized Coalgebras}\label{sec:normalizedcoalgebra}
In this subsection we introduce normalized coalgebras, in order to
characterize $\satbis$ without considering the whole \cts\ and by
relying on the derivation relation $\vdash_{\ded{T},\Alg{X}}$. The
following observation is fundamental to explain our idea.
\begin{lem}\label{lemma:normalizedtrivial}
Let $\Alg{X}$ be a $\Gamma(\Cat{C})$-algebra. For all triples
$(c_1,o_1,p_1), (c_2,o_2,p_2)\in \PDA(\Alg{X})$, if
$(c_1,o_1,p_1)\vdash_{\ded{T}, \Alg{X}}(c_2,o_2,p_2)$ then
$p_2=e_{\Alg{X}}(p_1)$ for some $e\in ||\Cat{C}||$. Moreover $\forall
q_1\in \int X$, $(c_1,o_1,q_1)\vdash_{\ded{T},
\Alg{X}}(c_2,o_2,e_{\Alg{X}}(q_1))$.
\end{lem}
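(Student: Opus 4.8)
The plan is to read the statement straight off Definition~\ref{def:exDer}. Recall the convention, stated just before the lemma, that $\vdash_{\ded{T},\Alg{X}}$ abbreviates $\vdash_{\ded{T},\Alg{X}}^{id}$. Hence the hypothesis $(c_1,o_1,p_1)\vdash_{\ded{T},\Alg{X}}(c_2,o_2,p_2)$ unfolds, by Definition~\ref{def:exDer} with $d=id$, to the existence of arrows $e,e'\in||\Cat{C}||$ such that $c_1;e=id;c_2$, the rule $e\UTtr{o_1}{o_2}e'$ belongs to $\clos{\ded{T}}$, and $e'_{\Alg{X}}(p_1)=p_2$. The only computation involved is the identity law $id;c_2=c_2$ of $\Cat{C}$ (legitimate because $id$ is the identity on the source sort of $c_2$), which collapses the composition constraint to $c_1;e=c_2$.

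For the first assertion I would simply exhibit the witness. The last conjunct already reads $p_2=e'_{\Alg{X}}(p_1)$, so taking the arrow named $e$ in the lemma to be this $e'$ gives $p_2=e_{\Alg{X}}(p_1)$ with $e\in||\Cat{C}||$, as required; no further argument is needed.

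For the \emph{Moreover} part I would reuse the very same pair of witnesses $(e,e')$, now with an arbitrary $q_1\in\int X$ in place of $p_1$. The two structural conditions demanded by $\vdash_{\ded{T},\Alg{X}}$ — namely $c_1;e=c_2$ and $e\UTtr{o_1}{o_2}e'\in\clos{\ded{T}}$ — constrain only the contexts and the observations, and mention neither $p_1$ nor $q_1$, so they continue to hold verbatim. The only remaining clause asks that the arriving state of the derived triple be the image of $q_1$ under the conclusion arrow; declaring that arriving state to be $e'_{\Alg{X}}(q_1)$, i.e.\ $e_{\Alg{X}}(q_1)$ in the lemma's notation, makes it $e'_{\Alg{X}}(q_1)=e'_{\Alg{X}}(q_1)$, which holds trivially. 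This yields $(c_1,o_1,q_1)\vdash_{\ded{T},\Alg{X}}(c_2,o_2,e_{\Alg{X}}(q_1))$ and completes the argument.

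I expect no genuine obstacle: the lemma is a direct consequence of Definition~\ref{def:exDer}, and its entire content is bookkeeping. The only care required is to keep the two witness arrows distinct — $e$ is the arrow one composes with $c_1$, whereas $e'$ is the conclusion of the derivation rule and is the arrow that actually acts on the third (arriving-state) component — and to notice that a derivation depends on that component solely through the clause $e'_{\Alg{X}}(\cdot)$. Conceptually the lemma records that derivation from a triple $(c_1,o_1,p_1)$ transforms its arriving state by post-composition with a context $e'$ independent of $p_1$; this uniformity is exactly what will make the later normalization of transition sets well behaved.
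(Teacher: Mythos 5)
Your proof is correct and matches the paper's, which simply observes that both claims follow directly from Definition~\ref{def:exDer}; you have merely spelled out the unfolding (identifying the lemma's $e$ with the definition's $e'$ and noting that the witnessing conditions are independent of the arriving state), which is exactly the intended argument.
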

Consider a $\PDA$-coalgebra $\<\Alg{X},\gamma\>$ and the equivalence
$\sim^{\gamma}$ induced by the final morphism. Suppose that $p
\atr{c_1}{o_1}{\gamma}p_1$ and $p
\atr{c_2}{o_2}{\gamma}e_{\Alg{X}}(p_1)$ such that
$(c_1,o_1,p_1)\vdash_{\ded{T}, \Alg{X}}(c_2,o_2,e_{\Alg{X}}(p_1))$.
If $\<\Alg{X},\gamma\>$ satisfies $\ded{T}$ (i.e., it is a
$\DMIS$-coalgebra), we can forget about the latter transition.
Indeed, for all $q\in \int X$, if $q \atr{c_1}{o_1}{\gamma}q_1$ then also
$q \atr{c_2}{o_2}{\gamma}e_{\Alg{X}}(q_1)$ (since
$\<\Alg{X},\gamma\>$ satisfies $\ded{T}$) and if $p_1 \sim^{\gamma}
q_1$, then also $e_{\Alg{X}}(p_1) \sim^{\gamma} e_{\Alg{X}}(q_1)$
(since $\sim^{\gamma}$ is a congruence). Thus, when checking
bisimilarity, we can avoid to consider those transitions that are
derivable from others. We call such transitions \emph{redundant}.

A wrong way to efficiently characterize $\sim^{\gamma}$ by
exploiting $\vdash_{\ded{T},\Alg{X}}$, consists in removing all the
redundant transitions from $\<\Alg{X},\gamma\>$ obtaining a new
coalgebra $\<\Alg{X}, \beta\>$ and then computing $\sim^{\beta}$
(i.e., the ordinary bisimilarity on $\<\Alg{X}, \beta\>$). When
considering $\<\Alg{X},\alpha_{\sys{I}} \>$ (i.e., the
$\PDA$-coalgebra corresponding to \cts), this roughly means to build
a symbolic transition system and then computing the ordinary
bisimilarity over this. But, as we have seen in Section
\ref{sec:CIS}, the resulting bisimilarity (denoted by $\sim^W$) does
not coincide with the original one.
Generally, this happens when
\begin{enumerate}[(1)]
\item $p \atr{c_1}{o_1}{\beta}p_1$ and $p \atr{c_2}{o_2}{\beta}p_2$ with
$(c_1,o_1,p_1)\vdash_{\ded{T}, \Alg{X}} (c_2,o_2,e_{\Alg{X}}(p_1))$
and
\item $e_{\Alg{X}}(p_1)\neq p_2$, but
\item $e_{\Alg{X}}(p_1) \sim^{\gamma} p_2$.
\end{enumerate}
Notice that $p \atr{c_2}{o_2}{\beta}p_2$ is not removed, because it
is not considered redundant since $e_{\Alg{X}}(p_1)$ is different
from $p_2$ (even if semantically equivalent). A transition as the
latter is called \emph{semantically redundant} and it causes the
mismatch between $\sim^{\beta}$ and $\sim^{\gamma}$. Indeed, take a
process $q$ that only performs $q \atr{c_1}{o_1}{\beta}q_1$ with
$p_1\sim^{\gamma} q_1$. Clearly $p \not \sim^{\beta} q$, but $p
\sim^{\gamma}q$. Indeed $(c_1,o_1,q_1)\vdash_{\ded{T},
\Alg{X}}(c_2,o_2,e_{\Alg{X}}(q_1))$ and thus $q
\atr{c_2}{o_2}{\gamma}e_{\Alg{X}}(q_1)$ (since $\<\Alg{X},\gamma\>$
satisfies $\ded{T}$) and $p_2 \sim^{\gamma}
e_{\Alg{X}}(p_1)\sim^{\gamma} e_{\Alg{X}}(q_1)$ (since
$\sim^{\gamma}$ is a congruence).

As an example consider the symbolic transition system of $\gamma_1$
(Figure \ref{fig:wordsaturatedSymbolic}(C)). We have that (1) $\gamma_1
\atr{a}{\bullet}{\approxwsc}a\rhd ab.\nil$ and $\gamma_1
\atr{ab}{\bullet}{\approxwsc}ab\rhd \varepsilon.\nil$ with
$(a,\bullet,a\rhd ab.\nil)
\vdash_{\infwsc,\WSCA} (ab, \bullet, ab\rhd ab.\nil)$; (2)
$ab\rhd ab.\nil \neq ab\rhd \varepsilon.\nil$, but (3) $ab\rhd ab.\nil \satbis ab\rhd
\varepsilon.\nil$. Thus, the symbolic transition $\gamma_1
\atr{ab}{\bullet}{\approxwsc}ab\rhd \varepsilon.\nil$ is
semantically redundant and it is the reason why
$\gamma_2=\varepsilon \rhd a.ab.\nil$ is not syntactically bisimilar
to $\gamma_1$ (i.e., $\gamma_1 \not \sim^W \gamma_2$) even if they
are saturated bisimilar (as discussed in Section \ref{sec:swc}).

As a further example consider 
$\<\onet{N_4},l\>$ (Figure \ref{fig:OpenNet}):
(1) $l\atr{\emptyset}{\alpha}{\approxON}q$ and
$l\atr{\$^3}{\alpha}{\approxON}m$ with 
$(\emptyset,\alpha,q)
\vdash_{\infON,\Alg{N}} (\$^3,\alpha,\$^3_{\Alg{N}}(q))$ and (2)
$\$^3_{\Alg{N}}(q)=q\$^3\neq m$, but $q\$^3 \sim^S m$. Now consider
$\<\onet{N_1},a\>$.
$a\atr{\emptyset}{\alpha}{\approxON}b$. %Notice that
%$q\sim^S b$.
Clearly $l \not \sim^W a$ but $l \satbis a$ (as shown in Section
\ref{openpetrinets}).

For the asynchronous $\pi$-calculus consider the symbolic
transitions of $p_1=\tau.\res{y}{\outp{y}{a}}+ \inp{a}{b}.\outp{a}{b}_1$
in Figure \ref{fig:processes}(B):
(1) $p_1\atr{-}{\tau}{\approxAsy}\res{y}{\outp{y}{a}}_1$ and $p_1\atr{-|
\outp{a}{a}}{\tau}{\approxAsy}\outp{a}{a}_1$ with $(-,\tau,\res{y}{\outp{y}{a}}_1)\vdash_{\infAsy,\Alg{A}} (-|\outp{a}{a},\tau, -
| \outp{a}{a}_{\Alg{A}}(\res{y}{\outp{y}{a}}_1))$; (2) 
$-|\outp{a}{a}_{\Alg{A}}(\res{y}{\outp{y}{a}}_1)
=\res{y}{\outp{y}{a}}| \outp{a}{a}_1 \neq \outp{a}{a}_1 $, but (3)
$\res{y}{\outp{y}{a}}| \outp{a}{a}_1 \sim^S \outp{a}{a}_1 $. Now the
process $\tau.\nil_1$ only performs $
\atr{-}{\tau}{\approxAsy}\nil_1$. Clearly
$\tau.\res{y}{\outp{y}{a}}+ \inp{a}{b}.\outp{a}{b}_1 \not \sim^W
\tau.\nil_1$, but they are saturated bisimilar (as shown in Section
\ref{sec:Asyn}).

\medskip

The above observation tells us that we have to remove not only the
redundant transition, i.e., those derivable from $\vdash_{\ded{T},
\Alg{X}}$, but also the \emph{semantically redundant} ones. But
immediately a problem arises. How can we decide which transitions
are semantically redundant, if semantic redundancy itself depends on
bisimilarity?

Our solution is the following: we define a category of coalgebras
without redundant transitions ($\coalg{\DMIN}$) and, as a result, a
final coalgebra contains no \emph{semantically} redundant
transitions.

\begin{defi}[Normalized Set and
Normalization]\label{def:normalization}Let $\Alg{X}$ be a $\Gamma
(\Cat{C})$-algebra.

A transition $(c',o',q')$ is \emph{equivalent} to $(c,o,q)$ in
$\ded{T}, \Alg{X}$ (written $(c',o',q') \equiv_{\ded{T},\Alg{X}}
(c,o,q)$) iff $(c',o',q')\vdash_{\ded{T},\Alg{X}} (c,o,q)$ and
$(c,o,q) \vdash_{\ded{T},\Alg{X}} (c',o',q')$.

A transition $(c',o',q')$ \emph{dominates} $(c, o,q)$ in $T,
\Alg{X}$ (written $(c',o',q') \prec _{\ded{T},\Alg{X}}(c,o,q) $) iff
$(c',o',q')\vdash_{\ded{T},\Alg{X}} (c,o,q)$ and $(c,o,q)
\nvdash_{\ded{T},\Alg{X}} (c',o',q')$.

Let $A \in \PD (X)$. A transition $(c,o,q)\in A$ is \emph{redundant
in $A$} w.r.t.\ $\ded{T}, \Alg{X}$ if $\exists (c',o',q')\in A$ such
that $(c',o',q')\prec_{\ded{T},\Alg{X}} (c,o,q)$.

The set $A$ is \emph{normalized} w.r.t.\ $\ded{T}, \Alg{X}$ iff it
does not contain redundant transitions and it is closed by
equivalent transitions. The set $\PN{\Alg{X}}(X)$ is the subset of
$\PD(X)$ containing all and only the normalized sets w.r.t.\
$\ded{T}, \Alg{X}$.

The \emph{normalization function} $\norm{X}: \PD(X) \to
\PN{\Alg{X}}(X)$ maps $A \in \PD(X)$ into $\{(c',o',q')$ $ \text{
s.t. }\exists (c,o,q)\in A \text{ s.t. }
(c',o',q')\equiv_{\ded{T},\Alg{X}} (c,o,q) \text{ and } (c,o,q)$
 not redundant in $A \text{ w.r.t.\ } \ded{T}, \Alg{X}\}$.
\end{defi}

\begin{figure}[t]
\begin{tabular}{c}
$\xymatrix@R=10pt@C=6pt{
\gamma_2 \ar[rrd]^(.4){a} \ar@(ld,lu)@{.>}[dddddddd] \\
\gamma_1 \ar[rr]^(.4){a} \ar[rrd]_(.4){ab}
\ar@(ld,lu)@{.>}[ddddddd]& & a\rhd ab.\nil \ar[rr]^{b}
\ar@(ld,lu)@{.>}[ddddddd] &&
ab \rhd \nil \ar@{.>}[ddddddd] &&  ab \rhd ab.\nil \ar[ll]_{\varepsilon} \ar@{.>}[ddddddd] & \text{(A)} \\
& & ab\rhd \varepsilon.\nil \ar[rru]_{\varepsilon} \ar@(r,u)@{.>}[ddddddrrrr]  \\
\\
\\
\gamma \ar[rr]^(.4){a} \ar@(dr,dl)[rrrrrr]_{ab}& &  \gamma'
\ar[rr]^{b} &&
\gamma'' &&  b_{\Alg{U}}(\gamma') \ar[ll]_{\varepsilon} & \text{(B)}\\
\\
\\
\gamma \ar[rr]^(.4){a} & &  \gamma' \ar[rr]^{b} && \gamma'' &&
b_{\Alg{U}}(\gamma') \ar[ll]_{\varepsilon}  & \text{(C)}}$
\end{tabular}
\caption{(A) Part of the normalized coalgebra $\<\WSCA,
\alpha_{\modelswc};norm_{T_{\modelswc},\WSCA}\>$. (B) Part of a not
normalized coalgebra $\<\Alg{U},\zeta\>$. (C) Part of a normalized
coalgebra $\<\Alg{U},\zeta;norm_{T_{\modelswc},\Alg{U}}\>$. The
dotted arrows represent a $\Fun{N}_{\Fun{T}_{\modelswc}}$-morphism
from $\<\WSCA, \alpha_{\modelswc};norm_{T_{\modelswc},\WSCA}\>$ to
$\<\Alg{U},\zeta;norm_{T_{\modelswc},\Alg{U}}\>$.}\label{fig:wordNormalization}
\end{figure}

\begin{figure}[t]
\begin{center}
\begin{tabular}{ccc}
$\xymatrix@R=5pt@C=7pt{ u \ar[rr]^{\emptyset, \alpha}
\ar@(rd,ld)[rrrrrrrr]|{\$^3,\alpha} & & v \ar@(r,u)_{\$,\beta} && w
\ar[ll]^{\emptyset, \beta} & & x \ar[ll]^{\emptyset, \beta} & &
\$^3_{\Alg{Y}}(v) \ar[ll]^{\emptyset, \beta}}$ & \hspace{1cm} &
$\xymatrix@R=5pt@C=7pt{ u \ar[rr]^{\emptyset, \alpha} & & v
\ar@(r,u)_{\$,\beta} && w \ar[ll]^{\emptyset, \beta} & & x
\ar[ll]^{\emptyset, \beta} & & \$^3_{\Alg{Y}}(v) \ar[ll]^{\emptyset,
\beta}}$\\(A)& &(B) \\
$\xymatrix@R=6pt@C=5.5pt{
&& q_2 && -|\outp{a}{b}_{\Alg{Z}}(q_1) \ar[ll]_{-, \outp{a}{b}} \\
p_1 \ar[rrd]_{-, \tau} \ar@(r,lu)[rrrrd]|{-|\outp{a}{a}, \tau} \ar@(ur,dl)[rrrru]|{-|\outp{a}{b}, \tau}&& \\
&& q_1 && -|\outp{a}{a}_{\Alg{Z}}(q_1) \ar[ll]_{-, \outp{a}{a}}  \\
}$ && $\xymatrix@R=6pt@C=5.5pt{
&& q_2 && -|\outp{a}{b}_{\Alg{Z}}(q_1) \ar[ll]_{-, \outp{a}{b}} \\
p_1 \ar[rrd]_{-, \tau} && \\
&& q_1 && -|\outp{a}{a}_{\Alg{Z}}(q_1) \ar[ll]_{-, \outp{a}{a}}  \\
}$\\ (C) && (D)
\end{tabular}
\caption{(A) Part of a not normalized coalgebra
$\<\Alg{Y},\gamma\>$. (B) Part of a normalized coalgebra
$\<\Alg{Y},\gamma;norm_{T_{\ded{\modelON}},\Alg{Y}}\>$. (C) Part of
a not normalized coalgebra $\<\Alg{Z},\delta\>$. (D) Part of a
normalized coalgebra
$\<\Alg{Z},\delta;norm_{T_{\ded{\modelAsy}},\Alg{Z}}\>$.
 }\label{fig:normalized}
\end{center}
\end{figure}
%
%
%\begin{figure}[t]
%\begin{center}
%\begin{tabular}{ccc}
%$\xymatrix@R=6pt@C=5.5pt{
%%
%&& q_2 && -|\outp{a}{b}_{\Alg{Z}}(q_1) \ar[ll]_{-, \outp{a}{b}} \\
%%
%p_1 \ar[rrd]_{-, \tau} \ar@(r,lu)[rrrrd]|{-|\outp{a}{a}, \tau} \ar@(ur,dl)[rrrru]|{-|\outp{a}{b}, \tau}&& \\
%%
%&& q_1 && -|\outp{a}{a}_{\Alg{Z}}(q_1) \ar[ll]_{-, \outp{a}{a}}  \\
%}$ && $\xymatrix@R=6pt@C=5.5pt{
%%
%&& q_2 && -|\outp{a}{b}_{\Alg{Z}}(q_1) \ar[ll]_{-, \outp{a}{b}} \\
%%
%p_1 \ar[rrd]_{-, \tau} && \\
%%
%&& q_1 && -|\outp{a}{a}_{\Alg{Z}}(q_1) \ar[ll]_{-, \outp{a}{a}}  \\
%}$\\ (A) && (C)
%\end{tabular}
%\end{center}
%\end{figure}

Recall $\modelswc = \modelwscAll$ and $T_{\ded{\modelswc}}$
(introduced in Section \ref{sec:CIS}). Consider the coalgebra
$\<\Alg{U},\zeta\>$ partially depicted in Figure
\ref{fig:wordNormalization}(B). Here we have that
$(a,\bullet,\gamma')\vdash_{T_{\ded{\modelswc}},
\Alg{U}}(ab,\bullet,b_{\Alg{U}}(\gamma'))$ but
$(ab,\bullet,b_{\Alg{U}}(\gamma'))\not \vdash_{T_{\ded{\modelswc}},
\Alg{U}}(a,\bullet,\gamma')$. Thus
$(a,\bullet,\gamma')\prec_{T_{\ded{\modelON}},
\Alg{U}}(ab,\bullet,b_{\Alg{U}}(\gamma'))$ and then the set
$\zeta(\gamma)$, i.e., the set of transitions of $\gamma$, is not
normalized (w.r.t.\ $\ded{T_{\modelswc}}, \Alg{U}$) since the
transition $(ab,\bullet,b_{\Alg{U}}(\gamma'))$ is redundant in
$\zeta(\gamma)$. By applying $norm_{T_{\ded{\modelswc},\Alg{U}}}$ to
$\zeta(\gamma)$, we get the normalized set of transitions
$\{(a,\bullet,\gamma')\}$ (Figure \ref{fig:wordNormalization}(C)).
It is worth noting that in swc, two transitions are equivalent iff
they are the same transition.

Now consider $\modelON = \modelONAll$, $T_{\ded{\modelON}}$ (introduced in Section \ref{sec:CIS}) and the
coalgebra $\<\Alg{Y},\gamma\>$ partially depicted in Figure
\ref{fig:normalized}(A). Here we have that
$(\emptyset,\alpha,v)\vdash_{T_{\ded{\modelON}},
\Alg{Y}}(\$^3,\alpha,\$^3_{\Alg{Y}}(v))$ but $(\$^3,\alpha,$
$\$^3_{\Alg{Y}}(v))\not \vdash_{T_{\ded{\modelON}},
\Alg{Y}}(\emptyset,\alpha,v)$. Thus
$(\emptyset,\alpha,v)\prec_{T_{\ded{\modelON}},
\Alg{Y}}(\$^3,\alpha,\$^3_{\Alg{Y}}(v))$ and then the set
$\gamma(u)$, i.e., the set of transitions of $u$, is not normalized
(w.r.t.\ $\ded{T_{\modelON}}, \Alg{Y}$) since the transition
$(\$^3,\alpha,\$^3_{\Alg{Y}}(v))$ is redundant in $\gamma(u)$. By
applying $norm_{\Alg{Y},T_{\ded{\modelON}}}$ to $\gamma(u)$, we get
the normalized set of transitions $\{(\emptyset,\alpha,v)\}$ (Figure
\ref{fig:normalized}(B)). Also in open Petri nets, two transitions
are equivalent iff they are the same transition.

Finally consider $\modelAsy = \modelAsyAll$, $T_{\ded{\modelAsy}}$ (introduced in Section \ref{sec:CIS})
and the coalgebra $\<\Alg{Z},\delta\>$ partially depicted in Figure
\ref{fig:normalized}(C). We have that
$(-,\tau,q_1)\vdash_{T_{\ded{\modelAsy}},
\Alg{Z}}(-|\outp{a}{a},\tau,- | \outp{a}{a}_{\Alg{Z}}(q_1))$ but
$(-|\outp{a}{a},\tau,- | \outp{a}{a}_{\Alg{Z}}(q_1))\not
\vdash_{T_{\ded{\modelAsy}}, \Alg{Z}}(-,\tau,q_1)$. The same holds
for $(-|\outp{a}{b},\tau,- | \outp{a}{b}_{\Alg{Z}}(q_1))$. Thus the
set $\delta(p_1)$, i.e., the set of transitions of $p_1$, is not
normalized (w.r.t. $\ded{T_{\modelAsy}}, \Alg{Z}$) since the
transitions $(-|\outp{a}{a},\tau,- | \outp{a}{a}_{\Alg{Z}}(q_1))$
and $(-|\outp{a}{b},\tau,- | \outp{a}{b}_{\Alg{Z}}(q_1))$ are
redundant in $\delta(p_1)$ (they are dominated by $(-,\tau,q_1)$).
By applying $norm_{\Alg{Z},T_{\ded{\modelAsy}}}$ to $\delta(p_1)$,
we obtain the normalized set of transitions $\{(-,\tau,q_1)\}$ (in
Figure \ref{fig:normalized}(D)). Also in the asynchronous
$\pi$-calculus, two transitions are equivalent iff they are the same
transition.

\begin{defi}\label{def:normalizedFunctor}
$\DMIN: \Cat{Alg_{\Sig{\Gamma(\Cat{C})}}} \to
\Cat{Alg_{\Sig{\Gamma(\Cat{C})}}}$ maps each $\Alg{X}=\langle X,
d^0_{\Alg{X}}, d^1_{\Alg{X}}, \dots \rangle \in
\Cat{Alg_{\Sig{\Gamma(\Cat{C})}}}$ into $\DMIN(\Alg{X}) =\langle
\PN{\Alg{X}}(X), d^0_{\DMIS(\Alg{X})}; \norm{X} ,
d^1_{\DMIS(\Alg{X})}; \norm{X}, \dots \rangle$. For all $h: \Alg{X}
\to \Alg{Y}$, let $\PDA'(h): \DMIN(\Alg{X})\to \PDA(\Alg{Y})$ be the restricion of $\PDA(h)$ to $\DMIN(\Alg{X})$. 
Then, $\DMIN(h): \DMIN(\Alg{X})\to \DMIN(\Alg{Y})$ is defined as $\PDA'(h) ; \norm{Y} $.
\end{defi}

Hereafter we will sometimes write $\PDA(h)$ to mean its restriction $\PDA'(h)$.

The coalgebra $\<\Alg{U},\zeta\>$ (partially depicted
in Figure \ref{fig:wordNormalization}) and $\<\Alg{Y},\gamma\>$,
$\<\Alg{Z},\delta\>$ (in Figure \ref{fig:normalized}(A)(C)) are not
normalized.
In order to get a normalized coalgebra for our running examples, we
can normalize their saturated coalgebra $\<\WSCA, \alpha_{\modelswc}
\>$, $\<\Alg{N},\alpha_{\modelON}\>$ and
$\<\Alg{A},\alpha_{\modelAsy}\>$ obtaining, respectively, $\<\WSCA,
\alpha_{\modelswc};norm_{\ded{T_{\modelswc}},\WSCA} \>$,
$\<\Alg{N},\alpha_{\modelON};norm_{\ded{T_{\modelON}},\Alg{N}}\>$
and
$\<\Alg{A},\alpha_{\modelAsy};norm_{\ded{T_{\modelAsy}},\Alg{A}}\>$.
For $\gamma_1$ and $\gamma_2$ in Figure
\ref{fig:wordsaturatedSymbolic}(C), for the nets in Figure
\ref{fig:OpenNet} and for the process $\tau.\res{y}{\outp{y}{a}}+
\inp{a}{b}.\outp{a}{b}_1$, this coincides with their \sts. Section
\ref{sec:Algo} discusses the exact relationship between a \sts\ and
the transition system that is obtained by normalizing
$\alpha_{\sys{I}}$.

\bigskip

The most important idea behind normalized coalgebra is in the
definition of $\DMIN(h)$: we first apply $\PDA(h)$ and then the
normalization $\norm{Y}$. Thus $\DMIN$-morphisms must preserve not
all the transitions of the source coalgebras, but only those that
are not redundant when mapped into the target.

For instance, consider the function $h$ from $\<\WSCA,
\alpha_{\modelswc};norm_{T_{\modelswc},\WSCA}\>$ to
$\<\Alg{U},\zeta;norm_{T_{\modelswc},\Alg{U}}\>$ that is partially depicted
in Figure \ref{fig:wordNormalization}. Note that the transition
$\gamma_1 \atr{ab}{\bullet}{}ab \rhd \varepsilon. \nil$ is not
preserved, but $h$ is however an $\DMIN$-morphisms because the
transition $(ab, \bullet, b_{\Alg{U}}(\gamma'))$ is removed by
$norm_{T_{\modelswc},\Alg{U}}$. Thus, $h$ forgets about the
transition $\gamma_1 \atr{ab}{\bullet}{}ab \rhd \varepsilon. \nil$
that is indeed semantically redundant.

For the asynchronous $\pi$, consider the coalgebra $\< \OINPA,
\alpha_{\modelON};norm_{\ded{T_{\modelON}},\Alg{N}}\>$. For the
state $l$, it coincides with the \sts\ $\eta$ (Figure
\ref{fig:OpenNet}(C)). Consider $\<\Alg{Y},
\gamma;norm_{\ded{T_{\modelON}},\Alg{Y}}\>$ (partially represented
in Figure \ref{fig:normalized}(B)) and the
$\Gamma(\onmssign)$-homomorphism $h:\ONPA \to \Alg{Y}$ that maps
$l,m,n,o$ into $u, \$^3_{\Alg{Y}}(v),x,w$ (respectively) and $p,q$
into $v$. The morphism is shown in Figure \ref{fig:morphism}. Note
that the transition $l\atr{\$^3}{\alpha}{\eta}m$ is not preserved
(i.e., $u \not\atr{\$^3}{\alpha}{\gamma}h(m)$), but $h$ is however a
$\DMIN$-morphism, because the transition $(\$^3, \alpha, h(m))\in
\PDA(h)(\eta(l))$ is removed by $norm_{\ded{T_{\modelON}},\Alg{Y}}$.
Indeed $h(m)=\$^3_{\Alg{Y}}(v)$ and
$(\emptyset,\alpha,v)\vdash_{\ded{T_{\modelON}},\Alg{Y}}(\$^3,\alpha,\$^3_{\Alg{Y}}(v))$.
Thus, we forget about $l\atr{\$^3}{\alpha}{\eta}m$ that is, indeed,
semantically redundant.

As a further example, consider the coalgebras $\< \Alg{A},
\alpha_{\modelAsy};norm_{\ded{T_{\modelAsy}},\Alg{A}}\>$. For the
state $\tau.\res{y}{\outp{y}{a}}+ \inp{a}{b}.\outp{a}{b}_1$, it
coincides with the \sts\ $\approxAsy$ (in Figure
\ref{fig:processes}(B)). Consider $\<\Alg{Z},
\delta;norm_{\ded{T_{\modelAsy}},\Alg{Z}}\>$ (partially represented
in Figure \ref{fig:normalized}(D)) and the
$\Gamma(\amssign)$-homomorphism $h:\Alg{A} \to \Alg{Z}$ shown in
Figure \ref{fig:morphism}. Note that for all $i\in \names$ the
transitions $\tau.\res{y}{\outp{y}{a}}+ \inp{a}{b}.\outp{a}{b}_1
\atr{-|\outp{a}{i}}{\tau}{\approxAsy}\outp{a}{i}_i$ are not
preserved (i.e., $p_1
\not\atr{-|\outp{a}{i}}{\tau}{\delta}h(\outp{a}{i}_i)$), but $h$ is
however a $\DMIN$-morphism, because the transitions $(-|\outp{a}{i},
\tau, h(\outp{a}{i}_i))\in
\PDA(g)(\approxAsy(\tau.\res{y}{\outp{y}{a}}+
\inp{a}{b}.\outp{a}{b}_1))$ are removed by
$norm_{\ded{T_{\modelAsy}},\Alg{Z}}$. Indeed $h(\outp{a}{i}_i)=-|
\outp{a}{i}_{\Alg{Z}}(q_1)$ and
$(-,\tau,q_1)\vdash_{\ded{T_{\modelAsy}},\Alg{Z}}(-|\outp{a}{i},\tau,-|\outp{a}{i}_{\Alg{Z}}(q_1))$.
Thus, we forget about all the transitions
$\atr{-|\outp{a}{i}}{\tau}{\approxAsy}\outp{a}{i}_i$ that are,
indeed, semantically redundant.

\subsection{Isomorphism Theorem}\label{sec:ISO}
Now we prove that $\coalg{\DMIN}$ is isomorphic to $\coalg{\DMIS}$.
%The following notion is fundamental.
\begin{defi}[Saturation]
Let $\Alg{X}$ be a $\Gamma(\Cat{C})$-algebra. The saturation function $\sat{\Alg{X}}: \PD(X) \to \PS(X)$ maps all sets of transitions $A \in
\PD(X)$ into the set $\{ (c',o',x') \text{ s.t. } (c,o,x)\in A \text{ and }
(c,o,x)\vdash_{\ded{T}, \Alg{X}} (c',o',x') \}$.
\end{defi}
Saturation is intuitively the opposite of normalization. Indeed
saturation adds to a set all the redundant transitions, while
normalization junks all of them. Thus, if we take a saturated set of
transitions, we first normalize it and then we saturate it, we
obtain the original set. Analogously for a normalized set.

However, in order to get such correspondence, we must add a
constraint to our theory. Indeed, according to the actual
definitions, there could exist a $\DMIS$-coalgebra
$\<\Alg{X},\gamma\>$ and an infinite descending chain like: $\dots
\prec_{\ded{T},\Alg{X}} p\atr{c_2}{o_2}{\gamma}p_2
\prec_{\ded{T},\Alg{X}} p\atr{c_1}{o_1}{\gamma}p_1$. In this chain,
all the transitions are redundant and thus if we normalize it, we
obtain an empty set of transitions.
\begin{defi}[Normalizable System]\label{def:Normalizable}
A context interactive system $\isys$ is \emph{normalizable w.r.t.\
$\ded{T}$} iff $\forall \Alg{X} \in \Cat{Alg_{\Gamma(\Cat{C})}}$,
$\prec_{\ded{T},\Alg{X}}$ is well founded, i.e., there are not
infinite descending chains of $\prec_{\ded{T},\Alg{X}}$.
\end{defi}
In Appendix \ref{appendixNormalizable}, we show that the context
interactive systems for open nets and asynchronous $\pi$ are
normalizable w.r.t.\ their inference systems.

\begin{lem}\label{lemmaPropNorm}
Let $\sys{I}$ be a normalizable system w.r.t.\ $\ded{T}$. Let
$\Alg{X}$ be $\Gamma(\Cat{C})$-algebra and $A \in \PD(X)$. Then
$\forall (d,o,x)\in A$, either $(d,o,x) \in \norm{X}(A)$ or $\exists (d',o',x') \in \norm{X}(A)$, such
that $(d',o',x')\prec_{\ded{T},\Alg{X}} (d,o,x)$.
\end{lem}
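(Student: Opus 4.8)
The plan is to prove the statement by \emph{well-founded induction} on the redundancy order $\prec_{\ded{T},\Alg{X}}$, whose well-foundedness is exactly what normalizability (Definition \ref{def:Normalizable}) supplies. Before running the induction I would record two elementary facts about the derivation relation. First, $\vdash_{\ded{T},\Alg{X}}$ is \emph{reflexive}: taking identity contexts and the identity rule $id\UTtr{o}{o}id \in \clos{\ded{T}}$ in Definition \ref{def:exDer} yields $(c,o,q)\vdash_{\ded{T},\Alg{X}}(c,o,q)$, hence $(c,o,q)\equiv_{\ded{T},\Alg{X}}(c,o,q)$. Consequently, any transition of $A$ that is \emph{not} redundant in $A$ already belongs to $\norm{X}(A)$, since it is equivalent to itself and non-redundant, matching exactly the defining condition of $\norm{X}(A)$. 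Second, $\vdash_{\ded{T},\Alg{X}}$ is \emph{transitive}: two consecutive derivations are witnessed by rules $e_1\UTtr{o_1}{o_2}e_1'$ and $e_2\UTtr{o_2}{o_3}e_2'$ in $\clos{\ded{T}}$ with $c_1;e_1=c_2$, $(e_1')_{\Alg{X}}(q_1)=q_2$, $c_2;e_2=c_3$ and $(e_2')_{\Alg{X}}(q_2)=q_3$; since $\clos{\ded{T}}$ is closed under composition of rules, the composite $e_1;e_2\UTtr{o_1}{o_3}e_1';e_2'$ lies in $\clos{\ded{T}}$, and using $c_1;(e_1;e_2)=c_3$ together with functoriality $(e_1';e_2')_{\Alg{X}}=(e_2')_{\Alg{X}}\circ(e_1')_{\Alg{X}}$ gives $(c_1,o_1,q_1)\vdash_{\ded{T},\Alg{X}}(c_3,o_3,q_3)$. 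From this, transitivity of $\prec_{\ded{T},\Alg{X}}$ follows: if $(a)\prec(b)\prec(c)$ then $(a)\vdash(c)$, while $(c)\vdash(a)$ would give $(c)\vdash(b)$, contradicting $(b)\prec(c)$.

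With these facts in place, I would fix $(d,o,x)\in A$ and argue by well-founded induction on $\prec_{\ded{T},\Alg{X}}$, assuming the statement for every $(c,o,q)\in A$ with $(c,o,q)\prec_{\ded{T},\Alg{X}}(d,o,x)$. If $(d,o,x)$ is not redundant in $A$, then by the reflexivity observation it lies in $\norm{X}(A)$ and the first disjunct holds. Otherwise $(d,o,x)$ is redundant, so by definition there is $(d_1,o_1,x_1)\in A$ with $(d_1,o_1,x_1)\prec_{\ded{T},\Alg{X}}(d,o,x)$. Applying the induction hypothesis to $(d_1,o_1,x_1)$ gives either $(d_1,o_1,x_1)\in\norm{X}(A)$ — in which case $(d',o',x')=(d_1,o_1,x_1)$ witnesses the second disjunct directly — or some $(d',o',x')\in\norm{X}(A)$ with $(d',o',x')\prec_{\ded{T},\Alg{X}}(d_1,o_1,x_1)$, and then transitivity of $\prec_{\ded{T},\Alg{X}}$ yields $(d',o',x')\prec_{\ded{T},\Alg{X}}(d,o,x)$, as required.

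The only genuinely essential hypothesis is normalizability: without well-foundedness of $\prec_{\ded{T},\Alg{X}}$ the descending chain of redundant transitions below $(d,o,x)$ could be infinite and never reach a non-redundant (hence normalized) witness, so Definition \ref{def:Normalizable} is precisely what powers the induction and is the point I would flag as the crux of the argument. The remaining ingredients — reflexivity and transitivity of $\vdash_{\ded{T},\Alg{X}}$ and the derived transitivity of $\prec_{\ded{T},\Alg{X}}$ — are routine consequences of the closure of $\clos{\ded{T}}$ under rule composition and of the functoriality of the operators $e_{\Alg{X}}$, so I would isolate them as short preliminary claims rather than inlining the index bookkeeping.
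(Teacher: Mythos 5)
Your proof is correct and follows essentially the same route as the paper's: the paper builds a finite descending $\prec_{\ded{T},\Alg{X}}$-chain below $(d,o,x)$, takes its last (non-redundant) element, and uses transitivity of $\prec_{\ded{T},\Alg{X}}$, which is just the explicit-chain phrasing of your well-founded induction. Your preliminary observation that reflexivity of $\vdash_{\ded{T},\Alg{X}}$ (via the identity rules in $\clos{\ded{T}}$) is what places non-redundant elements of $A$ into $\norm{X}(A)$ is a point the paper leaves implicit, and is worth stating.
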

The above lemma guarantees that normalizing a set of transitions
produces a new set containing all the transitions that are needed to
retrieve the original one.
Hereafter, we always refer to normalizable systems.
\begin{prop}\label{prop:iso}
Let $\normaR$, respectively, $\satuR$ be the families of morphisms
$ \{\norm{X}:\DMIS(\Alg{X}) \to \DMIN(\Alg{X}), \;\forall \Alg{X}
\in | \Cat{Alg_{\Gamma(\Cat{C})}}| \} $ and $
\{\sat{X}:\DMIN(\Alg{X}) \to \DMIS(\Alg{X}), \;\forall \Alg{X} \in
|\Cat{Alg_{\Gamma(\Cat{C})}}| \}$. Then $\normaR : \DMIS \Rightarrow
\DMIN$ and $\satuR : \DMIN \Rightarrow \DMIS$ are natural
transformations. More precisely, they are natural isomorphisms, one
the inverse of the other.
\end{prop}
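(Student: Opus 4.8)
The plan is to exhibit $\sat{X}$ as a $\Gamma(\Cat{C})$-homomorphism $\DMIN(\Alg{X})\to\DMIS(\Alg{X})$ that is componentwise bijective with inverse $\norm{X}$, and then deduce the remaining claims formally. I first record two structural properties of the derivation relation that drive every computation. \emph{Transitivity}: if $(c,o,x)\vdash_{\ded{T},\Alg{X}}(c_1,o_1,x_1)$ and $(c_1,o_1,x_1)\vdash^{d}_{\ded{T},\Alg{X}}(c_2,o_2,x_2)$, then $(c,o,x)\vdash^{d}_{\ded{T},\Alg{X}}(c_2,o_2,x_2)$; this follows by composing the witnessing rules $e\Ttr{o}{o_1}e'$ and $f\Ttr{o_1}{o_2}f'$ of $\clos{\ded{T}}$ into $e;f\Ttr{o}{o_2}e';f'\in\clos{\ded{T}}$, noting $c;(e;f)=d;c_2$ and $(e';f')_{\Alg{X}}(x)=x_2$. \emph{Naturality of derivation}: for a homomorphism $h\colon\Alg{X}\to\Alg{Y}$, since $h$ commutes with every operator, a derivation $(c,o,x)\vdash_{\ded{T},\Alg{X}}(c',o',x')$ lifts, using the same witnesses, to $(c,o,h(x))\vdash_{\ded{T},\Alg{Y}}(c',o',h(x'))$, and conversely every target derivation $(c,o,h(x))\vdash_{\ded{T},\Alg{Y}}(c',o',y')$ is the $h$-image of $(c,o,x)\vdash_{\ded{T},\Alg{X}}(c',o',e'_{\Alg{X}}(x))$ with $y'=h(e'_{\Alg{X}}(x))$.

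Next I establish the carrier bijection. For $A\in\PS(X)$ I claim $\sat{X}(\norm{X}(A))=A$: the inclusion $\subseteq$ holds because each element of $\norm{X}(A)$ is equivalent to some transition of the saturated set $A$, hence belongs to $A$, and its $\vdash_{\ded{T},\Alg{X}}$-consequences remain in $A$; the inclusion $\supseteq$ is exactly Lemma~\ref{lemmaPropNorm}, by which each $(c,o,x)\in A$ either lies in $\norm{X}(A)$ or is dominated by some element of $\norm{X}(A)$, so in both cases $(c,o,x)\in\sat{X}(\norm{X}(A))$. Dually, for $B\in\PN{\Alg{X}}(X)$ I claim $\norm{X}(\sat{X}(B))=B$. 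The crux is that every $(c,o,x)\in B$ is non-redundant in $\sat{X}(B)$: if some $(c',o',x')\in\sat{X}(B)$ strictly dominated it, tracing $(c',o',x')$ back to a generator $(c'',o'',x'')\in B$ and composing by transitivity gives $(c'',o'',x'')\vdash_{\ded{T},\Alg{X}}(c,o,x)$, whence either $(c'',o'',x'')\prec_{\ded{T},\Alg{X}}(c,o,x)$ contradicts normality of $B$, or $(c,o,x)\equiv_{\ded{T},\Alg{X}}(c'',o'',x'')$ contradicts strictness of the domination; this yields $B\subseteq\norm{X}(\sat{X}(B))$. Conversely any member of $\norm{X}(\sat{X}(B))$ is equivalent to a non-redundant $(c_0,o_0,x_0)\in\sat{X}(B)$, and tracing $(c_0,o_0,x_0)$ to a generator of $B$ forces $(c_0,o_0,x_0)$ equivalent to that generator, so closure of $B$ under $\equiv_{\ded{T},\Alg{X}}$ puts the member in $B$. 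Well-foundedness of $\prec_{\ded{T},\Alg{X}}$ (Definition~\ref{def:Normalizable}) is what makes Lemma~\ref{lemmaPropNorm} available and prevents $\norm{X}$ from collapsing an infinite descending chain to $\emptyset$.

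With the bijection in hand I check that $\sat{X}$ is a $\Gamma(\Cat{C})$-homomorphism. Fixing an operator $d$ and $B\in\PN{\Alg{X}}(X)$, unfolding $d_{\DMIN(\Alg{X})}=d_{\DMIS(\Alg{X})};\norm{X}$ and using $\sat{X}\circ\norm{X}=\mathrm{id}$ on the saturated set $d_{\DMIS(\Alg{X})}(B)$ reduces the homomorphism equation $\sat{X}(d_{\DMIN(\Alg{X})}(B))=d_{\DMIS(\Alg{X})}(\sat{X}(B))$ to $d_{\DMIS(\Alg{X})}(\sat{X}(B))=d_{\DMIS(\Alg{X})}(B)$; here $\supseteq$ is monotonicity of $d_{\DMIS(\Alg{X})}$ in $B\subseteq\sat{X}(B)$, while $\subseteq$ is precisely transitivity, since any generator of a transition in $\sat{X}(B)$ already comes from $B$. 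Being a componentwise-bijective homomorphism of $\Gamma(\Cat{C})$-algebras, $\sat{X}$ is an isomorphism whose inverse is $\norm{X}$, so $\norm{X}\colon\DMIS(\Alg{X})\to\DMIN(\Alg{X})$ is a homomorphism as well. For naturality I verify, for every $h\colon\Alg{X}\to\Alg{Y}$, the square $\sat{Y}\circ\DMIN(h)=\DMIS(h)\circ\sat{X}$: using the auxiliary identity $\sat{Y}\circ\norm{Y}=\sat{Y}$ on all of $\PD(Y)$ (proved as in the saturated case, over arbitrary sets, again via Lemma~\ref{lemmaPropNorm}) together with $\DMIN(h)=\PDA(h);\norm{Y}$ and the fact that $\DMIS(h)$ and $\PDA(h)$ both act on transitions as $\PD(h)$, the left leg simplifies to $\sat{Y}\circ\PD(h)$ and the right leg to $\PD(h)\circ\sat{X}$; these coincide by the naturality of derivation above. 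Hence $\satuR$ is a natural transformation that is a componentwise isomorphism, i.e.\ a natural isomorphism, and its inverse $\normaR$ is automatically natural.

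The main obstacle is the identity $\norm{X}\circ\sat{X}=\mathrm{id}$ on normalized sets: this is where both defining conditions of a normalized set, absence of dominated transitions \emph{and} closure under $\equiv_{\ded{T},\Alg{X}}$, must be combined with transitivity and with well-foundedness, and where one must carefully separate the equivalent case from the strictly dominating case when tracing a transition of $\sat{X}(B)$ back to a generator in $B$. Everything downstream, namely the homomorphism property, the inverse being a homomorphism, and naturality, is then formal given the two preliminary properties of $\vdash_{\ded{T},\Alg{X}}$.
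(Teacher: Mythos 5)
Your proof is correct, and it is built from the same raw ingredients as the paper's: the componentwise bijection obtained from Lemma~\ref{lemmaPropNorm} and well-foundedness of $\prec_{\ded{T},\Alg{X}}$, transitivity of the derivation relation, and its preservation and reflection along homomorphisms (the paper's Lemmas~\ref{lemma:compositionDer}, \ref{lemma:PresDer} and~\ref{lemma:existence}). The decomposition, however, is genuinely different: the paper proves directly that $\norm{X}$ is a homomorphism and that $\normaR$ is natural, via the workhorse identities $\norm{X};d_{\DMIS(\Alg{X})};\norm{X}=d_{\DMIS(\Alg{X})};\norm{X}$ and $\norm{X};\PDA(h);\norm{Y}=\PDA(h);\norm{Y}$ of Lemma~\ref{lemma:normGood} --- each a fairly delicate double inclusion with ab absurdum steps over dominating chains --- and then lets $\satuR$ inherit its properties from invertibility (Lemmas~\ref{lemma:NormSatHom} and~\ref{lemma:isomorphic}, Figure~\ref{figureNaturalTransTrans}). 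You do the mirror image: you establish the bijection first, prove everything directly for $\sat{X}$, and let $\norm{X}$ inherit the homomorphism property and naturality as the inverse of a sortwise-bijective homomorphism. For naturality your route is noticeably cheaper, because saturation commutes with $\PD(h)$ on the nose by preservation/reflection of $\vdash_{\ded{T},\Alg{X}}$, so the square $\DMIN(h);\sat{Y}=\sat{X};\DMIS(h)$ collapses, via $\norm{Y};\sat{Y}=\sat{Y}$, to $\PD(h);\sat{Y}=\sat{X};\PD(h)$ with no case analysis on redundancy. The one caveat is that this economy does not make Lemma~\ref{lemma:normGood} dispensable in the overall development: the proposition presupposes that $\DMIN$ is a functor, and showing that $\DMIN(h)$ is a homomorphism and that $\DMIN$ preserves composition is precisely where the paper spends those identities (they are reused again in Section~\ref{sec:Algo}); your reorganization shortens the proof of this particular proposition rather than eliminating the underlying technical lemma.
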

As for the case of the natural transformation $\iota$, we use the
fact that that any natural transformation between endofunctors
induces a functor between the corresponding categories of coalgebras
\cite{Rut96}. In the present case, $\normaR: \DMIS \Rightarrow
\DMIN$ induces the functor $\NormR\ : \coalg{\DMIS} \to
\coalg{\DMIN}$ that maps every coalgebra $\<\Alg{X}, \alpha \>$ in
$\< \Alg{X}, \alpha ; \normR{X}\>$ and every cohomomorphism $h$ in
itself. Analogously $\satuR: \DMIN \Rightarrow \DMIS$ induces
$\Fun{SAT_T}\ : \coalg{\DMIN} \to \coalg{\DMIS}$. These two functors
are one the inverse of the other.
\begin{thm}\label{thm:iso}
$\coalg{\DMIS}$ and $\coalg{\DMIN}$ are isomorphic.
\end{thm}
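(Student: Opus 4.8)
The plan is to observe that the theorem is an essentially immediate consequence of Proposition~\ref{prop:iso}, combined with the standard fact~\cite{Rut96} that a natural transformation between endofunctors induces a functor between the associated categories of coalgebras. First I would recall the two induced functors described just above the statement: the natural transformation $\normaR:\DMIS \Rightarrow \DMIN$ induces $\NormR{}:\coalg{\DMIS} \to \coalg{\DMIN}$, sending each coalgebra $\<\Alg{X},\alpha\>$ to $\<\Alg{X}, \alpha;\normR{X}\>$ and acting as the identity on cohomomorphisms, while $\satuR:\DMIN \Rightarrow \DMIS$ induces $\Fun{SAT_T}:\coalg{\DMIN} \to \coalg{\DMIS}$, sending $\<\Alg{X},\beta\>$ to $\<\Alg{X}, \beta;\satR{X}\>$ and again acting as the identity on arrows. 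Since these are genuine functors by the cited result, the only thing left to verify is that they are mutually inverse.

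For the action on objects, I would take an arbitrary $\DMIS$-coalgebra $\<\Alg{X},\alpha\>$ and compute the composite directly:
\[
\Fun{SAT_T}(\NormR{}(\<\Alg{X},\alpha\>)) = \Fun{SAT_T}(\<\Alg{X}, \alpha;\normR{X}\>) = \<\Alg{X}, \alpha;\normR{X};\satR{X}\>.
\]
By Proposition~\ref{prop:iso}, $\normaR$ and $\satuR$ are natural isomorphisms, one the inverse of the other, so in particular $\normR{X};\satR{X} = id_{\DMIS(\Alg{X})}$; hence $\alpha;\normR{X};\satR{X} = \alpha$ and the composite returns $\<\Alg{X},\alpha\>$. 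The symmetric computation, using $\satR{X};\normR{X} = id_{\DMIN(\Alg{X})}$, shows that $\NormR{}(\Fun{SAT_T}(\<\Alg{X},\beta\>)) = \<\Alg{X},\beta\>$ for every $\DMIN$-coalgebra $\<\Alg{X},\beta\>$. For the action on morphisms there is nothing further to check, since both functors leave cohomomorphisms unchanged, so each composite is also the identity on arrows. Therefore $\NormR{}$ and $\Fun{SAT_T}$ are mutually inverse, and $\coalg{\DMIS}$ and $\coalg{\DMIN}$ are isomorphic categories.

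I expect no genuine obstacle at this stage: all the substantive work has already been done in Proposition~\ref{prop:iso}, namely proving that $\normaR$ and $\satuR$ are natural isomorphisms inverse to each other, and the present statement merely transports that fact from the level of endofunctors to the level of coalgebra categories. The one subtlety worth flagging is that the inverse relationship $\normR{X};\satR{X} = id$ and $\satR{X};\normR{X} = id$ depends on the well-foundedness of $\prec_{\ded{T},\Alg{X}}$ (Definition~\ref{def:Normalizable}); the argument therefore relies silently on the standing assumption that we work only with normalizable systems.
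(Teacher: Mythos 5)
Your proposal is correct and follows exactly the paper's own argument: define the induced functors $\NormR{}$ and $\SatR{}$, and deduce from Proposition~\ref{prop:iso} that they are mutually inverse. The explicit computation of the composites on objects and the remark about normalizability are just a more detailed spelling-out of what the paper leaves implicit.
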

Thus $\coalg{\DMIN}$ has a final coalgebra $\final{\DMIN}$ and the
final morphisms from $\< \Alg{X}, \alpha_{\sys{I}} ; \normR{X}\>$
(that is $\NormR\ \< \Alg{X}, \alpha_{\sys{I}}\>$) still
characterizes $\satbis$. This is theoretically very interesting,
since the minimal canonical representatives of $\satbis$ in
$\coalg{\DMIN}$ do not contain any (semantically) redundant
transitions and thus they are much smaller than the (possibly
infinite) minimal representatives in $\coalg{\DMIS}$. Pragmatically,
it allows for an effective procedure for minimizing that we will
discuss in the next section. Notice that minimization is usually
unfeasible in $\coalg{\DMIS}$, since the saturated transitions
systems are usually infinite.
%
%
%still characterizes saturated bisimilarity. SPIEGARE PERCHE' QUESTO
%E' IMPORTANTE

%
%
%
\section{From Normalized Coalgebras to Symbolic Minimization}\label{sec:Algo}
\noindent In \cite{ESOP09}, we have introduced a partition refinement
algorithm for symbolic bisimilarity. First, it creates a partition
$P_0$ equating all the states (with the same interface) of a
symbolic transition system $\beta$ and then, iteratively, refines
this partition by splitting non equivalent states. The algorithm
terminates whenever two subsequent partitions are equivalent.
It computes the partition $P_{n+1}$ as follows: $p$ and $q$ are
equivalent in $P_{n+1}$ iff whenever $p\atr{c}{o}{\sym}p_1$ is
\emph{not-redundant in $P_n$}, then $q\atr{c}{o}{\sym}q_1$ is
\emph{not-redundant in $P_n$} and $p_1,q_1$ are equivalent in $P_n$
(and viceversa). By ``not-redundant in $P_n$'', we mean that no
transition $p\atr{c'}{o'}{\sym}p_1'$ exists such that
$(c',o',p_1')\vdash_{\ded{T},\Alg{X}}(c,o,p_2')$ and $p_2',p_1$ are
equivalent in $P_n$.

Figure \ref{fig:wordPart} shows the partitions computed by the
algorithm for the symbolic transition system $\approxwsc$ of
$\gamma_1=\varepsilon \rhd a.ab.\nil + ab.\varepsilon. \nil$ and
$\gamma_2=\varepsilon \rhd a.ab.\nil$. In $P_0$ all the
configurations are equivalent since they all have the same interface
(more generally, in swc, all the configurations have the same
interface). Then in $P_1$, $\{ab \rhd \varepsilon.\nil, ab \rhd ab.
\nil\}$ are distinguished by all the other configurations because
they are the only ones that can perform a transition with
$\varepsilon$. Analogously, $ab \rhd \nil$ is different from all the
others, because it is the only that performs no transition, while $a
\rhd ab.\nil$ is distinguished because it can perform a $b$
transition. Note that $\gamma_1$ and $\gamma_2$ are equivalent in
$P_1$, because the transition $\gamma_1
\atr{ab}{\bullet}{\approxwsc}ab \rhd \varepsilon. \nil$ is redundant
in $P_0$. Indeed $(a,\bullet, a\rhd ab.\nil)\vdash_{T_{\modelswc},
\WSCA} (ab, \bullet,ab\rhd ab.\nil)$ and $ab \rhd \varepsilon. \nil$
is equivalent to $ab\rhd ab.\nil$ in $P_0$. The same holds for
$P_2$.

Figure \ref{fig:SeqPart} shows the partitions computed by the
algorithm for the \sts\ $\eta$ of the marked nets $\<\onet{N_1},a\>$
and $\<\onet{N_4},l\>$ of Figure \ref{fig:OpenNet}. Note that $a$
and $l$ are equivalent in the partition $P_1$, because the
transition $l\atr{\$^3}{\alpha}{\eta}m$ is redundant in $P_0$.
Indeed, $l\atr{\emptyset}{\alpha}{\eta}q$,
$(\emptyset,\alpha,q)\vdash_{\ded{T_{\modelON}},\Alg{N}}(\$^3,\alpha,q\$^3)$
and $m$ is equivalent to $q\$^3$ in $P_0$. Analogously, for the
other $P_i$.

Figure \ref{fig:SeqPartAsy} shows the partitions computed by the
algorithm for the \sts\ $\approxAsy$ of the asynchronous processes
$\tau.\res{y}{\outp{y}{a}}+ \inp{a}{b}.\outp{a}{b}_1$ and
$\tau.\nil_1$. Since the \sts\ of the former process is infinite,
our algorithm cannot work in reality. We discuss this issue in the
next section and for the time being, we imagine to have a procedure
that can manipulate this infinite \lts. First of all, note that all
the states with different interfaces are different in $P_0$ (while
in the case of swc and open nets, all the states have the same
interface). Moreover, $\tau.\res{y}{\outp{y}{a}}+
\inp{a}{b}.\outp{a}{b}_1$ and $\tau.\nil_1$ are equivalent in the
partition $P_1$, because for all $i\in \omega_0$, the transitions
$\tau.\res{y}{\outp{y}{b}}+ \inp{a}{b}.\outp{a}{b}_1
\atr{-|\outp{a}{i}}{\tau}{\approxAsy}\outp{a}{i}_i$ are redundant in
$P_0$. Indeed, $\tau.\res{y}{\outp{y}{a}}+
\inp{a}{b}.\outp{a}{b}_1\atr{-}{\tau}{\approxAsy}\res{y}{\outp{y}{a}}_1$,
$(-,\tau,\res{y}{\outp{y}{a}}_1)\vdash_{\ded{T_{\modelAsy}},\Alg{A}}(-|\outp{a}{i},\tau,\res{y}{\outp{y}{a}}|\outp{a}{i}_i)$
and $\outp{a}{i}_i$ is equivalent to
$\res{y}{\outp{y}{a}}|\outp{a}{i}_i$ in $P_0$. Analogously, for
$P_2$.

\begin{figure}[t]
\begin{tabular}{c}
$\xymatrix@R=10pt@C=6pt{
\gamma_2 \ar[rrd]^(.4){a} \\
\gamma_1 \ar[rr]^(.4){a} \ar[rrd]_(.4){ab}& &  a\rhd ab.\nil
\ar[rr]^{b} &&
ab \rhd \nil &&  ab \rhd ab.\nil \ar[ll]_{\varepsilon} \\
& & ab\rhd \varepsilon.\nil \ar[rru]_{\varepsilon} }$
\\
\begin{tabular}{l}
\\
\\
$P_0=\{\gamma_1,\gamma_2, a \rhd ab.\nil, ab \rhd \nil, ab \rhd
\varepsilon.\nil, ab \rhd ab. \nil\}$
\\
$P_1=\{\gamma_1,\gamma_2\}\{a \rhd ab.\nil\}\{ab \rhd \nil\} \{ab
\rhd \varepsilon.\nil, ab \rhd ab. \nil\}$
\\
$P_2=\{\gamma_1,\gamma_2\}\{a \rhd ab.\nil\}\{ab \rhd \nil\} \{ab
\rhd \varepsilon.\nil, ab \rhd ab. \nil\}$
\end{tabular}
\end{tabular}
\caption{The partitions computed for $\gamma_1=\varepsilon \rhd
a.ab.\nil + ab.\varepsilon. \nil$ and $\gamma_2=\varepsilon \rhd
a.ab.\nil$.}\label{fig:wordPart}
\end{figure}

\begin{figure}[t]
\begin{tabular}{c@{\hspace{10pt}}c}
$\xymatrix@R=6pt@C=3pt{a \ar[rr]^{\emptyset, \alpha} & & b
\ar@(r,u)_{\$,\beta} \\l \ar[rr]^{\$^3, \alpha} \ar[rrd]_{\emptyset,
\alpha}& & m \ar[rr]^{\emptyset, \beta} &&n\$^2\ar[rr]^{\emptyset,
\beta} && o\$ \ar[rr]^{\emptyset, \beta} &&p \ar@(r,u)|{\$,\beta}\\
& & q \ar@(r,d)|{\$,\beta}&& q\$^1 \ar[ll]_{\emptyset,\beta}&& q\$^2
\ar[ll]_{\emptyset,\beta} && q\$^3 \ar[ll]_{\emptyset,\beta}}$ &
\begin{tabular}{l}
\\
\\
$P_0=\{a,l,b,p,q,q\$^1,o\$,q\$^2,n,q\$^3,m\}$
\\
$P_1=\{a,l\}\{b,p,q\}\{q\$^1,o\$,q\$^2,n,q\$^3,m\}$
\\
$P_2=\{a,l\}\{b,p,q\}\{q\$^1,o\$ \}\{q\$^2,n,q\$^3,m\}$
\\
$P_3=\{a,l\}\{b, p,q\}\{q\$^1,o\$ \}\{q\$^2,n\}\{q\$^3,m\}$
\\
$P_4=\{a,l\}\{b, p,q\}\{q\$^1,o\$ \}\{q\$^2,n\}\{q\$^3,m\}$
\end{tabular}
\end{tabular}\caption{The partitions computed for the marked nets $\<\onet{N_1},a\>$
and $\<\onet{N_4},l\>$.}\label{fig:SeqPart}
\end{figure}

\bigskip
The terminal sequence $1 \tl{}\Fun{N}_{\Fun{T}}(1)\tl{}
\Fun{N}^2_{\Fun{T}}(1)\tl{}\dots$ (where $1$ is a final
$\Gamma(\Cat{C})$-algebra) induces a sequence of
\emph{approximations} of the final morphism from
$\<\Alg{X},\alpha_{\sys{I}};\norm{X}\>$ to $\final{\DMIN}$. The
0-approximation $!_0:\Alg{X}\to 1$ is the unique morphism in
$\Cat{Alg_\Gamma(\Cat{C})}$. The $n+1$-approximation
$!_{n+1}:\Alg{X}\to \Fun{N}^{n+1}_{\Fun{T}}(1)$ is defined as
$\alpha_{\sys{I}};\norm{X}; \DMIN(!_n)$.

%In this section, we show that this algorithm corresponds to
%computing \emph{the approximations} of a final morphism from
%$\<\Alg{X},\alpha_{\sys{I}};\norm{X}\>$ to $\final{\DMIN}$, as
%induced by the terminal sequence $1 \tl{}\Fun{N}_{\Fun{T}}(1)\tl{}
%\Fun{N}^2_{\Fun{T}}(1)\tl{}\dots$ where $1$ is a final
%$\Gamma(\Cat{C})$-algebra.
%
%
%
%\begin{minipage}[l]{.7\linewidth}
%The $n+1$ approximation $!_{n+1}$ is defined as
%$\alpha_{\sys{I}};\norm{X}; \DMIN(!_n)$. It is worth noting that
%$\forall n$ the kernel of $!_{n}$ coincides with the partition $P_n$
%computed by the algorithm. Indeed, %as shown in Appendix
%%\ref{appendix:algo},
%we can safely replace
%$\alpha_{\sys{I}};\norm{X}$ with the \sts\ $\beta$.
%\end{minipage}
%\begin{minipage}[r]{.4\linewidth}
%\xymatrix{ \Alg{X}
%\ar[r]^{!_n}\ar[d]|{\alpha_{\sys{I}};\norm{X}}\ar[rd]^{!_{n+1}}&
%\Fun{N}^n_{\Fun{T}}(1)
%\\ \DMIN(\Alg{X}) \ar[r]_{\DMIN(!_n)}&
%\Fun{N}^{n+1}_\Fun{T}(1) \ar[u]}
%\end{minipage}
%
$$\xymatrix{ \Alg{X}
\ar[r]^{!_n}\ar[d]|{\alpha_{\sys{I}};\norm{X}}\ar[rd]^{!_{n+1}}&
\Fun{N}^n_{\Fun{T}}(1)
\\ \DMIN(\Alg{X}) \ar[r]_{\DMIN(!_n)}&
\Fun{N}^{n+1}_\Fun{T}(1) \ar[u]}$$
In this section, we show that the kernel of the n-approximation
$!_n$ coincides with the partition $P_n$ computed by the algorithm.
Formally, $!_n(p)=!_n(q)$ iff $p$ is equivalent to $q$ in $P_n$.

\begin{prop}\label{prop:SYM->NORM}
Let $\isys$ be a context interactive system. Let $\ded{T}$ and
$\sym$ be, respectively, an inference system and \sts\ that are
sound and complete for $\sys{I}$. Then
$\alpha_{\sys{I}};\norm{X}=\sym;\norm{X}$.
\end{prop}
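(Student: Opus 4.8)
The plan is to prove the equality of the two $\DMIN$-coalgebra structures pointwise, i.e.\ to show that $\norm{X}(\alpha_{\sys{I}}(p)) = \norm{X}(\sym(p))$ for every state $p$, where I read $\sym$ as the function sending $p$ to its set of symbolic transitions $\{(c,o,q) \mid p\atr{c}{o}{\sym}q\}$ and recall that $(c,o,q)\in\alpha_{\sys{I}}(p)$ iff $p\atr{c}{o}{S}q$. The conceptual heart of the argument is the identity
$$\alpha_{\sys{I}}(p) = \sat{X}(\sym(p))\text{,}$$
which I would extract directly from the hypotheses. The inclusion $\supseteq$ is exactly \emph{soundness}: every triple in $\sat{X}(\sym(p))$ is obtained by deriving, through $\vdash_{\ded{T},\Alg{X}}$, from a symbolic transition of $p$, and soundness guarantees that such a derived transition is saturated. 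The inclusion $\subseteq$ is exactly \emph{completeness}: every saturated transition $p\atr{c}{o}{S}q$ is, by completeness, derivable from some symbolic transition of $p$, hence lies in $\sat{X}(\sym(p))$.

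Next I would exploit that $\sat{X}(\sym(p))$ is, by construction of the saturation function, an element of $\PS(X)$ (it is closed under $\vdash_{\ded{T},\Alg{X}}$ by transitivity of derivation). Since Proposition~\ref{prop:iso} tells us that $\norm{X}$ and $\sat{X}$ restrict to mutually inverse bijections between $\PS(X)$ and $\PN{\Alg{X}}(X)$, the desired equation $\norm{X}(\alpha_{\sys{I}}(p)) = \norm{X}(\sym(p))$, whose two sides are normalized sets, is equivalent — after applying the injective map $\sat{X}$ — to
$$\sat{X}(\norm{X}(\alpha_{\sys{I}}(p))) = \sat{X}(\norm{X}(\sym(p)))\text{.}$$
On the left, $\alpha_{\sys{I}}(p)$ is saturated, so $\sat{X}\circ\norm{X}$ is the identity on it and the left-hand side collapses to $\alpha_{\sys{I}}(p) = \sat{X}(\sym(p))$. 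Hence the whole statement reduces to the single equality $\sat{X}(\norm{X}(A)) = \sat{X}(A)$ for $A = \sym(p)$.

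Finally I would prove $\sat{X}\circ\norm{X} = \sat{X}$ on an arbitrary $A\in\PD(X)$, which is where Lemma~\ref{lemmaPropNorm} (and hence normalizability) does the work. The inclusion $\sat{X}(\norm{X}(A)) \subseteq \sat{X}(A)$ follows because $\norm{X}(A) \subseteq \sat{X}(A)$ — each normalized triple is equivalent to, hence derivable from, a non-redundant triple of $A$ — together with monotonicity and idempotence of saturation. For the converse, take $t'\in\sat{X}(A)$, so $t\vdash_{\ded{T},\Alg{X}}t'$ for some $t\in A$; by Lemma~\ref{lemmaPropNorm} either $t\in\norm{X}(A)$, or there is $t^{*}\in\norm{X}(A)$ with $t^{*}\prec_{\ded{T},\Alg{X}}t$, and in the latter case transitivity gives $t^{*}\vdash_{\ded{T},\Alg{X}}t\vdash_{\ded{T},\Alg{X}}t'$; either way $t'\in\sat{X}(\norm{X}(A))$. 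I expect the main obstacle to be precisely this last step: controlling how \emph{redundancy can be created or destroyed} under saturation, which is why well-foundedness of $\prec_{\ded{T},\Alg{X}}$ (Definition~\ref{def:Normalizable}), packaged into Lemma~\ref{lemmaPropNorm}, is indispensable — without it $\norm{X}(A)$ could fail to generate all of $\sat{X}(A)$. A fully self-contained alternative, bypassing Proposition~\ref{prop:iso}, would instead compare the $\prec_{\ded{T},\Alg{X}}$-minimal elements of $\sym(p)$ and of $\sat{X}(\sym(p))$ directly, using only reflexivity and transitivity of $\vdash_{\ded{T},\Alg{X}}$; but I expect the bijection route above to be the shorter write-up.
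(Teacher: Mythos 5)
Your proposal is correct. Note first that the paper, despite announcing that all proofs are in the appendix, never actually supplies a proof of Proposition~\ref{prop:SYM->NORM} (there is no corresponding entry in Appendix~E), so there is no official argument to compare against; the closest the paper comes is the unproved remark inside Lemma~\ref{lemma:NormSatHom} that ``saturation adds everything that is removed by normalization'', which is exactly your identity $\norm{X};\sat{X}=\sat{X}$. Your decomposition is the natural one and every step checks out: soundness and completeness give precisely $\alpha_{\sys{I}}(p)=\sat{X}(\sym(p))$ pointwise (using that $\vdash_{\ded{T},\Alg{X}}=\vdash^{id}_{\ded{T},\Alg{X}}$ coincides with the relation $\vdash_{\infsys}$ of Definition~\ref{def:der}); $\sat{X}(\sym(p))$ is saturated by reflexivity and transitivity of $\vdash_{\ded{T},\Alg{X}}$ (Lemma~\ref{lemma:compositionDer} with identity contexts); injectivity of $\sat{X}$ on normalized sets follows from $\satR{X};\normR{X}=id_{\DMIN(\Alg{X})}$ in Lemma~\ref{lemma:isomorphic}; and the remaining identity $\sat{X}(\norm{X}(A))=\sat{X}(A)$ is exactly where Lemma~\ref{lemmaPropNorm}, hence well-foundedness of $\prec_{\ded{T},\Alg{X}}$, is indispensable --- you are right that normalizability is a genuine (standing, per the sentence preceding Proposition~\ref{prop:iso}) hypothesis here, and without it $\norm{X}(\sym(p))$ could be too small to regenerate $\alpha_{\sys{I}}(p)$. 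Your detour through $\sat{X}$-injectivity is a clean way to avoid directly comparing the non-redundant elements of $\sym(p)$ and of its saturation, which would otherwise require a separate argument that redundancy is neither created nor destroyed up to equivalence by saturating.
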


The above proposition states that the transition systems resulting
from the normalization of the saturated $\alpha_{\sys{I}}$ coincides
with the systems resulting from the normalization of the symbolic
$\sym$. Note that usually $\sym\neq \sym;\norm{X}$, because our
definition of symbolic transition system does not guarantee that
$\sym$ is normalized (according to our definition, also the \cts\ is
a symbolic transition system).
For instance, the symbolic transition system of $\gamma_1$ in Figure
\ref{fig:wordsaturatedSymbolic} (C) is normalized, while the one of
$\varepsilon \rhd u.p+v.p$ in Figure \ref{fig:wordsaturatedSymbolic}
(B) is not.

For all the nets in Figure \ref{fig:OpenNet}, the symbolic
transition system is normalized w.r.t.\ $\Alg{N}$ but, for the net
$S_2$ in Figure \ref{fig:OpenNets1s2}, it is not. Indeed both
$d\atr{\emptyset}{\beta}{\eta}z$ and $d\atr{y}{\beta}{\eta}zy$, and
the former dominates the latter in $\Alg{N}$.
Also in the case of asynchronous $\pi$, the symbolic transition
system $\approxAsy$ is not normalized. Consider the process
$\tau.\outp{c}{d}+ \inp{a}{b}.(\outp{c}{d}|\outp{a}{b})_4$. The
symbolic transition
$\atr{-|\outp{a}{b}}{\tau}{\approxAsy}\outp{c}{d}|\outp{a}{b}_4$ is
dominated by $\atr{-}{\tau}{\approxAsy}\outp{c}{d}_4$.

However, when computing the $n+1$-approximation $!_{n+1}$, we can
simply use $\sym$ instead of $\sym;\norm{X}$. Indeed,
%$$$!_{n+1}=\alpha_{\sys{I}};\norm{X};\DMIN(!_n)=$$ by Prop. \ref{prop:SYM->NORM}
$$\beta;\norm{X};\DMIN(!_n)=\beta;\norm{X};\PDA(!_n);norm_{T,\Fun{N}_{\Fun{T}}^n(1)}=\beta;\PDA(!_n);norm_{T,\Fun{N}_{\Fun{T}}^n(1)}$$
where the former equivalence follows from the definition of
$\DMIN(!_n)$ (Definition \ref{def:normalizedFunctor}) and the latter
follows from Lemma \ref{lemma:normGood}.2 in the Appendix. Thus,
$!_{n+1}=\beta;\PDA(!_n);norm_{T,\Fun{N}_{\Fun{T}}^n(1)}$.

\bigskip

Now we can show by induction that $!_{n+1}(p)=!_{n+1}(q)$ if and
only if $p$ and $q$ belongs to the same partition in $P_{n+1}$.

The base case trivially holds since $!_0:\Alg{X}\to 1$ maps all the
states (with the same interface) into the same element and $P_0$
equates all the states (with the same interface).

For the inductive case, note that by definition $\beta;\PDA(!_n)(p)$
is equal to the set of transitions $(c,o,!_n(p_1))$ such that
$p\atr{c}{o}{\sym}p_1$. Then applying
$norm_{T,\Fun{N}_{\Fun{T}}^n(1)}$ to this set, means to remove all
the transitions $(c,o,!_n(p_1))$ such that there exists a (non
equivalent) transition $(c',o',!_n(p_1'))\in \beta;\PDA(!_n)(p)$
such that $(c',o',!_n(p_1')) \vdash_{T,\Fun{N}_{\Fun{T}}^n(1)}
(c,o,!_n(p_1))$. By Lemma \ref{lemma:PresDer} and Lemma
\ref{lemma:existence} in the Appendix, the latter becomes: there
exists a (non equivalent) transition $(c',o',p_1')\in \beta(p)$ such
that $(c',o',p_1') \vdash_{T,\Alg{X}} (c,o,p_2')$ and
$!_n(p_1)=!_n(p_2')$. By inductive hypothesis, $!_n(p_1)=!_n(p_2')$
iff $p_1$ and $p_2'$ belongs to the same partition in $P_n$. Thus,
the normalization $norm_{T,\Fun{N}_{\Fun{T}}^n(1)}$ junks away all
the redundant transitions in $P_n$.
Summarizing $!_{n+1}(p)$ is equal to the set of transitions
$(c,o,!_n(p_1))$ such that $p\atr{c}{o}{\sym}p_1$ and the latter is
not redundant in $P_n$. Therefore, $!_{n+1}(p)=!_{n+1}(q)$, iff
whenever $p\atr{c}{o}{\sym}p_1$ is \emph{not-redundant in $P_n$},
then $q\atr{c}{o}{\sym}q_1$ is \emph{not-redundant in $P_n$} and
$p_1,q_1$ are equivalent in $P_n$.% (meaning that
%$!_n(p_1)=!_n(q_1))$.

\begin{figure}[t]
\begin{tabular}{c}
%$\xymatrix@R=6pt@C=5.5pt{& &\tau.\nil \ar@(l,u)[rrd]^(.7){-, \tau} & &  \\
%\tau.\res{y}{\outp{y}{b}}+ \inp{a}{b}.\outp{a}{b}
%\ar[rr]^(.7){-|\outp{a}{b}, \tau} \ar[rrd]_{-, \tau}& & \outp{a}{b}
%\ar[rr]^{-, \outp{a}{b}} && \nil \\
%& & \res{y}{\outp{y}{b}} && \res{y}{\outp{y}{b}}|\outp{a}{b}
%\ar[ll]_{-,\outp{a}{b}}}$ \\
$\xymatrix@R=6pt@C=5.5pt{ & & \dots \\& & \outp{a}{b}_2 \ar[rr]^{-,
\outp{a}{b}} & & \nil_2 &&\\ \tau.\res{y}{\outp{y}{a}}+
\inp{a}{b}.\outp{a}{b}_1 \ar@(ur,l)[rru]^{-|\outp{a}{b},\tau}
\ar@(u,l)[rruu]^{\dots} \ar[rr]^(.7){-|\outp{a}{a},\tau}
\ar[rrd]_{-, \tau} & & \outp{a}{a}_1
\ar[rr]^{-, \outp{a}{a}} & & \nil_1 &&  \tau.\nil_1 \ar[ll]_{-, \tau} \\
& & \res{y}{\outp{y}{a}}_1 && \res{y}{\outp{y}{a}}|\outp{a}{a}_1 \ar[ll]_{-, \outp{a}{a}} & &  \\
& & \res{y}{\outp{y}{a}}_2 && \res{y}{\outp{y}{a}}|\outp{a}{b}_2
\ar[ll]_{-,\outp{a}{b}} }$\\
$\begin{array}{ll}
\\
P_0= &\{\tau.\nil_1, \tau.\res{y}{\outp{y}{a}}+
\inp{a}{b}.\outp{a}{b}_1,\nil_1,\res{y}{\outp{y}{a}}_1,\res{y}{\outp{y}{a}}|\outp{a}{a}_1,\outp{a}{a}_1\}\\
& \{\nil_2,\res{y}{\outp{y}{a}}_2,\res{y}{\outp{y}{a}}|\outp{a}{b}_2,\outp{a}{b}_2\} \{\nil_3,\res{y}{\outp{y}{a}}_3,\res{y}{\outp{y}{a}}|\outp{a}{c}_3,\outp{a}{c}_3\} \dots\\
P_1= &\{\tau.\nil_1, \tau.\res{y}{\outp{y}{a}}+
\inp{a}{b}.\outp{a}{b}_1\}\{\nil_1,\res{y}{\outp{y}{a}}_1\}\{\res{y}{\outp{y}{a}}|\outp{a}{a}_1,\outp{a}{a}_1\}\\
& \{\nil_2,\res{y}{\outp{y}{a}}_2\}\{\res{y}{\outp{y}{a}}|\outp{a}{b}_2,\outp{a}{b}_2\}\{\nil_3,\res{y}{\outp{y}{a}}_3\}\{\res{y}{\outp{y}{a}}|\outp{a}{c}_3,\outp{a}{c}_3\} \dots\\
P_2= &\{\tau.\nil_1, \tau.\res{y}{\outp{y}{a}}+
\inp{a}{b}.\outp{a}{b}_1\}\{\nil_1,\res{y}{\outp{y}{a}}_1\}\{\res{y}{\outp{y}{a}}|\outp{a}{a}_1,\outp{a}{a}_1\}\\
& \{\nil_2,\res{y}{\outp{y}{a}}_2\}\{\res{y}{\outp{y}{a}}|\outp{a}{b}_2,\outp{a}{b}_2\} \{\nil_3,\res{y}{\outp{y}{a}}_3\}\{\res{y}{\outp{y}{a}}|\outp{a}{c}_3,\outp{a}{c}_3\} \dots\\
\\
\end{array}$
\end{tabular}
\caption{The partitions computed for $\tau.\res{y}{\outp{y}{b}}+ \inp{a}{b}.\outp{a}{b}$ and $\tau.\nil$.}\label{fig:SeqPartAsy}
\end{figure}

\bigskip

We end up this section by showing ``in algorithmic terms'' why
normalized coalgebras are not bialgebras. By virtue of Proposition
\ref{prop:lifting}.2, minimization in bialgebras can be performed,
by first forgetting the algebraic structure, and then minimizing in
$\set$. Minimization in $\coalg{\DMIN}$, instead, heavily relies on
the algebraic structure. Indeed in Figure \ref{fig:wordPart}, the
algorithm needs $ab \rhd ab.\nil$ to compute the partition of
$\gamma_1$. Note that $\gamma_1$ cannot reach through the symbolic
transitions $ab \rhd ab.\nil$, but this is needed for checking if
$\gamma_1 \atr{ab}{\bullet}{\approxwsc}ab \rhd \varepsilon. \nil$ is
redundant. In Figure \ref{fig:SeqPart}, in order to compute the
partitions of $l$, the algorithm needs the state $q\$^3$ that is not
reachable from $l$. Also in Figure \ref{fig:SeqPartAsy}, we need the
state $\res{y}{\outp{y}{a}}|\outp{a}{a}_1$ that is not reachable
from $\tau.\res{y}{\outp{y}{a}}+ \inp{a}{b}.\outp{a}{b}_1$.

Summarizing, given a state $x$ of a normalized coalgebra $\<\Alg{X},\sym;\norm{X}\>$, in order to compute the partitions on the states reachable from $x$,
the algorithm needs ``some'' states that are not reachable 
but that are somehow connect via the algebraic structure $\Alg{X}$ (such as the states described above).
In \cite{ESOP09}, we have shown that the number of the needed ``extra states'' is finite in all the interesting cases
and it can be computed in the initialization phase of the algorithm. 
Moreover, it is important to remark here that $\Alg{X}$ is the only 
algebraic structure that is involved in the algorithm: as described above, 
the normalization $norm_{T,\Fun{N}_{\Fun{T}}^n(1)}$ (at iteration $n+1$) 
can be computed by just using the algebra $\Alg{X}$.

\section{Conclusions and related works}\label{sec:conclusion}
\noindent The paper introduces two coalgebraic models for context interactive
systems \cite{FOSSACS08}. In the first one, the saturated transition
system is an ordinary structured coalgebra
$\<\Alg{X},\alpha_{\sys{I}}\>$ and its final morphism induces
$\satbis$. The second model is the normalized coalgebra
$\<\Alg{X},\alpha_{\sys{I}};\norm{X}\>$ that is obtained by pruning
all the redundant transitions from the first one. The equivalence
induced by its final morphism is still $\satbis$, but this
characterization is much more convenient. Indeed, in the final
normalized coalgebra all the (semantically) redundant transitions
are eliminated. Moreover, minimization is usually feasible with
normalized coalgebras and coincides with the symbolic minimization
algorithm introduced in \cite{ESOP09}.

As a lateral result, we have obtained coalgebraic models for both
open Petri nets and asynchronous $\pi$-calculus.

Unfortunately, symbolic minimization is unfeasible in the case of
asynchronous $\pi$, because the symbolic transition system is
infinite. Indeed, in the definition of $\sim^1$ (Definition
\ref{def:1bis}), a process is put in parallel with all possible
outputs $\outp{a}{b}$. Our symbolic transition system eliminates all
those outputs whose subjects $a$ are not needed, but yet it
considers all the possible objects $b$. We could have defined a
different \sts\ that considers only those objects that are strictly
needed but, anyway, in the asynchronous $\pi$-calculus there are
several other sources of infiniteness. Amongst these, one always
appears when considering ``nominal calculi'' where systems are able
to generate and communicate names: every time that a system
generates a new name and extrudes it, the system goes in a new state
that is different from all the previous. HD-Automata \cite{HD} are
peculiar \lts s that allow to garbage collect names and avoid this
further source of infiniteness. As future work, we would like to
extend our framework to HD-Automata, so that we will be able to
handle systems that generates infinitely many names. In particular
we conjecture that the resulting minimization algorithm will
generalize both \cite{HDN} and \cite{PistoreSangiorgi} that provide
a partition refinement algorithm for asynchronous and open
bisimilarity. The reader is referred to \cite{ESOP09} for a more
detailed comparison with \cite{HDN} and \cite{PistoreSangiorgi}.

Concerning open bisimilarity \cite{Sang96}, a coalgebraic model has
been proposed in \cite{GhaniYV04}. However, this is the saturated
version, i.e., the one that takes into account all the possible
substitutions. In \cite{FOSSACS08}, we have given a context
interactive system for open $\pi$-calculus, and thus our work also
provides a coalgebraic model for the ``efficient characterization''
of open bisimilarity.

Besides open Petri nets, asynchronous and open $\pi$-calculus,
context interactive systems also generalize \emph{Leifer and
Milner's reactive systems} \cite{RobinCONCUR00}. The main novelty of
our framework consists in having observations and inference rules.
The latter generalize the notion of \emph{reactive contexts} of
\cite{RobinCONCUR00}. Indeed $c$ is reactive iff the following
inference rule holds.
$$\deduz{p\tr{\tau}q}{c(p)\tr{\tau}c(q)}$$
Concretely, the main advantage of our framework w.r.t.\
\cite{RobinCONCUR00} is that we do not need the existence of RPOs
and thus we can avoid those encodings into bigraphs
\cite{RobinCONCUR01} and borrowed contexts \cite{BARBARAFOSSACS04}.
The main disadvantage is that our framework does not provide a
constructive definition for the \lts : constructing a sound and
complete symbolic transition system is left to the ingenuity of the
researcher. We refer the reader to \cite{FOSSACS08,BonchiThesis} for
a detailed comparison between context interactive systems and
reactive systems.

In \cite{FOSSACS09}, the first author together with Gadducci and
Monreale has shown a reactive system for mobile ambients
\cite{Ambient}. Thus, the present work indirectly provides also a
coalgebraic semantics for mobile ambients. A coalgebraic model for
this calculus has been previously proposed in \cite{HausmannMS06}
but it characterizes \emph{action bisimilarity} that is strictly
included into \emph{reduction barbed congruence}
\cite{MerroNardelli}. Action bisimilarity is defined as the ordinary
bisimilarity on the symbolic transition system and thus it is an
instance of what we have called syntactic bisimilarity ($\sim^W$).

Besides their large applicability, normalized coalgebras are
interesting for a more theoretical reason: at our knowledge, these
are the first example in literature of structured coalgebras that
are not bialgebras. Indeed, both the definitions of saturated and
normalized set of transitions (Definition \ref{def:saturatedset} and
\ref{def:normalization}, respectively) strongly rely on the
underlying algebraic structures. %For this reason, both the functors
%$\DMIS$ and $\DMIN$ are not the lifting of some endofunctor on
%$\set$.
This is evident in the minimization algorithm in
$\coalg{\DMIN}$ that heavily employs the algebraic structure.

\section*{Acknowledgement}

\noindent The authors would like to thank the anonymous referees (of this and
the preliminary version in \cite{CALCO09}) for the precious comments
that have improved the quality of the paper.

%The authors would like to thank the referees of \cite{CALCO09} for
%the precious comments on the previous version of the paper.

\appendix

\section{Normalizable Systems}\label{appendixNormalizable}

\noindent In this appendix we show that the context interactive system
$\modelON = \modelONAll$ is normalizable w.r.t.\ $\ded{T_{\modelON}}$
and that $\modelAsy=\modelAsyAll$ is normalizable w.r.t.\
$\ded{T_{\modelAsy}}$ (all these are defined in Section
\ref{sec:CIS}). Then we show an example of a not normalizable
systems.

\begin{prop}
$\modelON = \modelONAll$ is normalizable w.r.t.\
$\ded{T_{\modelON}}$.
\end{prop}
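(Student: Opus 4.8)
The statement unfolds, via Definition~\ref{def:Normalizable}, to the requirement that $\prec_{\infON,\Alg{X}}$ be well founded for \emph{every} $\Gamma(\onmssign)$-algebra $\Alg{X}$, not merely for the algebra $\ONPA$ of marked nets. The plan is to exhibit a measure $\mu$ from transitions into the well-ordered set $(\omega,<)$ that strictly decreases along $\prec_{\infON,\Alg{X}}$; the natural candidate is the total number of tokens in the context component, $\mu(c,o,q)=\|c\|$, where $\|i\|=\sum_{s}i(s)$ for a multiset $i\in I^{\oplus}$ (a natural number, since multisets are finite). The decisive feature is that $\mu$ does not mention $\Alg{X}$ at all, so a single argument will cover the quantification over all algebras.

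First I would pin down the shape of the derivation relation $\vdash_{\infON,\Alg{X}}$ for an arbitrary $\Alg{X}$. The only schema of $\infON$ is $i\UTtr{\lambda}{\lambda}i$ with $i\in I^{\oplus}$; composing instances of it and the identity rules shows that $\clos{\infON}$ consists exactly of the rules $k\UTtr{\lambda}{\lambda}k$ for $k\in I^{\oplus}$, together with the pure identities $id_I\UTtr{\lambda}{\lambda}id_J$. In every case the observation is left fixed and the source and target labels coincide. Feeding this into Definition~\ref{def:exDer} with $d=id$ yields the clean characterization that $(c_1,o_1,q_1)\vdash_{\infON,\Alg{X}}(c_2,o_2,q_2)$ iff $o_1=o_2$ and there is a multiset $k$ (on the common interface $I$) with $c_2=c_1\oplus k$ and $q_2=k_{\Alg{X}}(q_1)$, where the identity rules correspond to the case $k=\emptyset$. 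Thus every derivation enlarges the context by a multiset and applies the corresponding operation of $\Alg{X}$ to the arriving state.

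Second I would read off strict domination and its effect on the measure. If $(c',o',q')\prec_{\infON,\Alg{X}}(c,o,q)$ then in particular $(c',o',q')\vdash_{\infON,\Alg{X}}(c,o,q)$, so $c=c'\oplus k$ and $q=k_{\Alg{X}}(q')$ for some $k$. Were $k=\emptyset$, the identity arrow $id_I=\emptyset$, whose action $\emptyset_{\Alg{X}}$ is the identity function on $X_I$, would give $c=c'$ and $q=q'$, and the reverse derivation would hold as well, contradicting strictness; hence $k\neq\emptyset$. Since $I^{\oplus}$ is the \emph{free} commutative monoid over $I$ -- in particular cancellative and with no nontrivial invertible elements -- $k\neq\emptyset$ forces $\|c\|=\|c'\|+\|k\|>\|c'\|$. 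So the dominating transition has strictly smaller measure: $\mu(c',o',q')<\mu(c,o,q)$.

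Finally, the two steps combine: along any sequence $X_0,X_1,\dots$ with $X_{n+1}\prec_{\infON,\Alg{X}}X_n$ we obtain $\mu(X_{n+1})<\mu(X_n)$, a strictly decreasing sequence of natural numbers, which cannot be infinite. Hence $\prec_{\infON,\Alg{X}}$ admits no infinite descending chain and is well founded for every $\Alg{X}$, so $\modelON$ is normalizable w.r.t.\ $\infON$. The only real subtlety I anticipate is the bookkeeping in the first step -- correctly describing $\clos{\infON}$ and the role of $e,e'$ and of the source/target interfaces in the extended derivation of Definition~\ref{def:exDer}. Once the characterization of $\vdash_{\infON,\Alg{X}}$ is secured, the measure argument is routine and, crucially, uniform in $\Alg{X}$.
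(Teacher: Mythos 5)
Your proof is correct and follows essentially the same route as the paper's: the paper likewise observes that $(c_1,\lambda,x_1)\prec_{\ded{T_{\modelON}},\Alg{Y}}(c_2,\lambda',x_2)$ forces the multiset $c_1$ to be strictly included in $c_2$, and concludes from finiteness of multisets that descending chains terminate. Your version merely makes this explicit via the cardinality measure $\|c\|$ and spells out the characterization of $\vdash_{\ded{T_{\modelON}},\Alg{X}}$ and the reason strictness rules out $k=\emptyset$ — details the paper leaves implicit.
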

\begin{proof}
Recall that arrows of $\onmssign$ are multisets (over sets of input
places) and that $c;d=e$ if and only if $c\oplus d=e$. Then, for all
$\Gamma(\onmssign)$-algebra $\Alg{Y}$, $$(c_1,\lambda,
x_1)\prec_{\ded{T_{\modelON}},\Alg{Y}}(c_2,\lambda',x_2)$$ only if
the multiset $c_1$ is strictly included into the multiset $c_2$.
Since all multisets are finite also the descending chains must be
finite.
\end{proof}

\begin{prop}
$\modelAsy = \modelAsyAll$ is normalizable w.r.t.\
$\ded{T_{\modelAsy}}$.
\end{prop}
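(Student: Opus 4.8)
The plan is to argue exactly as in the preceding open-net case, replacing the cardinality of a multiset of tokens by a well-founded measure on the arrows of $\amssign$. Since every arrow of $\amssign$ is a parallel composition of output messages $\outp{i}{j}$ and composition is syntactic juxtaposition, let $\sharp(c)$ denote the number of such outputs occurring in $c$. Then $\sharp$ is additive, $\sharp(c;d)=\sharp(c)+\sharp(d)$, and $\sharp(c)=0$ precisely when $c=id$ (the hole $-$). I claim that every $\prec_{\ded{T_{\modelAsy}},\Alg{X}}$-step strictly increases $\sharp$, which immediately rules out infinite descending chains and hence yields well-foundedness for every $\Alg{X}$.

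First I would pin down the shape of $\vdash_{\ded{T_{\modelAsy}},\Alg{X}}$. Each of the three base schemata defining $\ded{T_{\modelAsy}}$ leaves the observation untouched and sends a source context $e$ to a target $e'$ with $\sharp(e')=\sharp(e)$: for the $\tau$- and output-rules $e'=e$, while for the bound-output rule $e'=e^{+1}$, and the interface shift $(\cdot)^{+1}$ changes neither the number of outputs nor the property of being $id$. These invariants (observation preserved, $\sharp(e')=\sharp(e)$, and $e=id\iff e'=id$) survive the identity rules and are closed under composition, so they hold for every $e\UTtr{o}{o'}e'\in\clos{\ded{T_{\modelAsy}}}$. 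Consequently, whenever $(c_1,o_1,x_1)\vdash_{\ded{T_{\modelAsy}},\Alg{X}}(c_2,o_2,x_2)$ we have $o_1=o_2$ and $c_2=c_1;e$ for some $e$, whence $\sharp(c_2)=\sharp(c_1)+\sharp(e)\geq\sharp(c_1)$.

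The crux is strictness. If $(c_1,o,x_1)\prec_{\ded{T_{\modelAsy}},\Alg{X}}(c_2,o,x_2)$ but $\sharp(c_1)=\sharp(c_2)$, then $\sharp(e)=0$, so $e=id$ and therefore $e'=id$; using the equation $id_{\Alg{X}}(x)=x$, which holds in every $\Gamma(\amssign)$-algebra, we get $x_2=e'_{\Alg{X}}(x_1)=x_1$ and $c_2=c_1;id=c_1$. But then the two transitions are identical, so $(c_2,o,x_2)\vdash_{\ded{T_{\modelAsy}},\Alg{X}}(c_1,o,x_1)$ as well, contradicting strict domination. Hence each step of a chain $\dots\prec_{\ded{T_{\modelAsy}},\Alg{X}}(c_2,o,x_2)\prec_{\ded{T_{\modelAsy}},\Alg{X}}(c_1,o,x_1)$ fixes the observation and strictly decreases $\sharp$ from right to left, so it would yield an infinite strictly decreasing sequence $\sharp(c_1)>\sharp(c_2)>\cdots$ of natural numbers, which is absurd. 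As nothing in the argument depends on the chosen algebra, the relation $\prec_{\ded{T_{\modelAsy}},\Alg{X}}$ is well founded for all $\Alg{X}\in\Cat{Alg_{\Gamma(\amssign)}}$, i.e., $\modelAsy$ is normalizable. The one place demanding attention is the bound-output schema, whose target context $c^{+1}$ differs syntactically from $c$; checking that this shift preserves both $\sharp$ and the degenerate $e=id$ case is what makes the otherwise routine descent argument go through.
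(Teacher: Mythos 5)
Your proof is correct and follows essentially the same route as the paper's: the paper likewise observes that $(c_1,o,q_1)\prec_{\ded{T_{\modelAsy}},\Alg{Y}}(c_2,o',q_2)$ forces $c_2$ to be $c_1$ in parallel with further outputs, and concludes finiteness of descending chains from finiteness of contexts. Your version merely makes the measure $\sharp$ and the strictness step (ruling out the $e=id$ case, including under the $(\cdot)^{+1}$ shift of the bound-output rule) explicit where the paper leaves them implicit.
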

\begin{proof}
Recall that arrows of $\amssign$ are contexts representing parallel
output processes and that $c;d=e$ if and only if $e$ is the
syntactic composition of $c$ with $d$. Then, for all
$\Gamma(\amssign)$-algebra $\Alg{Y}$,
$$(c_1,o,
q_1)\prec_{\ded{T_{\modelAsy}},\Alg{Y}}(c_2,o',q_2)$$ only if the
context $c_2$ is the parallel composition of $c_1$ with some other
outputs. Since all contexts are finite then the descending chains
must be also finite.
\end{proof}

\begin{exa}
As an example of not normalizable context interactive system
consider the category $\Cat{NAT_{\geq}}$ defined as follow:
\begin{enumerate}[$\bullet$]
\item objects are natural numbers and $\infty$,
\item there is an arrow $n \to m$, if $n \geq m$ or if $n=\infty$.
\end{enumerate}
Since for any two objects $n,m$ there is only one arrow, we call
this arrow just as $n \to m$. Consider now a context interactive
system $\sys{NAT_{\geq}}=\< \Cat{NAT_{\geq}},\Alg{X},O,tr\>$ for
some $\Alg{X},O,tr$. Let $T$ be the tile system that states that all
contexts preserve transitions.

We have that $\sys{NAT_{\geq}}$ is not normalizable with respect to
$T$. Indeed, let $\Alg{F}$ be the final $\Gamma(\Cat{NAT_{\geq}})$.
In this algebra there is only one element $\star$ for each sort
(natural number), and an arrow $n \to m$ of $\Cat{NAT_{\geq}}$ is
interpreted in the function mapping $\star$ of sort $n$ into $\star$
of sort $m$. Since $\infty \to n$ can be decomposed in $\infty \to
n+1 \to n$, then $$(\infty \to n+1, l, \star)
\prec_{T,\Alg{F}}(\infty \to n, l, \star)\text{.}$$ This trivially
leads to an infinite descending chain.
\end{exa}

\section{Proofs of Section \ref{sec:CIS}}

\begin{oldprop}{prop:coarsest}
$\satbis$ is the coarsest bisimulation congruence.
\end{oldprop}
\begin{proof}
Since $\satbis$ is a saturated bisimulation, then it is also a
congruence: if $p\satbis q$, then for all contexts $c_{\Alg{X}}$, it
holds that $c_{\Alg{X}}(p)\satbis c_{\Alg{X}}(q)$.
%
%
%Thus there exists $p,q \in X_i$ and a context $c\in \Cat{C}[i,j]$
%such that $p \satbis_i q$ and $c_{\Alg{X}}(p) \not \satbis_j
%c_{\Alg{X}}(q)$. This means that there exists $d\in \Cat{C}[j,k]$
%such that $d_{\Alg{X}}(c_{\Alg{X}}(p))\tr{o}p'$ and
%$d_{\Alg{X}}(c_{\Alg{X}}(q))$ cannot. Now, since $c\in\Cat{C}[i,j]$
%and $d\in \Cat{C}[j,k]$, then there exists the context $c;d\in
%\Cat{C}[i,k]$. Thus $p$ and $q$ are not saturated bisimilar, since
%there are distinguished by the contexts $c;d$. This is in contrast
%with our hypothesis and it proves that $\satbis$ is a congruence.

In order to prove that it is the coarsest bisimulation congruence,
we prove that any bisimulation congruence $R$ is a saturated
bisimulation.

Suppose that $p \; R \; q$. Suppose that $c_{\Alg{X}}(p)\tr{o}p'$.
Since $R$ is a congruence, then $c_{\Alg{X}}(p)\; R\;
c_{\Alg{X}}(q)$. Since $R$ is a bisimulation
$c_{\Alg{X}}(q)\tr{o}q'$ and $p' R q'$. Thus $R$ is a saturated
bisimulation.
\end{proof}

\begin{oldprop}{prop:netsaturated}
Let $\<\onet{N_1},m_1\>$ and $\<\onet{N_2},m_2\>$ be two marked nets
both with interface $I$. Thus $\<\onet{N_1},m_1\> \sim^{N}
\<\onet{N_2},m_2\>$ iff $\<\onet{N_1},m_1\>
\sim^{S}_I\<\onet{N_2},m_2\>$.
\end{oldprop}
\begin{proof}
The definition of $\satbis$ instantiated to the context interactive
system $\modelON$, requires that $\<\onet{N_1},m_1\>$ and
$\<\onet{N_2},m_2\>$ (a) make the same transitions with the rule
$\rulelabel{tr}$ and (b) they are still equivalent when adding
multisets $i\in I^{\oplus}$. The definition of $\sim^{N}$ instead
requires that the two nets perform the same transitions with both
the rule $\rulelabel{tr}$ and the rule $\rulelabel{in}$. But the
latter rule just adds multisets $i\in I^{\oplus}$ and thus it is
just the same of point (b) above.

\end{proof}

\begin{oldprop}{prop:asysaturated}
Let $p,q$ be asynchronous $\pi$-processes, and let $n\geq max
\;\fn{p \cup q}$. Then $p\sim^{1} q$ iff $p_{n}\satbis_{n} q_{n}$.
\end{oldprop}
\begin{proof}
Let $R=\{(p,q)\; \mid\; p_n\satbis_{n} q_n\;\;n\geq \;max \;\fn{p
\cup q} \}$.
In order to prove that $p_n\satbis_{n} q_n$ implies $p\sim^{1} q$,
we prove that $R$ is an 1-bisimulation, i.e., an
$o\tau$-bisimulation closed under composition with output processes.
%
%Suppose that $p \tr{\mu}p'$.
%If $\mu$ is an output or a $\tau$, then by definition of $\trAsy$,
%$p_n \asytr{\mu}p'_n$. Now, since $p_n \satbis_n q_n$ then $q_n
%\asytr{\mu}q'_n$ with $p'_n \satbis_n q'_n$ and thus $p' R q'$.
%Moreover, by definition of $\trAsy$, $q \tr{\mu}q'$. If $\mu$ is a
%bound output, i.e.,
Suppose that $p \tr{\boutb{i}{j}}p'$ (the cases of $\tau$ and output
are easier). First of all observe that $p_n\satbis_n q_n$ implies
that $\forall m \geq n$ $p_m\satbis_m q_m$. Now since $j$ is fresh,
we have that $j-1\geq n$, and thus $p_{j-1}\satbis_{j-1} q_{j-1}$.
By definition of $\trAsy$, we have that
$p_{j-1}\asytr{\boutb{i}{}}p'_{j}$ and, since $p_{j-1} \satbis_{j-1}
q_{j-1}$, it follows that $q_{j-1}\asytr{\boutb{i}{}}q'_{j}$ and
$p'_j \satbis_j q'_j$ and then, $p'Rq'$. Again by definition of
$\trAsy$, we have that $q\tr{\boutb{i}{j}}q'$. This prove that $R$
is an $o\tau$-bisimulation. Now we have to prove that it is closed
under composition with output processes, but this is immediate since
$\satbis$ is a congruence w.r.t.\ composition with output processes.

Let $R$ be the $\nat$-sorted relation, such that $\forall n \in
\nat$, $R_n=\{(p_n,q_n)\; \mid\; p\sim^{1} q,\;\;n\geq \;max \;\fn{p
\cup q} \}$.
In order to prove that $p\sim^{1} q$ implies $p_n\satbis_{n} q_n$,
we prove that $R$ is a saturated bisimulation.
Let $c\in \amssign[n,m]$ and suppose that
$c(p_n)_m\asytr{\boutb{i}{}}p'_{m+1}$ (the case of $\tau$ and output
are easier). By definition of $\trAsy$, $c(p)\tr{\boutb{i}{m+1}}p'$.
Now, since $p \sim^{1}q$, by definition of 1-bisimulation, it
follows that $c(p) \sim^{1}c(q)$ because contexts $c$ are just
parallel output processes. Thus $c(q)\tr{\boutb{i}{m+1}}q'$ and
$p'\sim^{1}q'$. By definition of $\trAsy$, it follows that $c(q_n)
\asytr{\boutb{i}{}}q'_{m+1}$ and, by definition of $R$, that
$p'_{m+1}R_{m+1}q'_{m+1}$.
\end{proof}

\begin{oldthm}{theo:main}
Let $\sys{I}$ be a context interactive system, $\beta$ a context
transition system and $\infsys$ an inference system. If $\beta$ and
$\infsys$ are sound and complete w.r.t.\ $\sys{I}$, then $\symbis =
\satbis$.
\end{oldthm}
\begin{proof}
Let $R=\{R_i \subseteq X_i \times X_i \mid i \in \sort \}$ be the
$\sort$-sorted family of relations, such that $\forall j \in \sort$,
$$R_j=\{(c(p_i),c(q_i))\;\mid\;c \in \Cat{C}[i,j],\;\; p_i \symbis_i
q_i\}\text{.}$$ In order to prove that $\symbis \subseteq \satbis$
we prove that $R$ is a saturated bisimulation.
Suppose that $a_j R_j b_j$ thus there exists $c\in \Cat{C}[i,j]$
such that $c(p_i)=a_j$, $c(q_i)=b_j$ and $p_i \symbis_i q_i$.

Hereafter, in order to make lighter the notation, we avoid to
specify the sort of processes and contexts. Thus, $p,q,a,b$ stand
for, respectively, $p_i,q_i,a_j,b_j$.

Suppose that $d(a)=d(c(p)) \tr{l_1}p_1$ then, by definition of \cts,
$p \satr{c;d}{l_1} p_1$. By completeness of $\beta$ and $\infsys$,
we have that $p \atr{c_2}{l_2}{\sym} p_2$ such that $p
\atr{c_2}{l_2}{\sym} p_2 \vdash_{\infsys} p \atr{c;d}{l_1}{} p_1$,
i.e., $\exists e,e_1\in ||\Cat{C}||$ such that:
\begin{enumerate}[$\bullet$]
\item $e\UTtr{l_2}{l_1}e_1\in \Phi(\infsys)$,
\item $c_2;e=c;d$,
\item $e_1(p_2)=p_1$.
\end{enumerate}

Since $p \symbis q$, $q\atr{c_3}{l_3}{\sym}q_3 \vdash_{\infsys}
q\atr{c_2}{l_2}{}q_2$ and $p_2 \symbis q_2$. From the former we have
that $\exists f, f_1\in ||\Cat{C}||$ such that:
\begin{enumerate}[$\bullet$]
\item $f\UTtr{l_3}{l_2}f_1\in \Phi(\infsys)$,
\item $c_3;f=c_2$,
\item $f_1(q_3)=q_2$.
\end{enumerate}

Since $\Phi(\infsys)$ is closed by composition, then
$f;e\UTtr{l_3}{l_1}f_1;e_1\in \Phi(\infsys)$. Moreover
$c_3;f;e=c_2;e=c;d$. Thus $q\atr{c_3}{l_3}{\sym}q_3 \vdash_T
q\atr{c;d}{l_1}{}e_1(q_2)$. Since $\sym$ and $\infsys$ are sound, it
follows that $q\satr{c;d}{l_1}e_1(q_2)$, i.e.,
$d(b)\tr{l_1}e_1(q_2)$. Since $p_2 \symbis q_2$, then $e_1(p_2)\;R
\;e_1(q_2)$, i.e., $p_1\;R \;e_1(q_2)$.

For proving that $\satbis \subseteq \symbis$, take $p\satbis q$: if
$p \atr{c}{l}{\sym}p_1$ then also $p \satr{c}{l}p_1$ and, since
$p\satbis q$, $q \satr{c}{l}q_1$ with $p_1\satbis q_1$. By
completeness of $\sym$, we have that $q \atr{c_1}{l_1}{\sym}q_1
\vdash_T q \satr{c}{l}q_1$.
\end{proof}

\begin{oldprop}{prop:wordsoundcomplete}
$\approxwsc$ and $\infwsc$ are sound and complete w.r.t.\
$\modelswc$.
\end{oldprop}
\begin{proof}
Proving soundness is quite easy. Just observe that (1) if $u\rhd p
\atr{v}{\bullet}{\approxwsc} u'\rhd p'$ then $uv \rhd p \tr{\bullet}
u'\rhd p'$ and (2) the ``monotonicity property'' described in
Section \ref{sec:swc} holds.

For proving completeness we suppose that $w_{\WSCA}(\gamma)
\tr{\bullet}\gamma''$ (i.e., $\gamma \atr{w}{\bullet}{S}\gamma''$)
and we proceed by induction on the structure of the process of the
configuration $\gamma$. The inductive case for $\gamma= u \rhd
p_1+p_2$ is trivial. The base case is as follows.

%\begin{enumerate}[$\bullet$]
Take $\gamma= u_1 \rhd u_2.p$ (thus $w_{\WSCA}(\gamma) = u_1w \rhd
u_2.p$). Note that $\gamma''$ must be equal to $u_1w \rhd p$ (by the
rules defining $\trwsc$). Since the configuration $u_1w \rhd u_2.p$
perform a transition then $u_2$ is a prefix of $u_1w$ (again by the
rules defining $\trwsc$). There are two possible cases: either $u_2$
is a prefix of $u_1$ or not.

In the former case, by the leftmost rule (defining $\approxwsc$),
$\gamma=u_1 \rhd u_2.p \atr{\varepsilon}{\bullet}{\approxwsc}u_1
\rhd p$ and by definition of $\vdash_{\infwsc}$, this transition
derives $\gamma \atr{w}{\bullet}{}w_{\WSCA}(u_1 \rhd p)=u_1w \rhd p
= \gamma''$.

If $u_2$ is not a prefix of $u_1$, then there exists $u,v\in A^*$
such that $w=vu$ and $u_2=u_1v$. By the central rule (defining
$\approxwsc$), we have that $\gamma=u_1 \rhd u_1v.p
\atr{v}{\bullet}{\approxwsc}u_1v \rhd p$ and, by definition of
$\vdash_{\infwsc}$, this transition derives $\gamma
\atr{vu}{\bullet}{}u_{\WSCA}(u_1v \rhd p)=u_1w \rhd p = \gamma''$.
\end{proof}

\begin{oldprop}{prop:netSymbolic}
Let $\<\onet{N_1},m_1\>$ and $\<\onet{N_2},m_2\>$ be two marked nets
both with interface $I$. Thus $\<\onet{N_1},m_1\> \sim^{NS}
\<\onet{N_2},m_2\>$ iff $\<\onet{N_1},m_1\>
\sim^{SYM}_I\<\onet{N_2},m_2\>$.
\end{oldprop}
\begin{proof}
The general condition of symbolic bisimilarity
\begin{enumerate}[$\bullet$]
\item if $p \atr{c}{o}{\sym} p'$,
then $q\atr{c_1}{o_1}{\sym}q_1'$ and
$q\atr{c_1}{o_1}{\sym}q_1'\vdash_{\infsys} q\atr{c}{o}{}q'$ and $p'
R_k q'$.
\end{enumerate}
for the context interactive system $\modelON$, the $\sts$ $\eta$ and
the inference system $\infON$, becomes
\begin{enumerate}[$\bullet$]
\item if $ \<N_1,m_1\> \atr{i}{\lambda}{\eta} \<N_1,m'_1\>$,
then $\<N_2,m_2\>\atr{j}{\lambda_1}{\eta}\<N_2,m_2'\>$ and
$\<N_2,m_2\>\atr{j}{\lambda_1}{\eta}\<N_2,m_2'\> \vdash_{\infON}
\<N_2,m_2\>\atr{i}{\lambda}{\eta}\<N_2,m_2''\>$ and $\<N_1,m_1'\>
R_I \<N_2,m_2''\>$.
\end{enumerate}
From the latter, we have that $\lambda_1=\lambda$ and there exists
$k\in I^{\oplus}$ such that $i=j\oplus k$ and $m_2''=m_2'\oplus k$.
These are the conditions of net-symbolic bisimilarity.
\end{proof}

\begin{oldprop}{prop:symboliOpenNet}
$\approxON$ and $\infON$ are sound and complete w.r.t.\ $\modelON$.
\end{oldprop}
\begin{proof}
We have to prove:
\begin{enumerate}[$\bullet$]
\item(completeness) if $\<N,m\> \satr{i}{\lambda}
\<N,m'\>$ then 
\[\<N,m\> \atr{i_1}{\lambda_1}{\approxON}\<N,m_1\>
\quad\hbox{and}\quad\<N,m\> \atr{i_1}{\lambda_1}{\approxON}\<N,m_1\>
\vdash_{\infON} \<N,m\> \atr{i}{\lambda}{} \<N,m'\>.
\]
\item(soundness) if $\<N,m\> \atr{i_1}{\lambda_1}{\approxON}\<N,m_1\>$
and $\<N,m\> \atr{i_1}{\lambda_1}{\approxON}\<N,m_1\>
\vdash_{\infON} \<N,m\> \atr{i}{\lambda}{} \<N,m'\>$ then $\<N,m\>
\satr{i}{\lambda} \<N,m'\>$.
\end{enumerate}

\noindent Let us prove completeness.
If $\onet{N},m\oplus i \ontr{\lambda}\onet{N},m'$, then there exists
a transition $t\in T$, such that $\lambda(t)=l$ and $m \oplus i
=\pre{t} \oplus c$ and $m'=\post{t}\oplus c$. We can take $c_1=m
\ominus (m \cap \pre{t})$ and $i_1=\pre{t}\ominus(m \cap \pre{t})$.
and apply the only rule of $\approxON$, and
$\onet{N},m\atr{i_1}{\lambda}{\approxON}\onet{N},\post{t}\oplus
c_1$. Note that $i_1 \subseteq i$, since by definition $i_1$ is the
smallest multiset that allow the transition $t$. Thus let $x = i
\ominus i_1$. We have
$\onet{N},m\atr{i_1}{\lambda}{\approxON}\onet{N},\post{t}\oplus c_1
\vdash_{\infON} N,m \atr{i}{\lambda}{\infON(\approxON)}\onet{N},m'$.
Indeed:
\begin{enumerate}[$\bullet$]
\item $i_1 \oplus x= i$;
\item $\post{t}\oplus c_1 \oplus x = m'$, because $c_1\oplus x =
m\ominus(\pre{t} \cap m) \oplus x = m\oplus \pre{t}\ominus( \pre{t}
\cap m) \oplus x \ominus \pre{t} = m\oplus i_1 \oplus x \ominus
\pre{t}= m\oplus i  \ominus \pre{t} =c$.
\end{enumerate}

\noindent For proving soundness observe that if $\<N,m\>
\atr{i_1}{\lambda_1}{\approxON}\<N,m_1\>$ then $\<N,m\oplus i_1\>
\tr{\lambda_1}{}\<N,m_1\>$. Moreover, if $\<N,m\>
\atr{i_1}{\lambda_1}{\approxON}\<N,m_1\> \vdash_{\infON} \<N,m\>
\atr{i}{\lambda}{} \<N,m'\>$, then $\lambda_1=\lambda$ there exists
$x \in I^{\oplus}$ such that $i_1\oplus x =i$ and $m_1\oplus x =m'$.

Thus, $\<N,m\oplus i_1\oplus x\> \tr{\lambda_1}{}\<N,m_1\oplus x\>$,
that means $\<N,m\> \satr{i}{\lambda} \<N,m'\>$.
\end{proof}

\begin{oldprop}{prop:AsynSymbolic}
Let $p,q$ be asynchronous $\pi$-processes, and let $n\geq max
\;\fn{p \cup q}$. Then $p\sim^{a} q$ iff $p_{n}\symbis_{n} q_{n}$.
\end{oldprop}
\begin{proof}
Here we prove that if $p_n\symbis_n q_n$ then $p \sim^a q$ (the
other implication is analogous).

Let $R=\{p,q \mid p_n \symbis_n q_n\}$ be a symmetric relation. We
prove that $R$ is an asynchronous bisimulation.

Take $p_n \symbis_n q_n$ and suppose that $p \tr{\boutb{i}{j}}p'$
and $j$ is fresh. First, observe that $\forall m \geq n$, $p_m
\symbis_m q_m$. Then, note that since $j$ is fresh, $j-1\geq n$ and
thus $p_{j-1} \symbis_{j-1} q_{j-1}$.

By definition of $\trAsy$, $p_{j-1}
\atr{-}{\boutb{i}{}}{\approxAsy}p'_{j}$. Now since $p_{j-1}
\symbis_{j-1} q_{j-1}$, $q_{j-1}$ must answer with a transition
$q_{j-1} \atr{c}{o}{\approxAsy} q''$ such that $q_{j-1}
\atr{c}{o}{\approxAsy} q'' \vdash_{\infAsy} q_{j-1}
\atr{-}{\boutb{i}{}}{} q'_{j}$ and $p'_{j}\symbis_{j}q'_{j}$. By
definition of $\infAsy$, the only such transition is $q_{j-1}
\atr{-}{\boutb{i}{}}{\approxAsy}q'_{j}$. Now, by definition of
$\approxAsy$, we have that $q \tr{\boutb{i}{j}}q'$ and, by
definition of $R$, $p'Rq'$.
We can proceed analogously in the case of output and $\tau$.

For the input, suppose that $p \tr{\inp{i}{j}}p'$. Then $p_n \atr{-|
\outp{i}{j}}{\tau}{\approxAsy} p'_{n'}$ where $n'= max \; \{j,n\}$.
Now since $p_n \symbis q_n$, $q_n$ must answer with a transition
$q_n \atr{c}{o}{\approxAsy} q''$ such that $q_n
\atr{c}{o}{\approxAsy} q'' \vdash_{\infAsy} q_n \atr{-|
\outp{i}{j}}{\tau}{\approxAsy} q'_{n'}$ and $p'_{n'}\symbis_{n'}
q'_{n'}$.

By definition of $\infAsy$ there are two possibilities:
\begin{enumerate}[$\bullet$]
\item $q_n \atr{-| \outp{i}{j}}{\tau}{\approxAsy}q'_{n'}$ and $p'_{n'}
\symbis_{n'} q'_{n'}$. Thus $q \tr{\inp{i}{j}}q'$ and $p'Rq'$.
\item $q_n \atr{-}{\tau}{\approxAsy}q''_n$ and by using the rule
$\rulelabel{tau}_{-| \outp{i}{j}}$, $q_n \atr{-}{\tau}{\approxAsy}q''_n
\vdash_{\infAsy} q_n \atr{-| \outp{i}{j}}{\tau}{\approxAsy}q''_n |
\outp{i}{j}$ and $p'_{n'} \symbis_{n'} (q''| \outp{i}{j})_{n'}$.
Thus $q \tr{\tau}q''$ and $p' R q'' | \outp{i}{j}$.
\end{enumerate}
Note that it is correct to write $q'_{n'}$, since
$\fn{q'}=\fn{q|\outp{i}{j}}\subseteq n'$. The same holds also for
$(q''| \outp{i}{j})_{n'}$: $\fn{q''}= \fn{q}\subseteq n$ and thus
$\fn{q''| \outp{i}{j}}\subseteq n'$.
\end{proof}

\begin{oldprop}{prop:AsynchronousSoundandComplete}
$\approxAsy$ and $\infAsy$ are sound and complete w.r.t.\
$\modelAsy$.
\end{oldprop}
\begin{proof}
We have to prove:
\begin{enumerate}[$\bullet$]
\item(completeness) if $p_n \satr{c}{\mu}
q_m$ then $p_n \atr{c'}{\mu'}{\approxAsy}q'_{m'}$ and
$p_n\atr{c'}{\mu'}{\approxAsy}q'_{m'} \vdash_{\infAsy} p_n
\satr{c}{\mu} q_m$.
\item(soundness) if $p_n \atr{c'}{\mu'}{\approxAsy}q'_{m'}$ and
$p_n\atr{c'}{\mu'}{\approxAsy}q'_{m'} \vdash_{\infAsy} p_n
\satr{c}{\mu} q_{m}$ then $p_n \satr{c}{\mu} q_m$.
\end{enumerate}

%In order to prove the completeness we have to prove that $\forall c
%\in \amssign[n,m]$ and $\forall p_n$, if $c(p_n) \asytr{\mu}
%p'_{m'}$ then $p_n\atr{c}{\mu}{\infAsy(\approxAsy)}p'_{m'}$.
%
For soundness just observe that if $p\atr{c}{\mu}{\approxAsy}p'$
then $c(p)\tr{\mu}p'$ and that all the rules of $\infAsy$ are sound.
Let us prove completeness.
Suppose that $\mu=\tau$ (the other cases are easier): $p_n
\satr{c}{\tau} q_m$ implies that $c(p_n)_m \tr{\tau}q_m$ and $c\in
\amssign[n,m]$. By definition of $\trAsy$, it follows that
$c(p)\tr{\tau}q$. Since $c$ could be only the parallel composition
of outputs, by the definition of the operational semantics of
asynchronous $\pi$, it follows that either $p \tr{\tau} q'$ (such
that $q=c(q')$) or $c\tr{\outp{i}{j}}c'$ (where $c= -| \outp{i}{j} |
c'$) and $p\tr{\inp{i}{j}}q'$ (such that $q=c'| q'$).

In the former case, by definition of $\approxAsy$, we have that
$p_n\atr{-}{\tau}{\approxAsy}q'_n$ and using the rule
\rulelabel{tau}$_{c}$ of $\infAsy$, we have that
$p_n\atr{-}{\tau}{\approxAsy} q'_n \vdash_{\infAsy}
p_n\atr{c}{\tau}{\approxAsy}c(q'_n)_m=q_m$.
In the latter case, by definition of $\approxAsy$, we have that
$p_n\atr{-| \outp{i}{j}}{\tau}{\approxAsy}q'_{n'}$ where $n'=
max\{j,n\}$. Now, take $c'\in \amssign[n',m]$, by the rule
\rulelabel{tau}$_{c'}$ of $\infAsy$, we have that $p_n\atr{-|
\outp{i}{j}}{\tau}{\approxAsy}q'_{n'} \vdash_{\infAsy} p_n\atr{-|
\outp{i}{j}| c'}{\tau}{}c'(q'_{n'})_m=q_m$.
\end{proof}

\section{Proofs of Section \ref{sec:CIScoalgebra}}
\begin{oldthm}{theoDMcoalgebra}
Let $\isys$ be a context interactive system. Then $\<\Alg{X},
\alpha_{\sys{I}}\>$ is a $\PDA$-coalgebra.
\end{oldthm}
\begin{proof}
We have to prove that $\alpha_{\sys{I}}: \Alg{X} \to \PDA(\Alg{X})$
is a $\Gamma(\Cat{C})$-homomorphism, i.e., that $\forall x \in X$
and $\forall d \in \Gamma(\Cat{C})$,
$\alpha_{\sys{I}}(d_{\Alg{X}}(x))=
d_{\TPDA(\Alg{X})}(\alpha_{\sys{I}}(x))$.

Let $(c,l,y)\in \alpha_{\sys{I}}(d_{\Alg{X}}(x))$ be a saturated
transition of $d_{\Alg{X}}(x)$. Then by definition of
$\alpha_{\sys{I}}$, $(d;c,l,y) \in \alpha_{\sys{I}}(x)$. By
definition of $d_{\TPDA(\Alg{X})}$ and by $(d;c,l,y) \in
\alpha_{\sys{I}}(x)$, follows that $(c,l,y)\in
d_{\TPDA(\Alg{X})}(\alpha_{\sys{I}}(x))$.

Now let $(c,l,y)\in d_{\TPDA(\Alg{X})}(\alpha_{\sys{I}}(x))$. %We
%want to prove that $(c,l,y)\in \alpha_{\sys{I}}(d_{\Alg{X}}(x))$.
By definition of $d_{\TPDA(\Alg{X})}$ we have that $(d;c,l,y)\in
\alpha_{\sys{I}}(x)$ and, analogously to before, $(c,l,y)\in
\alpha_{\sys{I}}(d_{\Alg{X}}(x))$.
\end{proof}

\section{Proofs of Section \ref{sec:saturatedcoalgebra}}\label{proofSatu}
\noindent Before proving Lemma \ref{lemma:Icorrect} and Proposition
\ref{prop:covariety}, we prove some important lemmas about the
derivation relation $\vdash_{\ded{T}, \Alg{X}}^d$. Moreover, at the
end of this appendix we formally show the existence of the final
object in $\coalg{\DMIS}$.

\begin{lem}[composition of
$\vdash_{\ded{T},\Alg{X}}^d$]\label{lemma:compositionDer}\hfill

\noindent If $(c,l,x)\vdash_{\ded{T}, \Alg{X}}^d (c',l',x') \vdash_{\ded{T},
\Alg{X}}^e (c'',l'',x'')$ then $(c,l,x) \vdash_{\ded{T},
\Alg{X}}^{d;e} (c'',l'',x'')$.
\end{lem}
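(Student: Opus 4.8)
The plan is to unfold both derivation hypotheses according to Definition~\ref{def:exDer} and then obtain the required witnesses for the conclusion by composing, in $\Cat{C}$, the witnesses supplied by the two hypotheses. Concretely, from $(c,l,x)\vdash_{\ded{T}, \Alg{X}}^d (c',l',x')$ I would extract arrows $f, f' \in ||\Cat{C}||$ with $c;f = d;c'$, with $f \Ttr{l}{l'} f' \in \clos{\ded{T}}$, and with $f'_{\Alg{X}}(x) = x'$; and from $(c',l',x')\vdash_{\ded{T}, \Alg{X}}^e (c'',l'',x'')$ arrows $g, g'$ with $c';g = e;c''$, with $g \Ttr{l'}{l''} g' \in \clos{\ded{T}}$, and with $g'_{\Alg{X}}(x') = x''$ (I rename the bound witnesses of the second derivation to avoid clashing with the operator $e$). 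The natural candidate witnesses for $(c,l,x)\vdash_{\ded{T}, \Alg{X}}^{d;e}(c'',l'',x'')$ are then $h := f;g$ and $h' := f';g'$.

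With these candidates in hand, the three defining clauses of the extended derivation follow by routine manipulation. For the context equation I would use associativity of composition in $\Cat{C}$: $c;(f;g) = (c;f);g = (d;c');g = d;(c';g) = d;(e;c'') = (d;e);c''$. For the algebraic clause I would invoke the equations of $\Sig{\Gamma(\Cat{C})}$ (functoriality of the interpretation $(-)_{\Alg{X}}$), which give $(f';g')_{\Alg{X}}(x) = g'_{\Alg{X}}(f'_{\Alg{X}}(x)) = g'_{\Alg{X}}(x') = x''$. For the rule clause I would appeal to the closure of $\clos{\ded{T}}$ under composition of rules: from $f \Ttr{l}{l'} f'$ and $g \Ttr{l'}{l''} g'$ one derives $f;g \Ttr{l}{l''} f';g' \in \clos{\ded{T}}$.

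The only genuine subtlety — and the step I expect to need the most care — is checking that the composites $f;g$ and $f';g'$ are actually \emph{defined} in $\Cat{C}$, since arrow composition there is partial; likewise that $d;e$ is defined. This is pure sort bookkeeping. Writing $p \in X_a$ for the implicit source of the first transition, one reads off $c\colon a\to b$, $d\colon a\to a'$, $c'\colon a'\to b'$, and hence $f,f'\colon b\to b'$. In the second derivation the source state is $d_{\Alg{X}}(p)\in X_{a'}$, so $e\colon a'\to a''$, $c''\colon a''\to b''$, and hence $g,g'\colon b'\to b''$. Thus $f$ and $g$ (respectively $f'$ and $g'$) are composable, both composites landing in $\Cat{C}[b,b'']$, and $d;e\colon a\to a''$ is defined. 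Once these sorts are laid out explicitly, every composite used above exists, and the three verifications complete the argument.
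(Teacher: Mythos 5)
Your proof is correct and follows essentially the same route as the paper's: unfold both hypotheses via Definition~\ref{def:exDer}, compose the two pairs of witnessing arrows, and verify the three clauses using associativity, closure of $\clos{\ded{T}}$ under rule composition, and functoriality of $(-)_{\Alg{X}}$. The explicit sort bookkeeping you add to justify that the composites are defined is a detail the paper leaves implicit, but it changes nothing in the argument.
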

\begin{proof}
From the hypothesis we derives that there exists $d',d'',e',e'' \in
||\Cat{C}||$ such that $d;c'=c;d'$ and $e;c''=c';e'$ and $d' \Ttr{l }{l'}d''$
and $e' \Ttr{l'}{l''}e''$ such that $d''_{\Alg{X}}(x)=x'$ and
$e''_{\Alg{X}}(x')=x''$. From all this, we derive that
$(d;e);c''=c;(d';e')$ and that $d';e' \Ttr{l }{l''}d'';e''$ and that
$e''_{\Alg{X}}(d''_{\Alg{X}}(x))=x''$. Then the thesis immediately
follows.
\end{proof}

\begin{lem}[$\vdash_{\ded{T},\Alg{X}}^d$ is preserved by
homomorphisms]\label{lemma:PresDer}\hfill

\noindent Let $h: \Alg{X} \to \Alg{Y}$ be a $\Gamma (\Cat{C})$-homomorphism.
If $(c,l,x)\vdash_{\ded{T}, \Alg{X}}^d (c',l',x')$, then
$(c,l,h(x))\vdash_{\ded{T}, \Alg{Y}}^d (c',l',h(x'))$.
\end{lem}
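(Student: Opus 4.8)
The plan is to unwind Definition~\ref{def:exDer} and to observe that every ingredient of a derivation, except the final ``action'' step, lives in the category $\Cat{C}$ and the closure $\clos{\ded{T}}$, neither of which depends on the algebra. Concretely, the hypothesis $(c,l,x)\vdash_{\ded{T}, \Alg{X}}^d (c',l',x')$ furnishes arrows $e,e'\in ||\Cat{C}||$ such that $c;e=d;c'$, the rule $e\Ttr{l}{l'}e'$ lies in $\clos{\ded{T}}$, and $e'_{\Alg{X}}(x)=x'$.

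First I would propose exactly the same pair $e,e'$ as witnesses for the derivation in $\Alg{Y}$. The equation $c;e=d;c'$ is a purely categorical statement about arrows of $\Cat{C}$, and membership of $e\Ttr{l}{l'}e'$ in $\clos{\ded{T}}$ is a statement about the inference system alone; both are literally unchanged when we pass from $\Alg{X}$ to $\Alg{Y}$, so they continue to hold verbatim with the same witnesses.

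The only clause that actually refers to the carrier is the action condition, where I must check $e'_{\Alg{Y}}(h(x))=h(x')$. This is precisely where the hypothesis that $h$ is a $\Gamma(\Cat{C})$-homomorphism is used: homomorphisms commute with the operations $e'_{(-)}$ induced by arrows of $\Cat{C}$, so $e'_{\Alg{Y}}(h(x))=h(e'_{\Alg{X}}(x))=h(x')$, the last equality being the hypothesis $e'_{\Alg{X}}(x)=x'$. Feeding the witnesses $e,e'$ back into Definition~\ref{def:exDer}, now instantiated at $\Alg{Y}$, yields $(c,l,h(x))\vdash_{\ded{T}, \Alg{Y}}^d (c',l',h(x'))$.

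There is no genuine obstacle here: the lemma is a routine naturality-style verification, and the only point worth isolating is the recognition that among the three defining conditions of a derivation only the action component is sensitive to the algebra, so that a single application of the homomorphism law dispatches the whole statement.
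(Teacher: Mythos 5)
Your proof is correct and follows essentially the same route as the paper's: both reuse the witnesses $e,e'$ from the derivation in $\Alg{X}$, observe that the conditions $c;e=d;c'$ and $e\Ttr{l}{l'}e'\in\clos{\ded{T}}$ are independent of the algebra, and apply the homomorphism property of $h$ to obtain $e'_{\Alg{Y}}(h(x))=h(e'_{\Alg{X}}(x))=h(x')$. Nothing is missing.
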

\begin{proof}
If $(c,l,x)\vdash_{\ded{T}, \Alg{X}}^d (c',l',x')$, then there
exists $d'\in ||\Cat{C}||$ such that $d;c'=c;d'$ and $d' \Ttr{l}{l'}d''$ and
$d''_{\Alg{X}}(x)=x'$. Since $h$ is an homomorphism
$h(x')=h(d''_{\Alg{X}}(x))=d''_{\Alg{Y}}(h(x))$, and then
$(c,l,h(x))\vdash_{\ded{T}, \Alg{Y}}^d (c',l',h(x'))$.
\end{proof}

\begin{lem}[$\vdash_{\ded{T},\Alg{X}}^d$ is reflected by
homomorphisms]\label{lemma:existence}\hfill

\noindent Let $h: \Alg{X} \to \Alg{Y}$ be a $\Gamma (\Cat{C})$-homomorphism.
If $(c,l,h(x))\vdash_{\ded{T}, \Alg{Y}}^d (c',l',y')$, then $\exists
x' \in \Alg{X}$, such that $h(x')=y'$ and $(c,l,x)\vdash_{\ded{T},
\Alg{X}}^d (c',l',x')$.
\end{lem}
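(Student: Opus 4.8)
The plan is to produce an explicit witness for $x'$ and then verify both required conditions straight from the definition of the extended derivation relation (Definition \ref{def:exDer}), reusing the very same witnessing arrows that certify the derivation in $\Alg{Y}$. This lemma is the exact mirror of Lemma \ref{lemma:PresDer}: there the homomorphism property of $h$ was used to \emph{push} a derivation forward along $h$, whereas here I use it to \emph{transport} the chosen witness backward, recovering a derivation in $\Alg{X}$.

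First I would unpack the hypothesis $(c,l,h(x))\vdash_{\ded{T}, \Alg{Y}}^d (c',l',y')$. By Definition \ref{def:exDer} this means there exist arrows $e, e' \in ||\Cat{C}||$ such that $c;e = d;c'$, such that $e \Ttr{l}{l'} e' \in \clos{\ded{T}}$, and such that $e'_{\Alg{Y}}(h(x)) = y'$. The key observation is that the arrows $e, e'$ depend only on $c, c', d, l, l'$ and on the inference system $\ded{T}$, and \emph{not} on the algebra in which the derivation is taken: the first two clauses $c;e=d;c'$ and $e \Ttr{l}{l'} e' \in \clos{\ded{T}}$ live entirely in $\Cat{C}$ and in $\Phi(\ded{T})$.

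Hence I would simply set $x' := e'_{\Alg{X}}(x)$, which is an element of (the carrier of) $\Alg{X}$, and reuse the same pair $e, e'$ as witnesses in $\Alg{X}$. With this choice the three clauses of Definition \ref{def:exDer} hold verbatim for $\Alg{X}$: $c;e = d;c'$ is unchanged, $e \Ttr{l}{l'} e' \in \clos{\ded{T}}$ is unchanged, and $e'_{\Alg{X}}(x) = x'$ holds by construction. This gives exactly $(c,l,x)\vdash_{\ded{T}, \Alg{X}}^d (c',l',x')$.

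It then remains to check $h(x') = y'$, and this is the only point where the homomorphism property of $h$ is genuinely used. Since $h$ is a $\Gamma(\Cat{C})$-homomorphism it commutes with the operation $e'$, so
\[
h(x') = h(e'_{\Alg{X}}(x)) = e'_{\Alg{Y}}(h(x)) = y',
\]
the last equality being the third clause recovered from the hypothesis. This completes the argument. I do not expect any real obstacle: the substance of the lemma is just the remark that the witnessing data of a derivation is algebra-independent, so reflection of $\vdash_{\ded{T},\cdot}^d$ along $h$ is as immediate as its preservation (Lemma \ref{lemma:PresDer}), the sole nontrivial ingredient being the single commutation equation $h \circ e'_{\Alg{X}} = e'_{\Alg{Y}} \circ h$.
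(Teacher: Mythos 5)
Your proof is correct and follows essentially the same route as the paper's: unpack the witnessing arrows from the derivation in $\Alg{Y}$, take $x'$ to be the image of $x$ under the second arrow interpreted in $\Alg{X}$, and use the homomorphism property of $h$ once to conclude $h(x')=y'$. The only difference is cosmetic (your explicit remark that the first two clauses of Definition \ref{def:exDer} are algebra-independent, which the paper leaves implicit).
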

\begin{proof}
From the hypothesis we derive that there exists $f\in ||\Cat{C}||$ such
that $c;f=d;c'$ and $f\Ttr{l}{l'}f'$ and $f'_{\Alg{Y}}(h(x))=y'$.
Since $h$ is an homomorphism, $h(f'_{\Alg{X}}(x))=y'$. Then we have
that $(c,l,x)\vdash_{\ded{T}, \Alg{X}}^d (c',l',f'_{\Alg{X}}(x))$.
\end{proof}

\begin{prop}
$\DMIS:\Cat{Alg_{\Sig{\Gamma(\Cat{C})}}} \to
\Cat{Alg_{\Sig{\Gamma(\Cat{C})}}}$ is a functor.
\end{prop}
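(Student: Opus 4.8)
The plan is to verify in turn that $\DMIS$ is well defined on objects, well defined on arrows, and that it preserves identities and composition of arrows. Recall (Definition~\ref{def:DMIS}) that on an algebra $\Alg{X}$ the operator $d_{\DMIS(\Alg{X})}$ sends a set $A$ to the set of all transitions $\vdash^{d}_{\ded{T},\Alg{X}}$-derivable from a transition in $A$, and that on arrows $\DMIS$ is the restriction of $\PD$. Since $\PD$ is a functor, preservation of identities and composition of homomorphisms will be immediate once the restriction has been shown to be well typed. Throughout, the three technical facts about the extended derivation relation are exactly the tools needed: composition (Lemma~\ref{lemma:compositionDer}), preservation along homomorphisms (Lemma~\ref{lemma:PresDer}) and reflection along homomorphisms (Lemma~\ref{lemma:existence}).

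For well-definedness on objects I would first check that each $d_{\DMIS(\Alg{X})}$ maps $\PS(X_i)$ into $\PS(X_j)$ for $d:i\to j$: the contexts of the derived transitions have source $j$ by construction, and saturation of $d_{\DMIS(\Alg{X})}A$ follows from Lemma~\ref{lemma:compositionDer}, since any further $\vdash_{\ded{T},\Alg{X}}$-step composes with the $\vdash^{d}_{\ded{T},\Alg{X}}$-step into a single $\vdash^{d}_{\ded{T},\Alg{X}}$-derivation (using $d;id=d$). The identity equation $id_{i_{\DMIS(\Alg{X})}}A=A$ on saturated $A$ holds because $\vdash^{id}_{\ded{T},\Alg{X}}$ coincides with $\vdash_{\ded{T},\Alg{X}}$: the inclusion $\subseteq$ is saturation of $A$, while $\supseteq$ uses that each transition derives itself (take $e,e'$ to be identities in Definition~\ref{def:exDer}, the identity rule lying in $\clos{\ded{T}}$). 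The composition equation $(d;e)_{\DMIS(\Alg{X})}=e_{\DMIS(\Alg{X})}\circ d_{\DMIS(\Alg{X})}$ is the part I expect to be the main obstacle: relationally it asks that $\vdash^{d;e}_{\ded{T},\Alg{X}}$ be the relational composite of $\vdash^{d}_{\ded{T},\Alg{X}}$ and $\vdash^{e}_{\ded{T},\Alg{X}}$; one inclusion is precisely Lemma~\ref{lemma:compositionDer}, but the reverse (a factorization) looks delicate because the original context $c_1$ need not factor through $d$. The resolution I have in mind is to \emph{load the whole derivation into the first step}: given $(c_1,o_1,x_1)\vdash^{d;e}_{\ded{T},\Alg{X}}(c_3,o_3,x_3)$, witnessed by $f,f'$ with $c_1;f=(d;e);c_3$, $f\Ttr{o_1}{o_3}f'\in\clos{\ded{T}}$ and $f'_{\Alg{X}}(x_1)=x_3$, take the intermediate transition to be $(e;c_3,o_3,x_3)$. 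The very same witnesses $f,f'$ give $(c_1,o_1,x_1)\vdash^{d}_{\ded{T},\Alg{X}}(e;c_3,o_3,x_3)$, since $d;(e;c_3)=(d;e);c_3=c_1;f$, and the identity rule yields $(e;c_3,o_3,x_3)\vdash^{e}_{\ded{T},\Alg{X}}(c_3,o_3,x_3)$ trivially; this supplies the missing inclusion (and, incidentally, shows it for arbitrary $A$, not just saturated ones).

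Finally, for well-definedness on arrows I would check that for a homomorphism $h:\Alg{X}\to\Alg{Y}$ the restriction of $\PD(h)$ maps $\PS(X)$ into $\PS(Y)$ and commutes with the operators $d_{\DMIS(\Alg{X})}$. That the image is saturated follows from Lemma~\ref{lemma:existence} together with saturation of the source set; commutation reduces to two inclusions, the forward one given by Lemma~\ref{lemma:PresDer} (derivations are carried along $h$) and the backward one by Lemma~\ref{lemma:existence} (derivations out of $h(x)$ are reflected to derivations out of $x$). Preservation of identities and of composition of homomorphisms is then inherited from the functor $\PD$, because $\DMIS$ acts as $\PD$ on arrows and restriction respects composition. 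Assembling these checks establishes that $\DMIS$ is a functor, the only genuinely non-routine ingredient being the factorization used in the composition equation.
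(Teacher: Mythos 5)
Your proof is correct and follows essentially the same route as the paper's: well-definedness of the operators and of $\DMIS(h)$ via Lemmas~\ref{lemma:compositionDer}, \ref{lemma:PresDer} and \ref{lemma:existence}, the identity law from saturation plus reflexivity of $\vdash_{\ded{T},\Alg{X}}$, and functoriality on arrows inherited from $\PD$. The only substantive difference is that you make explicit the reverse inclusion of the composition law via the factorization through $(e;c_3,o_3,x_3)$ — a step the paper's proof dismisses as a trivial consequence of Lemma~\ref{lemma:compositionDer} — and your argument for it is valid.
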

\begin{proof}
First of all, we have to show that $\forall \Alg{X}\in
|\Cat{Alg_{\Sig{\Gamma(\Cat{C})}}}|$, $\DMIS(\Alg{X})\in
|\Cat{Alg_{\Sig{\Gamma(\Cat{C})}}}|$. Notice that all the operators
$d_\DMIS(\Alg{X})$ are well defined, i.e., $\forall A \in \PS(X)$,
$d_\DMIS(\Alg{X})(A)$ is still a saturated set of transitions, i.e.,
it is closed w.r.t.\ $\vdash_{\ded{T},\Alg{X}}^{id}$. Then we have to
prove that $id_{\DMIS(\Alg{X})}$ coincides with the identity
function. This is trivial since $id_{\DMIS(\Alg{X})}(A)$ consists in
closing the set of transition $A$ w.r.t.\
$\vdash_{\ded{T},\Alg{X}}^{id}$. But since $A$ is saturated , it is
already closed. Finally we have to prove that
$(c;d)_{\DMIS(\Alg{X})}=c_{\DMIS(\Alg{X})};d_{\DMIS(\Alg{X})}$, but
this is trivial consequence of Lemma \ref{lemma:compositionDer}.

Then we have to prove that if $h:\Alg{X}\to \Alg{Y}$ in
$\Cat{Alg_{\Sig{\Gamma(\Cat{C})}}}$, then also
$\DMIS(h):\DMIS(\Alg{X})\to \DMIS(\Alg{Y})$. This follows easily by
Lemma \ref{lemma:PresDer} and Lemma \ref{lemma:existence}.

Then preservation of identity and arrow composition follows from the
fact that $\DMIS$ is defined as $\PDA$ on arrows and on the fact
that $\PDA$ is a functor.
\end{proof}

\begin{lem}\label{lemma:inclusion}
The inclusion $\iota_{\Alg{X}}: \PS(X) \to \PD(X)$ is a
$\Gamma(\Cat{C})$-homomorphism from the algebra $\DMIS(\Alg{X})$ to
$\PDA (\Alg{X})$.
\end{lem}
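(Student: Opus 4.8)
The plan is to verify the homomorphism square operator by operator. Fix a $\Gamma(\Cat{C})$-algebra $\Alg{X}$, an operator $d \in ||\Cat{C}||$, and a saturated set $A \in \PS(X)$. Since $\iota_{\Alg{X}}$ is set-theoretic inclusion and, by the functoriality of $\DMIS$ established above, $d_{\DMIS(\Alg{X})}(A)$ is again saturated (hence already an element of $\PD(X)$), the required identity $\iota_{\Alg{X}}(d_{\DMIS(\Alg{X})}(A)) = d_{\PDA(\Alg{X})}(\iota_{\Alg{X}}(A))$ collapses to the plain set equality $d_{\DMIS(\Alg{X})}(A) = d_{\PDA(\Alg{X})}(A)$ computed inside $\PD(X)$. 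I would prove this by double inclusion.

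The single observation that drives everything is that the arrow $d$ enters Definition \ref{def:exDer} only through the coherence equation $c_1;e = d;c_2$. Reading $d;c_2$ as one target context therefore yields, for every pair of triples, the equivalence
\[(c_1,o_1,q_1) \vdash_{\ded{T},\Alg{X}}^{d} (c_2,o_2,q_2) \quad\Longleftrightarrow\quad (c_1,o_1,q_1) \vdash_{\ded{T},\Alg{X}} (d;c_2,o_2,q_2),\]
where on the right $\vdash_{\ded{T},\Alg{X}}$ is the identity-superscript derivation: the witnesses $e,e'$ and the rule of $\clos{\ded{T}}$ relating $o_1$ to $o_2$ are literally the same on both sides. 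This reduces all reasoning about $d_{\DMIS(\Alg{X})}$ to the ordinary derivation $\vdash_{\ded{T},\Alg{X}}$.

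For the inclusion $d_{\DMIS(\Alg{X})}(A) \subseteq d_{\PDA(\Alg{X})}(A)$, take $(c_2,o_2,x_2) \in d_{\DMIS(\Alg{X})}(A)$; by definition some $(c_1,o_1,x_1) \in A$ satisfies $(c_1,o_1,x_1) \vdash_{\ded{T},\Alg{X}}^{d} (c_2,o_2,x_2)$, hence $(c_1,o_1,x_1) \vdash_{\ded{T},\Alg{X}} (d;c_2,o_2,x_2)$ by the equivalence above. Since $A$ is saturated, i.e.\ closed under $\vdash_{\ded{T},\Alg{X}}$, we get $(d;c_2,o_2,x_2) \in A$, and then $(c_2,o_2,x_2) \in d_{\PDA(\Alg{X})}(A)$ directly from the definition of $d_{\PDA(\Alg{X})}$. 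Conversely, if $(c_2,o_2,x_2) \in d_{\PDA(\Alg{X})}(A)$ then $(d;c_2,o_2,x_2) \in A$; reflexivity of $\vdash_{\ded{T},\Alg{X}}$ --- which holds because the identity rules belong to $\clos{\ded{T}}$ by construction --- gives $(d;c_2,o_2,x_2) \vdash_{\ded{T},\Alg{X}} (d;c_2,o_2,x_2)$, equivalently $(d;c_2,o_2,x_2) \vdash_{\ded{T},\Alg{X}}^{d} (c_2,o_2,x_2)$, so that $(c_2,o_2,x_2) \in d_{\DMIS(\Alg{X})}(A)$. The main obstacle --- and the only point where the hypothesis $A \in \PS(X)$ is genuinely used --- is the forward inclusion, where saturation is exactly what supplies the ``expanded'' triple $(d;c_2,o_2,x_2)$ inside $A$; the only routine care required is to keep the sorts of $e$, $e'$ and of the chosen identity rules aligned with the source of $c_2$ and the sort of the arriving state.
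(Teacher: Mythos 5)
Your proof is correct and follows essentially the same route as the paper's: both directions reduce $\vdash_{\ded{T},\Alg{X}}^{d}$ to the plain derivation with target context $d;c_2$, use saturation of $A$ exactly once (for the inclusion $d_{\DMIS(\Alg{X})}(A)\subseteq d_{\PDA(\Alg{X})}(A)$), and handle the converse via the identity rules in $\clos{\ded{T}}$. The only difference is cosmetic: the paper states the converse direction is ``analogous'' while you spell it out.
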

\begin{proof}
We have to prove that for all $A\in \PS(X)$ and $d \in
||\Cat{C}||$, $\iota_{\Alg{X}}(d_{\DMIS(\Alg{X})}(A))=
d_{\PDA(\Alg{X})}(\iota(A))$.

Let $(c,l,x)\in \iota_{\Alg{X}}(d_{\DMIS(\Alg{X})}(A))$, then there
exists $(c',l',x')\in A$ such that
$(c',l',x')\vdash_{\ded{T},\Alg{X}}^d (c,l,x)$. By definition of
$\vdash_{\ded{T},\Alg{X}}^d$, we also have that
$(c',l',x')\vdash_{\ded{T},\Alg{X}} (d;c,l,x)$ and since $A$ is
saturated, then $(d;c,l,x)\in A$. Since $\iota_{\Alg{X}}$ is simply
the inclusion, we also have that $(d;c,l,x)\in \iota_{\Alg{X}}(A)$
and thus, by definition of $d_{\PDA(\Alg{X})}$, $(c,l,x)\in
d_{\PDA(\Alg{X})}(\iota_{\Alg{X}}(A))$.

The other direction is analogous.
\end{proof}

\begin{oldlem}{lemma:inatural}
Let $\iota$ be the family of morphisms $\iota =
\{\iota_{\Alg{X}}:\DMIS(\Alg{X}) \to \PDA(\Alg{X}),  \forall
\Alg{X}\in |\Cat{Alg_{\Sig{\Gamma(\Cat{C})}}}|\}$. Then $\iota:\DMIS
\Rightarrow \PDA$ is a natural transformation.
\end{oldlem}
\begin{proof}
From Lemma \ref{lemma:inclusion}, it follows that each
$\iota_{\Alg{X}}$ is a morphism in
$\Cat{Alg_{\Sig{\Gamma(\Cat{C})}}}$. The fact that $\forall
h:\Alg{X}\to \Alg{Y}$, $\iota_{\Alg{Y}}; \PDA(h) = \DMIS(h);
\iota_{\Alg{Y}}$ follows from the fact that, by definition,
$\DMIS(h)=\PDA(h)$.
\end{proof}

\begin{oldlem}{lemma:Icorrect}
Let $\<\Alg{X}, \alpha\>$ be a $\PDA$-coalgebra. Then it is in
$|\coalg{\PDA^{I}}|$ iff it satisfies $\ded{T}$.
\end{oldlem}
\begin{proof}
Let $\<\Alg{X}, \alpha\>$ be a $\PDA$-coalgebra. If it satisfies
$\ded{T}$, then $\forall x \in \Alg{X}$, $\alpha(x)\in \PS(X)$. This
means that $\alpha$ factor through the inclusion
$\iota_{\Alg{X}}:\DMIS(\Alg{X}) \to \PDA(\Alg{X})$.

If $\ded{T}$ is not sound, then $\exists x \in \Alg{X}, d \in
||\Cat{C}||$ such that $(c,l,y)\in \alpha(x)$ and
$(c,l,y)\vdash_{\ded{T},\Alg{X}}^d(c',l',y')$ and $(c',l',y')\notin
\alpha(d_{\Alg{X}}(x))=d_{\PDA(\Alg{X})}(\alpha(x))$. From
$(c,l,y)\vdash_{\ded{T},\Alg{X}}^d(c',l',y')$, we have that
$(c,l,y)\vdash_{\ded{T},\Alg{X}}(d;c',l',y')$. From this setting
follows that $(d;c',l',y')\notin \alpha(c)$ because, otherwise, by
definition of $d_{\PDA(\Alg{X})}$, we would have that $(c',l',y')\in
d_{\PDA(\Alg{X})}(\alpha(x))$. Thus $\alpha(x)$ is not saturated,
i.e., $\alpha(x) \notin \PS(X)$.
\end{proof}

\begin{oldprop}{prop:covariety}
$|\coalg{\PDA^{I}}|$ is a covariety of $\coalg{\PDA}$, i.e., is
closed under:
\begin{enumerate}[\em(1)]
\item subcoalgebras,
\item homomorphic images,
\item sums.
\end{enumerate}
\end{oldprop}
\begin{proof}
%SUBCOALGEBRA
A coalgebra $\<\Alg{X}, \alpha\>$ is a subcoalgebra of $\<\Alg{Y},
\beta\>$ if there is an arrow $m: \<\Alg{X}, \alpha\> \to \<\Alg{Y},
\beta\>$ that is mono in all its components (for a more formal
definition look at Appendix \ref{sec:factorization}). %It is important to
%note that $m$ is mono in $\coalg{\PDA^I}$ if and only if it is mono
%in $\Cat{Alg_{\Sig{\Gamma(\Cat{C})}}}$. Intuitively $m$ is mono, if
%it is injective in all its components (recall that
%$\Cat{Alg_{\Sig{\Gamma(\Cat{C})}}}$ is equal to $\set^{\Cat{C}}$).

The fact that $|\coalg{\PDA^I}|$ is closed under subcoalgebras means
that whenever there is a subcoalgebra $m:\<\Alg{X}, \alpha\> \to
\<\Alg{Y}, \beta\>$ in $\coalg{\PDA}$ such that $\<\Alg{Y},
\beta\>\in |\coalg{\PDA^I}|$, then also $\<\Alg{X}, \alpha\> \in
|\coalg{\PDA^I}|$. This can be easily proved by employing Lemma
\ref{lemma:Icorrect}.

If $\<\Alg{Y}, \beta\> \in |\coalg{\PDA^I}|$, then it satisfies
$\ded{T}$. Suppose ab absurdum that $\<\Alg{X}, \alpha \>$ does not
satisfy $\ded{T}$. Then there exists $x\in |\Alg{X}|$,
$(c_1,l_1,x_1)\in \alpha(x)$ and $(c_2,l_2,x_2) \notin \alpha(x)$
such that $(c_1,l_1,x_1)\vdash_{\Alg{X},\ded{T}} (c_2,l_2,x_2)$.
Now, since $m$ is a cohomomorphism we have that $(c_1,l_1,m(x_1))\in
\beta(m(x))$. By Lemma \ref{lemma:PresDer}, it follows that
$(c_1,l_1,m(x_1))\vdash_{\Alg{Y},\ded{T}} (c_2,l_2,m(x_2))$. Since
$\<\Alg{Y}, \beta\>$ satisfies $\ded{T}$ then also
$(c_2,l_2,m(x_2))\in \beta(m(x))$. At this point, since $m$ is a
cohomomorphism then it must exist a $x_3 \in \Alg{X}$, such that
$(c_1,l_1,x_3)\in \alpha(x)$ and $m(x_3)=m(x_2)$. But since $m$ is
mono in all its components, then $x_2=x_3$ and thus
$(c_1,l_1,x_2)\in \alpha(x)$ against the hypothesis.

\medskip

%HOMOMORPHIC IMAGE
Let $h: \<\Alg{X}, \alpha\> \to \<\Alg{Y}, \beta\>$ be an arrow in
$\coalg{\PDA}$. The homomorphic image of $\<\Alg{X}, \alpha\>$
through $h$, is the coalgebra $\<\Alg{I}, \gamma\>$ induced by the
unique factorization of $h=e;m$ (as shown below), where $e$ is an
arrow with all components epi and $m$ is an arrow with all
components mono (look at Appendix \ref{sec:factorization}).

\[
\xymatrix@R=10pt@C=10pt{
\Alg{X} \ar[rrrr]^{h} \ar[dddd]_{\alpha} \ar[rrd]^e& &   & &\Alg{Y}  \ar[dddd]^{\beta} \\
& & \Alg{I} \ar@{-->}[dddd]|{\gamma} \ar[rru]_m&&&\\
\\
\\
\PDA(\Alg{X}) \ar[rrrr]^{\PDA(h)} \ar[rrd]_{\PDA(e)}& &   &
&\PDA(\Alg{Y})\\
& & \PDA(\Alg{I}) \ar[rru]_{\PDA(m)} &&&\\
}
\]

The fact that $|\coalg{\PDA^I}|$ is closed under homomorphic images
means that whenever there is a cohomomorphism $h:\<\Alg{X}, \alpha\>
\to \<\Alg{Y}, \beta\>$ in $\coalg{\PDA}$ such that $\<\Alg{X},
\alpha\>\in |\coalg{\PDA^I}|$, then also $\<\Alg{I}, \gamma\> \in
|\coalg{\PDA^I}|$. This can be easily proved by employing Lemma
\ref{lemma:Icorrect}.

If $\<\Alg{X}, \alpha\> \in |\coalg{\PDA^I}|$, then it satisfies
$\ded{T}$. Suppose ab absurdum that $\<\Alg{I}, \gamma \>$ does not
satisfy $\ded{T}$. Then there exists an $i \in |\Alg{I}|$,
$(c_1,l_1,i_1)\in \gamma(i)$ and $(c_2,l_2,i_2) \notin \gamma(i)$
such that $(c_1,l_1,i_1)\vdash_{\Alg{I}, \ded{T}}(c_2,l_2,i_2)$.
Now, since $e$ is epi in all its components, there exists $x_1$,
such that $e(x_1)=i_1$ and since $e$ is a cohomomorphism there
exists $x\in \Alg{X}$ such that $h(x)=i$ and $(c_1,l_1,x_1)\in
\alpha(x)$.
By Lemma \ref{lemma:existence} and by $(c_1,l_1,i_1)\vdash_{\Alg{I},
\ded{T}}(c_2,l_2,i_2)$, it follows that there exists $x_2\in
\Alg(X)$ such that $e(x_2)=i_2$ and $(c_1,l_1,x_1) \vdash_{\Alg{X},
\ded{T}}(c_2,l_2,x_2)$. Now, since $\<\Alg{X}, \alpha\>$ satisfies
$\ded{T}$, then also $(c_2,l_2,x_2)\in \alpha(x)$. And now, since
$e$ is a cohomomorphism $(c_2,l_2,i_2)\in \gamma(i)$ against the
initial hypothesis.

%SUM
In $\coalg{\PDA}$, all the colimits are defined as in
$\Cat{Alg_{\Sig{\Gamma(\Cat{C})}}}$ (for classical argument in
coalgebra theory). Recalling that
$\Cat{Alg_{\Sig{\Gamma(\Cat{C})}}}$ is isomorphic to
$\set^{\Cat{C}}$, it is easy to see that all colimits exists and
they are constructed as in $\set$. Thus, it is trivial to prove that
if $\<\Alg{X},\alpha\>$ and $\<\Alg{Y}, \beta\>$ satisfy $\ded{T}$,
also their sum, i.e., $\<\Alg{X}+\Alg{Y}, \alpha + \beta\>$,
satisfies $\ded{T}$.
\end{proof}

\begin{thm}\label{theo:finalsaturated}
$\coalg{\PDA^I}$ has final object $\final{\PDA^I}$.
\end{thm}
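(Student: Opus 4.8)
The plan is to construct $\final{\PDA^I}$ as the \emph{largest subcoalgebra} of $\final{\PDA}$ lying in the covariety $|\coalg{\PDA^I}|$, which is the standard way of building a final object inside a covariety (see \cite{Rut96,kurzThesis}). We already have the two ingredients: $\coalg{\PDA}$ has a final object $\final{\PDA}$ (Proposition \ref{prop:lifting}, since $\PDA$ is a lifting of $\Pc(L\times Id)$), and $|\coalg{\PDA^I}|$ is a covariety of $\coalg{\PDA}$ (Proposition \ref{prop:covariety}), hence closed under subcoalgebras, homomorphic images and sums. By Lemma \ref{lemma:Icorrect}, lying in $|\coalg{\PDA^I}|$ is the same as satisfying $\ded{T}$, so I will freely switch between the two formulations.

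First I would collect the family $\{\<\Alg{S}_k, s_k\>\}_{k\in K}$ of all subcoalgebras of $\final{\PDA}$ that satisfy $\ded{T}$. Their sum $\coprod_{k} \<\Alg{S}_k, s_k\>$ exists, since colimits in $\coalg{\PDA}$ are computed as in $\Cat{Alg_{\Sig{\Gamma(\Cat{C})}}} \cong \set^{\Cat{C}}$ (as used in the proof of Proposition \ref{prop:covariety}), and by closure under sums it again satisfies $\ded{T}$. Copairing the inclusions $\<\Alg{S}_k, s_k\> \hookrightarrow \final{\PDA}$ gives a morphism $\coprod_{k} \<\Alg{S}_k, s_k\> \to \final{\PDA}$; factoring it (componentwise epi followed by componentwise mono, using the factorization structure of Appendix \ref{sec:factorization}) yields its homomorphic image, a subcoalgebra $m : \final{\PDA^I} \hookrightarrow \final{\PDA}$. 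By closure under homomorphic images this image satisfies $\ded{T}$, and by construction every $\<\Alg{S}_k, s_k\>$ factors through $m$; thus $\final{\PDA^I}$ is the largest subcoalgebra of $\final{\PDA}$ in $|\coalg{\PDA^I}|$.

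It then remains to verify that $\final{\PDA^I}$ is final in $\coalg{\PDA^I}$. Given any $\<\Alg{A}, a\> \in |\coalg{\PDA^I}|$, finality of $\final{\PDA}$ in $\coalg{\PDA}$ provides a unique $f : \Alg{A} \to \final{\PDA}$. Its homomorphic image is in $|\coalg{\PDA^I}|$ (closure under homomorphic images) and is a subcoalgebra of $\final{\PDA}$, hence contained in the largest such, $\final{\PDA^I}$; so $f$ factors as $f = g ; m$ for a suitable $g : \Alg{A} \to \final{\PDA^I}$, giving existence. For uniqueness, any other $g' : \Alg{A} \to \final{\PDA^I}$ satisfies $g' ; m = g ; m = f$ by finality of $\final{\PDA}$, and since $m$ is componentwise mono this forces $g = g'$.

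The conceptual content of the argument is entirely carried by the three covariety closure properties of Proposition \ref{prop:covariety}, which is where the specific shape of $\ded{T}$ enters; once those are in hand the construction is formal. Accordingly, the main obstacle for \emph{this} proof is not conceptual but hygienic: one must check that the family of subcoalgebras of $\final{\PDA}$ satisfying $\ded{T}$ is a genuine set (so that the sum is legitimate) and that the required epi–mono image factorization is available. Both are settled by working over $\set^{\Cat{C}}$ — where subobjects of a fixed object form a set and factorizations are taken componentwise — together with the factorization system spelled out in Appendix \ref{sec:factorization}.
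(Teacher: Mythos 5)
Your proof is correct and follows essentially the same route as the paper's: both build $\final{\PDA^I}$ as the largest subcoalgebra of $\final{\PDA}$ lying in the covariety $|\coalg{\PDA^I}|$ and then obtain existence and uniqueness of the final morphism by factoring the unique map into $\final{\PDA}$ through the componentwise mono $m$. The only cosmetic difference is that the paper forms this largest subcoalgebra as the \emph{union} of the homomorphic images of the final morphisms from all $\PDA^I$-coalgebras (citing Kurz for the union being a subcoalgebra and for covarieties being closed under such unions), whereas you unfold that union as the image of the coproduct of all subcoalgebras of $\final{\PDA}$ in the covariety --- which requires closure under set-indexed rather than binary sums, a routine extension of the argument given in Proposition \ref{prop:covariety}.
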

\begin{proof}
The proof is a standard argument in the theory of coalgebras.

Hereafter, we write ``$\PDA^I$-coalgebra'' as a short-hand for
``$\PDA$-coalgebra in $|\coalg{\PDA^I}|$''. In order to construct
$\final{\PDA^I}$, consider all the unique $\PDA$-cohomorphisms of
$\PDA^I$-coalgebras to $\final{\PDA}$ (the final object of
$\coalg{\PDA}$). Consider their homomorphic images through these
final morphisms. All of them are subobjects of $\final{\PDA}$ and
all of them are $\PDA^I$-coalgebras, because $|\coalg{\PDA^I}|$ is
closed under homomorphic images. Now, since these are subobjects of
$\final{\PDA}$, we can define
$\final{\PDA^I}$ as their union. %(the union of subobjects is always
%defined in coalgebras whenever the endofunctor preserves weak
%pullback)
In order to prove that $\final{\PDA^I}$ is final, it is important to
note that it is still a subcoalgebra of $\final{\PDA}$ (Corollary
1.4.14 of \cite{kurzThesis}), and thus we have a mono $m:
\final{\PDA^I} \to \final{\PDA}$\footnote{For this is important to
notice that all morphisms in $M_{\Cat{C}}$ (defined in Appendix
\ref{sec:factorization}) are also mono.}. Then for any
$\PDA^I$-coalgebra $\<\Alg{X}, \alpha\>$ there exists a morphism to
$\final{\PDA^I}$ since it is the union of all the images to
$\final{\PDA}$. Then, this morphism is unique since $m$ is mono.
Moreover, $\final{\PDA^I}$ satisfies $T$, since covarieties are also
closed by unions of subcoalgebras.

Another way of proving this theorem relies on Corollary 2.2.4 of
\cite{kurzThesis}. From such corollary and from Proposition
\ref{prop:covariety}, it follows that $\coalg{\PDA^I}$ is a
reflective subcategory of $\coalg{\PDA}$.
\end{proof}

\begin{cor}
$\coalg{\DMIS}$ has final object $\final{\DMIS}$.
\end{cor}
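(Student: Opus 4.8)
The plan is to derive this corollary immediately from Theorem \ref{theo:finalsaturated} by transporting the final object along the isomorphism between $\coalg{\DMIS}$ and $\coalg{\PDA^I}$ established in Section \ref{sec:saturatedcoalgebra}.

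First I would recall that the natural transformation $\iota:\DMIS \Rightarrow \PDA$ (Lemma \ref{lemma:inatural}) induces the functor $\Fun{I}:\coalg{\DMIS}\to\coalg{\PDA}$ sending each $\DMIS$-coalgebra $\alpha:\Alg{X}\to\DMIS(\Alg{X})$ to $\alpha;\iota_{\Alg{X}}$. By construction $\Fun{I}$ lands in the full subcategory $\coalg{\PDA^I}$ of those $\PDA$-coalgebras that factor through the inclusions $\iota_{\Alg{X}}$, and the corestricted functor $\coalg{\DMIS}\to\coalg{\PDA^I}$ is an isomorphism of categories --- precisely the ``trivial'' isomorphism remarked upon just after the definition of $\coalg{\PDA^I}$.

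Next I would invoke the elementary categorical fact that an isomorphism of categories preserves and reflects terminal objects: its inverse sends a final object to a final object. Since Theorem \ref{theo:finalsaturated} provides a final object $\final{\PDA^I}$ in $\coalg{\PDA^I}$, applying the inverse of the above isomorphism yields a final object of $\coalg{\DMIS}$, which I name $\final{\DMIS}$.

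Finally, I want to stress where the real work lies. This corollary is purely formal and carries no technical difficulty once the isomorphism is in place; all the substance has already been spent in Theorem \ref{theo:finalsaturated}, which in turn rests on Proposition \ref{prop:covariety} (that $|\coalg{\PDA^I}|$ is a covariety, hence closed under subcoalgebras, homomorphic images and sums) and on Lemma \ref{lemma:Icorrect} (identifying $\coalg{\PDA^I}$ with the coalgebras satisfying $\ded{T}$). The only points to double-check are that $\DMIS$ is genuinely a functor and that the corestriction is a bona fide isomorphism --- both already discharged in the body --- so no further obstacle remains for the corollary itself.
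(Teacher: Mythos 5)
Your proof is correct and coincides with the paper's own argument: the corollary is obtained by transporting the final object $\final{\PDA^I}$ of Theorem \ref{theo:finalsaturated} across the isomorphism between $\coalg{\DMIS}$ and $\coalg{\PDA^I}$. The paper states this in one line; your version merely spells out the same transport (via the functor induced by $\iota$ and the fact that isomorphisms of categories reflect terminal objects) in more detail.
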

\begin{proof}
From the above theorem and from the fact that $\coalg{\DMIS}$ is
isomorphic to $\coalg{\PDA^I}$.
\end{proof}

\begin{cor}
Let $\<\Alg{X}, \alpha\>$ be a $\DMIS$-coalgebras. Let
$\unique{\PDA}{\<\Alg{X}, \alpha\>}$ be the unique morphism to
$\final{\PDA}$ and let $\unique{\DMIS}{\<\Alg{X}, \alpha\>}$ be the
unique morphism to $\final{\DMIS}$. Thus
$$\unique{\PDA}{\<\Alg{X}, \alpha\>}(x)=\unique{\PDA}{\<\Alg{X}, \alpha\>}(y) \text{ if and only if }
\unique{\DMIS}{\<\Alg{X}, \alpha\>}(x)=\unique{\DMIS}{\<\Alg{X},
\alpha\>}(y)\text{.}$$
\end{cor}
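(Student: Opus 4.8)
The plan is to reduce the statement to facts already established, namely that $\final{\PDA^I}$ is a subcoalgebra of $\final{\PDA}$ (Theorem~\ref{theo:finalsaturated}, resting on the covariety Proposition~\ref{prop:covariety}), and then to transport everything across the isomorphism $\coalg{\DMIS}\cong\coalg{\PDA^I}$. First I would recall that the functor $\Fun{I}:\coalg{\DMIS}\to\coalg{\PDA}$ induced by $\iota$ (Lemma~\ref{lemma:inatural}) sends $\<\Alg{X},\alpha\>$ to $\<\Alg{X},\alpha;\iota_{\Alg{X}}\>$ and every cohomomorphism to itself, so it is the identity on carriers and on morphisms and corestricts to an isomorphism onto $\coalg{\PDA^I}$. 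In particular it carries the final object $\final{\DMIS}$ to the final object $\final{\PDA^I}$, and the unique morphism $\unique{\DMIS}{\<\Alg{X},\alpha\>}$ and the unique morphism $\unique{\PDA^I}{\<\Alg{X},\alpha\>}$ from $\<\Alg{X},\alpha;\iota_{\Alg{X}}\>$ to $\final{\PDA^I}$ have the very same underlying $|\Cat{C}|$-sorted function; hence they have the same kernel.

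Next I would invoke Theorem~\ref{theo:finalsaturated}, by which $\final{\PDA^I}$ is constructed as a subcoalgebra of $\final{\PDA}$, giving a morphism $m:\final{\PDA^I}\to\final{\PDA}$ that is mono in every component (this is the footnoted point that all morphisms in $M_{\Cat{C}}$ are componentwise mono). Since $\final{\PDA}$ is final in $\coalg{\PDA}$, the composite $\unique{\PDA^I}{\<\Alg{X},\alpha\>};m$ is a $\PDA$-morphism from $\<\Alg{X},\alpha;\iota_{\Alg{X}}\>$ into $\final{\PDA}$, so by uniqueness it coincides with $\unique{\PDA}{\<\Alg{X},\alpha\>}$. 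The kernels then agree by a two-way inclusion: if $\unique{\PDA^I}{\<\Alg{X},\alpha\>}(x)=\unique{\PDA^I}{\<\Alg{X},\alpha\>}(y)$ then applying $m$ gives $\unique{\PDA}{\<\Alg{X},\alpha\>}(x)=\unique{\PDA}{\<\Alg{X},\alpha\>}(y)$, while conversely the latter equality reads $m$ applied to $\unique{\PDA^I}{\<\Alg{X},\alpha\>}(x)$ and to $\unique{\PDA^I}{\<\Alg{X},\alpha\>}(y)$ coincide, and componentwise injectivity of $m$ forces $\unique{\PDA^I}{\<\Alg{X},\alpha\>}(x)=\unique{\PDA^I}{\<\Alg{X},\alpha\>}(y)$. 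Chaining this with the first paragraph yields $\ker\,\unique{\DMIS}{\<\Alg{X},\alpha\>}=\ker\,\unique{\PDA^I}{\<\Alg{X},\alpha\>}=\ker\,\unique{\PDA}{\<\Alg{X},\alpha\>}$, which is exactly the asserted equivalence.

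The only delicate ingredient—and the one doing the real work—is that the inclusion $m$ of $\final{\PDA^I}$ into $\final{\PDA}$ is mono in all its components, since in a category of structured coalgebras a subobject inclusion need not be componentwise injective a priori; this is precisely what Theorem~\ref{theo:finalsaturated} and Proposition~\ref{prop:covariety} secure. Granting that, the corollary itself is purely formal: it is the standard observation that, for a coalgebra already lying in a covariety, the final morphism into the ambient final object and the final morphism into the final object of the covariety have equal kernels. I therefore expect no genuine obstacle here beyond correctly citing the componentwise-mono property and the finality-induced factorization.
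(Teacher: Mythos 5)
Your proof is correct and follows the same route as the paper's own (very terse) argument: the paper likewise combines the fact that $\Fun{I}$ carries $\final{\DMIS}$ to $\final{\PDA^I}$ while acting as the identity on carriers, with the fact that $\final{\PDA^I}$ is a subobject of $\final{\PDA}$ via a componentwise mono, so the final morphisms have equal kernels. You have merely spelled out the uniqueness-induced factorization and the injectivity step that the paper leaves implicit.
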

\begin{proof}
Note that $\final{\PDA^I} = \Fun{I}(\final{\DMIS})$ for $\Fun{I}$
being the functor described in Section \ref{sec:saturatedcoalgebra}.
Moreover, from the proof of the above theorem, we have that
$\final{\PDA^I}$ is a subobject of $\final{\PDA}$.
\end{proof}

\section{Proofs of Section \ref{sec:normalizedcoalgebra}}\label{proof2}

\noindent In this appendix we prove several lemmas that describe interesting
properties of the normalization function. In particular these
properties are useful to show that $\DMIN$ is a functor. Hereafter,
we will always implicitly assume to have a normalizable context
interactive system (Definition \ref{def:Normalizable}).
\begin{oldlem}{lemma:normalizedtrivial}
Let $\Alg{X}$ be a $\Gamma(\Cat{C})$-algebra. If
$(c_1,o_1,p_1)\vdash_{\ded{T}, \Alg{X}}(c_2,o_2,p_2)$ then
$p_2=e_{\Alg{X}}(p_1)$ for some $e\in ||\Cat{C}||$. Moreover
$\forall q_1\in Y$, $(c_1,o_1,q_1)\vdash_{\ded{T},
\Alg{X}}(c_2,o_2,e_{\Alg{X}}(q_1))$.
\end{oldlem}
\begin{proof}
Both observations trivially follows from the definition of
$\vdash_{\ded{T}, \Alg{X}}$ (Def. \ref{def:exDer}).
\end{proof}

\begin{oldlem}{lemmaPropNorm}
Let $\sys{I}$ be a normalizable system w.r.t.\ $\ded{T}$. Let
$\Alg{X}$ be $\Gamma(\Cat{C})$-algebra and $A \in \PD(X)$. Then
$\forall (d,o,x)\in A$, either $(d,o,x) \in \norm{X}(A)$ or $\exists (d',o',x') \in \norm{X}(A)$, such
that $(d',o',x')\prec_{\ded{T},\Alg{X}} (d,o,x)$.
\end{oldlem}
\begin{proof}
If there exists no $(d',o',x')$ with $(d',o',x') \prec_{\ded{T},\Alg{X}} (d,o,x)$, 
then $(d,o,x) \in \norm{X}(A)$. If it exists, then consider a chain $ \dots \prec_{\ded{T},\Alg{X}} (d_2,l_2,x_2)
\prec_{\ded{T},\Alg{X}} (d^1,l_1,x_1) \prec_{\ded{T},\Alg{X}}
(d,l,x)$. Since $\prec_{\ded{T},\Alg{X}}$ is well founded there
exists no infinite chains like this. Let $(d',l',x')\in A$ be the
last element of such a chain. Since it is the last, it is not
redundant and then $(d',l',x')\in \norm{X}(A)$. Moreover since
$\prec_{\ded{T},\Alg{X}}$ is transitive (as proved in the next
lemma), we have that $(d',l',x') \prec_{\ded{T},\Alg{X}}(d,l,x)$.
\end{proof}
\begin{lem}\label{lemma:PropertiesNormalization}
Let $\sys{I}$ be a context interactive system and $\ded{T}$ be an
inference system. Let $\Alg{X}$, $\Alg{Y}$ be
$\Gamma(\Cat{C})$-algebras.
\begin{enumerate}[\em(1)]
\item $\prec_{\ded{T},\Alg{X}}$ is transitive, \\(or better, if $(d'',l'',x'')\vdash_{\ded{T},\Alg{X}}(d',l',x')\prec_{\ded{T},\Alg{X}}(d,l,x)$ then
$(d'',l'',x'')\prec_{\ded{T},\Alg{X}}(d,l,x)$),
\item If $(d_0',l_0',x_0')\equiv_{\ded{T},\Alg{X}}(d_0,l_0,x_0) \prec_{\ded{T}, \Alg{X}}(d_1,l_1,x_1)\equiv_{\ded{T},\Alg{X}}
(d_1',l_1',x_1')$ then \\$(d_0',l_0',x_0') \prec_{\ded{T},\Alg{X}}
(d_1',l_1',x_1')$,
\item If $h: \Alg{X}\to \Alg{Y}$ and $(d,l,x)\equiv_{\ded{T}, \Alg{X}}
(d',l',x')$ then $(d,l,h(x)) \equiv_{\ded{T}, \Alg{Y}}
(d',l',h(x'))$.
\end{enumerate}
\end{lem}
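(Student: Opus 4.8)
The plan is to derive all three statements from a single structural fact, namely that $\vdash_{\ded{T},\Alg{X}}$ is transitive, and then to unfold $\equiv_{\ded{T},\Alg{X}}$ and $\prec_{\ded{T},\Alg{X}}$ into their defining conjunctions of $\vdash_{\ded{T},\Alg{X}}$-facts. Transitivity of $\vdash_{\ded{T},\Alg{X}}$ is exactly the instance $d=e=id$ of Lemma \ref{lemma:compositionDer}, using $\vdash_{\ded{T},\Alg{X}}=\vdash_{\ded{T},\Alg{X}}^{id}$ and $id;id=id$; I would record this at the outset. (It is moreover reflexive, via the identity rules of $\clos{\ded{T}}$, but only transitivity is needed below.)

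For (1), abbreviate the three triples as $A,B,C$, so the hypotheses read $A\vdash_{\ded{T},\Alg{X}} B$ and $B\prec_{\ded{T},\Alg{X}} C$; recall the latter unfolds to $B\vdash_{\ded{T},\Alg{X}} C$ together with $C\nvdash_{\ded{T},\Alg{X}} B$. I would argue $A\vdash_{\ded{T},\Alg{X}} C$ by transitivity from $A\vdash_{\ded{T},\Alg{X}} B\vdash_{\ded{T},\Alg{X}} C$, and $C\nvdash_{\ded{T},\Alg{X}} A$ by contradiction: if $C\vdash_{\ded{T},\Alg{X}} A$ held, then chaining with $A\vdash_{\ded{T},\Alg{X}} B$ would give $C\vdash_{\ded{T},\Alg{X}} B$, contradicting $B\prec_{\ded{T},\Alg{X}} C$. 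Hence $A\prec_{\ded{T},\Alg{X}} C$. Ordinary transitivity of $\prec_{\ded{T},\Alg{X}}$ is then the special case where the first hypothesis is itself a $\prec_{\ded{T},\Alg{X}}$, since $A\prec_{\ded{T},\Alg{X}} B$ entails $A\vdash_{\ded{T},\Alg{X}} B$.

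Part (2) is the same bookkeeping applied to the four triples of the hypothesis, abbreviated in order as $A',A,B,B'$, so that it reads $A'\equiv_{\ded{T},\Alg{X}} A\prec_{\ded{T},\Alg{X}} B\equiv_{\ded{T},\Alg{X}} B'$. Unfolding the two equivalences into $A'\vdash_{\ded{T},\Alg{X}} A$, $A\vdash_{\ded{T},\Alg{X}} A'$, $B\vdash_{\ded{T},\Alg{X}} B'$, $B'\vdash_{\ded{T},\Alg{X}} B$, I obtain $A'\vdash_{\ded{T},\Alg{X}} B'$ from the chain $A'\vdash_{\ded{T},\Alg{X}} A\vdash_{\ded{T},\Alg{X}} B\vdash_{\ded{T},\Alg{X}} B'$, while $B'\nvdash_{\ded{T},\Alg{X}} A'$ follows because $B'\vdash_{\ded{T},\Alg{X}} A'$ would yield $B\vdash_{\ded{T},\Alg{X}} A$ (chaining through $B',A',A$), again contradicting $A\prec_{\ded{T},\Alg{X}} B$. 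This gives $A'\prec_{\ded{T},\Alg{X}} B'$.

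For (3) I would simply transport each direction of the equivalence along $h$ using Lemma \ref{lemma:PresDer} with $d=id$: from $(d,l,x)\vdash_{\ded{T},\Alg{X}}(d',l',x')$ it gives $(d,l,h(x))\vdash_{\ded{T},\Alg{Y}}(d',l',h(x'))$, and symmetrically the reverse derivation transports, so the two triples are $\equiv_{\ded{T},\Alg{Y}}$-equivalent. I do not expect a genuine obstacle: the only care required is to expand $\equiv$ and $\prec$ into their defining pairs of $\vdash$-facts before manipulating them, after which everything is a routine consequence of transitivity and the homomorphism-transfer lemma.
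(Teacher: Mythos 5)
Your proof is correct and follows essentially the same route as the paper's: unfold $\prec_{\ded{T},\Alg{X}}$ and $\equiv_{\ded{T},\Alg{X}}$ into conjunctions of $\vdash_{\ded{T},\Alg{X}}$-facts, chain by transitivity (which, as you note, is the $d=e=id$ instance of Lemma~\ref{lemma:compositionDer}), derive the two non-derivability clauses by the same contradiction arguments the paper uses, and dispatch (3) via Lemma~\ref{lemma:PresDer}. Your write-up is in fact slightly more explicit than the paper's, which leaves the appeal to transitivity and the identity-rule justification implicit.
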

\begin{proof}
Suppose that
$(d'',l'',x'')\vdash_{\ded{T},\Alg{X}}(d',l',x')\prec_{\ded{T},\Alg{X}}(d,l,x)$,
then we have both
\begin{center}
$(d'',l'',x'')\vdash_{\ded{T},\Alg{X}}(d',l',x')\vdash_{\ded{T},\Alg{X}}(d,l,x)$
and $(d,l,x)\nvdash_{\ded{T},\Alg{X}}(d',l',x')$.
\end{center}
%
%By the former we derive that
We derive
$(d'',l'',x'')\vdash_{\ded{T},\Alg{X}}(d,l,x)$ by the former, and
%and by the latter,
%we derive that
$(d,l,x)\nvdash_{\ded{T},\Alg{X}}(d'',l'',x'')$ by the latter
(otherwise if $(d,l,x)\vdash_{\ded{T},\Alg{X}}(d'',l'',x'')$ then
also $(d,l,x)\vdash_{\ded{T},\Alg{X}}(d',l',x')$).

For the second point is sufficient to note that
\begin{center}
$(d_0',l_0',x_0')\vdash_{\ded{T},\Alg{X}}(d_0,l_0,x_0)
\vdash_{\ded{T}, \Alg{X}}(d_1,l_1,x_1)\vdash_{\ded{T},\Alg{X}}
(d_1',l_1',x_1')$,
\end{center}
and then $(d_0',l_0',x_0')\vdash_{\ded{T},\Alg{X}}(d_1',l_1',x_1')$.
Moreover
$(d_1',l_1',x_1')\nvdash_{\ded{T},\Alg{X}}(d_0',l_0',x_0')$, since
otherwise $(d_1,l_1,x_1)\vdash_{\ded{T},\Alg{X}} (d_0,l_0,x_0)$.

For the third point we use that $\vdash_{\ded{T},\Alg{X}}$ is
preserved by homomorphisms (Lemma \ref{lemma:PresDer}).
\end{proof}
\begin{lem}\label{lemma:BHO}
If $(d,l,x) \in \norm{X} ; c_{\DMIS(\Alg{X})} (A)$, then $(d,l,x)\in
c_{\DMIS(\Alg{X})} (A)$.
\end{lem}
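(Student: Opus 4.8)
The plan is to unfold the diagrammatic composition $\norm{X}\,;\,c_{\DMIS(\Alg{X})}$ and then reduce any derivation originating in $\norm{X}(A)$ to one originating in $A$, exploiting that every transition of $\norm{X}(A)$ is $\equiv_{\ded{T},\Alg{X}}$-equivalent to a transition of $A$. First I would read the composition in diagrammatic order, so that $\norm{X}\,;\,c_{\DMIS(\Alg{X})}(A)=c_{\DMIS(\Alg{X})}(\norm{X}(A))$; note that by Definition \ref{def:DMIS} the operator $c_{\DMIS(\Alg{X})}$ is defined on all of $\PD(X)$, so applying it to $\norm{X}(A)\in\PN{\Alg{X}}(X)\subseteq\PD(X)$ is legitimate. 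Thus, taking $(d,l,x)\in c_{\DMIS(\Alg{X})}(\norm{X}(A))$, the defining formula of $c_{\DMIS(\Alg{X})}$ supplies a transition $(c_1,o_1,x_1)\in\norm{X}(A)$ with $(c_1,o_1,x_1)\vdash^{c}_{\ded{T},\Alg{X}}(d,l,x)$.

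Next I would apply the definition of the normalization function (Definition \ref{def:normalization}): membership $(c_1,o_1,x_1)\in\norm{X}(A)$ yields some $(c_0,o_0,x_0)\in A$ with $(c_1,o_1,x_1)\equiv_{\ded{T},\Alg{X}}(c_0,o_0,x_0)$. Since $\equiv_{\ded{T},\Alg{X}}$ entails derivation in both directions, in particular $(c_0,o_0,x_0)\vdash_{\ded{T},\Alg{X}}(c_1,o_1,x_1)$, i.e.\ $(c_0,o_0,x_0)\vdash^{id}_{\ded{T},\Alg{X}}(c_1,o_1,x_1)$. Chaining this with $(c_1,o_1,x_1)\vdash^{c}_{\ded{T},\Alg{X}}(d,l,x)$ and invoking the composition Lemma \ref{lemma:compositionDer}, I obtain $(c_0,o_0,x_0)\vdash^{id;c}_{\ded{T},\Alg{X}}(d,l,x)$; since $id;c=c$ (the identity being taken at the interface of the source state, so that it is composable with $c$), this reads $(c_0,o_0,x_0)\vdash^{c}_{\ded{T},\Alg{X}}(d,l,x)$. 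As $(c_0,o_0,x_0)\in A$, the defining formula of $c_{\DMIS(\Alg{X})}$ then gives $(d,l,x)\in c_{\DMIS(\Alg{X})}(A)$, which is exactly the claim.

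Since the argument is essentially bookkeeping of derivation superscripts, there is no substantial obstacle; the only point requiring care is to track that the derivation witnessing membership in $\norm{X}(A)$ runs in the right direction (from the representative $(c_0,o_0,x_0)\in A$ \emph{to} the normalized transition $(c_1,o_1,x_1)$, which is precisely the $\vdash^{id}$ half of $\equiv_{\ded{T},\Alg{X}}$) and that its superscript is the identity, so that composing with the operator context $c$ through Lemma \ref{lemma:compositionDer} yields exactly $c$ via $id;c=c$. Everything else is an immediate unwinding of the definitions of $\DMIS$ and of normalization.
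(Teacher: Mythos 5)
Your proof is correct and follows essentially the same route as the paper's: unfold the composite to get a witness $(c_1,o_1,x_1)\in\norm{X}(A)$ deriving $(d,l,x)$ via $\vdash^{c}_{\ded{T},\Alg{X}}$, pull back to an $\equiv_{\ded{T},\Alg{X}}$-equivalent representative in $A$, and chain the two derivations. The paper performs the chaining implicitly, whereas you make the appeal to Lemma \ref{lemma:compositionDer} and the identity $id;c=c$ explicit, which is just a more careful rendering of the same argument.
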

\begin{proof}
If $(d,l,x) \in \norm{X} ; c_{\DMIS(\Alg{X})} (A)$, then by
definition of $c_{\DMIS(\Alg{X})}$, there exists  $(d',l',x')\in
\norm{X}(A)$ such that $(d',l',x')\vdash_{\ded{T}, \Alg{X}}^c
(d,l,x)$. Now by definition of normalization, there exists
$(d'',l'',x'')\in A$ such that $(d'',l'',x'') \equiv_{\ded{T},
\Alg{X}} (d',l',x')$. Then $(d'',l'',x'')\vdash_{\ded{T},
\Alg{X}}(d',l',x')\vdash_{\ded{T},\Alg{X}}^c(d,l,x)$, and then
$(d,l,x)\in c_{\DMIS(\Alg{X})} (A)$.
\end{proof}
\begin{lem}\label{lemma:normGood}
$\forall \Alg{X}, \Alg{Y} \in | \Cat{Alg_{\Gamma(\Cat{C})}} |$ and
$\forall h \in \Cat{Alg_{\Gamma(\Cat{C})}}[\Alg{X}, \Alg{Y}]$,
\begin{enumerate}[\em(1)]
\item $\norm{X}; d_{\DMIS(\Alg{X})}; \norm{X}= d_{\DMIS(\Alg{X})}; \norm{X}$,
\item $\norm{X}; \PDA(h); \norm{Y} = \PDA(h); \norm{Y}$,
\item $\norm{X}$ is idempotent.
\end{enumerate}
\end{lem}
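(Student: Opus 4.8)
The plan is to fix an arbitrary $A\in\PD(X)$ and analyse, for each identity, how the set produced before an initial application of $\norm{X}$ relates to the one produced after it, and then to argue that this difference is invisible to the final normalization. Two facts will be used throughout: Lemma \ref{lemmaPropNorm}, stating that every transition of $A$ either lies in $\norm{X}(A)$ or is dominated (w.r.t.\ $\prec_{\ded{T},\Alg{X}}$) by a transition of $\norm{X}(A)$; and the observation, immediate from Definition \ref{def:normalization}, that every transition of $\norm{X}(A)$ is $\equiv_{\ded{T},\Alg{X}}$-equivalent to a non-redundant transition of $A$.

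For part (1) I would establish the stronger set-level identity $d_{\DMIS(\Alg{X})}(\norm{X}(A)) = d_{\DMIS(\Alg{X})}(A)$, from which applying $\norm{X}$ to both sides gives the claim. The inclusion $\subseteq$ is exactly Lemma \ref{lemma:BHO}. For $\supseteq$, take $(c_2,o_2,x_2)\in d_{\DMIS(\Alg{X})}(A)$, witnessed by some $(c_1,o_1,x_1)\in A$ with $(c_1,o_1,x_1)\vdash_{\ded{T},\Alg{X}}^{d}(c_2,o_2,x_2)$. By Lemma \ref{lemmaPropNorm}, either $(c_1,o_1,x_1)\in\norm{X}(A)$ and we are done, or some $(c_1',o_1',x_1')\in\norm{X}(A)$ satisfies $(c_1',o_1',x_1')\vdash_{\ded{T},\Alg{X}}(c_1,o_1,x_1)$; composing with the previous derivation via Lemma \ref{lemma:compositionDer} (using $id;d=d$) yields $(c_1',o_1',x_1')\vdash_{\ded{T},\Alg{X}}^{d}(c_2,o_2,x_2)$, so $(c_2,o_2,x_2)\in d_{\DMIS(\Alg{X})}(\norm{X}(A))$.

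The main work is an auxiliary lemma that also settles part (2): if $B,B'\in\PD(Y)$ are \emph{mutually dominating}, meaning every transition of one is $\vdash_{\ded{T},\Alg{Y}}$-derivable from some transition of the other, then $\norm{Y}(B)=\norm{Y}(B')$. I would prove that each non-redundant transition of $B$ is $\equiv_{\ded{T},\Alg{Y}}$-equivalent to a non-redundant transition of $B'$: chasing a derivation from $B'$ back into $B$ and invoking non-redundancy forces the two witnesses to be equivalent, while Lemma \ref{lemma:PropertiesNormalization} ensures that $\prec$ and $\equiv$ interact so that non-redundancy transfers across this equivalence. Since by Definition \ref{def:normalization} a normalization is the union of the $\equiv$-classes of its non-redundant transitions, the two normalizations coincide; well-foundedness (Definition \ref{def:Normalizable}), via Lemma \ref{lemmaPropNorm}, is what makes the hypothesis usable. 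For part (2) I then verify that $\PDA(h)(A)$ and $\PDA(h)(\norm{X}(A))$ are mutually dominating in $\Alg{Y}$: downward, each $(c,o,h(x'))\in\PDA(h)(\norm{X}(A))$ arises from $(c,o,x')\equiv_{\ded{T},\Alg{X}}(b,p,y)$ with $(b,p,y)\in A$ non-redundant, and Lemma \ref{lemma:PropertiesNormalization}(3) gives $(b,p,h(y))\vdash_{\ded{T},\Alg{Y}}(c,o,h(x'))$ with $(b,p,h(y))\in\PDA(h)(A)$; upward, each $(c,o,h(x))\in\PDA(h)(A)$ is handled by Lemma \ref{lemmaPropNorm} together with preservation of $\vdash$ under homomorphisms (Lemma \ref{lemma:PresDer}).

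Finally, part (3) is the easiest. I would first record that $\norm{X}(A)$ is itself normalized: it is a union of full $\equiv_{\ded{T},\Alg{X}}$-classes, hence closed under equivalence, and it contains no dominated pair, since a domination inside $\norm{X}(A)$ would by Lemma \ref{lemma:PropertiesNormalization}(2) descend to a domination between the corresponding non-redundant transitions of $A$. Then $\norm{X}$ fixes any normalized set $B$, because every transition of $B$ is non-redundant and $B$ is $\equiv$-closed, so $\norm{X}(B)=B$; taking $B=\norm{X}(A)$ gives idempotence. The only genuinely delicate point in the whole argument is the ``mutually dominating'' lemma, where the transfer of non-redundancy across $\equiv$-classes must be argued carefully from the transitivity and compatibility properties collected in Lemma \ref{lemma:PropertiesNormalization}.
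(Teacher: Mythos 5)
Your argument is correct, and it reaches the lemma by a genuinely different route than the paper. For part (1) you prove the stronger pre-normalization identity $d_{\DMIS(\Alg{X})}(\norm{X}(A)) = d_{\DMIS(\Alg{X})}(A)$ (one inclusion being exactly Lemma \ref{lemma:BHO}, the other following from Lemma \ref{lemmaPropNorm} together with Lemma \ref{lemma:compositionDer}), and for part (2) you isolate a reusable auxiliary lemma --- mutually dominating sets have the same normalization --- and then verify, using Lemmas \ref{lemmaPropNorm} and \ref{lemma:PresDer}, that $\PDA(h)(A)$ and $\PDA(h)(\norm{X}(A))$ are mutually dominating. The paper never states either intermediate fact: it proves both (1) and (2) by direct double inclusions on the fully composed maps, repeating essentially the same ``suppose ab absurdum there is a dominator, pull it back, contradict non-redundancy'' chase four times. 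Your decomposition is more modular and makes visible \emph{why} (1) admits a stronger statement than (2): $d_{\DMIS(\Alg{X})}$ is itself a derivation closure, so it already absorbs the difference between $A$ and $\norm{X}(A)$ before the final normalization, whereas $\PDA(h)$ is mere elementwise application of $h$ and only preserves the mutual-domination relation, which is then erased by $\norm{Y}$. I checked the two points you flag as delicate: the round trip $t''\vdash t'\vdash t$ with $t''\in B$ and $t$ non-redundant in $B$ does force $t\equiv_{\ded{T},\Alg{Y}} t'$, and a dominator of $t'$ in $B'$ pulls back, via Lemma \ref{lemma:PropertiesNormalization}(1)--(2), to a dominator of $t$ in $B$; so the auxiliary lemma holds. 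Your part (3) coincides with the paper's one-line argument, just spelled out more carefully. Both proofs ultimately rest on the same toolkit (Lemmas \ref{lemmaPropNorm}, \ref{lemma:BHO}, \ref{lemma:compositionDer}, \ref{lemma:PresDer}, \ref{lemma:PropertiesNormalization}, and well-foundedness from Definition \ref{def:Normalizable}).
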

\begin{proof}
For the first point we prove that $\forall A \in |\PDA(\Alg{X})|$
and $\forall c \in \Gamma$, $$c_{\DMIS(\Alg{X})}; \norm{X} (A) =
\norm{X} ; c_{\DMIS(\Alg{X})} ; \norm{X}(A)\text{.}$$
$$c_{\DMIS(\Alg{X})}; \norm{X} (A) \subseteq \norm{X} ;
c_{\DMIS(\Alg{X})} ; \norm{X}(A)$$
Suppose that $(e', l', x') \in
c_{\DMIS(\Alg{X})}; \norm{X} (A)$, then there exists $(e,l,x)\in
c_{\DMIS(\Alg{X})} (A)$ such that:
\begin{enumerate}[(1)]
\item $(e,l, x)\equiv_{\ded{T}, \Alg{X}} (e',l',x')$,
\item it is not redundant in $c_{\DMIS(\Alg{X})} (A)$.
\end{enumerate}
By definition of $c_{\DMIS(\Alg{X})}$, there exists
$(d_0,l_0,x_0)\in A$ such that $(d_0,l_0,x_0)\vdash_{\ded{T},
\Alg{X}}^c (e,l,x)$.

Now, by Lemma \ref{lemmaPropNorm}, there exists $(d_0',l_0',x_0')\in
\norm{X}(A)$ that either dominates $(d_0,l_0,x_0)$ or $(d_0',l_0',x_0')=(d_0,l_0,x_0)$. From definition of
$c_{\DMIS(\Alg{X})}$, it follows that $(e,l,x)\in \norm{X} ;
c_{\DMIS(\Alg{X})}(A)$. Now we have directly that $(e,l,x)\in
\norm{X} ; c_{\DMIS(\Alg{X})} ; \norm{X}(A)$. Indeed, suppose ab
absurdum that $(e,l,x)\notin \norm{X} ; c_{\DMIS(\Alg{X})} ;
\norm{X}(A)$, then there exists a $(e_1,l_1,x_1) \in \norm{X} ;
c_{\DMIS(\Alg{X})} (A)$ that dominates $(e,l,x)$. Now, by Lemma
\ref{lemma:BHO}, we have also that $(e_1,l_1,x_1) \in
c_{\DMIS(\Alg{X})} (A)$ that leads to absurd with
2.\\
Then $(e,l,x)\in \norm{X} ; c_{\DMIS(\Alg{X})} ; \norm{X}(A)$, and
also $(e',l',x')\in \norm{X} ; c_{\DMIS(\Alg{X})} \\; \norm{X}(A)$,
since the normalization function closes w.r.t.\ all equivalent
transitions.

$$\norm{X} ; c_{\DMIS(\Alg{X})} ; \norm{X}(A) \subseteq c_{\DMIS(\Alg{X})};
\norm{X} (A)$$
Suppose that $(e',l',x') \in \norm{X} ; c_{\DMIS(\Alg{X})} ;
\norm{X}(A)$, then there exists
\begin{center}$(e,l,x)\in \norm{X}
; c_{\DMIS(\Alg{X})}(A)$ such that:
\end{center}
\begin{enumerate}[(1)]
\item $(e,l,x)\equiv_{\ded{T}, \Alg{X}} (e',l',x')$,
\item it is not redundant in $\norm{X} ; c_{\DMIS(\Alg{X})}(A)$.
\end{enumerate}
Now, by Lemma \ref{lemma:BHO}, $(e,l,x)\in c_{\DMIS(\Alg{X})} (A)$.
Now we have that $(e,l,x)\in c_{\DMIS(\Alg{X})}; \\ \norm{X} (A)$.
Indeed, suppose ab absurdum that $(e,l,x)\notin c_{\DMIS(\Alg{X})} ;
\norm{X} (A)$, then there exists a $(e_1,l_1,x_1) \in
c_{\DMIS(\Alg{X})}(A)$ that dominates $(e,l,x)$. Now, by definition
of $c_{\DMIS(\Alg{X})}$, $(d_0'',l_0'', x_0'')\in A$ such that
$(d_0'',l_0'',x_0'')\vdash_{\ded{T},\Alg{X}}^c (e_1,l_1,x_1)$. Now,
by Lemma \ref{lemmaPropNorm}, and by $(d_0'',l_0'', x_0'')\in A$, it
follows that $(d_0''', l_0''', x_0''')\in \norm{X}(A)$ that either
dominates $(d_0'',l_0'',x_0'')$ or $(d_0''', l_0''', x_0''')=(d_0'',l_0'',x_0'')$. By definition of
$c_{\DMIS(\Alg{X})}$, $(e_1,l_1,x_1)\in \norm{X} ;
c_{\DMIS(\Alg{X})}(A)$ and this together with
2 leads to an absurd.\\
Thus $(e,l,x)\in c_{\DMIS(\Alg{X})} ; \norm{X} (A)$, and since
$(e,l,x)\equiv (e',l',x')$,
\begin{center}
$(e',l',x')\in c_{\DMIS(\Alg{X})} ; \norm{X} (A)$.
\end{center}\medskip

%
%Fin qua tutto OK
%

For the second point we prove that $\forall A \in \PDA(X)$,
$$\norm{X}; \PDA(h); \norm{Y}(A) = \PDA(h); \norm{Y}(A).$$
$$\norm{X}; \PDA(h); \norm{Y}(A) \subseteq \PDA(h); \norm{Y}(A)$$
Suppose that $(d',l',y') \in \norm{X}; \PDA(h); \norm{Y}(A)$. Then
there exists
\begin{center}
$(d,l,y)\in \norm{X}; \PDA(h)(A)$ such that
\end{center}
\begin{enumerate}[(1)]
\item $(d,l,y)\equiv_{\ded{T}, \Alg{Y}} (d',l',y')$,
\item it is not redundant in $\norm{X}; \PDA(h)(A)$.
\end{enumerate}
Then $\exists x \in X$ such that $h(x)=y$ and $(d,l,x)\in
\norm{X}(A)$ and then $\exists (d'',l'',x'')\in A$ such that
$(d,l,x) \equiv_{\ded{T}, \Alg{X}} (d'',l'',x'')$ and $(d'',l'',
h(x'')) \in \PDA(h)(A)$.

Now suppose ab absurdum that $(d'',l'', y'') \notin \PDA(h);
\norm{Y}(A)$ where $y''=h(x'')$. Then $\exists (d_0,l_0, y_0)\in
\PDA(h)(A)$ such that $(d_0,l_0,y_0) \prec_{\ded{T},\Alg{Y}}
(d'',l'', y'')$. However, if $(d_0,l_0,y_0)\in \PDA(h)(A)$, then
$(d_0,l_0,x_0)\in A$ such that $h(x_0)=y_0$ and by Lemma
\ref{lemmaPropNorm} there exists $(d_0',l_0',x_0')\in \norm{X}(A)$
that either dominates $(d_0,l_0, x_0)$ or  $(d_0',l_0',x_0')= (d_0,l_0,x_0)$.
By Lemma \ref{lemma:PresDer}, we have that $(d_0',l_0',h(x_0'))
\vdash_{\ded{T},\Alg{Y}} (d_0,l_0,h(x_0))
\prec_{\ded{T},\Alg{Y}}(d'',l'',y'')$ and, by Lemma
\ref{lemma:PropertiesNormalization}.1, $(d_0',l_0,h(x_0'))
\prec_{\ded{T},\Alg{Y}} (d'',l'',y'')\equiv_{\ded{T},\Alg{Y}}
(d,l,y)$. Since $(d_0',l_0',h(x_0')) \in \norm{X}; \PDA(h)(A)$, this
leads to an absurdum.

Now we have $(d'',l'', y'') \in \PDA(h); \norm{Y}(A)$ and 
$(d'',l'',y'')\equiv_{\ded{T},\Alg{Y}} (d,l,y)
\equiv_{\ded{T},\Alg{Y}} (d',l',y')$ and, since $\norm{Y}$ closes
w.r.t.\ all equivalent transitions, \begin{center}$(d',l',y')\in
\PDA(h); \norm{Y}(A)$.\end{center}

$$\PDA(h); \norm{Y}(A) \subseteq \norm{X}; \PDA(h); \norm{Y}(A)$$

Suppose that $(d',l', y') \in \PDA(h); \norm{Y}(A)$, then there
exists $(d,l,y)\in \PDA(h)(A)$, such that:
\begin{enumerate}[(1)]
\item $(d,l,y)\equiv_{\ded{T},\Alg{Y}} (d',l',y')$,
\item it is not redundant in $\PDA(h)(A)$.
\end{enumerate}
Then $\exists x \in \Alg{X}$, such that $h(x)=y$ and $(d,l,x)\in A$.

By Lemma \ref{lemmaPropNorm}, $\exists (d_0,l_0,x_0)\in \norm{X}(A)$
(and $(d_0,l_0,x_0)\in A$) that either dominates $(d,l,x)$ or $(d_0,l_0,x_0)=(d,l,x)$, and by Lemma
\ref{lemma:PresDer}
$(d_0,l_0,h(x_0))\vdash_{\ded{T},\Alg{Y}}(d,l,h(x))$. Now we have
two possible cases: or
$(d,l,h(x))\nvdash_{\ded{T},\Alg{Y}}(d_0,l_0,h(x_0))$, or
$(d,l,h(x))\vdash_{\ded{T},\Alg{Y}}(d_0,l_0,h(x_0))$. In the first
case we have that
$(d_0,l_0,h(x_0))\prec_{\ded{T},\Alg{Y}}(d,l,h(x))$, and this lead
to absurdum with 2. Then, only the latter is possible, i.e.,
$(d_0,l_0,h(x_0))\equiv_{\ded{T}, \Alg{Y}}(d,l,h(x))$.

Now suppose ab absurdum that $(d_0,l_0,h(x_0)) \notin \norm{X};
\PDA(h); \norm{Y}(A)$. Then $\exists (d_1,l_1,y_1)\in \norm{X};
\PDA(h)(A)$ that dominates $(d_0,l_0,h(x_0))$. Thus $\exists x_1 \in
\Alg{X}$ such that $h(x_1)=y_1$ and $(d_1,l_1,x_1)\in \norm{X}(A)$
and $(d_1',l_1',x_1')\in A$ such that
$(d_1',l_1',x_1')\equiv_{\ded{T}, \Alg{X}} (d_1,l_1,x_1)$.

Thus $(d_1',l_1',h(x_1')) \in \PDA(h) (A)$ and
\[(d_1',l_1',h(x_1'))\equiv_{\ded{T}, \Alg{Y}} (d_1,l_1,y_1)
\prec_{\ded{T},\Alg{Y}} (d_0,l_0,h(x_0))\equiv_{\ded{T}, \Alg{Y}}
(d,l,y),
\]
  i.e., $(d_1',l_1',h(x_1')) \prec_{\ded{T},\Alg{Y}}
  (d,l,y)$, against 2.

Then we have $(d_0,l_0,h(x_0))\in \norm{X}; \PDA(h);
\norm{Y}(A)$ and then also $(d',l',y')\in \norm{X}; \PDA(h);
\norm{Y}(A)$.

%
%Fin qua tutto OK
%
\medskip
For the third point we prove that $\forall A \in \DMIN(X)$,
$\norm{X}(A)=A$. This is trivial, since $\norm{X}$ junks away all
the redundant transitions and add all those equivalent. But since
$A$ is normalized, it does not contain any redundant transitions,
and it is still closed by equivalent transitions.
\end{proof}

\begin{prop}
$\DMIN:\Cat{Alg_{\Sig{\Gamma(\Cat{C})}}} \to
\Cat{Alg_{\Sig{\Gamma(\Cat{C})}}}$ is a functor.
\end{prop}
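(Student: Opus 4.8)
The plan is to check the four requirements for $\DMIN$ to be a functor in turn, reproducing the scheme already used to prove that $\DMIS$ is a functor but inserting the normalization maps $\norm{X}$ exactly where Definition \ref{def:normalizedFunctor} prescribes them. The workhorse throughout will be Lemma \ref{lemma:normGood}, whose three clauses let one absorb or commute the normalization functions; Lemma \ref{lemma:compositionDer}, Lemma \ref{lemma:PresDer} and Lemma \ref{lemma:existence} supply the facts about the extended derivation relation $\vdash_{\ded{T},\Alg{X}}^{d}$ on which those manipulations rest.

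First I would verify that each $\DMIN(\Alg{X})$ is a $\Gamma(\Cat{C})$-algebra. Every operator $d_{\DMIS(\Alg{X})};\norm{X}$ lands in $\PN{\Alg{X}}(X)$, since $\norm{X}$ returns a normalized set by construction, so the operators are well-defined endofunctions of the carrier. For the composition law, given $A\in\PN{\Alg{X}}(X)$ one has $(c;d)_{\DMIN(\Alg{X})}(A)=\norm{X}(\,(c;d)_{\DMIS(\Alg{X})}(A)\,)$; splitting $(c;d)_{\DMIS(\Alg{X})}=c_{\DMIS(\Alg{X})};d_{\DMIS(\Alg{X})}$ by Lemma \ref{lemma:compositionDer} and inserting an intermediate $\norm{X}$ via Lemma \ref{lemma:normGood}.1 (in the form $\norm{X};d_{\DMIS(\Alg{X})};\norm{X}=d_{\DMIS(\Alg{X})};\norm{X}$) turns this into $d_{\DMIN(\Alg{X})}(c_{\DMIN(\Alg{X})}(A))$. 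The genuinely delicate point is the identity law. Whereas for $\DMIS$ a saturated set is already $\vdash_{\ded{T},\Alg{X}}$-closed, here $(id_i)_{\DMIS(\Alg{X})}$ saturates the normalized set $A$, and I must show that $\norm{X}$ returns it to $A$. I would argue by double inclusion: $A$ is contained in its saturation, is closed under $\equiv_{\ded{T},\Alg{X}}$, and contains no redundant transition, while every transition of the saturation is $\vdash_{\ded{T},\Alg{X}}$-derived from some transition of $A$. Using Lemma \ref{lemmaPropNorm} and the transitivity properties of $\prec_{\ded{T},\Alg{X}}$ from Lemma \ref{lemma:PropertiesNormalization}, no element of $A$ can become redundant in the saturation, and every non-redundant transition of the saturation is $\equiv_{\ded{T},\Alg{X}}$-equivalent to --- hence already a member of --- $A$.

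Next I would show that for $h:\Alg{X}\to\Alg{Y}$ the map $\DMIN(h)=\PDA(h);\norm{Y}$ is a $\Gamma(\Cat{C})$-homomorphism $\DMIN(\Alg{X})\to\DMIN(\Alg{Y})$. The crucial intermediate fact is that $\PDA(h)$ commutes with the $\DMIS$-operators on all of $\PD(X)$, i.e. $d_{\DMIS(\Alg{X})};\PDA(h)=\PDA(h);d_{\DMIS(\Alg{Y})}$; the inclusion ``$\subseteq$'' is Lemma \ref{lemma:PresDer} (derivations are preserved by $h$) and ``$\supseteq$'' is Lemma \ref{lemma:existence} (derivations are reflected by $h$). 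Granting this, the homomorphism square $d_{\DMIN(\Alg{X})};\DMIN(h)=\DMIN(h);d_{\DMIN(\Alg{Y})}$ reduces to a chain of rewrites in diagrammatic order: the left side $d_{\DMIS(\Alg{X})};\norm{X};\PDA(h);\norm{Y}$ collapses by Lemma \ref{lemma:normGood}.2 to $d_{\DMIS(\Alg{X})};\PDA(h);\norm{Y}$, and then by the intermediate fact to $\PDA(h);d_{\DMIS(\Alg{Y})};\norm{Y}$; the right side $\PDA(h);\norm{Y};d_{\DMIS(\Alg{Y})};\norm{Y}$ collapses by Lemma \ref{lemma:normGood}.1 to the same expression, so the two agree.

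Finally, functoriality on arrows follows from the functoriality of $\PDA$ together with Lemma \ref{lemma:normGood}. For identities, $\DMIN(id_{\Alg{X}})=\PDA(id_{\Alg{X}});\norm{X}$ restricts to $\norm{X}$ on $\PN{\Alg{X}}(X)$, which is the identity because $\norm{X}$ fixes normalized sets (Lemma \ref{lemma:normGood}.3). For composites $h:\Alg{X}\to\Alg{Y}$ and $g:\Alg{Y}\to\Alg{Z}$, one has $\DMIN(h);\DMIN(g)=\PDA(h);\norm{Y};\PDA(g);\norm{Z}$, and Lemma \ref{lemma:normGood}.2 absorbs the middle $\norm{Y}$ to give $\PDA(h);\PDA(g);\norm{Z}=\PDA(h;g);\norm{Z}=\DMIN(h;g)$. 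I expect the only real work to be the identity law for the algebra operators discussed above; every other step is a formal manipulation licensed by Lemma \ref{lemma:normGood}.
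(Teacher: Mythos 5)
Your proposal is correct and follows essentially the same route as the paper: the algebra laws via Lemma \ref{lemma:normGood}.1 and the fact that saturating then normalizing a normalized set is the identity (the paper delegates this to Proposition \ref{prop:iso}, whose proof is exactly your double-inclusion argument), the homomorphism square via Lemma \ref{lemma:normGood}.2 plus Lemma \ref{lemma:normGood}.1 together with the commutation of $\PDA(h)$ with the $\DMIS$-operators (Lemmas \ref{lemma:PresDer} and \ref{lemma:existence}), and functoriality on arrows via Lemma \ref{lemma:normGood}.2 and \ref{lemma:normGood}.3. No gaps.
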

\proof
First of all we have to prove that $\forall \Alg{X}\in
\Cat{Alg_{\Sig{\Gamma(\Cat{C})}}}$, $\DMIN(\Alg{X})$ is a
$\Sig{\Gamma(\Cat{C})}$-algebra. In order to prove that, it is
enough to show that
$(c;d)_{\DMIN(\Alg{X})}=c_{\DMIN(\Alg{X})};d_{\DMIN(\Alg{X})}$ and
that $id_{\DMIN(\Alg{X})}$ is the identity function.

For the former, notice that
$(c;d)_{\DMIN(\Alg{X})}=(c;d)_{\DMIS(\Alg{X})};\norm{X}=c_{\DMIS(\Alg{X})};d_{\DMIS(\Alg{X})};\norm{X}$
since $\DMIS(\Alg{X})$ is a $\Sig{\Gamma(\Cat{C})}$-algebra. Now, By
Lemma \ref{lemma:normGood}.1, we have that it is equal to
\begin{center}$c_{\DMIS(\Alg{X})};\norm{X};d_{\DMIS(\Alg{X})};\norm{X}$,\end{center}
i.e., $c_{\DMIN(\Alg{X})};d_{\DMIN(\Alg{X})}$.

For the latter, notice that applying $id_{\DMIN(\Alg{X})}$ to a set
of transitions $A$, it is the same of closing $A$ w.r.t.\ the
derivation relation $\vdash_{\ded{T},\Alg{X}}$ and then normalizing
it. Now, if $A$ is normalized, one can close it w.r.t
$\vdash_{\ded{T},\Alg{X}}$, and then normalize it, obtaining the
same set $A$. This is formally proved by Proposition \ref{prop:iso}.

Now we prove that $\DMIN(h)$ is still a
$\Gamma(\Cat{C})$-homomorphism. Recall that $\PDA(h)=\DMIS(h)$,
$$c_{\DMIN(\Alg{X})}; \DMIN(h) = c_{\DMIS(\Alg{X})};\norm{X};\DMIN(h) =
c_{\DMIS(\Alg{X})};\norm{X};\DMIS(h) ; \norm{Y}=$$ (by Lemma
\ref{lemma:normGood}.2)
$$c_{\DMIS(\Alg{X})};\DMIS(h) ; \norm{Y} = \DMIS(h);
c_{\DMIS(\Alg{Y})}; \norm{Y} = $$ (by Lemma \ref{lemma:normGood}.1)
$$\DMIS(h) ; \norm{X}; c_{\DMIS(\Alg{X})} \norm{Y}=
\DMIN(h);c_{\DMIS(\Alg{Y})}; \norm{Y} = \DMIN(h);c_{\DMIN(\Alg{X})}
\text{.}$$

In order to prove that $\DMIN(id_{\Alg{X}})=id_{\DMIN(\Alg{X})}$ it
is enough to observe that
$\DMIN(id_{\Alg{X}})=\PDA(id{\Alg{X}});\norm{X}=id_{\PDA(\Alg{X})};\norm{X}$.
Since in $\DMIN(\Alg{X})$ all the elements are normalized, by Lemma
\ref{lemma:normGood}.3, normalization plays no role.

In order to prove that $\DMIN$ preserves composition we use Lemma
\ref{lemma:normGood}.2: $\forall h: \Alg{X}\to \Alg{Y}, g:\Alg{Y}\to
\Alg{Z}$,
$$\DMIN(h;g)=\PDA(h;g); \norm{Z}= \PDA(h) ; \PDA(g) ; \norm{Z}$$
$$=\PDA(h) ; \norm{Y} ; \PDA(g) ; \norm{Z}= \DMIN(h); \DMIN(g)\text{.}
  \eqno{\qEd}$$\medskip

\noindent At the end of the appendix we prove the main theorem. Note
that proof of Lemma \ref{lemmaPropNorm} is in Appendix \ref{proof2}.

\begin{lem}\label{lemma:NormSatHom}
$\norm{X}:\DMIS(\Alg{X}) \to \DMIN(\Alg{X})$ and
$\sat{X}:\DMIN(\Alg{X}) \to \DMIS(\Alg{X})$ are
$\Gamma(\Cat{C})$-homomorphisms.
\end{lem}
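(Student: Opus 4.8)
The plan is to verify, for each of the two maps and separately for every unary operation $d_{\bullet}$ (indexed by an arrow $d\in||\Cat{C}||$), the homomorphism equation, reasoning at the level of individual triples $(c,o,x)$ since $\norm{X}$, $\sat{X}$ and the operators $d_{\DMIS(\Alg{X})}$ are all defined pointwise on sets of transitions. Two preliminary observations will be used throughout: $\vdash_{\ded{T},\Alg{X}}=\vdash_{\ded{T},\Alg{X}}^{id}$ is reflexive (take $e=e'=id$ together with an identity rule of $\clos{\ded{T}}$) and transitive (Lemma \ref{lemma:compositionDer} with $d=e=id$, using $id;id=id$).

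For $\norm{X}\colon\DMIS(\Alg{X})\to\DMIN(\Alg{X})$ the argument is immediate. The homomorphism condition for an operation $d$ is $\norm{X}(d_{\DMIS(\Alg{X})}(A))=d_{\DMIN(\Alg{X})}(\norm{X}(A))$ for $A$ in the carrier $\PS(X)$; unfolding $d_{\DMIN(\Alg{X})}=d_{\DMIS(\Alg{X})};\norm{X}$ rewrites the right-hand side as $\norm{X}(d_{\DMIS(\Alg{X})}(\norm{X}(A)))$, so the required identity is exactly $d_{\DMIS(\Alg{X})};\norm{X}=\norm{X};d_{\DMIS(\Alg{X})};\norm{X}$, which is Lemma \ref{lemma:normGood}.1.

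The map $\sat{X}\colon\DMIN(\Alg{X})\to\DMIS(\Alg{X})$ is the substantial case, and I expect it to be the main obstacle, since its homomorphism property is not a direct instance of an already-proved identity; I would reduce it to two facts holding for every $A\in\PD(X)$. Fact~A is $\norm{X};\sat{X}=\sat{X}$, the formal expression of ``saturation undoes normalization'': the inclusion $\sat{X}(A)\subseteq\sat{X}(\norm{X}(A))$ uses Lemma \ref{lemmaPropNorm}, by which every triple of $A$ is either retained by $\norm{X}$ or dominated (hence derived) by a retained one, so transitivity carries every $\vdash_{\ded{T},\Alg{X}}$-consequence of an $A$-triple to a consequence of a $\norm{X}(A)$-triple; the reverse inclusion uses that, by the definition of $\norm{X}$, every triple of $\norm{X}(A)$ is equivalent to, hence derived from, some triple of $A$, and closes again under transitivity. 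Fact~B is $d_{\DMIS(\Alg{X})};\sat{X}=\sat{X};d_{\DMIS(\Alg{X})}$: both inclusions follow from Lemma \ref{lemma:compositionDer} by composing a $\vdash_{\ded{T},\Alg{X}}^{d}$-step with a $\vdash_{\ded{T},\Alg{X}}^{id}$-step in either order (yielding a $\vdash_{\ded{T},\Alg{X}}^{d}$-step, since $d;id=id;d=d$), together with reflexivity to re-insert the generating triple.

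Granting Facts~A and~B, the homomorphism condition for $\sat{X}$ and an operation $d$, namely $\sat{X}(d_{\DMIN(\Alg{X})}(B))=d_{\DMIS(\Alg{X})}(\sat{X}(B))$ for $B$ in the carrier $\PN{\Alg{X}}(X)$, follows from the chain $\sat{X}(d_{\DMIN(\Alg{X})}(B))=\sat{X}(\norm{X}(d_{\DMIS(\Alg{X})}(B)))=\sat{X}(d_{\DMIS(\Alg{X})}(B))=d_{\DMIS(\Alg{X})}(\sat{X}(B))$, where the first step unfolds $d_{\DMIN(\Alg{X})}=d_{\DMIS(\Alg{X})};\norm{X}$, the second applies Fact~A to $d_{\DMIS(\Alg{X})}(B)$, and the third applies Fact~B. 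Well-typedness (that $\norm{X}$ lands in $\PN{\Alg{X}}(X)$ and $\sat{X}$ in $\PS(X)$) is built into the codomains and needs nothing extra. The single point where normalizability of the system is genuinely required is Fact~A, via Lemma \ref{lemmaPropNorm}, i.e.\ via the well-foundedness of $\prec_{\ded{T},\Alg{X}}$; everything else is purely formal manipulation of the derivation relation.
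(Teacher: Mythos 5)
Your proof is correct and follows essentially the same route as the paper: the $\norm{X}$ case is dispatched by Lemma \ref{lemma:normGood}.1, and the $\sat{X}$ case is reduced to exactly your Facts A and B (the paper asserts Fact A with the one-line remark that ``saturation adds everything that is removed by normalization'' and proves Fact B by the same mutual-inclusion argument via composition of derivations). Your slightly more explicit justification of Fact A via Lemma \ref{lemmaPropNorm} and transitivity is a faithful expansion of the paper's remark, not a different argument.
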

\begin{proof}
For all operators $c$, we have that
\begin{center}$c_{\DMIS(\Alg{X})}; \norm{X} = $ by Lemma \ref{lemma:normGood}.1 $= \norm{X} ; c_{\DMIS(\Alg{X})}; \norm{X} =
\norm{X}; c_{\DMIN(\Alg{X})}$.
\end{center}
For $\sat{X}$ we have that $c_{\DMIN(\Alg{X})}; \sat{X} =
c_{\DMIS(\Alg{X})} ; \norm{X}; \sat{X}$. Note that
\begin{center}$c_{\DMIS(\Alg{X})} ; \norm{X}; \sat{X} = c_{\DMIS(\Alg{X})} ;
\sat{X}$\end{center} since saturation adds everything that is
removed by normalization. At this point, it is enough to prove that
$c_{\DMIS(\Alg{X})} ; \sat{X}= \sat{X} ; c_{\DMIS(\Alg{X})}$.

We have to prove that $\forall A \in |\PDA(\Alg{X})|$,
$c_{\DMIS(\Alg{X})}; \sat{X} (A) = \sat{X}; c_{\DMIS(\Alg{X})}(A)$.

$$c_{\DMIS(\Alg{X})}; \sat{X} (A)\subseteq \sat{X}; c_{\DMIS(\Alg{X})}(A)$$
Suppose that $(e,l,x) \in c_{\DMIS(\Alg{X})}; \sat{X} (A)$, then
there exists $(e',l',x')\in c_{\DMIS(\Alg{X})}(A)$ such that
$(e',l',x') \vdash_{\ded{T}, \Alg{X}}(e,l,x)$, and by definition of
$c_{\DMIS(\Alg{X})}$, there exists $(e_0',l_0', x_0') \in A$ (and
then also in $\sat{X}(A)$) such that $(e_0',l_0',
x_0')\vdash_{\ded{T}, \Alg{X}}^c(e',l',x')\vdash_{\ded{T}
,\Alg{X}}(e,l,x)$. Then $(e_0',l_0', x_0')\vdash_{\ded{T},
\Alg{X}}^c(e,l,x)$, and then $(e,l,x)\in \sat{X};
c_{\DMIS(\Alg{X})}(A)$.

$$\sat{X}; c_{\DMIS(\Alg{X})}(A) \subseteq c_{\DMIS(\Alg{X})}; \sat{X} (A)$$
Suppose that $(e,l,x)\in \sat{X}; c_{\DMIS(\Alg{X})}(A)$, then there
exists $(d',l',x')\in \sat{X}(A)$ such that
$(d',l',x')\vdash_{\ded{T}, \Alg{X}}^c(e,l,x)$. Thus, by definition
of $\sat{X}$, there exists $(d'',l'',x'')\in A$ such that
$(d'',l'',x'')\vdash_{\ded{T}, \Alg{X}}(d',l',x')$. Then $(d'',l'',
x'')\vdash_{\ded{T}, \Alg{X}}^c(e,l,x)$ and then $(e,l,x)\in
c_{\DMIS(\Alg{X})}(A)$, and then $(e,l,x)\in c_{\DMIS(\Alg{X})};
\sat{X} (A)$.
\end{proof}

\begin{lem}\label{lemma:isomorphic}
$\normaR$ and $\satuR$ are isomorphisms, one the inverse of
the other.
\end{lem}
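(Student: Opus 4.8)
The plan is to build on Lemma~\ref{lemma:NormSatHom}, which already guarantees that $\norm{X}$ and $\sat{X}$ are $\Gamma(\Cat{C})$-homomorphisms. Hence it suffices to verify that, as functions on the underlying carrier sets, they are mutually inverse; that is, $\sat{X}(\norm{X}(A)) = A$ for every saturated set $A \in \PS(X)$ and $\norm{X}(\sat{X}(B)) = B$ for every normalized set $B \in \PN{\Alg{X}}(X)$. I would prove each of these two set equalities by double inclusion, relying on the reflexivity of $\vdash_{\ded{T},\Alg{X}}$ (which holds because $\clos{\ded{T}}$ contains the identity rules) and on the well-foundedness of $\prec_{\ded{T},\Alg{X}}$ guaranteed by normalizability.

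For the first equality, fix a saturated $A$. Since $A$ is closed under $\vdash_{\ded{T},\Alg{X}}$, it is in particular closed under the symmetric relation $\equiv_{\ded{T},\Alg{X}}$, so every transition kept or generated by $\norm{X}$ already lies in $A$, giving $\norm{X}(A) \subseteq A$ and therefore $\sat{X}(\norm{X}(A)) \subseteq A$. For the converse inclusion, take $(d,o,x) \in A$. By Lemma~\ref{lemmaPropNorm}, either $(d,o,x) \in \norm{X}(A)$ or there is $(d',o',x') \in \norm{X}(A)$ with $(d',o',x') \prec_{\ded{T},\Alg{X}} (d,o,x)$, hence $(d',o',x') \vdash_{\ded{T},\Alg{X}} (d,o,x)$; in both cases $(d,o,x)$ is derivable from an element of $\norm{X}(A)$ and so lies in $\sat{X}(\norm{X}(A))$.

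For the second equality, fix a normalized $B$. To show $B \subseteq \norm{X}(\sat{X}(B))$, take $(d,o,x) \in B$; it lies in $\sat{X}(B)$ by reflexivity, and I claim it is not redundant there. If it were, some $(c',o',x') \in \sat{X}(B)$ would dominate it, and tracing $(c',o',x')$ back to a witness $(c,o,q) \in B$ with $(c,o,q) \vdash_{\ded{T},\Alg{X}} (c',o',x')$, the transitivity statement of Lemma~\ref{lemma:PropertiesNormalization} would yield $(c,o,q) \prec_{\ded{T},\Alg{X}} (d,o,x)$, contradicting the normality of $B$. For the reverse inclusion, any $(c',o',q') \in \norm{X}(\sat{X}(B))$ is equivalent to some non-redundant $(c,o,q) \in \sat{X}(B)$; its witness $(\tilde c,\tilde o,\tilde q) \in B \subseteq \sat{X}(B)$ satisfies $(\tilde c,\tilde o,\tilde q) \vdash_{\ded{T},\Alg{X}} (c,o,q)$, and non-redundancy of $(c,o,q)$ forces this to be an equivalence (otherwise $(\tilde c,\tilde o,\tilde q)$ would dominate $(c,o,q)$ inside $\sat{X}(B)$). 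Closure of $B$ under equivalent transitions then places $(c,o,q)$, and hence $(c',o',q')$, back in $B$.

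The main obstacle is the second equality, $\norm{X}(\sat{X}(B)) = B$: saturation deliberately floods $B$ with redundant transitions, so the delicate point is to verify that normalization strips away exactly those and recovers $B$ without loss. This is where I expect to lean most heavily on the combination of well-foundedness (normalizability), the transitivity of $\prec_{\ded{T},\Alg{X}}$ and its compatibility with $\equiv_{\ded{T},\Alg{X}}$ from Lemma~\ref{lemma:PropertiesNormalization}, and the fact that normalized sets are closed under equivalent transitions.
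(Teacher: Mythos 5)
Your proposal is correct and follows essentially the same route as the paper: reduce to the two set-level identities $\sat{X}\circ\norm{X}=id$ on saturated sets and $\norm{X}\circ\sat{X}=id$ on normalized sets via Lemma~\ref{lemma:NormSatHom}, then prove each by double inclusion using Lemma~\ref{lemmaPropNorm}, the transitivity/compatibility properties of $\prec_{\ded{T},\Alg{X}}$, and closure of normalized sets under equivalent transitions. The individual inclusion arguments match the paper's almost step for step, so no further comment is needed.
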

\begin{proof}
Since by Lemma \ref{lemma:NormSatHom}, $\norm{X}$ and $\sat{X}$ are
morphisms in $\Cat{Alg_{\Gamma (C)}}$, we have just to prove that
$\normR{X} ; \satR{X}=id_{\DMIS(\Alg{X})}$ and
$\satR{X} ; \normR{X}=id_{\DMIN(\Alg{X})}$.

$$\normR{X} ;\satR{X} (A) \subseteq A $$
If $(d,l,x)\in \normR{X} ;\satR{X} (A)$, then $(d',l',x')\in
\normR{X}(A)$ such that $(d',l',x')\vdash_{\ded{T},\Alg{X}}(d,l,x)$.
Thus $(d'',l'',x'')\in A$ such that $(d'',l'',x'')\equiv_{\ded{T},
\Alg{X}} (d',l',x')$. Then $(d'',l'',x'')\vdash_{\ded{T},
\Alg{X}}(d,l,x)$. Now, also $(d,l,x)\in A$, since $A$ is saturated.

$$A \subseteq \normR{X} ;\satR{X} (A)$$
If $(d,l,x) \in
A$ then, by Lemma \ref{lemmaPropNorm}, there exists $(d',l',x')\in
\normR{X}(A)$ that either dominates $(d,l,x)$ or $(d',l',x')=(d,l,x)$. Thus $(d,l,x) \in \normR{X}
;\satR{X} (A)$.

%Similarly for the second equivalence.
$$\satR{X} ; \normR{X} (A) \subseteq A$$
If $(d',l',x') \in \satR{X} ;
\normR{X} (A)$ then there exist $(d,l,x)\in \satR{X}(A)$ such that
\begin{enumerate}[(1)]
\item $(d,l,x)\equiv_{\ded{T}, \Alg{X}} (d',l',x')$,
\item it is not redundant in $\satR{X}(A)$.
\end{enumerate}
Then $\exists (d_0,l_0,x_0) \in A$ such that
$(d_0,l_0,x_0)\vdash_{\ded{T},\Alg{X}}(d,l,x)$. Now we have two
possibilities. Firstly, $(d,l,x)\nvdash_{\ded{T},
\Alg{X}}(d_0,l_0,x_0)$, then
$(d_0,l_0,x_0)\prec_{\ded{T},\Alg{X}}(d,l,x)$ and this is absurd
with 2. Secondly, $(d,l,x)\vdash_{\ded{T},\Alg{X}}(d_0,l_0,x_0)$ and
then $(d,l,x)\equiv_{\ded{T},\Alg{X}}(d_0,l_0,x_0)$, and since $A$
is normalized, $(d,l,x)\in A$.

$$A \subseteq \satR{X} ; \normR{X} (A)$$
If $(d,l,x)\in A$, then $(d,l,x) \in \satR{X}(A)$. Now suppose ab
absurdum that $(d,l,x)\notin \satR{X} ; \normR{X} (A)$ then there
exists a $(d',l',x')\in \satR{X}(A)$ that dominates $(d,l,x)$. Then,
by definition of $\satR{X}$, $(d'',l'',x'')\in A$ that dominates
$(d',l',x')$. But then $(d'',l'',x'')$ dominates also $(d,l,x)$,
against the hypothesis that $A$ is normalized.
\end{proof}

\begin{figure}
\begin{center}
   \begin{tabular}{ccc}
    \xymatrix@C=15pt@R=15pt
     {
      \ \Alg{X} \ \ar[rr]^{h} \ar[dd]|{\alpha}& & \ \Alg{Y} \ \ar[dd]|{\beta} \\
      \\
      \DMIS(\Alg{X}) \ar[rr]_{\DMIS(h)} & & \DMIS(\Alg{Y})
     }
     &
    \xymatrix@C=15pt@R=15pt
     {
      \DMIS(\Alg{X})\ar[rr]^{\DMIS(h)} \ar[dd]|{\norm{X}}& & \DMIS(\Alg{Y}) \ar[dd]|{\norm{Y}} \\
      \\
      \DMIN(\Alg{X}) \ar[rr]_{\DMIN(h)} & & \DMIN(\Alg{Y})
     }
     &
    \xymatrix@C=15pt@R=15pt
     {
      \DMIN(\Alg{X})\ar[rr]^{\DMIN(h)} \ar[dd]|{\sat{X}}& & \DMIN(\Alg{Y}) \ar[dd]|{\sat{Y}} \\
      \\
      \DMIS(\Alg{X}) \ar[rr]_{\DMIS(h)} & & \DMIS(\Alg{Y})
     }\\
     (i)&(ii)&(iii)
   \end{tabular}\caption{$\normaR$ and $\satuR$ are natural
   transformations.}\label{figureNaturalTransTrans}
\end{center}
\end{figure}

\begin{oldprop}{prop:iso}
Let $\normaR$, respectively, $\satuR$ be the families of morphisms
$ \{\norm{X}:\DMIS(\Alg{X}) \to \DMIN(\Alg{X}), \;\forall \Alg{X}
\in | \Cat{Alg_{\Gamma(\Cat{C})}}| \} $ and $
\{\sat{X}:\DMIN(\Alg{X}) \to \DMIS(\Alg{X}), \;\forall \Alg{X} \in
|\Cat{Alg_{\Gamma(\Cat{C})}}| \}$. Then $\normaR : \DMIS \Rightarrow
\DMIN$ and $\satuR : \DMIN \Rightarrow \DMIS$ are natural
transformations. More precisely, they are natural isomorphisms, 
one the inverse of the other.
\end{oldprop}
\begin{proof}
Since by Lemma \ref{lemma:isomorphic}, $\norm{X}$ and $\sat{X}$ are
one the inverse of the other, we have only to prove that they are natural transformation, i.e., that
diagrams (ii) and (iii) in Figure \ref{figureNaturalTransTrans}
commute. Notes that by definition, $\DMIN(h) = \PDA(h); \norm{Y}$
and thus diagram (ii) commutes by Lemma \ref{lemma:normGood}.2.

Then, by Lemma \ref{lemma:isomorphic}, also diagram (iii) commutes.
\end{proof}

\begin{oldthm}{thm:iso}
$\coalg{\DMIS}$ and $\coalg{\DMIN}$ are isomorphic.
\end{oldthm}
\begin{proof}
Let $\NormR\ : \coalg{\DMIS} \to \coalg{\DMIN}$ be the functor
%Consider the functor $\NormR\ : \coalg{\DMIS} \to \coalg{\DMIN}$
sending an object $\<\Alg{X},\alpha\>$ into $\<\Alg{X},\alpha;\norm{X}\>$ and
any morphism $h$ to itself. Let $\SatR\ : \coalg{\DMIN} \to
\coalg{\DMIS}$ be the functor sending $\<\Alg{X},\alpha\>$ into
$\<\Alg{X},\alpha;\sat{X}\>$ and any morphism $h$ in itself. By
Proposition \ref{prop:iso}, these are clearly, one the inverse of
the other.
\end{proof}

\section{Factorization system for
$\PDA$-coalgebras}\label{sec:factorization}

\noindent The notions of \emph{subcoalgebra} and \emph{homomorphic image} have
been introduced in \cite{Rut96}, for coalgebras over $\set$. These
notions have been extended by Kurz in his thesis \cite{kurzThesis}
to coalgebras over a generic category $\Cat{C}$, by employing
factorization systems.

As subcolagebras and homomorphic images are fundamental for
proving that $|\coalg{\DMIS}|$ is a covariety of $\coalg{\PDA}$ (and
thus proving the existence of final system), we briefly report here
these definitions.

\begin{defi}[Factorization System]
Let $\Cat{C}$ be some category, and let $E,M$ be classes of
morphisms in $\Cat{C}$. Then $(E,M)$ is a factorization system for
$\Cat{C}$ if and only if
\begin{enumerate}[(1)]
\item $E,M$ are closed under isomorphism,
\item $\Cat{C}$ \emph{has $(E,M)$-factorizations}, i.e., every morphism $f$
in $\Cat{C}$ has a factorization $f= e;m$ for $e\in E$ and $m \in
M$,
\item $\Cat{C}$ \emph{has the unique $(E,M)$-diagonalisation property}, i.e.,
whenever the square
\[
\xymatrix{A \ar[d]^f \ar[r]^e & B \ar[d]^g \ar@{-->}[dl]|d \\
C \ar[r]_m& D}
\]
commutes for $m \in M$ and $e \in E$, then there is a unique
diagonal $d$ making the two triangle commute.
\end{enumerate}
\end{defi}\medskip

\noindent The theory of coalgebras has been mainly developed for coalgebras
over $\set$. In Section 1.4 of \cite{kurzThesis}, Kurz generalizes
this theory for coalgebras over a generic category $\Cat{C}$, by
providing four axioms relying on a factorization system for
$\Cat{C}$ and some properties of the endofunctor. These axioms
guarantees that the resulting category has all the good qualities of
coalgebras over $\set$, such as, for example, that the collection of
all subcoalgebras of a coalgebra is a complete lattice.

It can be easily proved (looking at
$\Cat{Alg_{\Sig{\Gamma(\Cat{C})}}}$ as $\set^{\Cat{C}}$) that the
endofunctor $\PDA$ satisfies these four axioms when considering the
following factorization system.

\begin{defi}
The factorization system for $\Cat{Alg_{\Sig{\Gamma(\Cat{C})}}}$ is
$(E_\Cat{C}, M_{\Cat{C}})$, where $E_{\Cat{C}}$ is the class of
$|\Cat{C}|$-indexed homomorphism having all components epi, while
$M_{\Cat{C}}$ is the class of $|\Cat{C}|$-indexed homomorphism
having all components mono.
\end{defi}

%In particular we want to notice that all arrows in $E_\Cat{C}$ are
%epi and all arrows in $M_{\Cat{C}}$ are mono (Axiom 1.3).

Here, we want to show that the forgetful functor $\Fun{U}:
\coalg{\PAS} \to \Cat{Alg_{\Sig{\Gamma(\Cat{C})}}}$ creates
factorizations with respect to $(E_{\Cat{C}}, M_{\Cat{C}})$ (Axiom
1.2). This means that if $h:(\Alg{X}, \alpha) \to (\Alg{Y}, \beta)$
is a morphism in $\coalg{\PAS}$ and $h=e;m$ is a factorization in
$(E_{\Cat{C}},M_{\Cat{C}})$, then it is also a factorization in
$\coalg{\PAS}$, i.e., $e,m$ are also cohomomorphisms. This is
graphically depicted below.

\[
\xymatrix@R=10pt@C=10pt{
\Alg{X} \ar[rrrr]^{h} \ar[dddd]_{\alpha} \ar[rrd]^e& &   & &\Alg{Y}  \ar[dddd]^{\beta} \\
& & \Alg{I} \ar@{-->}[dddd]|{\gamma} \ar[rru]_m&&&\\
\\
\\
\PDA(\Alg{X}) \ar[rrrr]^{\PDA(h)} \ar[rrd]_{\PDA(e)}& &   &
&\PDA(\Alg{Y})\\
& & \PDA(\Alg{I}) \ar[rru]_{\PDA(m)} &&&\\
}
\]
If the back square commutes and $h=e;m$ is factorization with
respect to $(E_{\Cat{C}}, M_{\Cat{C}})$, then also $\PDA(e)$ is in
$E_{\Cat{C}}$ and $\PDA(m)$ is in $M_{\Cat{C}}$. The unique arrow
$\gamma$ comes from the diagonalization property noting that:
\[
\xymatrix{\Alg{X} \ar[r]^e \ar[d]_{\alpha;\PDA(e)} & \Alg{I}
\ar[d]^{m; \beta} \ar@{-->}[ld]|{\gamma}
\\
\PDA(\Alg{I}) \ar[r]_{\PDA(m)}& \PDA(\Alg{Y})}
\]

At this point we can define subcoalgebra and homomorphic image.

\begin{defi}[Subcoalgebra]
Let $m :\<\Alg{X},\alpha\> \to \<\Alg{Y}, \beta\>$ be an arrow of
$\coalg{\PDA}$. Then $\<\Alg{X},\alpha\>$ is said a subcoalgebra of
$\<\Alg{Y}, \beta\>$ if $m \in M_{\Cat{C}}$.
\end{defi}

\begin{defi}[Homomorphic Image]
Let $f :\<\Alg{X},\alpha\> \to \<\Alg{Y}, \beta\>$ be an arrow of
$\coalg{\PDA}$. The homomorphic image of $\<\Alg{X},\alpha\>$
through $f$ is the coalgebra $\<\Alg{I}, \gamma\>$ shown in the
diagram above.
\end{defi}
\vspace{-20 pt}
\end{document}